\documentclass[a4paper,12pt]{article}
\usepackage{indentfirst}
\usepackage{graphicx}
\usepackage{epstopdf}
\usepackage{epsfig}
\usepackage{array}
\usepackage{color}
\usepackage{multirow}
\usepackage{float}
\usepackage{subfigure}
\usepackage{amsmath,amssymb, amsthm}
\usepackage[mathscr]{euscript}
\usepackage{latexsym,bm}
\usepackage{cite}
\usepackage{booktabs}
\usepackage{bbm}
\usepackage{diagbox}
 \usepackage[all]{xy}
 \usepackage{color}
\usepackage{algorithm}  
\usepackage{algorithmicx} 
\usepackage{algpseudocode}
\usepackage{appendix}  

\newtheorem{theorem}{Theorem}
\newtheorem{lemma}{Lemma}

\newtheorem{example}{\textup{\textbf{Example}}}

%\newtheorem{algorithm}{Algorithm}
%\floatname{algorithm}{Algorithm}

\newcommand\D{\textup{d}}

\setlength{\oddsidemargin}{0cm} 
\setlength{\evensidemargin}{0cm}
\setlength{\textwidth}{150mm}
\setlength{\textheight}{230mm}
\setlength{\topmargin}{-10mm}

\newcommand\comment[1]{}

\definecolor{purple}{RGB}{255,0,255}

%\graphicspath{{images/}}
{\theoremstyle{remark} \newtheorem{remark}{\textup{\textbf{Remark}}}}

\setcounter{secnumdepth}{3}

\def\dif{\mathrm{d}}
\def\mi{\mathbbm{i}}
\def\me{\mathbbm{e}}
\def\mone{\mathbbm{1}}

\def\bx{\bm{x}}

\def\bk{\bm{k}}
\def\by{\bm{y}}

\makeatletter
\@addtoreset{equation}{section}
\makeatother

% .......... the text...................

\begin{document}
\bibliographystyle{unsrt}

\title{The Wigner branching random walk: Efficient implementation and performance evaluation}
\author{Yunfeng Xiong
\and Sihong Shao\footnotemark[2] $^,$\footnotemark[1]}
\renewcommand{\thefootnote}{\fnsymbol{footnote}}
\footnotetext[2]{LMAM and School of Mathematical Sciences, Peking University, Beijing 100871, China.}
\footnotetext[1]{To
whom correspondence should be addressed. Email:
\texttt{sihong@math.pku.edu.cn}}
\date{\today}
\maketitle

% .......... the text...................

%{\no\bf Mathematics Subject Classifications:}\hspace{0.2cm} 65F10,
%65W05
%\begin{abstract}
%\end{abstract}

% \cite{ShaoLuCai2011}
% \cite{ShaoSellier2015}
% \cite{XiongChenShao2015}
%
%Boundary conditions and the transformation  \cite{DimovNedjalkovSellierSelberherr2015}
%
% \cite{bk:Doob2001}
% \cite{bk:Dynkin2002}
% \cite{NedjalkovVitanov1990}
% \cite{YanCaiZeng2013}
% \cite{HairerWeare2014}
% \cite{NedjalkovVasileskaDimov2007}
% \cite{NedjalkovKosinaSelberherrRinghoferFerry2004}
% \cite{SellierNedjalkovDimov2014}
% \cite{SellierNedjalkovDimov2015}
% \cite{bk:QuerliozDollfus2010}
% \cite{bk:Kress2014}
% \cite{bk:Pazy1983}
% \cite{bk:Kallenberg2002}
% \cite{bk:Harris1963}
%\cite{MuscatoWagnerStefano2010}
%\cite{WoloszynSpisak2017}
%\cite{ZhanColomesOriols2016}
%\cite{LarkinFilinov2017}
%\cite{bk:QuerliozDollfus2010}
%\cite{KosinaNedjalkovSelberherr2003}

\begin{abstract}

To implement the Wigner branching random walk, the particle carrying a signed weight, either $-1$ or $+1$, is more friendly to data storage and arithmetic manipulations than that taking a real-valued weight continuously from $-1$ to $+1$. The former is called a signed particle and the latter a weighted particle. In this paper, we propose two efficient strategies to realize the signed-particle implementation. One is to interpret the multiplicative functional as the probability to generate pairs of particles instead of the incremental weight, and the other is to utilize a bootstrap filter to adjust the skewness of particle weights. Performance evaluations on the Gaussian barrier scattering (2D) and a Helium-like system (4D) demonstrate the feasibility of both strategies and the variance reduction property of the second approach. We provide an improvement of the first signed-particle implementation that
partially alleviates the restriction on the time step and perform a thorough theoretical and numerical comparison among all the existing signed-particle implementations. Details on implementing the importance sampling according to the quasi-probability density and an efficient resampling or particle reduction are also provided.

\vspace*{4mm}
\noindent {\bf AMS subject classifications:}
60J85;
81S30;
45K05;
65M75;
82C10;
81V70;
81Q05

\noindent {\bf Keywords:}
Wigner equation;
branching random walk;
signed particle;
bootstrapping;
weighted particle;
Monte Carlo method;
quantum dynamics;
importance sampling;
resampling;
particle reduction
\end{abstract}

\section{Introduction}
\label{sec:intro}

The Wigner function $f(\bm{x}, \bm{k}, t)$ for a $N$-body $d$-dimensional quantum system lives 
in the phase space $(\bm{x},\bm{k})\in\mathbb{R}^{2n}$ with $n=Nd$ for position $\bm{x}$ and wavevector $\bm{k}$, and satisfies 
the following Wigner equation (WEQ) \cite{Wigner1932}
\begin{equation}
\frac{\partial }{\partial t}f(\bm{x}, \bm{k}, t)+\frac{\hbar \bm{k}}{\bm{m}} \cdot \nabla_{\bm{x}} f(\bm{x},\bm{k}, t)= \Theta_V[f](\bx, \bk, t)\label{eq.Wigner},
\end{equation}
where the pseudo-differential operator is characterized by a convolution with the Wigner kernel $V_W$
\begin{align}
\Theta_V[f](\bx, \bk, t) &= \int_{\mathbb{R}^{n}} \textup{d} \bm{\bm{k}^{\prime}} V_{W}(\bm{x},\bm{k}-\bm{k}^{\prime},t)f(\bm{x},\bm{k}^{\prime},t),\\
V_{W}(\bm{x},\bm{k},t) &=\frac{1}{\mi\hbar (2\pi)^{n}}\int_{\mathbb{R}^{n}} \text{d}\bm{y} \me^{-\mi\bm{k}\cdot \bm{y}} \left[ V(\bm{x}+\frac{\bm{y}}{2}, t)-V(\bm{x}-\frac{\bm{y}}{2}, t) \right], \label{Wigner_kernel}
\end{align}
provided that $V(\bx, t)$ belongs to an appropriate symbol class, $\hbar$ is the reduced Planck constant, $\bk/\bm{m}$ is short for $(\bk_{1}/{m}_1, \ldots, \bk_{N}/ {m}_{N})$ and $m_i$ is the mass of the $i$-th body. In the past few decades, WEQ has been drawing a growing attention, especially in the simulations of nanodevices \cite{NedjalkovKosinaSelberherrRinghoferFerry2004, KosinaNedjalkovSelberherr2000, KosinaNedjalkovSelberherr2003jce,ZhanColomesOriols2016,WoloszynSpisak2017,bk:QuerliozDollfus2010} as well as the many-body quantum mechanics \cite{SellierNedjalkovDimov2015, ShaoXiong2016}, due to its theoretical advantage \cite{bk:Balescu1975, tatarskiui1983}. 

The huge challenge to the numerical resolution of WEQ lies in the high dimensionality of the phase space, which are unfriendly to traditional deterministic solvers. A class of stochastic algorithms has recently been proposed and its mathematical theory consists of three components:  the probabilistic interpretation of WEQ, the principle of importance sampling and the technique of density estimation \cite{ShaoXiong2016, Wagner2016}. The corresponding implementations borrow several fundamental concepts from the direct simulation Monte Carlo method for the Boltzmann equation, including particle weights, random jumps and particle cancelations \cite{SellierNedjalkovDimov2014, RjasanowWagner1996, MuscatoWagnerStefano2010,YanCaflisch2015}. In this manner, it paves a promising way to overcome the curse of dimensionality, and several benchmarks have also revealed its reliability in capturing fine structures of 2D and 4D quantum systems \cite{ShaoXiong2016, ShaoSellier2015, MuscatoWagner2016}. 

Although almost all Monte Carlo approaches rely on the equivalent stochastic interpretation of WEQ, there exist significant differences between various realizations and thereby resulting in distinct performances. Thus it screams for fair and detailed evaluations to demonstrate how {numerical} accuracy is influenced by several elements, including the way to truncate the WEQ, the choice of the auxiliary function $\gamma(\bx)$ that characterizes the life-length of particles and the interpretation of the multiplicative functionals. The first signed particle Wigner Monte Carlo method (abbreviated as \textbf{sp0}) is suggested to choose a variable auxiliary function $\gamma(\bx)$ and confine the particle weights to either $+1$ or $-1$  (termed the signed particle) \cite{SellierNedjalkovDimov2014, KosinaNedjalkovSelberherr2004}. Such setting greatly facilitates the data storage and the arithmetic operation, but poses a limitation on the time step in order to maintain the accuracy.  In this work, we will propose an improvement (\textbf{I}) to alleviate such limitation via a proper treatment of the bias term and denote the resulting scheme by $\textbf{sp0-I}$.  Another class of stochastic algorithms based on the random cloud model (abbreviated as $\textbf{RC}$) has been proposed in \cite{Wagner2016, MuscatoWagner2016}, where a constant time technique and a rejection sampling technique are adopted for generating the scattering time and state, respectively \cite{MuscatoWagnerStefano2010}. 
In our previous work, we have proposed the Wigner branching random walk (WBRW) model, in which the multiplicative functionals are interpreted as the importance weights, yielding the weighted-particle branching random walk algorithm (abbreviated as \textbf{wp}) \cite{ShaoXiong2016}. Simulating the WBRW ameliorates the restriction on the time step and allows a reduction in variance by choosing a large constant $\gamma(\bx)$. The price to pay is that particle weights have to take values continuously in $[-1, 1]$ (termed the weighted particle). As a special case, \textbf{sp0} is fully recovered from $\textbf{wp}$ by choosing an auxiliary function according to the Wigner kernel $V_W$. But a constant auxiliary function $\gamma(\bx) \equiv \gamma_0$ is much more preferable when simulating {actual} many-body quantum  systems.

This paper intends to discuss another two efficient strategies to implement WBRW and make a thorough comparison with the existing signed-particle implementations. The performance of  stochastic methods is usually related to the variance reduction, data storage, as well as an appropriate choice of sample size $N_\alpha$. Therefore, our goals are to get both advantages of $\textbf{sp0}$ and $\textbf{wp}$, say, the setting of signed weights and the manner to improve the accuracy systematically. Besides, we will discuss the criterion to choose an appropriate particle size, which may be related to both the accuracy of the Monte Carlo simulations and the efficiency of the resampling techniques. First, the signed-particle branching random walk algorithm (abbreviated as $\textbf{sp1}$) is proposed, where the multiplicative functional is interpreted as the {rejecting ratio} and thus the particle weights are confined to be either 1 or -1. We put all the signed-particle implementations in a unified framework and analyze the sources of errors. However, numerical experiments demonstrate that the variance of $\textbf{sp1}$ is usually larger than that of $\textbf{wp}$ and cannot be diminished by increasing $\gamma(\bx)$. To save it, we introduce a bootstrap filter in \textbf{wp} to adjust particle weights to $\pm 1$ and preserve the variance reduction property, yielding another signed-particle implementation (abbreviated as $\textbf{sp2}$). A theoretical error bound on the bootstrap filtering is analyzed and ensures the convergence as $N_{\alpha} \to \infty$. Detailed performance evaluations demonstrate the convergence, accuracy and efficiency of the proposed strategies, with the reference solutions produced by several highly accurate deterministic solvers \cite{ShaoLuCai2011, XiongChenShao2016}. For resampling, it is observed that the sample size should be comparable to the partition size if a uniform histogram is adopted, otherwise both efficiency and accuracy are undermined. In fact, similar phenomena have also been reported in statistical community, referring to the current challenge to the multivariate density estimation (particle cancelation) \cite{bk:LaszloGyorfi2002, bk:HastieTibshiraniFriedman2009}. 

The rest is organized as follows. In Section \ref{sec:wbrw}, we briefly review the mathematical theory of the WBRW. 
Two strategies of signed-particle implementation, as well as a theoretical comparison with the existing approaches, are given in Section \ref{sec:sp}.
Several important issues on the implementation are illustrated in Section \ref{sec:imp}. For the sake of readers' convenience, we try our best to illustrate all the details and unfold the `blackbox' of coding WBRW.  A sequence of performance evaluations is reported in Section \ref{sec:num}, {associated with} a thorough comparison among $\textbf{sp1}$, $\textbf{sp2}$, $\textbf{wp}$, $\textbf{sp0}$, $\textbf{sp0-I}$, and $\textbf{RC}$. The paper is concluded in Section \ref{sec:con} with a few remarks.

%to the quantum mechanics has recently been drawing a growing attention, and has been widely applied in nanoelectronics \cite{KosinaNedjalkovSelberherr2000, bk:MarkowichRinghoferSchmeiser1990, th:Biegel1997}, non-equilibrium statistical mechanics \cite{JacoboniBordone2004, bk:Balescu1975}, quantum optics \cite{bk:Schleich2011} and many-body quantum systems \cite{SellierNedjalkovDimov2015, ShaoXiong2016} due to its significant theoretical advantages. In contrast to such fact, the high dimensionality of the phase space and the oscillating structures arising from the quantum coherence still pose a huge challenge to 
%looking for numerical solutions to the Wigner equation.  Although in the last few decades grid-based solvers for the Wigner simulations have indeed burgeoned with many developments and, in particular, the spectral method has been able to capture the phase space oscillations very accurately \cite{ShaoLuCai2011, XiongChenShao2016}, they are not so practical for solving the Wigner equation over more than 4D phase space because of the massive data storage and high computational complexity.

\section{The Wigner branching random walk}
\label{sec:wbrw}

The mathematical framework for exploring the inherent relation between the WBRW and the WEQ has been established recently from the viewpoint of computational mathematics \cite{ShaoXiong2016}. For the sake of completeness as well as readers' convenience, we will give a brief overview of the main findings there in this section and assume that the potential in Eq.~\eqref{Wigner_kernel} is time-independent for brevity hereafter, 
and the generalization to the time-dependent scenario is straightforward (see Remark \ref{remark_1}).

To connect rigorously the WBRW to Eq.~\eqref{eq.Wigner}, 
an auxiliary function $\gamma(\bx)$ is added on both sides of Eq.~\eqref{eq.Wigner} in an equivalent manner for producing the exponential distribution,
and then the resulting equation can be cast into 
a renewal-type integral equation
\begin{equation}\label{Def:renewal_type_eq}
\begin{split}
f(\bm{x}, \bm{k}, t)=&\left[1-\mathcal{H}(0; \bm{x}, t)\right] f(\bm{x}(t), \bm{k}, 0) \\
&+ \int_{0}^{t} \D \mathcal{H}(t^{\prime}; \bm{x}, t) \int_{\mathbb{R}^n} \D \bm{x}^\prime 
 \int_{\mathcal{K}} \D \bm{k}^{\prime}
 \frac{\Gamma(\bm{x}(t-t^{\prime}), \bm{k}; \bm{x}^{\prime}, \bm{k}^{\prime})}{\gamma(\bm{x}(t-t^{\prime}))}
 f(\bm{x}^\prime, \bm{k}^{\prime}, t^{\prime}) ,
\end{split}
\end{equation}
where 
\begin{equation}\label{eq:H}
\mathcal{H}(t^{\prime} ; \bm{x}, t )=\int_{t^{\prime}}^{t}  \gamma(\bm{x}(t-\tau)) \me^{-\int_{\tau}^{t} \gamma(\bm{x}(t-s)) \D s}~  \D \tau
\end{equation}
is a probability measure with respect to $t^{\prime}$ for a given $(\bm{x}, t)$ on $t^{\prime} \leq t$
provided that the auxiliary function $\gamma(\bm{x})$ satisfies
\begin{equation}\label{eq:gamma_condition}
\gamma(\bm{x})\geq 0, \quad \lim_{t^\prime\to-\infty}\int_{t^\prime}^t \gamma(\bm{x}(t-s))\dif s = +\infty, \quad \forall\, \bm{x}\in\mathbb{R}^n,
\end{equation}
$\bm{x}(\Delta t)=\bm{x}-{\hbar \bm{k}\Delta t}/{\bm{m}}$ denotes the backward-in-time trajectory of $(\bm{x}, \bm{k})$ with a positive time increment $\Delta t$. It's readily seen from Eq.~\eqref{Def:renewal_type_eq} that $\gamma(\bx)$ serves as the intensity of an exponential distribution. The kernel $\Gamma(\bm{x}, \bm{k}; \bm{x}^{\prime}, \bm{k}^{\prime})$ is given by 
\begin{equation}\label{eq.Gamma}
\Gamma(\bm{x}, \bm{k}; \bm{x}^{\prime}, \bm{k}^{\prime}) =\left[V^{+}_{W}(\bm{x},\bm{k}-\bm{k}^{\prime})-V^{-}_{W}(\bm{x},\bm{k}-\bm{k}^{\prime})+\gamma(\bm{x})\delta(\bm{k}-\bm{k}^{\prime})\right] \delta(\bm{x}-\bm{x}^{\prime}),
\end{equation}
and we have adopted  the $k$-truncated Wigner kernel for the purpose of numerical computation,
namely, the $\bm{k}$-space is truncated into a finite domain $\mathcal{K}$ and a simple nullification adopted outside $\mathcal{K}$ by exploiting the decay of the Wigner function when $|\bm{k}|\to+\infty$ due to the Riemann-Lebesgue lemma.

Splitting the Wigner kernel into positive and negative parts in \eqref{eq.Gamma} is of great importance to a probabilistic interpretation.  In general, the Wigner kernel $V_W$ is composed of $M$ parts
$V_{W} = V_{W, 1}+V_{W, 2}+\cdots V_{W, M}$, that corresponds to the potential $V=V_1+V_2+\cdots+V_M$, then the Wigner kernel can be split into $M$ pairs as follows 
\begin{equation}\label{eq:M}
V_{W} = V_{W}^{+} - V_{W}^{-},\quad V_{W}^{\pm} = \sum_{m=1}^M V_{W, m}^{\pm}, \quad
V_{W, m}^{\pm} =\frac{1}{2}\left|V_{W, m}\right| \pm \frac{1}{2}V_{W, m}.
\end{equation}
Such setting is also very helpful in dealing with complicated systems composed of two-body interactions, as combining it with the Fourier completeness relation helps to decompose the convolution term  in Eq.~\eqref{eq.Wigner} into several lower dimensional integrals, for instance, see Example \ref{example1}.

\begin{example}\rm\label{example1}
We consider a 4D Helium-like system, in which the potential is composed of electron-nucleus attractive Yukawa interactions and electron-electron repulsive Yukawa interaction, 
\begin{equation}\label{eq:M3}
\begin{split}
V(x_1, x_2) &= V_{\textup{ne}}(x_1) +  V_{\textup{ne}}(x_2)  + V_{\textup{ee}}(x_1, x_2) \\
&= -\frac{2\me^{-\kappa|x_1-x_A|}}{2\kappa}-\frac{2\me^{-\kappa|x_2-x_A|}}{2\kappa}+\frac{\me^{-\kappa|x_1-x_2|}}{2\kappa}. 
\end{split}
\end{equation}
The parameter $\kappa$ expresses the screening strength, 
$x_A$ denotes the position of the nucleus, and $x_i (i=1,2)$ is the position of the $i$-th electron. The corresponding Wigner kernel reads
\begin{align}\label{wigner_kernel_yukawa}
V_{W, \textup{ne}}(x_i,  k_i) &= -\frac{2}{\hbar \pi} \cdot \frac{\sin(2k_i(x_i-x_A))}{4k_i^2+\kappa^2} , \\
V_{W, \textup{ee}}(x_1, x_2, k_1, k_2) &= \frac{1}{\hbar \pi}  \cdot \frac{\sin(2k_1 x_1+2k_2 x_2)}{|k_1-k_2|^2+\kappa^2} \cdot \delta(k_1 + k_2).
\end{align}
As a result, the convolution term is decomposed into three 1D integrals: 
\begin{align}
& \int_{\mathbb{R}^2} V_{W}(\bx, \bk - \bk^{\prime}) f(\bx, \bk^{\prime}, t) \D \bk^{\prime} = I_1(\bx, \bk, t) + I_2(\bx, \bk, t)+I_3(\bx, \bk, t),\\
& I_1(\bx, \bk, t) = -\frac{2}{\hbar \pi} \int_{\mathbb{R}} \frac{\sin(2k^{\prime} \cdot (x_1-x_A))}{4(k^{\prime})^2+\kappa^2} f(x_1, x_2, k_1 - k^{\prime}, k_2, t) \D k^{\prime}, \\
& I_2(\bx, \bk, t) = -\frac{2}{\hbar \pi} \int_{\mathbb{R}} \frac{\sin(2k^{\prime} \cdot  (x_2-x_A))}{4(k^{\prime})^2+\kappa^2} f(x_1, x_2, k_1 , k_2 - k^{\prime}, t) \D k^{\prime}, \\
& I_3(\bx, \bk, t) = \frac{1}{\hbar \pi} \int_{\mathbb{R}} \frac{\sin(2k^{\prime} \cdot (x_1-x_2))}{4(k^{\prime})^2+\kappa^2} f(x_1, x_2, k_1 - k^{\prime} , k_2 + k^{\prime}, t) \D k^{\prime}.
\end{align}
\end{example}

The main use of the Wigner function is to calculate observes $\langle \hat{A} \rangle_{T}$ at a given final time $T$, such as the average position of particles, electron density, etc., all of which are attributed to the inner product problem
\begin{align}
&\langle g_0, f \rangle= \int_{0}^{T}  \D t  \int_{\mathbb{R}^n}  \D \bm{x}  \int_{\mathcal{K}} \D \bm{k}~g_0(\bm{x}, \bm{k}, t) f(\bm{x}, \bm{k} ,t).\label{Def:inner_product}\\
&\langle \hat{A} \rangle_T = \langle g_0, f \rangle \,\, \text{ with } \,\,
g_0(\bm{x}, \bm{k},t)=A(\bm{x}, \bm{k})\delta(t-T).\label{eq:AT_fT} 
\end{align}
Here $\hat{A}(\hat{\bm{x}}, \hat{\bm{k}})$ is an arbitrary quantum operator, and $A(\bm{x}, \bm{k})$ the corresponding Weyl symbol. Actually, the WBRW algorithms are devoted into estimating $\langle \hat{A} \rangle_T$  
and bottomed on the dual theory of WEQ.  
It has been proved that the average value $\langle \hat{A} \rangle_T$ 
can be determined {\sl only by the `initial' data} as follows 
\begin{equation}\label{eq:important}
\langle \hat{A} \rangle_{T}=  \int_{\mathbb{R}^n} \D \bm{x} \int_{\mathcal{K}}\D \bm{k} ~f(\bm{x}, \bm{k}, 0) \varphi(\bm{x}, \bm{k}, 0),
\end{equation}
where the dual variable $\varphi(\bm{x}, \bm{k}, t)$ satisfies
\begin{equation}\label{Adjoint_renewal_type_equation}
\begin{split}
\varphi(\bm{x}, \bm{k}, t)=&\left[1-\mathcal{G}(T; \bm{x}, t)\right] A(\bm{x}(T-t), \bm{k})\\
&+\int_{t}^{T} \D \mathcal{G}(t^{\prime}; \bm{x}, t)  \int_{\mathbb{R}^n} \D \bm{x}^{\prime} \int_{\mathcal{K}} \D \bm{k}^{\prime}  \frac{\Gamma(\bm{x}^{\prime}, \bm{k}^{\prime}; \bm{x}(t^{\prime}-t), \bm{k})}{\gamma(\bm{x}(t^{\prime}-t))} \varphi(\bm{x}^{\prime}, \bm{k}^{\prime}, t^{\prime}).
\end{split}
\end{equation}
Eq. \eqref{Adjoint_renewal_type_equation} can be regarded as the adjoint equation  of Eq. \eqref{Def:renewal_type_eq} ({also called the backward equation}). Noting that it is required $t^\prime\geq t$ for convenience in the former, but $t^\prime\leq t$ is always assumed in the latter. Here, 
\begin{equation}\label{eq:measure2}
\mathcal{G}(t^{\prime}; \bm{x}, t)=\int_{t}^{t^{\prime}} \gamma(\bm{x}(\tau-t)) \me^{-\int^{\tau}_{t} \gamma(\bm{x}(s-t)) \D s} ~ \D \tau
\end{equation}
is again a probability measure with respect to $t^{\prime}$ for given $(\bm{x}, t)$ on $t^{\prime} \geq t$
under the assumption that
the auxiliary function satisfies
\begin{equation}\label{eq:gamma_conditioN^+}
\gamma(\bm{x})\geq 0, \quad \lim_{t^\prime\to+\infty}\int_{t}^{t^\prime} \gamma(\bm{x}(t-s))\dif s = +\infty, \quad \forall\, \bm{x}\in\mathbb{R}^n, 
\end{equation}
and $\bm{x}(\Delta t)=\bm{x}+{\hbar \bm{k}\Delta t}/{\bm{m}}$ denotes the forward-in-time trajectory of $(\bm{x}, \bm{k})$ with a positive time increment $\Delta t$. 

The reasons why we mainly focus on the adjoint equation, instead of the original WEQ,  are as two-fold. First, Eq.~\eqref{eq:important} allows the use of the importance sampling approach, with $f_I \propto |f(\bx, \bk, 0)| $ chosen as the instrumental distribution. Second, estimating $\varphi(\bx, \bk, 0)$ can be implemented in a time-marching manner. By contrast, the backward-in-time stochastic algorithms are pointwise in nature and only attractive when one is interested in estimating $f(\bx, \bk, t)$  {at a few given points} \cite{bk:Liu2001}.

A branching random walk model has been introduced in \cite{ShaoXiong2016} with its expectation consistent with the mild solution of the adjoint equation \eqref{Adjoint_renewal_type_equation}. In the branching particle system, particles carrying importance weights are indexed by a branching random tree, and their motions are described by deterministic travels and random jumps (without any diffusion). In order to obtain a time series of quantum observables $\langle \hat{A} \rangle_{t}$, we adopt the setting of an equidistant partition of the time interval $[0, t_{fin}]$ 
\begin{equation}\label{def.time_partition}
0 = t_0 \le t_1 \le t_2 \le \cdots \le t_{n-1} \le t_n = t_{fin}, ~~t_{l+1} - t_l = \Delta t,
\end{equation}
and summarize the rules of the branching particle system for a certain interval $[t_l, t_{l+1}]$ in Alg.~\ref{weighted_particle_WBRW}. The implementations of WBRW are resorted to the sequential Monte Carlo techniques \cite{bk:DoucetDeFreitasGordon2001}.
  
\begin{algorithm}
\caption{The Wigner branching random walk}\label{weighted_particle_WBRW}

\vspace{3mm} 

Suppose each particle in the branching particle system, carrying an initial weight either $1$ or $-1$, starts at state $(\bx_\alpha, \bk_\alpha)$ at time $t_l$ and moves until $t_{l+1} = t_l +\Delta t$ according to the following rules.  
\begin{itemize}

\item[1.]  For $\textbf{sp0}, \textbf{sp1}, \textbf{sp2}$ and $\textbf{wp}$, the particle at $(\bm{x}, \bm{k}, t)$ dies in the age time interval $(t, t^{\prime}) \subset [t_l, t_{l+1}]$  with probability  $1- \int_{t}^{t+\tau} \me^{-\gamma(\bx(s-t))}  \D s $, with a random life-length $\tau = t^{\prime}-t$. For $\textbf{RC}$, $\tau$ is fixed to be $\tau = \Delta t$.

\item[2.] If $t+\tau > t_{l+1}$, say, the life-length of the particle exceeds $t_{l+1}-t$, the particle immigrates to the state $(\bm{x}(t_{l+1}-t), \bk)$ and becomes frozen. 

\item[3.] If $t+\tau \le t_{l+1}$, the particle carrying the weight $w$ dies at age $t^{\prime}=t+\tau$ at state $(\bm{x}(\tau) , \bm{k})$ and produces at most $2M+1$ offsprings at states $(\bm{x}^{\prime}_{(1)}, \bm{k}^{\prime}_{(1)})$,  $\cdots$, $(\bm{x}^{\prime}_{(2M+1)}, \bm{k}^{\prime}_{(2M+1)})$, endowed with updated weights $w^{\prime}_{(1)}$, $\cdots$, $w^{\prime}_{(2M+1)}$.
\begin{align}
\bm{x}^{\prime}_{(1)} = \bm{x}^{\prime}_{(2)} = \cdots = \bm{x}^{\prime}_{(2M+1)} = \bm{x}(\tau),\\
\bk - \bm{k}^{\prime}_{(2m-1)}\propto \frac{V^{-}_{W,m}(\bm{x}(\tau), \bm{k})}{\xi_m(\bm{x}(\tau)) },~
\bm{k}-\bm{k}^{\prime}_{(2m)} \propto \frac{V^{+}_{W, m}(\bm{x}(\tau), \bm{k})}{\xi_m(\bm{x}(\tau)) }, \label{def.k_sample}\\
\bm{k}^{\prime}_{(2M+1)}=\bm{k},\\
\xi_{m}(\bm{x})=\int_{2\mathcal{K}} V^{+}_{W, m}(\bm{x}, \bm{k}) \D \bm{k} =  \int_{2\mathcal{K}} V^{-}_{W, m}(\bm{x}, \bm{k}) \D  \bm{k}. \label{def.xi_m}
\end{align}

\setlength{\fboxsep}{0.2cm} 
\leftline{
\fbox{ \shortstack[l] {
For $\textbf{sp1}$,  the $i$-th offspring is generated with probability $\Pr(i)$. \\
$\Pr{(2m-1)} = \frac{\xi_{m}(\bm{x}(\tau))}{\gamma(\bx(\tau))} , ~~ \Pr{(2m)} = \frac{\xi_{m}(\bm{x}(\tau))}{\gamma(\bx(\tau))},~~ \Pr(2M+1) = 1.$\\
$w^{\prime}_{(2m-1)} = w \cdot \mone_{ \{ \bm{k}^\prime_{2m-1} \in \mathcal{K} \} }, ~~ w^{\prime}_{(2m)} = - w \cdot  \mone_{ \{ \bm{k}^\prime_{2m} \in \mathcal{K} \} }, ~~ w^{\prime}_{(2M+1)} = w$. \quad \quad \quad \quad ~~~}
}
}
\vspace{0.2cm}

\leftline{
\fbox{ \shortstack[l] {
For $\textbf{wp}$ and $\textbf{sp2}$,  the $i$-th offspring is generated with probability $1$. \\
$w^{\prime}_{(2m-1)} = w \cdot \frac{\xi_{m}(\bm{x}(\tau))}{\gamma(\bx(\tau))} \mone_{ \{ \bm{k}^\prime_{2m-1} \in \mathcal{K} \} }, ~ w^{\prime}_{(2m)} = -w \cdot \frac{\xi_{m}(\bm{x}(\tau))}{\gamma(\bx(\tau))} \mone_{ \{ \bm{k}^\prime_{2m} \in \mathcal{K} \} }, ~ w^{\prime}_{(2M+1)}= w$. 
}}
}

\vspace{0.2cm}
\leftline{
\fbox{ \shortstack[l] {
For $\textbf{sp0}$, $M = 1$ and the $i$-th offspring is generated with probability $1$. \quad \quad ~~~\\
$ w^{\prime}_{(1)}= w \cdot \mone_{ \{ \bm{k}^\prime_{1} \in \mathcal{K} \} }, ~~ w^{\prime}_{(2)}= -w \cdot \mone_{ \{ \bm{k}^\prime_{2} \in \mathcal{K} \} }, ~~ w^{\prime}_{(3)}= w$.
}}
}

\vspace{0.2cm}
\leftline{
\fbox{ \shortstack[l] {
For $\textbf{RC}$, $M = 1$  and the $i$-th offspring is generated with probability $\Pr(i)$. \quad \quad\\
$\Pr(1) = \gamma(\bx(\Delta t)) \Delta t, ~~ \Pr(2) = \gamma(\bx(\Delta t)) \Delta t,  ~~\Pr(3) = 1.$\\
$ w^{\prime}_{(1)}= w \cdot \mone_{ \{ \bm{k}^\prime_{1} \in \mathcal{K} \} }, ~~ w^{\prime}_{(2)}= -w \cdot \mone_{ \{ \bm{k}^\prime_{2} \in \mathcal{K} \} } , ~~ w^{\prime}_{(3)}= w$.
}}
}

\item[4.] Frozen particles are denoted by the collection $\mathcal{S} = \{ (\bx_i, \bk_i)\}_{i=1}^N$ and are weighted by $ \mathcal{W} = \{ w_i\}_{i=1}^N$, with $w_i  \in [-1, 1]$. Any quantum observables $\langle \hat{A} \rangle_t$  can be estimated through Eq.~\eqref{eq.estimator}. 

\item[5.] The $(2m-1)$-th and $(2m)$-th particles are suggested to be produced in pair to maintain the mass conservation, say, $\sum_{i=1}^N w_i = N_\alpha$. 
%The particles with weight $w = 0$ are actually not counted in the branching particle system.

\end{itemize}

\end{algorithm}

Our ultimate goal is to derive the following estimator of the data point $\langle \hat{A} \rangle_{t_{l+1}}$, where $t_{l}$ and $t_{l+1}$ denote the initial and final instant, respectively.
\begin{equation}\label{eq.estimator}
\langle \hat{A} \rangle_{t_{l+1}} \approx  \frac{1}{N_\alpha}\sum_{\alpha} \sum_{j \in \mathcal{E}_{\alpha}} \varphi(\bx_{j, \alpha}, \bk_{j, \alpha}, t_l) \cdot w_{j, \alpha}(t_l) = \langle A, \frac{1}{N_\alpha}\sum_{i = 1}^{N} w_{i} \delta_{\bx_i, \bk_i} \rangle.
\end{equation}
Here $N_\alpha$ is the sample size, the indices $\alpha \in \{ 1, \cdots, N_\alpha \}$ mark the draws (particles) from the instrumental probability density $f_I$ and the index set $\mathcal{E}_{\alpha}$ marks the offsprings produced by $\alpha$-th particle, and the equality is only a rearrangement of the indices. For more details, one can refer to \cite{ShaoXiong2016}.

\begin{remark}\label{remark_1}
Alg.~\ref{weighted_particle_WBRW} can be readily generalized to the problem with a time-varying potential $V(\bx, t) = \sum_{m=1}^M V_m(x, t)$, as we can simply replace Eq.~\eqref{def.k_sample} by
\begin{equation}
\bk - \bm{k}^{\prime}_{(2m-1)}\propto \frac{V^{-}_{W,m}(\bm{x}(\tau), \bm{k}, t+\tau)}{\xi_m(\bm{x}(\tau), t+\tau) },\quad
\bm{k}-\bm{k}^{\prime}_{(2m)} \propto \frac{V^{+}_{W, m}(\bm{x}(\tau), \bm{k}, t+\tau)}{\xi_m(\bm{x}(\tau), t+\tau) }, \label{def.k_sample}
\end{equation}
and Eq.~\eqref{def.xi_m} by
\begin{equation}
\xi_{m}(\bm{x}, t)=\int_{2\mathcal{K}} V^{+}_{W, m}(\bm{x}, \bm{k}, t) \D \bm{k} =  \int_{2\mathcal{K}} V^{-}_{W, m}(\bm{x}, \bm{k}, t) \D  \bm{k}.
\end{equation}
Here the auxiliary function $\gamma(\bx)$ are chosen to satisfy $\displaystyle{\gamma(\bx) \ge \max_{m} \max_{t\in [t_l, t_{l+1}]} \xi_m(\bx, t)}$.
\end{remark}

\section{Signed-particle WBRW: different strategies}
\label{sec:sp} 

Several ways have been proposed to realize the branching random walk algorithms in Alg.~\ref{weighted_particle_WBRW}. The major differences, to the best of our knowledge, are attributed to two issues.  

One is the probabilistic interpretation of the multiplicative functional $\xi_m(\bx)/\gamma(\bx)$ 
or $\hat{\xi}_m(\bx)/\gamma(\bx)$ (both denoted by $\xi/\gamma$ for brevity) in Eq.~\eqref{eq.majorant}.
\begin{equation}\label{eq.majorant}
\begin{split}
&\sum_{m=1}^{M}\frac{\xi_m(\bx)}{\gamma(\bx)} \int_{\mathcal{K}} \left[\frac{V^+_{W,m}(\bx, \bk - \bk^{\prime})}{\xi_m(\bx)} - \frac{V^-_{W,m}(\bx, \bk - \bk^{\prime})}{\xi_m(\bx)}\right] \varphi(\bx, \bk^{\prime}, t) \D \bk^\prime \\
&=  \sum_{m=1}^{M} \frac{\hat{\xi}_m(\bx)}{\gamma(\bx)} \int_{\mathcal{K}} \frac{V_{W,m}(\bx, \bk-\bk^{\prime})}{\left|V_{W,m}(\bx, \bk-\bk^{\prime})\right|} \cdot \frac{\left| V_{W,m}(\bx, \bk-\bk^{\prime}) \right|}{\hat{V}_{W,m}(\bx, \bk-\bk^{\prime})} \cdot \frac{\hat{V}_{W,m}(\bx, \bk-\bk^{\prime})}{\hat{\xi}_m(\bx)} \varphi(\bx,\bk^{\prime}, t) \D \bk^\prime,
 \end{split}
 \end{equation}
where $\hat{V}_W(\bx, \bk)$ is a positive semidefinite majorant function such that 
\begin{equation}
V^\pm_{W,m}(\bx, \bk) \le \hat{V}_{W,m}(\bx, \bk), \quad (\bx, \bk) \in \mathbb{R}^n \times \mathbb{R}^n,
\end{equation}
and $\hat{\xi}_m(\bx)$ denotes the corresponding normalizing function of $\hat{V}_{W,m}$. 

The other is the generation of the life-length $\tau$ for estimating the following integral (the second term in Eq.~\eqref{Adjoint_renewal_type_equation})
\begin{equation}\label{def.time_integral}
\int_{t_l}^{t_{l+1}}\Theta_V[\varphi](\bx(t^{\prime}-t_l), \bk, t^{\prime}) \me^{-\int^{t^{\prime}}_{t_l} \gamma(\bm{x}(s-t_l)) \D s} \D t^{\prime}. 
\end{equation}

Sections \ref{sec:gamma} and \ref{sec:tau_generation} illustrate the underlying idea of the existing approach. In fact, the basic tools are the principles of both rejection sampling and importance sampling, and the explicit Euler evolution. This motivates us to propose two other implementations: $\textbf{sp1}$ and $\textbf{sp0-I}$. The former is the signed-particle counterpart of $\textbf{wp}$ and the latter is a direct improvement of $\textbf{sp0}$. A comparison among all these strategies will be given in Section \ref{sec:comparison}.

Section \ref{sec:bootstrap} turns to seek a new approach $\textbf{sp2}$, which is essentially a weighted-particle implementation but associated with a bootstrap filtering to adjust the continuous weights into signed weights. The standard bootstrap filtering, such as Alg.~\ref{res_bootstrapping}, provides a useful way to adjust the skewness of particle weights, but usually requires that weight functions to be positive semidefinite. In order to deal with particles carrying weights in $[-1, 1]$, an extended bootstrapping is provided.

%The major difference between $\textbf{wp}$ and $\textbf{sp1}$ is attributed to the probabilistic interpretation of the multiplicative functional $\xi_m(\bx)/\gamma(\bx)$ in Alg.~\ref{weighted_particle_WBRW} (we use $\xi/\gamma$ for brevity), which is treated as the weight function in the former and the probability to generate particles in the latter. On the other hand, $\textbf{sp2}$ is essentially a weighted-particle implementation, but associated with a bootstrap filtering to adjust the continuous weights into signed weights, as illustrated in Alg.~\ref{res_bootstrapping}. 

%Based on the principle of the rejection resampling, the multiplicative functional $\xi/\gamma$ can be interpreted by a different approach as given in \cite{MuscatoWagner2016}. It {is suggested} to seek a positive semidefinite majorant $\hat{V}_{w}(\bx, \bk)$ such that $|V_W(\bx, \bk)| \le \hat{V}_W(\bx, \bk)$, then the convolution is rewritten as

\subsection{Probabilistic interpretation of $\xi/\gamma$}
\label{sec:gamma}

No matter what strategies are used, the starting point turns out to be Eq.~\eqref{eq.majorant}, where each term can be endowed with a probabilistic interpretation. With the principle of the important sampling, we may find some majorants $\hat{V}_{W, m}(\bx, \bk)$ and draw samples from it. Then two factors ${|V_{W, m}}(\bx, \bk)|/{\hat{V}_{W,m}(\bx, \bk)}$ and $\xi_m(\bx)/\gamma(\bx)$ are treated as either the importance weights or the rejecting ratios. The former yields the weighted-particle implementation and the latter yields the signed-particle counterpart. Besides, the choices of  $\hat{V}_{W, m}$ and $\xi_m(\bx)$ also play a fundamental role in constructing the trajectories. A natural choice is $\hat{V}_{W,m} =V_{W,m}^+$ as adopted in \textbf{sp0} when $M=1$ and thus also employed in our strategies.

%According to Theorem 3.12 in \cite{bk:RobertCasella2004}, the optimal choice of $\hat{V}_{W,m}$ is 
%$\hat{V}_{W, m} = V^\pm_{W, m}$, in the sense that the optimal bound for variance is attained. Such choice is employed in $\textbf{sp0}$ and also adopted in our strategies.  

In our previous work \cite{ShaoXiong2016}, $\textbf{wp}$ suggests to choose a $\gamma(\bx) \ge \max_m \xi_m(\bx)$ and treats the $\xi_m(\bx)/\gamma(\bx)$ as the incremental particle weights. And the expectation of growth rate of particle number is bounded by $\me^{2M\gamma_0(T-t)}$.  In particular, when
\begin{equation}\label{def_gamma}
\gamma(\bx)=\xi(\bx)= \int_{2\mathcal{K}} V^{+}_{W}(\bx, \bm{k}) \D \bm{k},
\end{equation}
 the multiplicative functionals $\xi/\gamma$ in Alg.~\ref{weighted_particle_WBRW} are chosen to be either ${\xi(\bx(\tau))}/{\gamma(\bx(\tau))} = 1$ or $-{\xi(\bx(\tau))}/{\gamma(\bx(\tau))} = -1$, resulting in the signed weights in $\textbf{sp0}$ \cite{SellierNedjalkovDimov2014, NedjalkovSchwahaSelberherr2013}. 
 
By contrast, we can also interpret $\xi_m(\bx)/\gamma(\bx)$ as the probability of generating the $(2m-1)$-th and $(2m)$-th particles as an alternative. In particular, when a constant $\gamma(\bx)\equiv\gamma_0$ is used, such approach is our first signed-particle implementation of the Wigner branching random walk: $\textbf{sp1}$. One can easily give an estimation of the growth rate of total particle number for \textbf{sp1}. It is found that the growth rate of particle number, as well as the computational complexity, is suppressed.
% However, numerical experiments show that the variance of the signed-particle implementation is usually larger than that of the weighted-particle counterpart.  

\begin{theorem}\label{th:exp}
Denote $\mathbb{E}Z_T$ by the expectation of the total number of frozen particles in time interval $[0, T]$. For \textbf{sp1}, we have
\begin{equation}\label{signed_particle_number_formula}
\mathbb{E}Z_T \le \me^{2M\check{\xi} T},
\end{equation}
where $\check{\xi} := \max_{m} \sup_{\bx} \xi_m(\bx)$. Thus the bound for the average particle number is independent of the choice of $\gamma_0$.
\end{theorem}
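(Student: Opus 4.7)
The plan is to analyze $\textbf{sp1}$ as a continuous-time Markov branching process and track the first moment $m(t) := \mathbb{E}[N(t)]$ of the number of alive particles $N(t)$ per initial particle. The crucial point is that the pre-factor $\gamma(\bx)$ should cancel from the infinitesimal drift of $m(t)$, leaving a bound that depends only on $\check{\xi}$ and $M$ and is therefore independent of $\gamma_0$.

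First I would read off from Alg.~\ref{weighted_particle_WBRW} the branching statistics of a parent located at $\bx$: it dies at exponential rate $\gamma(\bx)$, and conditional on branching, the expected number of offsprings is
\begin{equation}
\mu(\bx) \;=\; 1 + 2\sum_{m=1}^{M}\frac{\xi_m(\bx)}{\gamma(\bx)},
\end{equation}
because the $(2M+1)$-th identity offspring is always produced, while the $(2m-1)$-th and $(2m)$-th offsprings are each produced with probability $\xi_m(\bx)/\gamma(\bx)$. Since the parent itself dies, the expected net increment per branching event is $\mu(\bx)-1$, and the per-particle infinitesimal drift equals
\begin{equation}
\gamma(\bx)\,[\mu(\bx)-1] \;=\; 2\sum_{m=1}^{M}\xi_m(\bx) \;\le\; 2M\check{\xi},
\end{equation}
using $\xi_m(\bx)\le\check{\xi}$ uniformly in $\bx$ and $m$. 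This is the key cancellation that removes $\gamma$ from the mean growth rate.

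Next, by a standard Dynkin/Kolmogorov forward argument applied to the first moment, or more pedestrianly by coupling $N(t)$ with a homogeneous Yule-type process of constant branching rate $2M\check{\xi}$, one obtains the differential inequality
\begin{equation}
\frac{\D m(t)}{\D t} \;\le\; 2M\check{\xi}\cdot m(t),
\end{equation}
and Grönwall's inequality gives $m(T)\le\me^{2M\check{\xi} T}$. Since a frozen particle is produced exactly when the exponential life-length of a living particle exceeds a partition endpoint $t_{l+1}$, each frozen particle descends from an alive particle of the underlying branching process and $Z_T$ is dominated by the terminal population, yielding $\mathbb{E}Z_T \le \me^{2M\check{\xi} T}$.

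The main obstacle is justifying the generator calculation in the spatially inhomogeneous setting, where $\gamma(\bx)$ and $\xi_m(\bx)$ depend on the random trajectory of each particle. The cleanest rigorous route is probably a pointwise coupling: one dominates every offspring count in $\textbf{sp1}$ by that of a surrogate process in which each particle branches at rate $2M\check{\xi}$ producing exactly two children, so the uniform bound $\sum_m \xi_m(\bx)\le M\check{\xi}$ transfers the Grönwall estimate from the dominator back to the original process regardless of trajectory.
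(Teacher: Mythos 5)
Your proposal is correct and is in substance the same argument as the paper's: the decisive point in both is that the expected offspring production rate per particle is $\gamma_0\cdot 2\sum_{m}\xi_m(\bx)/\gamma_0=2\sum_m\xi_m(\bx)\le 2M\check{\xi}$, so $\gamma_0$ cancels and a Gronwall-type estimate gives $\mathbb{E}Z_T\le \me^{2M\check{\xi}T}$. The paper merely packages this as a renewal-type integral inequality obtained by conditioning on the first branching time, $\mathbb{E}Z_t \le 1-G(t)+\bigl(1+2M\check{\xi}/\gamma_0\bigr)\int_0^t \mathbb{E}Z_{t-u}\,\D G(u)$ with $G(u)=1-\me^{-\gamma_0 u}$, which is the integral counterpart of your differential/coupling formulation.
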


\begin{proof}
Each particle is produced with probability no more than $\check{\xi}/\gamma_0$ except the last particle, which is produced with probability 1. Thus the expectation $\mathbb{E}Z_{t}$ satisfies the following inequality
\begin{equation}
\mathbb{E}Z_{t} \le 1-G(t) + (\frac{2M \check{\xi}}{\gamma_0}+1) \int_0^t \mathbb{E}Z_{t-u}  \D G(u),
\end{equation}
with $G(u) = 1 - \me^{-\gamma_0 u}$.  By the Gronwall's inequality, we have
\begin{equation}
\mathbb{E}Z_t \le \me^{2M \check{\xi} t}. 
\end{equation}
By replacing $t$ with $T$, we have completed the proof.
\end{proof}

The calculation of $\xi_m(\bx)$ has to resort to numerical integrations of oscillatory integrals, which might not be an easy task.  In $\textbf{RC}$ \cite{MuscatoWagner2016}, the authors suggest to choose a majorant function, the normalizing function $\xi(\bx)$ of which can be easily obtained, instead of $V^+_W$. Besides, they suggest to choose $\gamma(\bx) = \xi(\bx)/2$, say,
\begin{equation}
\gamma(\bx) = \frac{1}{2}\xi(\bx) = \frac{1}{2}\int_{2\mathcal{K}} \hat{V}_W(\bx, \bk) \D \bk.
\end{equation}
Then probabilistic interpretation of Eq.~\eqref{eq.majorant} is given as follows. ${\hat{V}_{W}(\bx, \bk-\bk^{\prime})}/{\xi(\bx)}$ is the instrumental probability density, ${|V_{W}(\bx, \bk-\bk^{\prime})|}/{\hat{V}_W(\bx, \bk-\bk^{\prime})} $ is the probability to generate particles, and ${V_{W}(\bx, \bk-\bk^{\prime})}/{|V_W(\bx, \bk-\bk^{\prime})|} $ determines the particle sign. In addition, the multiplicative functional $\xi(\bx)/\gamma(\bx) = 2$ is regarded as generating two particles carrying opposite signs, according to the fact that 
\begin{equation}
\frac{2V_{W}(\bx, \bk-\bk^{\prime})}{|V_W(\bx, \bk-\bk^{\prime})|} = \frac{V_{W}(\bx, \bk-\bk^{\prime})}{{|V_W(\bx, \bk-\bk^{\prime})|}} - \frac{V_{W} (\bx, \bk^{\prime} - \bk)}{{|V_W(\bx, \bk^{\prime}-\bk)|}}.
\end{equation}
In practice, a hybrid of different strategies sometimes turns out to be a wise choice and more details can be found in Remark \ref{remark_combination}. 

\subsection{Generation of life-length $\tau$}
\label{sec:tau_generation}

The life-length $\tau$ of a particle characterizes the arrival time of the branching event (or scattering event in physical terminology). Mathematically speaking, the purpose of the generation of $\tau$, either stochastic or non-stochastic (in e.g., \textbf{RC}), is to estimate the integral like Eq.~\eqref{def.time_integral}. Different techniques have been summarized in \cite{MuscatoWagnerStefano2010}. 
Below we list several commonly-used techniques in the Wigner simulations and clarify the source of errors.

\subsubsection{Approximate inverse transform method}  

A simple way to evaluate Eq.~\eqref{def.time_integral} is to generate random $\tau$ from the distribution $\mathcal{G}(t^{\prime}; x, t_l)$ through the inverse transform method. It requires to solve the following equation
\begin{equation}\label{expotential_random_lifetime}
u =\mathcal{G}(t_l+\tau;\bm{x},t_l), \quad  -\ln (1-u) = \int_{t_l}^{t_l+\tau} \gamma(\bm{x}(s-t_l)) \D s,
\end{equation}
where $u$ is a uniform random number in $[0, 1)$. 

Unfortunately, solving Eq.~\eqref{expotential_random_lifetime} for $\tau$ is entirely not trivial as the explicit form of $\gamma(\bx)$ is unknown.  The basic version of $\textbf{sp0}$  \cite{SellierNedjalkovDimov2014,NedjalkovSchwahaSelberherr2013} makes use of the choices $\hat{V}_W(\bx, \bk) = V_W^+(\bx, \bk)$ (or $V_W^-(\bx, \bk)$) and $\gamma(\bx) = \xi(\bx)$ in Eq.~\eqref{eq.majorant}
and further adopts the approximation like
\begin{equation}\label{1stapprox1}
\me^{ - \int_{t_l}^{t^{\prime}} \xi(\bx(s-t_l)) \D s} \approx \me^{ -\int_{t_l}^{t^{\prime}} \xi(\bx) \D s} = \me^{-\xi(\bx)(t^{\prime} -t_l)},
\end{equation}
and thus yields the approximation of Eq.~\eqref{def.time_integral}
\begin{equation}\label{def.time_integral_approximation}
\begin{split}
&\int_{t_l}^{t_{l+1}} \Theta_V[\varphi](\bx(t^{\prime}-t_l), \bk, t^{\prime}) \me^{-\int^{t^{\prime}}_{t_l} \xi(\bm{x}(s-t)) \D s} \D t^{\prime} \\
&\approx \int_{t_l}^{t_{l+1}} \left[ \xi(\bx) \me^{-\xi(\bx)(t^{\prime} -t_l)}\right] \cdot \frac{\xi(\bx(t^{\prime}-t_l))}{\xi(\bx)} \cdot \frac{\Theta_V[\varphi](x(t^{\prime}-t_l), \bk^{\prime}, t^{\prime} )}{\xi(\bx(t^{\prime}-t_l)) } \D t^{\prime}.
\end{split}
\end{equation} 
Therefore, it is rather simple to evaluate Eq.~\eqref{def.time_integral_approximation} by generating a life-length $\tilde{\tau}$ depending on the state of particle
\begin{equation}\label{tildetau}
u = \int_{t_l}^{t_l + \tilde{\tau}} \xi(\bx) \me^{-\xi(\bx)(t^{\prime}- t_l)} \D t^{\prime}, \quad \tilde{\tau} =  -{\ln(1-u)}/{\xi(\bx)}.
\end{equation}
The approximation \eqref{def.time_integral_approximation} is usually reasonable as $ \xi(\bx)$ is very flat, but might be too rough to be used when $\xi(\bx)$ undergoes a sharp decrease, see e.g., Fig.~2 in \cite{ShaoXiong2016}.  
That is, $\textbf{sp0}$ ignores the bias caused by the factor $\xi(\bx(\tilde{\tau}))/\xi(\bx)$ and simply sets $\xi(\bx(\tilde{\tau}))/\xi(\bx) \approx 1$. We find such setting poses some restrictions on $\Delta t$ and numerical accuracy is undermined unless $\Delta t$ is sufficiently small because 
\begin{equation}
\frac{\xi(\bx(\tilde{\tau}))}{\xi(\bx)} = \frac{\xi(\bx) + \frac{\hbar \bk \tilde{\tau}}{m} \xi^{\prime}(\bx) + \mathcal{O}(\tilde{\tau}^2)}{\xi(\bx)} \approx 1 
\end{equation}
holds only when $\tilde{\tau} \le \Delta t$ or $\xi^{\prime}(\bx)$ is very small (this corresponds to the case where $\xi(\bx)$ is flat, but $\xi^{\prime}(\bx)$ allows a large value in a neighborhood of $\bx = 0$). In fact, numerical results in Section \ref{sec:comparison} will show that the errors of \textbf{sp0} are almost proportional to $\Delta t$. 

Surprisedly, the approximation \eqref{def.time_integral_approximation} becomes fairly well if the bias $\xi(\bx(\tilde{\tau}))/\xi(\bx)$ is taken into account, even when a very large $\Delta t$ is allowed,
and this results in $\textbf{sp0-I}$, an improved version of $\textbf{sp0}$. When $\tilde{\tau}$ is randomly generated and a branching event occurs, the following rule is employed instead of that of \textbf{sp0} in Alg.~\ref{weighted_particle_WBRW}. 
\leftline{
\fbox{ \shortstack[l] {
For $\textbf{sp0-I}$, $M = 1$ and the $i$-th offspring is generated with $n(i)$ replicas. \quad \quad \quad ~~~\\
$n(1) = n(2) = [\frac{\xi(\bx(\tilde{\tau})}{\xi(\bx)}] + r(\frac{\xi(\bx(\tilde{\tau})}{\xi(\bx)}), \quad n(3) = 1$,\\
$ w^{\prime}_{(1)}= w \cdot \mone_{ \{ \bm{k}^\prime_{1} \in \mathcal{K} \} }, ~~ w^{\prime}_{(2)}= -w \cdot \mone_{ \{ \bm{k}^\prime_{2} \in \mathcal{K} \} }, ~~ w^{\prime}_{(3)}= w$. 
}}
}
Here we adopt the convention that $[m]$ is the integer part of $m$ and 
\begin{equation}\label{1}
r(m) = \left\{
\begin{split}
&1, \quad \textup{with probability} ~~  m - [m],\\
&0, \quad  \textup{with probability} ~~  1 - (m - [m]).
\end{split}
\right.
\end{equation}
Note in passing that the bias $\xi(\bx(t^{\prime}-t))/\xi(\bx)$ was treated as the particle weight in $\textbf{wp}$ in a natural manner,
and a careful benchmark in \cite{ShaoXiong2016} has shown that a longer time step is consequently allowed while the accuracy is still maintained.

It deserves to mention that Eq.~\eqref{expotential_random_lifetime} has an exact solution 
$$\tau = -\ln(1-u)/\gamma_0$$ when $\gamma(\bx) \equiv \gamma_0$, and thus the accuracy is no longer influenced by the choice of $\Delta t$. This is much more preferable in real applications and thus adopted in $\textbf{sp1}, \textbf{sp2}$ and $\textbf{wp}$.
Actually, once using a constant auxiliary function, whatever large or small time step $\Delta t$ we choose, it does not produce any impact on the expectation of the total particle number of the branching process (see Theorem 7 of \cite{ShaoXiong2016}).
That is, the errors will not be amplified by a larger time step $\Delta t$. 
Actually, $\Delta t$ is only required for measuring a quantum system.

\subsubsection{Self-scattering technique}

The self-scattering technique is a common approach in simulating the transport equations \cite{KosinaNedjalkovSelberherr2003jce, RjasanowWagner1996}. It suggests to choose a fictitious self-scattering rate $\gamma_s(\bx)$ and set $\xi(\bx) + \gamma_s(\bx)$ to be the auxiliary function, instead of $\xi(\bx)$. In this manner, the splitted transition kernel \eqref{eq.Gamma} becomes
\begin{equation}\label{eq.Gamma_self}
\begin{split}
\Gamma(\bm{x}, \bm{k}; \bm{x}^{\prime}, \bm{k}^{\prime}) = & \frac{\xi(\bx)}{\xi(\bx) + \gamma_s(\bx)} \left[\frac{V^{+}_{W}(\bm{x},\bm{k}-\bm{k}^{\prime})}{\xi(\bx)} - \frac{V^{-}_{W}(\bm{x},\bm{k}-\bm{k}^{\prime})}{{\xi(\bx)}}\right] \cdot \delta(\bm{x}-\bm{x}^{\prime}) \\
& + \left[\frac{\xi(\bx)}{\xi(\bx) + \gamma_s(\bx)} + \frac{\gamma_s(\bx)}{\xi(\bx) + \gamma_s(\bx)}\right] \cdot \delta(\bk-\bk^{\prime}) \cdot \delta(\bm{x}-\bm{x}^{\prime}),
\end{split}
\end{equation}
It is convenient to set the total scattering rate to be $\xi(\bx) + \gamma_s(\bx) = \gamma_0$, 
and thus ${\xi(\bx)}/{\gamma_0}$ is interpreted as the probability to generate two offsprings according to the splitted Wigner kernel. 

It can be easily verified that this self-scattering technique reduces to $\textbf{sp1}$ when $M=1$. However, when $M \ge 2$, since different components of the Wigner kernel have distinct normalizing factors, one need to replace $\xi(\bx)$ in Eq.~\eqref{eq.Gamma_self} by $\sum_{m=1}^M \xi_m(\bx)$. Thus an appropriate choice of the self-scattering rate may be $\gamma_s(\bx) = \gamma_0 - \sum_{m=1}^M \xi_{m}(\bx)$ and each branch is selected with probability $\xi_m(\bx)/\gamma_0$. By contrast, $\textbf{sp1}$ suggests to choose $\gamma_0 \ge \check{\xi}$ and at most $M$ branches of particles may be generated when a branching event occurs. Furthermore, in some occasions, we can even allow $\gamma_0 < \xi_m(\bx)$ for some $m$ and $\bx$ and interpret $\xi_m(\bx)/\gamma_0$ as generating $[\xi_m(\bx)/\gamma_0] + r(\xi_m(\bx)/\gamma_0)$ replicas of particles, as implemented in \textbf{sp0-I}. This gives us more freedom in choosing the auxiliary function $\gamma_0$.

\subsubsection{Constant time technique}

Unlike above-mentioned strategies, this approach is suggested to use the constant time step $\Delta t$ for time evolution, instead of a random $\tau$. Suppose $\Delta t$ is sufficiently small. Starting from a simple approximation 
\begin{equation}
\me^{-\int_{t_l}^{t_{l+1}} \gamma(\bx(s-t))\D s} = 1 - \gamma(\bx(\Delta t))\Delta t + \mathcal{O}((\Delta t)^2),
\end{equation}
the constant time technique is actually an explicit one-step Euler method as
\begin{equation}\label{def.time_integral_backward_Euler}
\begin{split}
&\int_{t_l}^{t_{l+1}}\Theta_V[\varphi](\bx(t^{\prime}-t_l), \bk, t^{\prime}) \me^{-\int^{t^{\prime}}_{t_l} \gamma(\bm{x}(s-t_l)) \D s} \D t^{\prime} \\
&=  \Theta_V[\varphi](\bx(\Delta t), \bk, t_{l+1}) \me^{-\int_{t_l}^{t_{l+1}} \gamma(\bx(s-t_l)) \D s} \Delta t + \mathcal{O}((\Delta t)^2) \\
& =\frac{\Theta_V[\varphi](\bx(\Delta t), \bk, t_{l+1})}{\gamma(\bx(\Delta t))} \cdot \left[\gamma(\bx(\Delta t))-\gamma^2(\bx(\Delta t)) \Delta t\right] \Delta t + \mathcal{O}((\Delta t^2)).
\end{split}
\end{equation}
When the $\mathcal{O}((\Delta t)^2)$ terms are omitted, it yields the approximation to Eq.~\eqref{Adjoint_renewal_type_equation}
\begin{equation}
\begin{split}
\varphi(\bx, \bk, t_{l}) \approx &\varphi_T(\bx(\Delta t), \bk) + \left(\gamma(\bx(\Delta t))\Delta t\right) \cdot \frac{\Theta_V[\varphi](\bx(\Delta t), \bk, t_{l+1})}{\gamma(\bx(\Delta t))}. 
\end{split}
\end{equation}
Therefore the branching event occurs with the probability $\gamma(\bx(\Delta t)) \Delta t$,
which requires $0\leq \gamma(\bx(\Delta t)) \Delta t \le 1$.
Thus it might not be very efficient when $\gamma(\bx)$ has a large upper bound.

\subsection{A bootstrap filter for weighted particles}
\label{sec:bootstrap}

From the view of computation, \textbf{sp1} is advantageous over \textbf{wp} in data storage as the sign functions can be operated as integers. However, later we will show that the variance of \textbf{sp1} is usually larger than that of \textbf{wp} and the accuracy cannot be improved by adjusting $\gamma_0$. 
The second strategy \textbf{sp2} is  to get the advantages in both data storage and variance reduction.  Based on \textbf{wp}, we introduce a bootstrap filtering step to adjust $N$ weighted particles into $N$ signed particles.

The key idea of the bootstrap filtering is to eliminate the particles having low importance weights and to multiply particles having higher importance weights, thereby avoiding the skewness of the particle weights \cite{bk:DoucetDeFreitasGordon2001, bk:Liu2001, bk:RobertCasella2004}.  Mathematically speaking, the target of the bootstrap resampling is to approximate a weighted empirical measure by a unweighted one.
In this work, we only utilize the simplest version of bootstrapping for a prototype testing. In fact, more sophisticated bootstrapping techniques, such as the particle island model, may be adopted and one can refer to  \cite{VergeDubarryDelMoralMoulines2015} for more details.

Suppose we have a collection of $N$ particles, $\mathcal{S} = \{ (\bx_i, \bk_i)\}_{i=1}^N$, which are weighted by the collection $\mathcal{W} = \{ w_i\}_{i=1}^N$, $\sum_{i=1}^{N} w_i = N$. First, we consider $w_i  > 0$ and Alg.~\ref{res_bootstrapping} is just a typical way to achieve the filtering \cite{bk:Liu2001}, with Lemma \ref{theorem_bootstrap_bound} giving its variance estimation.

\begin{algorithm} 
\caption{Residual bootstrap filtering \label{res_bootstrapping}}
\begin{itemize}

\item[1.] Retain $k_i = [w_i]$ copies of $(\bx_i, \bk_i) \in \mathcal{S}$, $i=1, \cdots, N$, where $[w_i]$ indicates the largest integer that doesn't exceed $w_i$. Let $N_r = N- k_1-\cdots - k_N$.

\item[2.] Obtain the remaining $N_r$ i.i.d. draws from $\mathcal{S}$, where each particle $(\bx_i, \bk_i)$ is selected with probability $(w_i-k_i)/N_r$, $i=1, \cdots, N$.

\item[3.] Assign the weight $1$ to  new sample set,  denoted by $\tilde{\mathcal{S}} = \{(\tilde{\bm{x}}_i, \tilde{\bk}_i)\}_{i=1}^N$.
\end{itemize}

\end{algorithm}

\begin{lemma} \label{theorem_bootstrap_bound}
For any bounded measurable function $\varphi$,  we have
\begin{equation} \label{bootstrap_bound}
\mathbb{E} \left| \Big \langle \varphi, \frac{1}{N}\sum_{i=1}^N w_i \delta_{(\bx_i, \bk_i)}\rangle  - \langle \varphi,  \frac{1}{N}\sum_{i=1}^N \delta_{(\tilde{\bx}_i, \tilde{\bk}_i)} \Big \rangle \right|^2  \le 2\left(\frac{N_r}{N}+1\right) \frac{\Vert \varphi \Vert^2}{N},
\end{equation}
where $w_i > 0$, $\sum_{i=1}^N w_i = N$, $\Vert \varphi \Vert = \max |\varphi(x)|$ and the set $\{(\tilde{\bx}_i, \tilde{\bk}_i)\}_{i=1}^N$ is produced by Alg.~\ref{res_bootstrapping}.
\end{lemma}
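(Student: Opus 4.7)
The plan is to exploit the conditional unbiasedness of the residual resampling step, so that the mean-square error on the left collapses to a variance that is easily bounded. Setting $\hat\mu_w = \frac{1}{N}\sum_i w_i \delta_{(\bx_i,\bk_i)}$ and $\tilde\mu_N = \frac{1}{N}\sum_i \delta_{(\tilde\bx_i,\tilde\bk_i)}$, I would first split $\tilde\mu_N$ into the deterministic retained portion $\frac{1}{N}\sum_i k_i\delta_{(\bx_i,\bk_i)}$ with $k_i = [w_i]$, and the stochastic residual portion $\frac{1}{N}\sum_{j=1}^{N_r}\delta_{Y_j}$. By Step~2 of Alg.~\ref{res_bootstrapping} the $Y_j$ are, conditional on the weighted ensemble $\{(\bx_i,\bk_i),w_i\}$, i.i.d.\ with distribution placing mass $p_i = (w_i-k_i)/N_r$ on $(\bx_i,\bk_i)$; the normalization $\sum_i p_i = 1$ follows from $\sum_i w_i = N$ and $\sum_i k_i = N - N_r$, and $p_i \ge 0$ follows from $k_i = [w_i] \le w_i$.

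Next, I would condition on the weighted ensemble and verify that the estimator is unbiased. A direct computation using $k_i + (w_i - k_i) = w_i$ gives
\[
\mathbb{E}\bigl[\langle\varphi,\tilde\mu_N\rangle \,\bigm|\, \{(\bx_i,\bk_i),w_i\}\bigr] = \frac{1}{N}\sum_{i} k_i\varphi(\bx_i,\bk_i) + \frac{N_r}{N}\sum_i p_i\varphi(\bx_i,\bk_i) = \langle\varphi,\hat\mu_w\rangle,
\]
so the conditional mean-square error reduces to the conditional variance of the residual sum, namely $\frac{1}{N^2}\,\mathrm{Var}\bigl(\sum_{j=1}^{N_r}\varphi(Y_j)\bigm|\,\cdot\bigr)$.

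By conditional independence of the $Y_j$, this variance equals $N_r\,\mathrm{Var}(\varphi(Y_1)\mid\cdot) \le N_r\Vert\varphi\Vert^2$, whence the conditional mean-square error is at most $N_r\Vert\varphi\Vert^2/N^2$. Taking expectation over the outer randomness preserves the inequality, so
\[
\mathbb{E}\bigl|\langle\varphi,\hat\mu_w\rangle - \langle\varphi,\tilde\mu_N\rangle\bigr|^2 \le \frac{N_r\Vert\varphi\Vert^2}{N^2} \le \frac{2(N_r+N)\Vert\varphi\Vert^2}{N^2} = 2\Bigl(\frac{N_r}{N}+1\Bigr)\frac{\Vert\varphi\Vert^2}{N},
\]
which is the claimed bound; the factor $2$ and the additive $+1$ simply absorb any slack one might introduce if one replaces the direct second-moment computation with, e.g., a triangle inequality of the form $\mathbb{E}|X-Y|^2 \le 2\mathbb{E}|X-\mathbb{E}Y|^2 + 2\,\mathrm{Var}(Y)$.

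The main conceptual obstacle is only to keep the two layers of randomness separate: the weighted particles $(\bx_i,\bk_i,w_i)$ live on one probability space (possibly themselves random through an upstream sampling stage), whereas the resampling indices in Step~2 are independent of them only after conditioning. All variance manipulations must therefore be carried out conditionally and then integrated via the tower property. Beyond that, the argument is a textbook mean-square analysis of residual resampling; no concentration or coupling estimates are required, since the uniform bound $|\varphi|\le\Vert\varphi\Vert$ already dominates the variance of each $\varphi(Y_1)$.
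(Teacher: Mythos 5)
Your proof is correct and follows essentially the same route as the paper: cancel the deterministically retained copies $k_i=[w_i]$, use the unbiasedness of the $N_r$ i.i.d.\ residual draws, and bound the variance of their sum using $|\varphi|\le\Vert\varphi\Vert$. In fact your estimate $N_r\Vert\varphi\Vert^2/N^2$ is sharper than the stated bound, since you bound $\operatorname{Var}\bigl(\varphi(Y_1)\bigr)$ by the second moment directly, whereas the paper uses the cruder $|a-b|^2\le 2a^2+2b^2$ splitting before invoking $N_r\le N$.
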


Now we turn to extend the bootstrap filter to the signed weighted empirical measure, where $w_i \in [-1, 1]$. A natural way is to split the signed particles into two batches according to their signs and thus the signed weighted empirical measure can be decomposed into two weighted ones. Afterwards, we use bootstrap filters to tackle such two empirical measures through Alg.~\ref{res_bootstrapping} separately. 

Suppose we have a collection of $N$ particles, $\mathcal{S} = \{ (\bx_i, \bk_i)\}_{i=1}^N$, which are weighted by the collection $\mathcal{W} = \{ w_i\}_{i=1}^N$, with $w_i  \in [-1, 1]$ and $\sum_{i=1}^{N} w_i = N_\alpha$. {As discussed above, we can also divide $\mathcal{S}$ into $N^+$ positive weighted particles $\mathcal{S}^+ = \{ (\bx_i^{+}, \bk_i^+) \}_{i=1}^{N^+}$ and $N^-$ negative weighted ones $\mathcal{S}^- = \{ (\bx_i^-, \bk_i^-) \}_{i=1}^{N^-}$. Then we can rewrite the signed weighted empirical measure as
\begin{equation}
\begin{split}
\frac{1}{N_\alpha} \sum_{i=1}^{N} w_i \delta_{(\bx_i, \bk_i)} &= \frac{1}{N_\alpha} \sum_{i=1}^{N^+} w^+_i \delta_{(\bx^+_i, \bk^+_i)} + \frac{1}{N_\alpha} \sum_{i=1}^{N^-} w_i^- \delta_{(\bx_i^-, \bk^-_i)} \\
&= \lambda^+ \sum_{i=1}^{N^+} \frac{w_i^+}{\sum_{i=1}^{N^+} w_i^+} \delta_{(\bx^+_i, \bk^+_i)} +  \lambda^- \sum_{i=1}^{N^-} \frac{w_i^-}{\sum_{i=1}^{N^-} w_i^-} \delta_{(\bx^-_i, \bk^-_i)} \\ 
& = \lambda^+  \sum_{i=1}^{N^+} \tilde{w}_i^+ \delta_{(\bx^+_i, \bk^+_i)} + \lambda^- \sum_{i=1}^{N^-} \tilde{w}_i^- \delta_{(\bx^-_i, \bk^-_i)},
\end{split}
\end{equation}
where 
\begin{equation}\label{eq:lambda}
\lambda^\pm = \frac{1}{{N_\alpha}} {\sum_{i=1}^{N^\pm} w_i^\pm},
\quad \lambda^+ > 0, \quad \lambda^- < 0, 
\end{equation}
and $\tilde{w}_i^+$ and $ \tilde{w}_i^-$ are normalized weights for $\mathcal{S}^+$ and $\mathcal{S}^-$, respectively. 

It remains to employ the bootstrap filter to convert the weighted empirical measures into unweighted ones, yielding the following approximation
\begin{equation}\label{double_bootstrap}
\frac{1}{N_\alpha} \sum_{i=1}^{N} w_i \delta_{(\bx_i, \bk_i)}  \sim   \frac{\lambda^+}{N^+} \sum_{i=1}^{N^+} \delta_{(\tilde{\bx}_i^+, \tilde{\bk}^+_i)} + \frac{\lambda^-}{N^-} \sum_{i=1}^{N^-} \delta_{(\tilde{\bx}_i^-, \tilde{\bk}^-_i)},
\end{equation} 
Similarly, we can derive the variance estimation for such bootstrap filter.
\begin{theorem} \label{theorem_double_bootstrap_bound}
For any bounded measurable function $\varphi$,  we have
\begin{equation} \label{double_bootstrap_bound}
\mathbb{E} \left| \Big \langle \varphi, \frac{1}{N_\alpha}\sum_{i=1}^{N} w_i \delta_{(\bx_i, \bk_i)} - \frac{\lambda^+}{N^+}\sum_{i=1}^{N^+} \delta_{(\tilde{\bx}_i^+, \tilde{\bk}_i^+)} - \frac{\lambda^-}{N^-}\sum_{i=1}^{N^-} \delta_{(\tilde{\bx}^-_i, \tilde{\bk}^-_i)}  \Big \rangle \right|^2  \le  8\left(\frac{|\lambda^+|^2}{N^+} + \frac{|\lambda^-|^2}{N^-} \right) \Vert \varphi \Vert^2,
\end{equation}
where $w_i \in [-1, 1]$, $\sum_{i=1}^{N} w_i  = N_\alpha = N^+ - N^-$, $\Vert \varphi \Vert = \max |\varphi(x)|$ and the sets $\{(\tilde{\bx}_i^\pm, \tilde{\bk}_i^\pm)\}_{i=1}^{N^\pm}$ are produced by Alg.~\ref{res_bootstrapping}.
\end{theorem}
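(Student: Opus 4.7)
The plan is to reduce the problem to two applications of Lemma \ref{theorem_bootstrap_bound}, one on the positive-weighted sub-sample $\mathcal{S}^+$ and one on the negative-weighted sub-sample $\mathcal{S}^-$, and then combine via an elementary $|a+b|^2$ inequality. Since the statement already supplies the decomposition
\[
\tfrac{1}{N_\alpha}\sum_{i=1}^{N} w_i \delta_{(\bx_i,\bk_i)}
 = \lambda^{+}\sum_{i=1}^{N^+}\tilde w_i^{+}\delta_{(\bx_i^+,\bk_i^+)}
 + \lambda^{-}\sum_{i=1}^{N^-}\tilde w_i^{-}\delta_{(\bx_i^-,\bk_i^-)},
\]
with $\sum_i \tilde w_i^{\pm}=1$, the key observation is that the rescaled weights $W_i^{\pm}:=N^{\pm}\tilde w_i^{\pm}\ge 0$ satisfy $\sum_i W_i^{\pm}=N^{\pm}$. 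Thus each of the two renormalised empirical measures $\frac{1}{N^{\pm}}\sum_i W_i^{\pm}\delta_{(\bx_i^{\pm},\bk_i^{\pm})}$ satisfies exactly the hypothesis (positive weights averaging to one) of Lemma \ref{theorem_bootstrap_bound}.

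Next, I would apply Lemma \ref{theorem_bootstrap_bound} separately to the $+$ and $-$ parts (with $N$ replaced by $N^{\pm}$), obtaining the bound
\[
\mathbb{E}\Bigl|\Bigl\langle\varphi,\tfrac{1}{N^{\pm}}\sum_i W_i^{\pm}\delta_{(\bx_i^{\pm},\bk_i^{\pm})}-\tfrac{1}{N^{\pm}}\sum_i\delta_{(\tilde\bx_i^{\pm},\tilde\bk_i^{\pm})}\Bigr\rangle\Bigr|^2
\le 2\Bigl(\tfrac{N_r^{\pm}}{N^{\pm}}+1\Bigr)\tfrac{\Vert\varphi\Vert^2}{N^{\pm}}\le \tfrac{4\Vert\varphi\Vert^2}{N^{\pm}},
\]
where the final inequality uses $N_r^{\pm}\le N^{\pm}$. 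Multiplying by $\lambda^{\pm}$ reintroduces the prefactors, and the quantity inside the expectation of \eqref{double_bootstrap_bound} becomes a sum of two terms of the form $\lambda^{\pm}\cdot(\text{positive-part error})$.

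Finally, I would bound the squared sum using the elementary inequality $|a+b|^2\le 2|a|^2+2|b|^2$ (valid pathwise, and thus under the expectation), which gives
\[
\mathbb{E}|\text{total error}|^2 \le 2|\lambda^{+}|^2\cdot\tfrac{4\Vert\varphi\Vert^2}{N^{+}}+2|\lambda^{-}|^2\cdot\tfrac{4\Vert\varphi\Vert^2}{N^{-}}=8\Bigl(\tfrac{|\lambda^{+}|^2}{N^{+}}+\tfrac{|\lambda^{-}|^2}{N^{-}}\Bigr)\Vert\varphi\Vert^2,
\]
which is precisely \eqref{double_bootstrap_bound}. I do not expect a serious obstacle here: the principal care is bookkeeping in the rescaling $\tilde w_i^{\pm}\mapsto W_i^{\pm}$ so that Lemma \ref{theorem_bootstrap_bound} applies verbatim, and keeping track of the sign convention $\lambda^{-}<0$ (only $|\lambda^{-}|^2$ enters the final bound, so the sign is harmless). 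If one wanted a sharper constant, one could exploit that the bootstrap filterings on $\mathcal S^{+}$ and $\mathcal S^{-}$ are independent and that the individual errors are centred, eliminating the factor of $2$ from the $|a+b|^2$ step; but the stated bound with constant $8$ follows directly from the simple pathwise inequality and is sufficient for the intended $N^{\pm}\to\infty$ convergence.
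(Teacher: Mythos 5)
Your proposal is correct and matches the paper's own argument essentially step for step: the paper likewise splits the signed measure into the positive and negative batches, pulls out the factors $\lambda^{\pm}$, applies Lemma \ref{theorem_bootstrap_bound} to the rescaled weights $N^{\pm}\tilde w_i^{\pm}$ (which sum to $N^{\pm}$), and combines the two contributions via $|a+b|^2\le 2|a|^2+2|b|^2$ to obtain the constant $8$. No further comment is needed.
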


\section{Implementation details}
\label{sec:imp}

A complete period of the WBRW is constituted of three parts: sampling from the Wigner function, running the branching random walk and resampling, in which both rejecting sampling and importance sampling play a crucial role. All the details on the implementation are provided in this section, as sketched by the diagram below. 
\begin{equation*}
\boxed{\small \text{Histogram}} \xrightarrow{\textup{(Alg.\ref{initial_sampling}})} \boxed{\small \text{Particles}} \xrightarrow[\textup{\textbf{sp2} (Alg.\ref{weighted_particle_WBRW}+Alg.\ref{res_bootstrapping}})]{\textup{\textbf{sp0}, \textbf{sp1}, \textbf{wp}, \textbf{RC} (Alg.\ref{weighted_particle_WBRW}})} \boxed{\small \text{Particles}} \xrightarrow[\textup{Eq.~\eqref{boostrap_histogram}}]{\textup{Eq.~\eqref{piecewise_constant_function}}} \boxed{\small \text{Histogram}}
\end{equation*}
The histogram provides a compact representation of the particle system and stores all the information in a single matrix. The number of signed particles in each bin and their coordinates are recovered through Alg.~\ref{initial_sampling}. Then we simulate their branching random walks according to Alg.~\ref{weighted_particle_WBRW} and finally construct a new histogram through Eq.~\eqref{piecewise_constant_function} for \textbf{sp0}, \textbf{sp1}, \textbf{wp} and \textbf{RC} or Eq.~\eqref{boostrap_histogram} for \textbf{sp2}.

The WBRW algorithm can be performed in a simple parallelizing manner. For each processor,  we pick up several bins and draw one cluster of particles, then run the branching random walk and reconstruct the histogram independently. The remaining task is to merge the histograms.  The signed-particle implementations are certainly advantageous because both the updating of particle weights and the merging of histograms, that involves a summation of large matrices, can be performed by integer arithmetic operation.

%We mainly focus on the potential with symmetry along an axis $\bx = \bx_c$
%\begin{equation}
%V(\bm{x}_c+\bm{x}) = V(\bm{x}_c - \bm{x}), 
%\end{equation}
%which is rather ubiquitous in practice, such as the Gaussian barrier and two-body Yukawa interactions, and the corresponding Wigner kernel reads
%\begin{equation}\label{wigner_kernel_Gaussian}
%V_W(\bm{x}, \bm{k}) = \frac{2}{\hbar} \cdot \hat{V}(2\bm{k}) \sin(2(\bm{x}-\bm{x}_c)\cdot \bm{k}),
%\end{equation}
%where $\hat{V}$ denotes the Fourier transform of $V$. 

\subsection{Sampling from the Wigner function}

Suppose {$\{\mathsf{D}_\nu \}_{\nu=1}^{N_h}$} gives a partition of $\mathsf{D}$, say, $\mathsf{D}_\nu$ are mutually disjoint bins and $\mathsf{D} = \bigcup_{\nu=1}^{N_h} \mathsf{D}_\nu $. The Wigner function at instant $t_l$ is a piecewise constant function (histogram), produced either by the particle reduction in the previous step or by the following construction for the initial data:
\begin{equation}
f(\bx, \bk, t_l) \approx \sum_{\nu=1}^{N_h} d_\nu(t_l)
 \cdot \mone_{\mathsf{D}_\nu}(\bx, \bk),~~d_\nu(t_l) = f(\bx_\nu, \bk_\nu, t_l),
\end{equation}
where $N_{h}$ is the partition size of the histogram, $(\bx_\nu, \bk_\nu)$ is the centre of the bin $\mathsf{D}_\nu$ and the instrumental distribution is simply given by
\begin{equation}\label{instrumental_density}
f_I(\bx, \bk, t_l) = H(t_l)^{-1} \sum_{\nu=1}^{N_h} |d_\nu(t_l)| \cdot \mone_{\mathsf{D}_\nu}(\bx, \bk),~~H(t_l) = \sum_{\nu=1}^{N_{h}} |d_\nu(t_l)|.
\end{equation}

Now for a given sample size $N_\alpha$, the initial weighted particles at $t_l$ are sampled through the Alg.~\ref{initial_sampling}. The number of particles allocated in $\mathsf{D}_\nu$ is determined by the following rounding step and the initial weight is determined by $ |d_\nu(t_l)|/ d_\nu(t_l)$ according to Eq.~\eqref{instrumental_density}. The coordinate of each particle is produced randomly according to a locally uniform distribution.

\begin{algorithm}
\caption{Importance sampling from the quasi-probability density}\label{initial_sampling}

\begin{itemize}

\item[1.] (Rounding step) Draw a random number $u$ from the uniform distribution in $[0, 1)$ and determine the particle number $n_\nu$ in  $\mathsf{D}_\nu$ by
\begin{equation}
n_\nu = [m] + p, \quad m = N_\alpha \cdot \mu(\mathsf{D}_\nu) \cdot |d_\nu(t_l)|,
\end{equation}
where $[m]$ is the integer part of $m$ and  
\begin{equation}
p = \left\{
\begin{split}
&1, \quad u \le  m - [m],\\
&0, \quad u >  m-[m].
\end{split}
\right.
\end{equation} 

\item[2.] Endow each particle in the bin $\mathsf{D}_\nu$ with a weight $|d_\nu(t_l)|/d_\nu(t_l) = 1$ or $-1$. 

\item[3.] Endow each particle in the bin $\mathsf{D}_\nu$ with a random coordinate $(\bx, \bk)$ according to the locally uniform distribution $(\mu({\mathsf{D}_\nu}))^{-1} \cdot \mone_{\mathsf{D}_\nu}(\bx, \bk)$. 

\end{itemize}

\end{algorithm}

\begin{remark}
It notes that for \textbf{sp1}, the importance sampling is greatly simplified as $m$ is an integer according to Eq.~\eqref{signed_histogram}. Therefore, the rounding step in Alg.~\ref{initial_sampling} can be ignored. 
\end{remark}

\subsection{Run the Wigner Branching Random Walk}

After allocating the particles at the instant $t_l$, we begin to simulate the branching random walk of particles until $t_{l+1}$ according to Alg.~\ref{weighted_particle_WBRW}. We require to draw samples according to the splitted Wigner kernel and calculate $\xi(\bx)$ (or $\xi_m(\bx)$), both of which can be resolved by a simple rejection method. 

The key of rejection sampling is to find an appropriate sampling distribution and a covering constant \cite{bk:RobertCasella2004}.  Suppose we would like to draw samples according to $V_{W}^+(\bx, \bk)/ \xi(\bx)$. For a fixed $\bx$, it's necessary to find a normalized sampling distribution $g(\bk)$ and a covering constant $M$, such that
\begin{equation}
V_{W}^+(\bx, \bk) \le M g(\bk).
\end{equation}

The rejection sampling can also be used in calculating the normalizing factor $\xi(\bx)$. In order to reduce the computational cost, we can calculate $\xi(\bx)$ on a grid mesh in advance, and use interpolations to obtain the points that are not located on the mesh. Actually, linear interpolation is a desirable choice to strike a balance between accuracy and efficiency, as $\xi(\bx)$ is very flat except a small central region, where it decays to $0$ sharply.

\begin{algorithm} 
\caption{Rejection sampling from $V_W^+(\bx, \cdot)$}\label{algo_rejection_sampling}
\begin{itemize}

\item[1.] Draw a sample $\bk$ from $g(\bk)$ and compute the ratio $p = V_{W}^+(\bx, \bk)/[Mg(\bk)]$.

\item[2.] Draw a uniform random number $u$ in [0, 1).

\item[3.] If $p < u$, we accept and return $\bk$; otherwise, we reject $\bk$ and return to step 1.

\end{itemize}

\end{algorithm}

We mainly focus on the potential with symmetry along an axis $\bx = \bx_c$
\begin{equation}
V(\bm{x}_c+\bm{x}) = V(\bm{x}_c - \bm{x}), 
\end{equation}
such as the Gaussian barrier and two-body Yukawa interactions, and the corresponding Wigner kernel reads
\begin{equation}\label{wigner_kernel_Gaussian}
V_W(\bm{x}, \bm{k}) = \frac{1}{\hbar\pi} \mathcal{F}[V](2\bm{k}) \sin(2(\bm{x}-\bm{x}_c)\cdot \bm{k}),
\end{equation}
where $\mathcal{F}[V]$ denotes the Fourier transform of $V$. In the experiments, we will use the following relations: 
\begin{align}
&V(x) = \frac{1}{\sqrt{2\pi}}\me^{-{x^2}/{2}}, \quad \mathcal{F}[V](k) =  \sqrt{\frac{2}{\pi}} \me^{-2k^2}, \\
&V(x) = \frac{\me^{-\kappa |x |}}{2\kappa}, \quad \mathcal{F}[V](k) = \frac{1}{k^2 + \kappa^2}.
\end{align}
Subsequently, it suffices to choose the majorant $g(k) = \mathcal{F}[V](2k)$ and the covering constant $M = 1/(\hbar\pi)$. The ratio $p$ reduces to $[\sin(2(\bm{x}-\bm{x}_c)\cdot \bm{k})]^+$, where $[\cdot ]^+$ means the positive part. When $\mathcal{F}[V]$ does not have a close form, drawing according to $g(\bk)$ becomes a tough task and one has to resort to the rejection sampling, like Eq.~\eqref{eq.majorant} or other advanced techniques, such as the Markov Chain Monte Carlo method \cite{bk:RobertCasella2004}. 

\begin{remark}\label{remark_combination}
It notes that in Alg.~\ref{algo_rejection_sampling},  the probability of accepting a proposed sample is proportional to $p$, thus its efficiency will deteriorate for a small $|\bm{x}-\bm{x}_c|$. In this case, it is better to choose a majorant $\hat{V}_{W, m}(\bx, \bk) = \mathcal{F}[V_m](2\bk)$ directly in Eq.~\eqref{eq.majorant} and
\begin{equation}\label{majorant_approach}
\xi_m(\bx) \equiv \int_{2\mathcal{K}} \mathcal{F}[V_m](\bk) \D \bk, \quad \frac{|V_{W, m}(\bx, \bk)|}{\hat{V}_{W, m}(\bx, \bk)} = \frac{\left|\sin(2(\bx- \bx_c) \cdot \bk \right|}{\hbar \pi}.
\end{equation}
Thus $\frac{\hat{V}_{W, m}(\bx, \bk)}{\xi_m(\bx)}$ is the instrumental probability density, $\left|\sin(2(\bx- \bx_c) \cdot \bk \right|$ determines the particle sign and $\frac{\xi_m(\bx)}{\gamma_0} \cdot \frac{|V_{W, m}(\bx, \bk)|}{\hat{V}_{W, m}(\bx, \bk)}$ is treated as either the importance weight or the rejecting ratio. The strategy overcomes the potential weakness of Alg.~\ref{algo_rejection_sampling} and saves the efficiency. Although $\xi_m(\bx)$ in Eq.~\eqref{majorant_approach} is larger than $\int_{2\mathcal{K}}V_{W, m}^+(\bx, \bk) \D \bk$, a relatively small $\gamma_0 < \xi_m(\bx)$ is still allowed when $|\bx -\bx_c|$ is sufficiently small.
\end{remark}

\subsection{Resampling of particles} 

The resampling technique intends to control the particle number in the simulations, as well as suppress the growth of variances. It arises from the fact that the branching treatment will inevitably lead to an exponential growth of particle number, which is undesirable for long-time simulations (see Theorem \ref{th:exp}). 

The terminology `resampling', in fact,  has different meanings in applications. In the diffusion Monte Carlo or the sequential importance sampling, the resampling procedure, also called the bootstrap filtering, is to adjust the skewness of importance weights \cite{bk:DoucetDeFreitasGordon2001}.  In the Boltzmann simulations, the so-called particle resampling is aimed at reducing the number of positive and negative particles, which describe the deviation from the Maxwell distribution \cite{RjasanowWagner1996, YanCaflisch2015}. As summarized in  \cite{YanCaflisch2015}, the particle reduction strategies are roughly divided into three categories: the particle cancelation, the particle thermalization and the spectral filtering.
In the Wigner simulations, the particle cancellation is naturally adopted due to the sign weights which reflect
both positive and negative values of the Wigner function. It borrows the idea of histogram  \cite{bk:LaszloGyorfi2002, bk:HastieTibshiraniFriedman2009} and is essentially a piecewise constant reconstruction of a (quasi-)density function. As presented in Alg.~\ref{initial_sampling},  the output will be set as the initial data for the next period, from which a new set of particles is drawn through the importance sampling.

Suppose all particles lie in a compact set $\mathsf{D} \subset \mathbb{R}^n$ and $\{\mathsf{D}_\nu \}_{\nu=1}^{N_h}$ is a partition of $\mathsf{D}$. For \textbf{wp}, a piecewise constant function $p(\bx, \bk)$ has the form
\begin{equation}\label{piecewise_constant_function}
p(\bx, \bk) = \sum_{\nu = 1}^{N_h} \frac{\sum_{i=1}^{N} w_i \cdot \mone_{\mathsf{D}_\nu}(\bx_i, \bk_i)}{N_\alpha} \cdot \frac{\mone_{\mathsf{D}_\nu}(\bx, \bk)}{\mu(\mathsf{D}_\nu)}.
\end{equation}
In particular, for \textbf{sp1}, since the weight of each particle is either $-1$ or $1$, we only need to count the particle number in each bin and sum over their signs, yielding
\begin{equation}\label{signed_histogram}
p(\bx, \bk) = \sum_{\nu =1}^{N_h} \frac{n_\nu^+ - n_\nu^-}{N_\alpha} \cdot \frac{\mone_{\mathsf{D}_\nu}(\bx, \bk)}{\mu(\mathsf{D}_\nu)},
\end{equation}
where $n_\nu^+$ and $n_\nu^-$ are counts of the positive and negative particles in $\mathsf{D}_\nu$, respectively. Since the particles carrying opposite signs are eliminated, such procedure is also called particle cancelation (annihilation) \cite{SellierNedjalkovDimov2014, YanCaflisch2015}.  

The construction of the histogram in \textbf{sp2} is a composition of two histograms, based on filtered positive and negative particles. Starting from Eq.~\eqref{double_bootstrap}, we have
\begin{equation}\label{boostrap_histogram}
\tilde{p}(\bx, \bk) = \sum_{\nu=1}^{N_h}  (\frac{\lambda^+ \tilde{n}_\nu^+}{N^+}+\frac{\lambda^- \tilde{n}_\nu^-}{N^-})\cdot \frac{\mone_{\mathsf{D}_\nu}(\bx, \bk)}{\mu(\mathsf{D}_\nu)},
\end{equation}
where  $\tilde{n}_\nu^+$ and $\tilde{n}_\nu^-$ are counts of the filtered $N^+$ positive and $N^-$ negative particles in $\mathsf{D}_\nu$, respectively. 

%\begin{equation}\label{boostrap_histogram}
%\begin{split}
%\tilde{p}(\bx, \bk) &= \lambda^+ \cdot p_1(\bx, \bk) + \lambda^- \cdot p_2(\bx, \bk),\\
%p_1(\bx, \bk) &= \sum_{j=1}^{N_h} \frac{ \tilde{n}_j^+}{N^+ \mu(\mathsf{D}_\nu)} \cdot \mone_{\mathsf{D}_\nu}(\bx, \bk),\\
%p_2(\bx, \bk) &= \sum_{j=1}^{N_h} \frac{ \tilde{n}_j^-}{N^- \mu(\mathsf{D}_\nu)} \cdot \mone_{\mathsf{D}_\nu}(\bx, \bk),
%\end{split}
%\end{equation}

The remaining task is to choose the bins $\mathsf{D}_\nu$, which are usually (hyper-)rectangles based on a given partition of $\mathsf{D}$. The simplest and most ubiquitous choice is to place a uniform grid mesh $\mathsf{D}_\nu = \mathcal{X}_{\nu_1} \times \mathcal{K}_{\nu_2}$, with $\varepsilon$  denoting the maximal diameter
\begin{equation}
\varepsilon = \max_{\nu} \sup_{\by_1, \by_2 \in \mathsf{D}_\nu} \Vert \by_1 - \by_2 \Vert. 
\end{equation}
The following theorem presents the error bound $\mathcal{O}(\varepsilon)$ for such kind of histogram.

\begin{theorem}\label{error_uniform_histogram}
Suppose  $\{\mathsf{D}_\nu\}_{\nu=1}^{N_h}$ gives a partition on a compact set $\mathsf{D} \subset \mathbb{R}^{2n}$ with the maximal diameter $\varepsilon$.  
Let $\varphi$ be a bounded measurable function on $\mathsf{D}$ and locally H\"{o}lder continuous in each bin $\mathsf{D}_\nu$, say, 
\begin{equation}\label{Holder_condition}
\Vert \varphi \Vert_{C^{0,\alpha}} = \max_\nu \sup_{\by_1 \ne \by_2 \in \mathsf{D}_\nu} \frac{ \big | \varphi(\by_1) -  \varphi (\by_2) \big|}{\Vert \by_1 - \by_2 \Vert^{\alpha} } < \infty
\end{equation}
for an exponent $\alpha > 0$. Then we have 
\begin{equation}\label{histogram_error}
\left| \Big \langle \varphi, \frac{1}{N_\alpha}\sum_{i=1}^{N} w_i \delta_{(\bx_i, \bk_i)} \Big \rangle - \langle \varphi, p \rangle \right| \le  (|\lambda^+| +|\lambda^-|) \cdot \varepsilon^{\alpha} \Vert \varphi \Vert_{C^{0,\alpha}} ,
\end{equation}
where $p$ is given by Eq.~\eqref{piecewise_constant_function}. Moreover, 
for the piecewise constant reconstruction $\tilde{p}$ in Eq.~\eqref{boostrap_histogram},
it further yields 
\begin{equation}\label{histogram_bootstrap_error}
\mathbb{E}\left| \Big \langle \varphi, \frac{1}{N_\alpha}\sum_{i=1}^{N} w_i \delta_{(\bx_i, \bk_i)} \Big \rangle - \langle \varphi, \tilde{p} \rangle \right|^2 \le C_1 \varepsilon ^{2\alpha} \Vert \varphi \Vert_{C^{0,\alpha}}^2    + 
 C_2 \frac{\Vert \varphi \Vert^2}{N_\alpha},
\end{equation}
where
\[
C_1 = 2(|\lambda^+| + |\lambda^-|)^2, \quad C_2 = 16N_\alpha (\frac{|\lambda^+|^2 }{N^+} + \frac{|\lambda^-|^2 }{N^- }).
\]
\end{theorem}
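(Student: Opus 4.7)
The plan is to establish the deterministic bound \eqref{histogram_error} first by a direct bin-by-bin decomposition, then obtain the mean-square bound \eqref{histogram_bootstrap_error} via a triangle-inequality split that reduces it to the already-proven bootstrap estimate \eqref{double_bootstrap_bound} plus another instance of \eqref{histogram_error}.

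For \eqref{histogram_error}, I would write $\langle\varphi,p\rangle = \sum_\nu \frac{W_\nu}{N_\alpha\mu(\mathsf{D}_\nu)} \int_{\mathsf{D}_\nu}\varphi\,\D\by$ where $W_\nu = \sum_{i\,:\,(\bx_i,\bk_i)\in\mathsf{D}_\nu} w_i$, and compare it bin-by-bin with $\langle\varphi,\frac{1}{N_\alpha}\sum_i w_i\delta_{(\bx_i,\bk_i)}\rangle = \frac{1}{N_\alpha}\sum_\nu\sum_{i\,:\,(\bx_i,\bk_i)\in\mathsf{D}_\nu} w_i\varphi(\bx_i,\bk_i)$. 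Inside each bin the difference is $\frac{1}{N_\alpha}\sum_{i\in\mathsf{D}_\nu} w_i\bigl[\varphi(\bx_i,\bk_i)-\overline{\varphi}_\nu\bigr]$ with $\overline{\varphi}_\nu$ the cell mean, and the Hölder bound \eqref{Holder_condition} combined with $\mathrm{diam}(\mathsf{D}_\nu)\le\varepsilon$ gives $|\varphi(\bx_i,\bk_i)-\overline{\varphi}_\nu|\le\Vert\varphi\Vert_{C^{0,\alpha}}\varepsilon^{\alpha}$. Summing over all bins and using $\frac{1}{N_\alpha}\sum_i|w_i| = \lambda^+ - \lambda^- = |\lambda^+|+|\lambda^-|$ (by the definition \eqref{eq:lambda} with $\lambda^-<0$) closes the first bound.

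For \eqref{histogram_bootstrap_error}, let $\nu_b := \frac{\lambda^+}{N^+}\sum_{i=1}^{N^+}\delta_{(\tilde\bx_i^+,\tilde\bk_i^+)} + \frac{\lambda^-}{N^-}\sum_{i=1}^{N^-}\delta_{(\tilde\bx_i^-,\tilde\bk_i^-)}$ denote the filtered empirical measure produced by Alg.~\ref{res_bootstrapping}; note that $\tilde p$ is precisely the bin-histogram of $\nu_b$. I would insert this intermediate object and invoke $(a+b)^2\le 2a^2+2b^2$:
\begin{equation*}
\Bigl|\bigl\langle\varphi,\tfrac{1}{N_\alpha}\textstyle\sum_i w_i\delta_{(\bx_i,\bk_i)}-\tilde p\bigr\rangle\Bigr|^2
\;\le\; 2\bigl|\langle\varphi,\tfrac{1}{N_\alpha}\textstyle\sum_i w_i\delta_{(\bx_i,\bk_i)}-\nu_b\rangle\bigr|^2
+ 2\bigl|\langle\varphi,\nu_b-\tilde p\rangle\bigr|^2.
\end{equation*}
The first term is controlled in expectation by Theorem~\ref{theorem_double_bootstrap_bound}, producing $16(|\lambda^+|^2/N^+ + |\lambda^-|^2/N^-)\Vert\varphi\Vert^2$, which is exactly $C_2\Vert\varphi\Vert^2/N_\alpha$.

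The second term is deterministic and is handled by reapplying \eqref{histogram_error} to the signed empirical measure $\nu_b$ in place of $\frac{1}{N_\alpha}\sum_i w_i\delta_{(\bx_i,\bk_i)}$: its positive-weight total is $N^+\cdot\lambda^+/N^+ = \lambda^+$ and its absolute negative-weight total is $|\lambda^-|$, so the bound gives $(|\lambda^+|+|\lambda^-|)\varepsilon^\alpha\Vert\varphi\Vert_{C^{0,\alpha}}$, whose square supplies the constant $C_1$. Adding the two contributions yields \eqref{histogram_bootstrap_error}. I do not anticipate any serious obstacle; the only point requiring care is checking that the sign-separation identity $\sum_i|w_i|/N_\alpha = |\lambda^+|+|\lambda^-|$ still holds verbatim for $\nu_b$ so that part (1) can be reused verbatim, and that the Hölder semi-norm controls oscillations on each bin uniformly.
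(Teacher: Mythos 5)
Your proposal is correct and follows essentially the same route as the paper's own proof: the first bound via the deviation-from-cell-mean (the paper's operator $A_\nu$) controlled by the H\"older seminorm and the identity $\frac{1}{N_\alpha}\sum_i |w_i| = |\lambda^+|+|\lambda^-|$, and the second bound via the triangle inequality splitting into the bootstrap error of Theorem~\ref{theorem_double_bootstrap_bound} and a reapplication of \eqref{histogram_error} to the filtered empirical measure. No gaps; the point you flag about reusing the sign-separation identity for $\nu_b$ checks out since its total absolute mass is exactly $|\lambda^+|+|\lambda^-|$.
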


Theorem \ref{error_uniform_histogram} guarantees the convergence of both the uniform histogram and the bootstrap filtering if letting $\varepsilon \to 0$ and $N_\alpha \to \infty$. Numerical experiments in the next section will demonstrate that the resampling based on a uniform partition works well when sample size $N_\alpha$ is comparable to $N_h$, and works poorly when $N_\alpha \ll N_h$ due to the curse of dimensionality as expected \cite{YanCaflisch2015, bk:HastieTibshiraniFriedman2009}.  

\section{Performance evaluation}
\label{sec:num}

We have already illustrated both theoretical and numerical aspects of the computable WBRW, and now it resorts to some benchmark tests for a detailed performance evaluation and to show how the choices of parameters influence the accuracy and the efficiency. 

First, we consider two typical test problems as already employed in \cite{ShaoXiong2016}: the 2D Gaussian scattering and the 4D Helium-like system.  All the parameters are identical to those in \cite{ShaoXiong2016}, except that the average position $x_0$ of the initial data in the 2D problem is reset to be $-10$. Under this 
new initial position can we observe the scattering phenomenon clearly for a relatively short final time $t_{fin} = 10$fs. Performance metrics include the normalized relative errors for the Wigner function $\textup{err}_{wf}(t)$, the spatial marginal density $\textup{err}_{sm}(t)$ and the momental marginal density $\textup{err}_{mm}(t)$. For details, the readers can refer to \cite{ShaoXiong2016}.
Then,  we will give additional two experiments. The first is a comparison study of all the existing signed-particle implementations, including \textbf{sp0}, \textbf{RC} and the proposed \textbf{sp0-I}, \textbf{sp1}, \textbf{sp2}. The second is the 2D Gaussian scattering under a time-dependent double barrier potential. Such example is used to demonstrate the applicability of $\textbf{sp1}$, $\textbf{sp2}$ and $\textbf{wp}$ in dealing with time-dependent potentials, which may reflect the properties of band-structure in many nano-electronic applications.

For resampling, we divide equally the time interval $[0, t_{fin}]$ into $n_A$ subintervals with the partition being
\begin{equation}
0 = t^0 < t^1 < t^2 < \cdots < t^{n_A} = t_{fin}, \quad n_A = t_{fin}/T_A.
\end{equation}
The resampling occurs at $t^i$, with $1/T_A$ the resampling frequency. In general, the resampling instants $\{t^i\}_{i=1}^{n_A}$ constitutes a subset of the temporal partition $\{t_i\}_{i=1}^{n}$ given in \eqref{def.time_partition}, as such procedure is performed only when the particle number is about to exceed a given threshold. The time step is denoted by $\Delta t = t_{l+1}- t_l$. The number of particles after resampling at the instant $t$ is denoted by $\#_{P}^{a}(t)$.

\subsection{Sample size $N_\alpha$}

We claim that the accuracy can be systematically improved by increasing the sample size $N_{\alpha}$.  The simulations of the 2D Gaussian scattering are performed with the auxiliary function $\gamma = 1.5\check{\xi}$ and $T_A= 1$fs fixed. First, pure Monte Carlo simulations are performed to examine the theoretical convergence order $\mathcal{O}(N_\alpha^{-1/2})$ at $t_{fin} = 10$fs. Afterwards the resampling is turned on and the experiments are reinitialized under different sample sizes $N_\alpha$ ranging from $1\times 10^5$ to $10^7$, as presented in Fig.~\ref{G_ss_rate} and Fig.~\ref{G_ss_error}. 
Six groups of the 4D Helium-like system are also simulated with $N_\alpha$ ranging from $10^6$ to $10^8$, $\gamma = 2$ and $T_A = 0.5$a.u. The convergence rate with respect to $N_\alpha$ (at $t_{fin}=20$a.u.) is shown in Fig.~\ref{He_ss_rate}. To visualize the numerical errors, we plot both the spatial and momental marginal distributions under different instants $t=5, 10, 15, 20$a.u. in Fig.~\ref{He_ss_error} and the reduced Wigner function at the instant $t=20$a.u. in Fig.~\ref{He_ss_pic}. Based on these numerical results, we are able to find out the following observations.

\begin{description}

\item[(1)] The accuracy can be systematically improved by increasing the sample size $N_\alpha$. 
We find that the convergence order of the pure Monte Carlo method perfectly coincides with the theoretical prediction of $-1/2$, whereas there exists some deviations when the resampling is turned on.
Too small $N_\alpha$ yields very poor numerical results, as shown in Figs.~\ref{He_ss_error} and \ref{He_ss_pic}. The serve oscillations even shadow the true solution and the reduced Wigner function is too noisy to be observed.  As $N_\alpha$ goes larger, the results produced by  \textbf{sp1} and \textbf{wp} fit the spatial marginal density and momentum marginal better, and the stochastic noises in the (reduced) Wigner function are significantly suppressed.

\item[(2)] The stochastic errors grow exponentially in time, so does the variance of the branching random walk model \cite{bk:Harris1963}. The accumulation of stochastic errors, in fact, may be related to the time evolution of the variance of the branching random walk. On the contrary, the resampling, although introducing some deterministic errors as shown in Theorem \ref{error_uniform_histogram}, helps to suppress the random noises significantly.

\item[(3)] Under the same constant auxiliary function $\gamma$, \textbf{wp} provides more accurate results than \textbf{sp1}. However, when $N_\alpha$ is chosen sufficiently large (such as $N_\alpha = 10^{7}$ in the 2D test and $N_\alpha = 10^8$ in the 4D test), both implementations produce the numerical with almost the same accuracy. Although the particle number after resampling in \textbf{sp1} is larger than that of \textbf{wp}, the requirement of memory is even alleviated as the histogram is stored in an integer-valued matrix for the former.
\end{description}

\begin{figure}[!h]
\subfigure[Convergence rate with respect to $N_\alpha$.]{{\includegraphics[width=0.49\textwidth,height=0.27\textwidth]{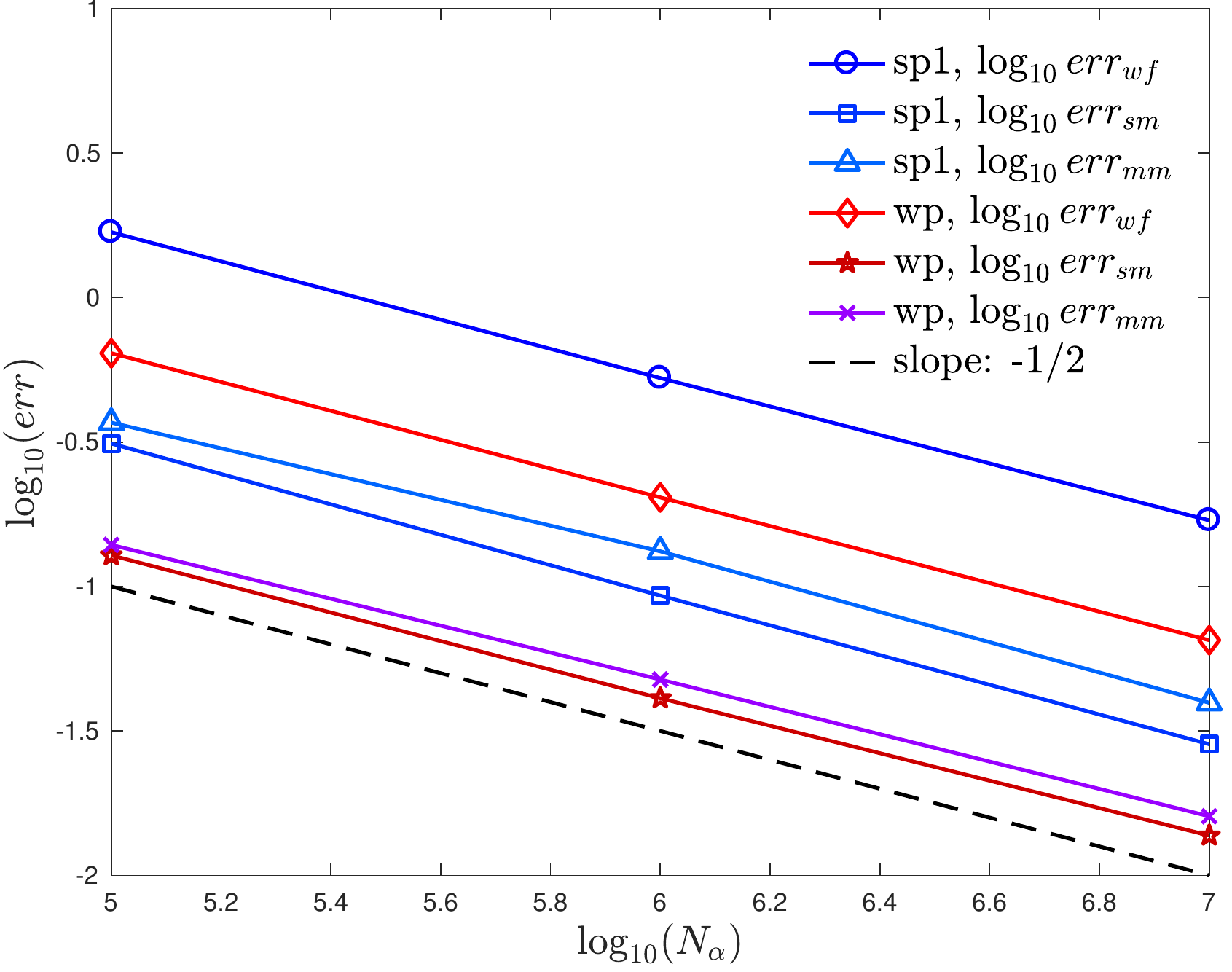}}
{\includegraphics[width=0.49\textwidth,height=0.27\textwidth]{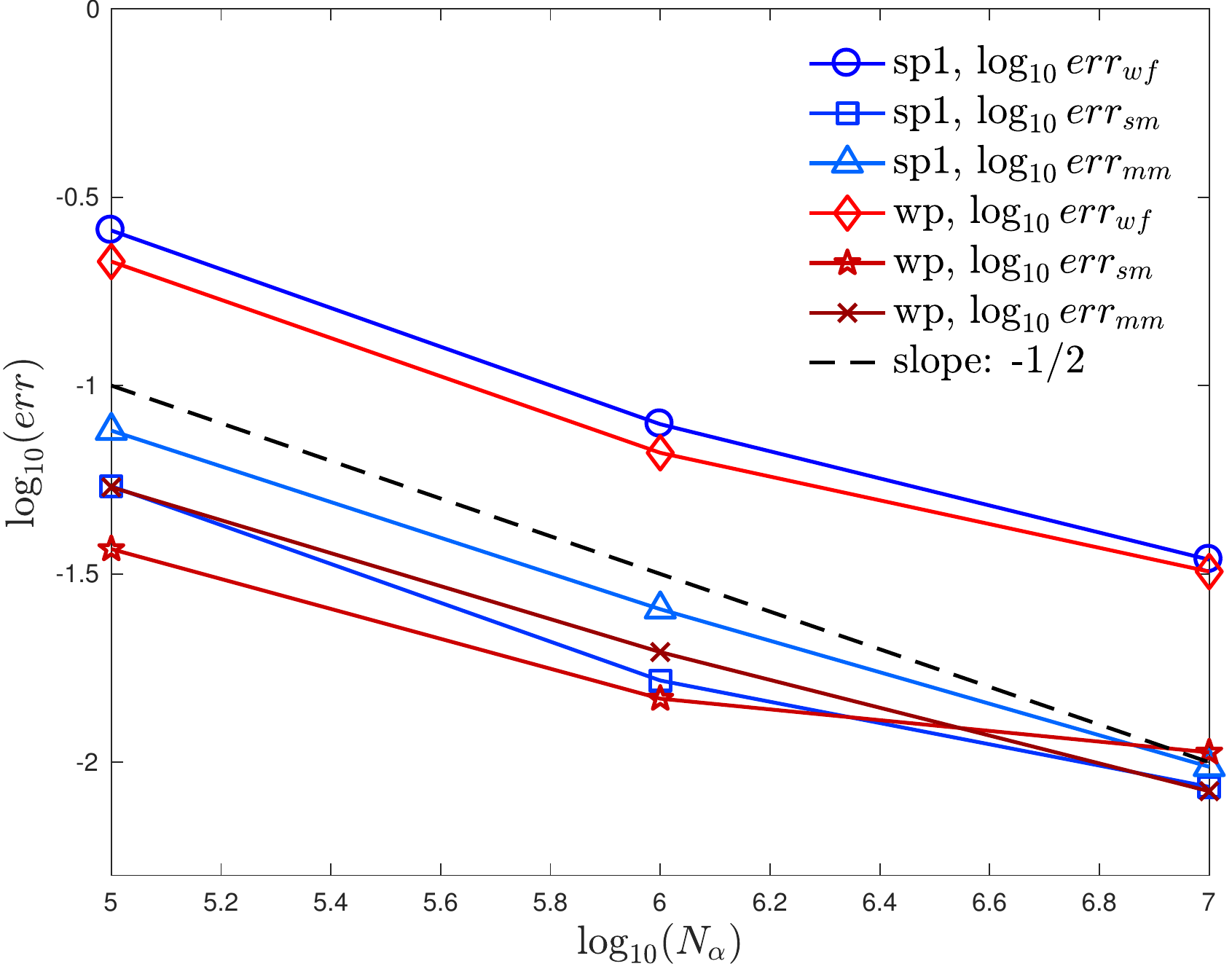}}}
\subfigure[$\textup{err}_{wf}$ under different $N_\alpha$.]{{\includegraphics[width=0.49\textwidth,height=0.27\textwidth]{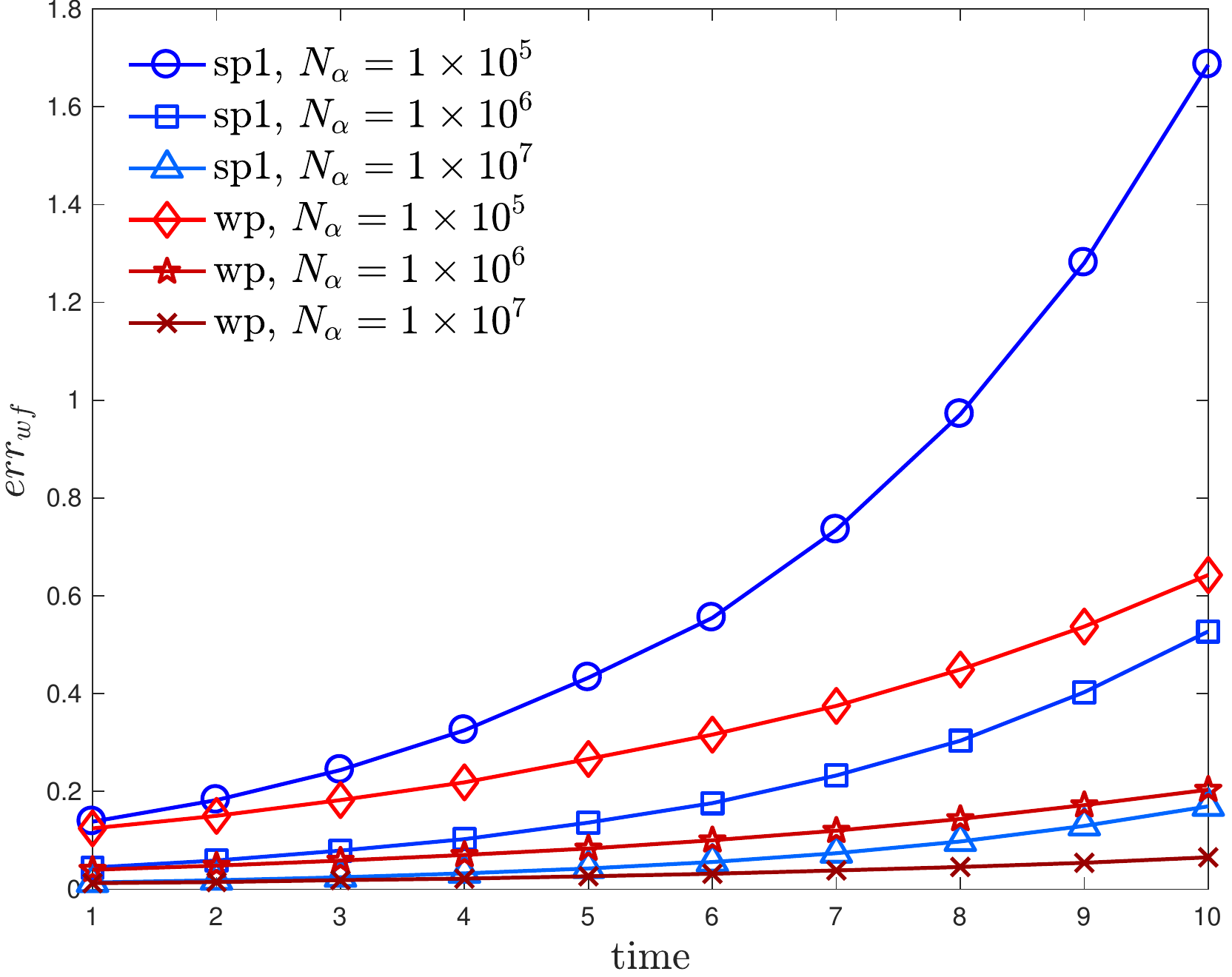}}
{\includegraphics[width=0.49\textwidth,height=0.27\textwidth]{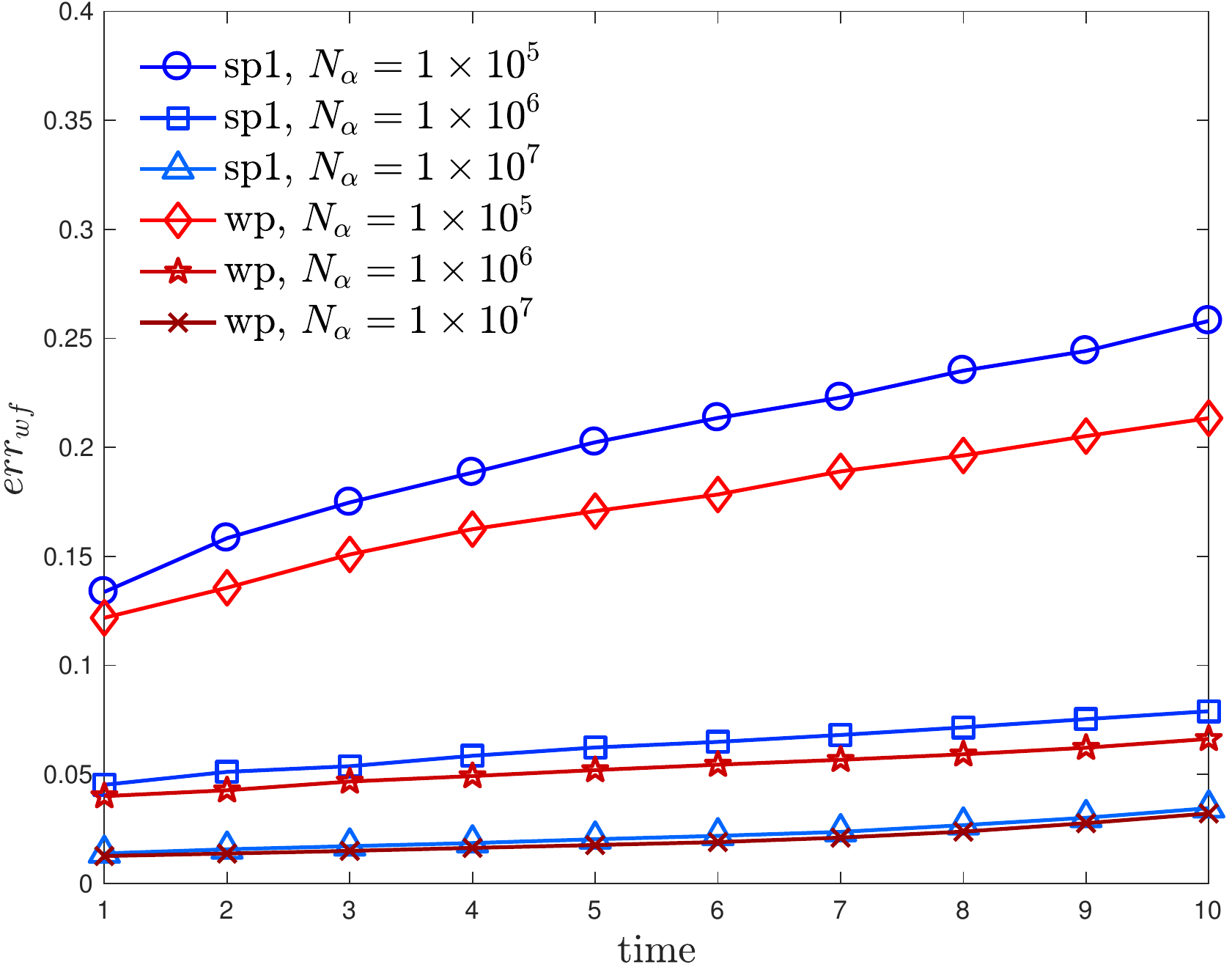}}}
\caption{\small A comparison between pure Monte Carlo simulations (left) and resampling-included simulations (right) for the 2D Gaussian scattering: Convergence rate with respect to $N_\alpha$ and time evolution of relative errors of the Wigner function. For the pure Monte Carlo, the convergence rates of both \textbf{sp1} and \textbf{wp} accord with the theoretical value $-1/2$. The resampling mechanisms effectively suppress the exponential growth of stochastic noises. Here we set $\gamma = 1.5\check{\xi}$ and $T_A = 1$fs.
}
\label{G_ss_rate}
\end{figure}

\begin{figure}[!h]
\subfigure[$\textup{err}_{sm}$ under different $N_\alpha$.]{{\includegraphics[width=0.49\textwidth,height=0.27\textwidth]{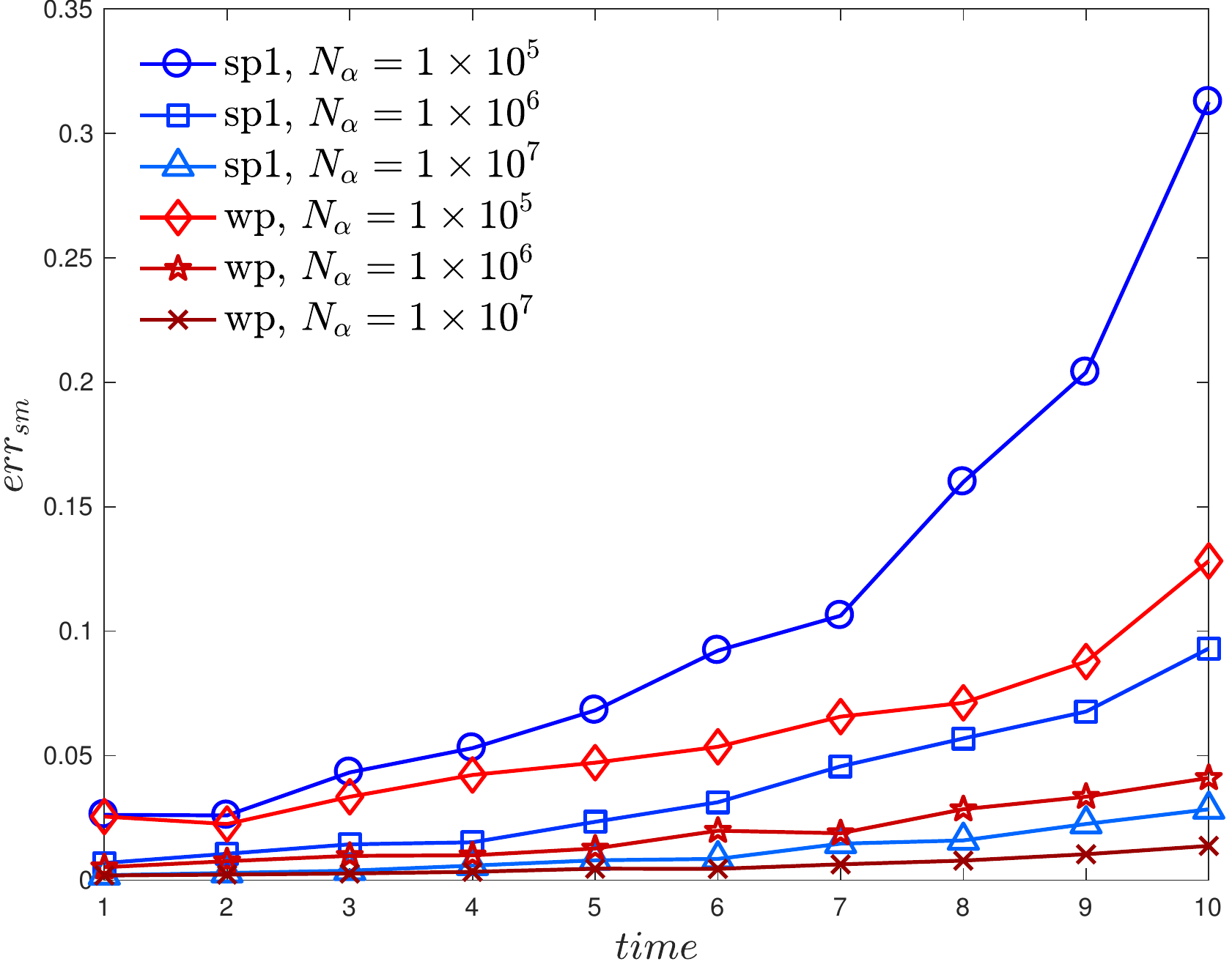}}
{\includegraphics[width=0.49\textwidth,height=0.27\textwidth]{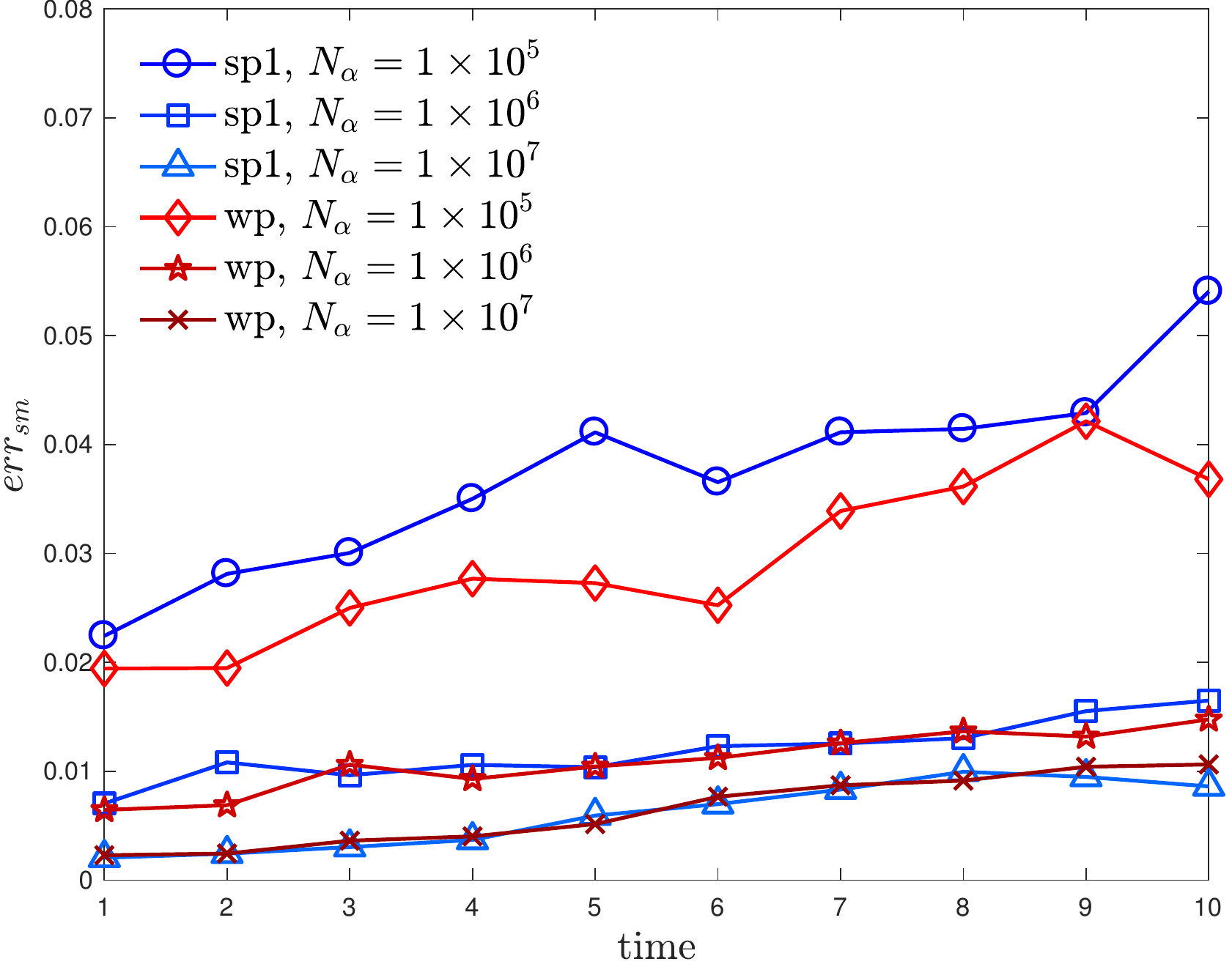}}}
\subfigure[$\textup{err}_{mm}$ under different $N_\alpha$.]{{\includegraphics[width=0.49\textwidth,height=0.27\textwidth]{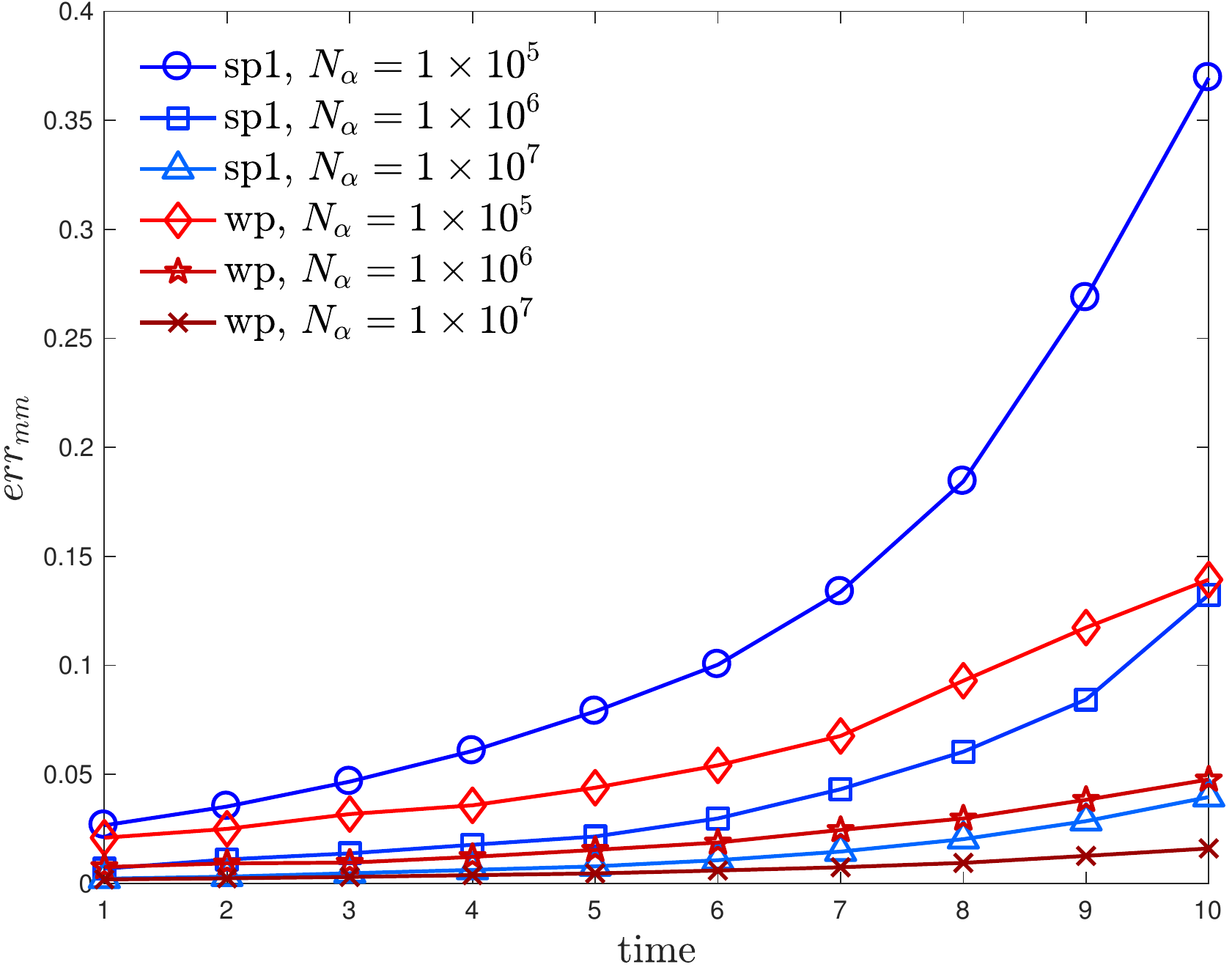}}
{\includegraphics[width=0.49\textwidth,height=0.27\textwidth]{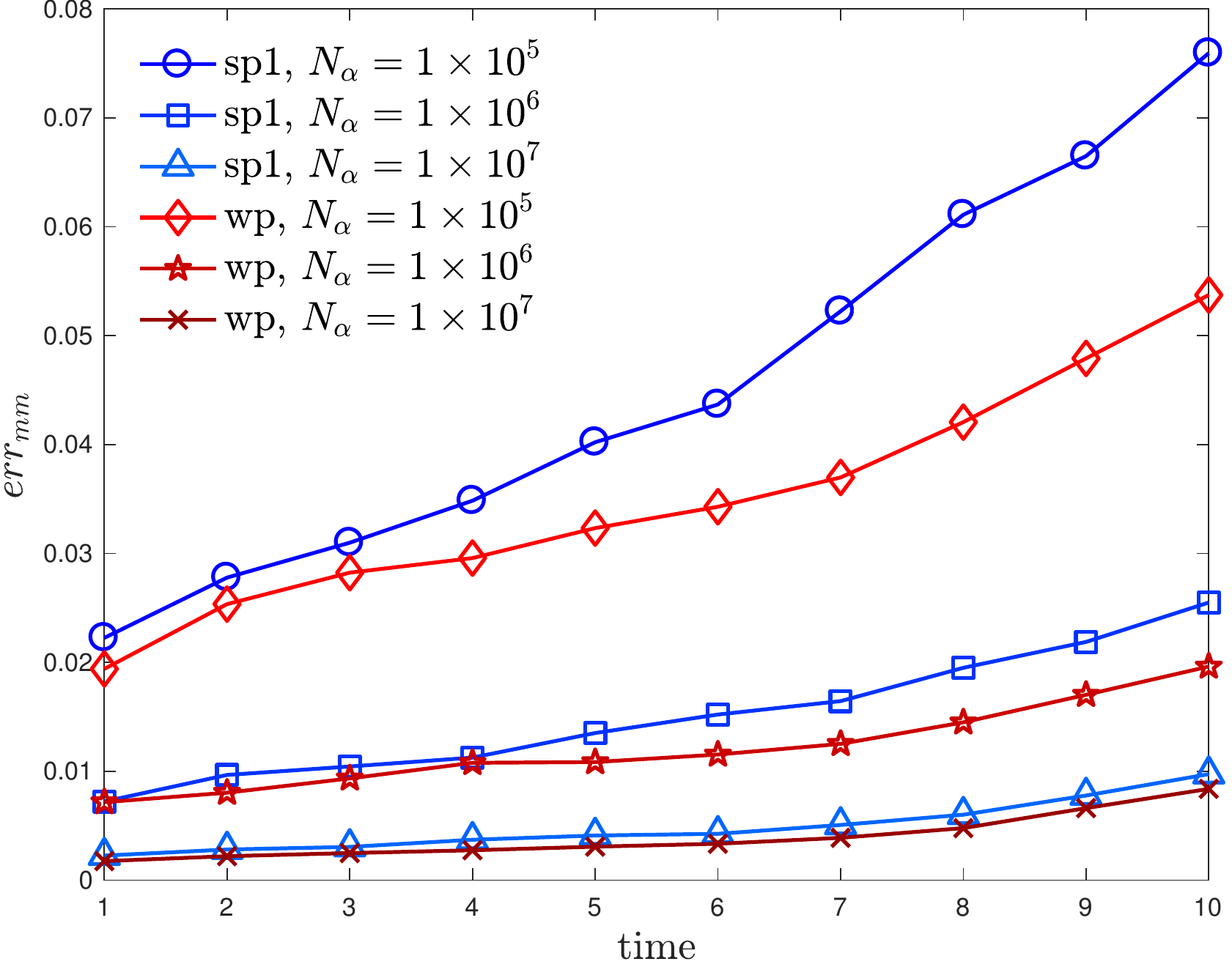}}}
\subfigure[Spatial marginal density at $t=10$fs.]{{\includegraphics[width=0.49\textwidth,height=0.27\textwidth]{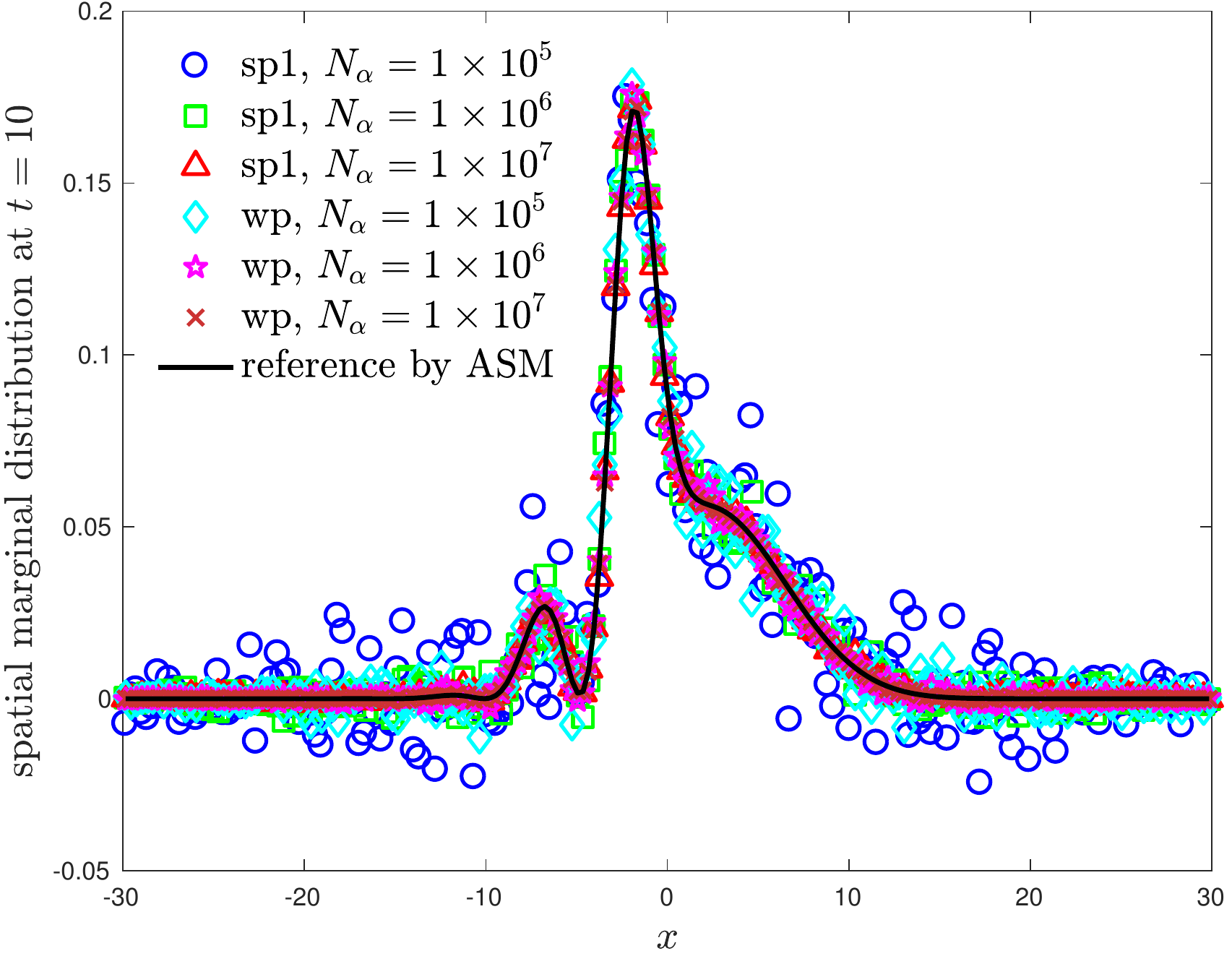}}
{\includegraphics[width=0.49\textwidth,height=0.27\textwidth]{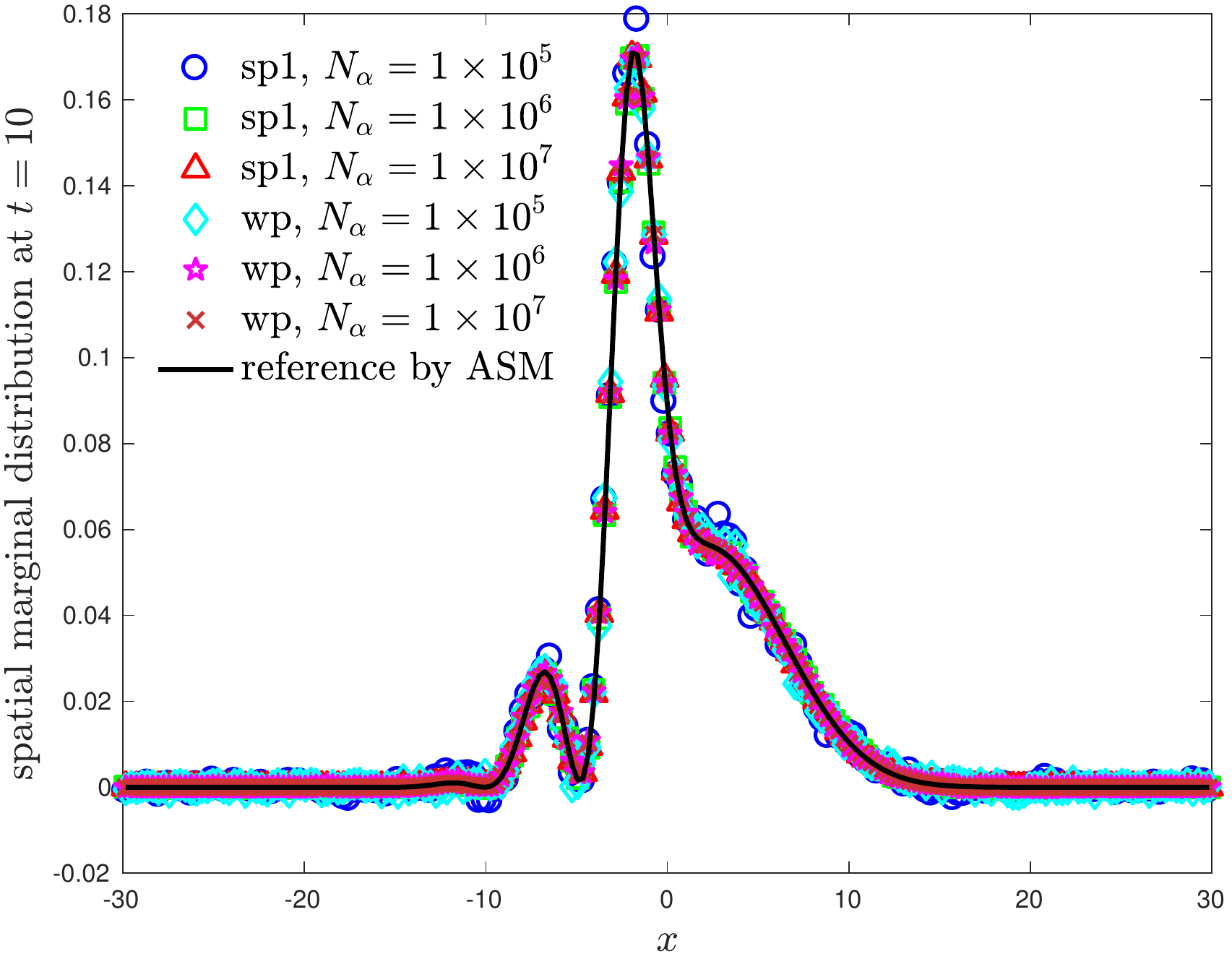}}}
\subfigure[Momental marginal density at $t=10$fs.]{{\includegraphics[width=0.49\textwidth,height=0.27\textwidth]{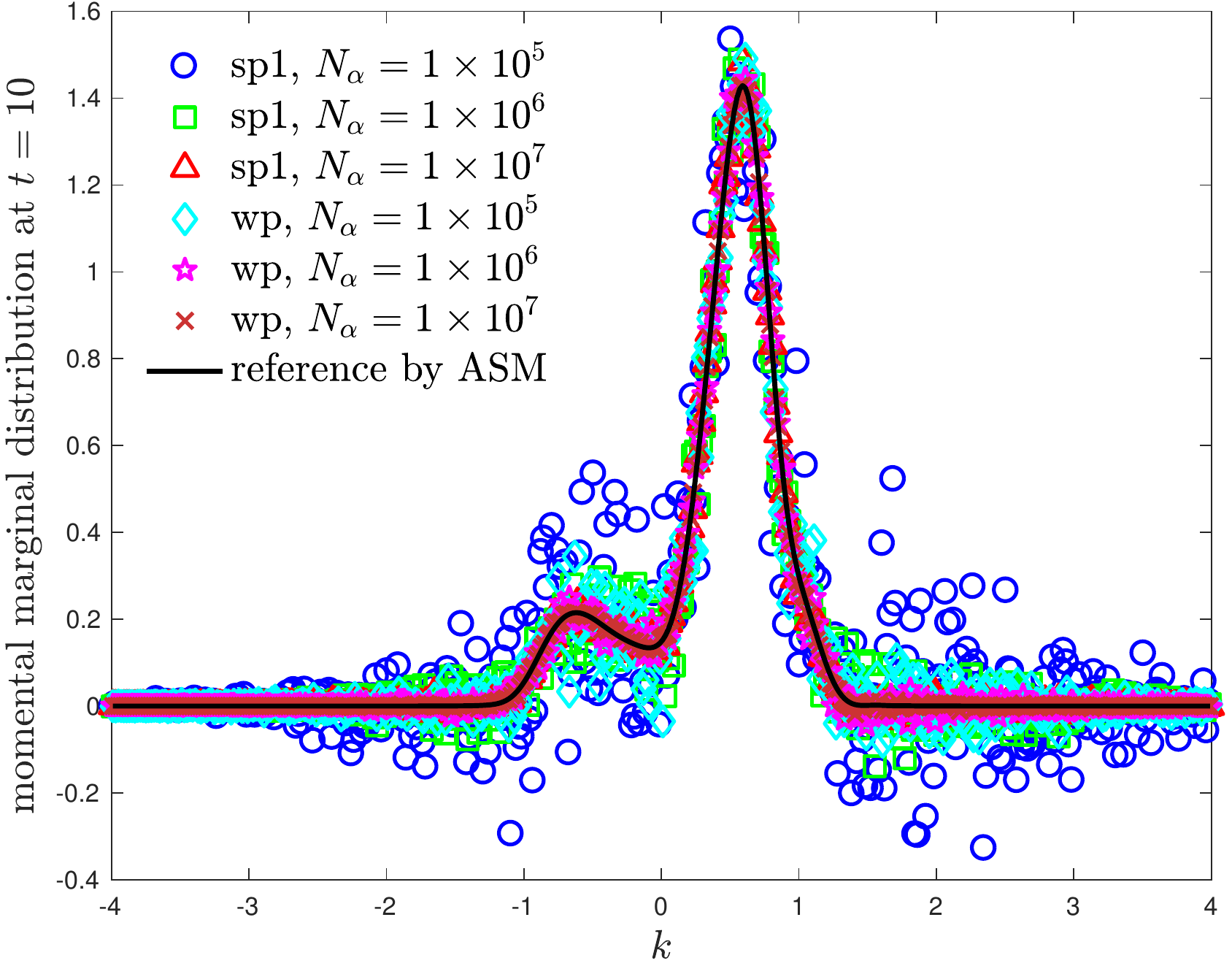}}
{\includegraphics[width=0.49\textwidth,height=0.27\textwidth]{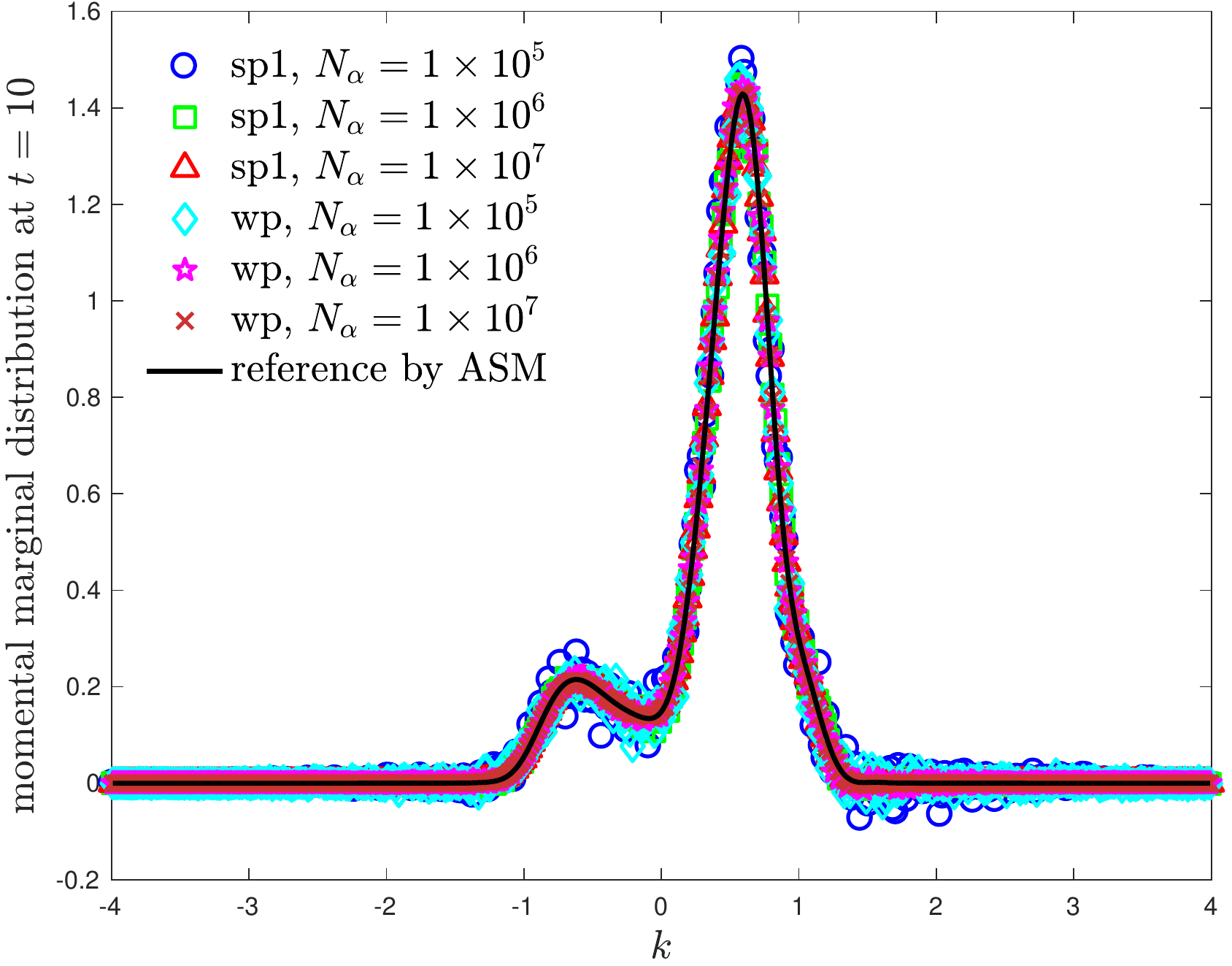}}}
\caption{\small 	A comparison between pure MC simulations (left) and resampling-included simulations (right) for the 2D Gaussian scattering: Time evolution of relative errors and both spatial and momental marginal distributions at $t=10$fs are plotted. Stochastic noises of \textbf{sp1} are larger than those of \textbf{wp}. Nevertheless, they can be significantly suppressed by increasing $N_\alpha$. Here we set $\gamma = 1.5\check{\xi}$ and $T_A = 1$fs.
}
\label{G_ss_error}
\end{figure}

\begin{figure}[!h]
\subfigure[Convergence rate with respect to $N_\alpha$.]{\includegraphics[width=0.49\textwidth,height=0.27\textwidth]{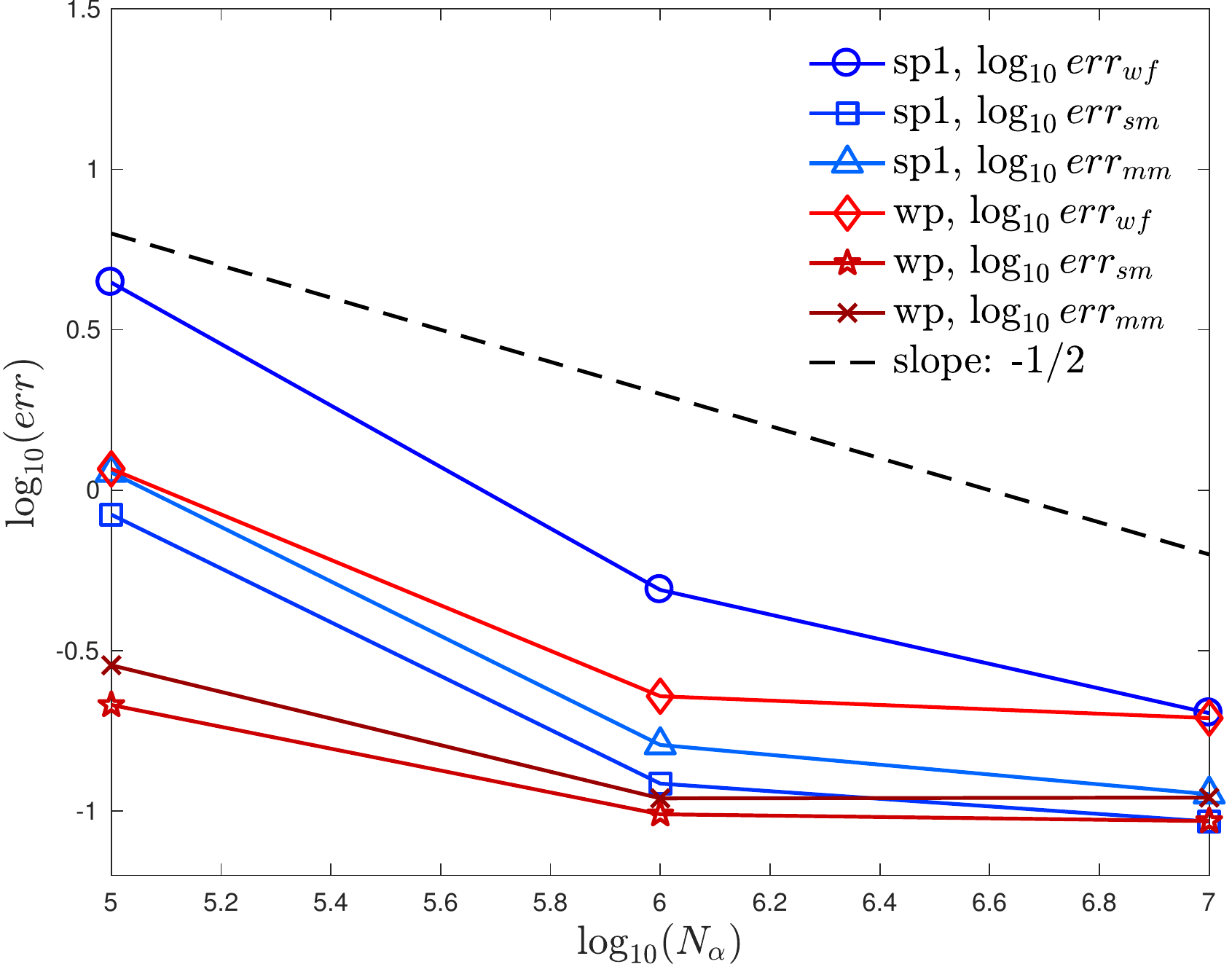}}
\subfigure[$\textup{err}_{wf}$ under different $N_\alpha$.]{\includegraphics[width=0.49\textwidth,height=0.27\textwidth]{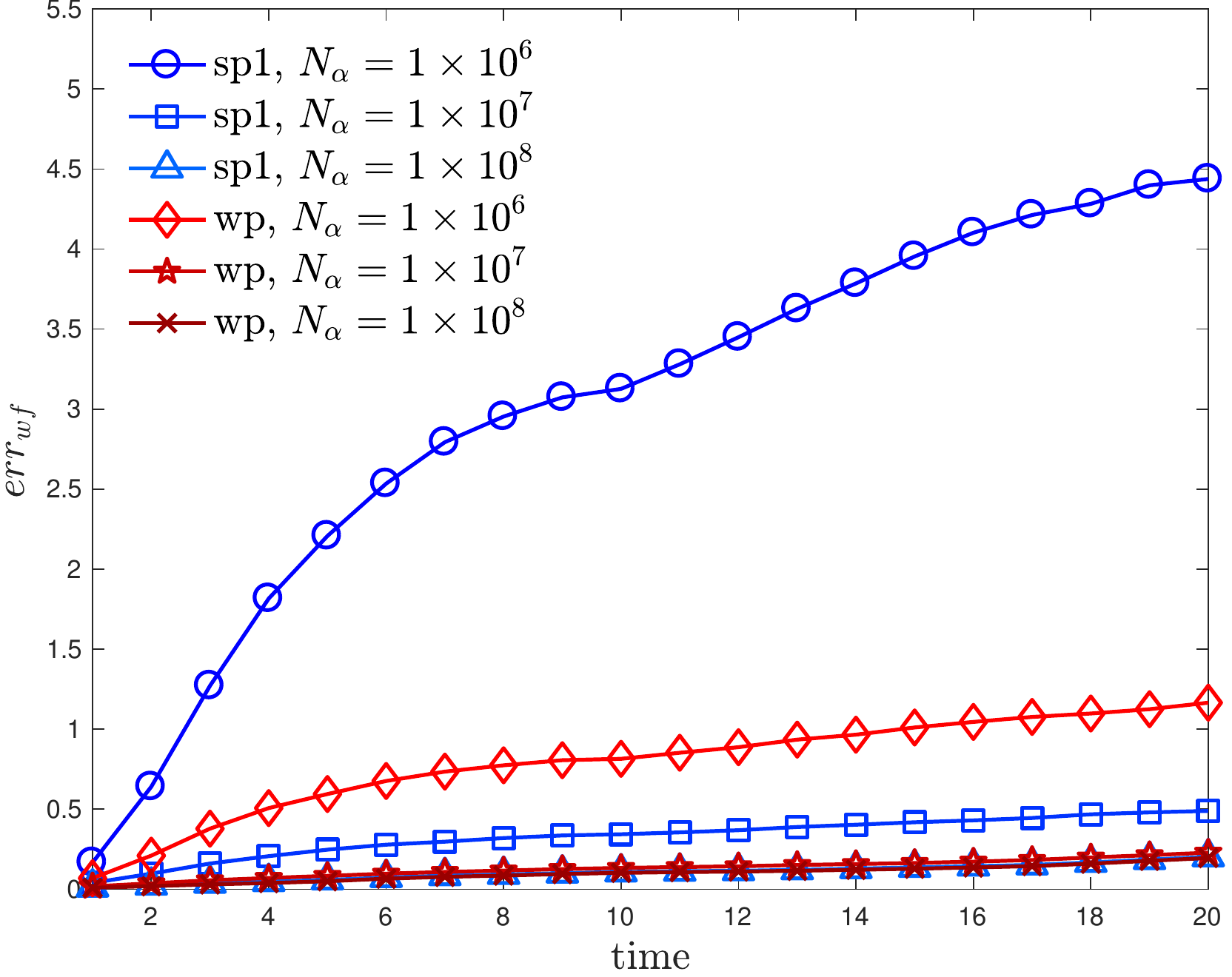}}
\subfigure[$\textup{err}_{sm}$ under different $N_\alpha$.]{\includegraphics[width=0.49\textwidth,height=0.27\textwidth]{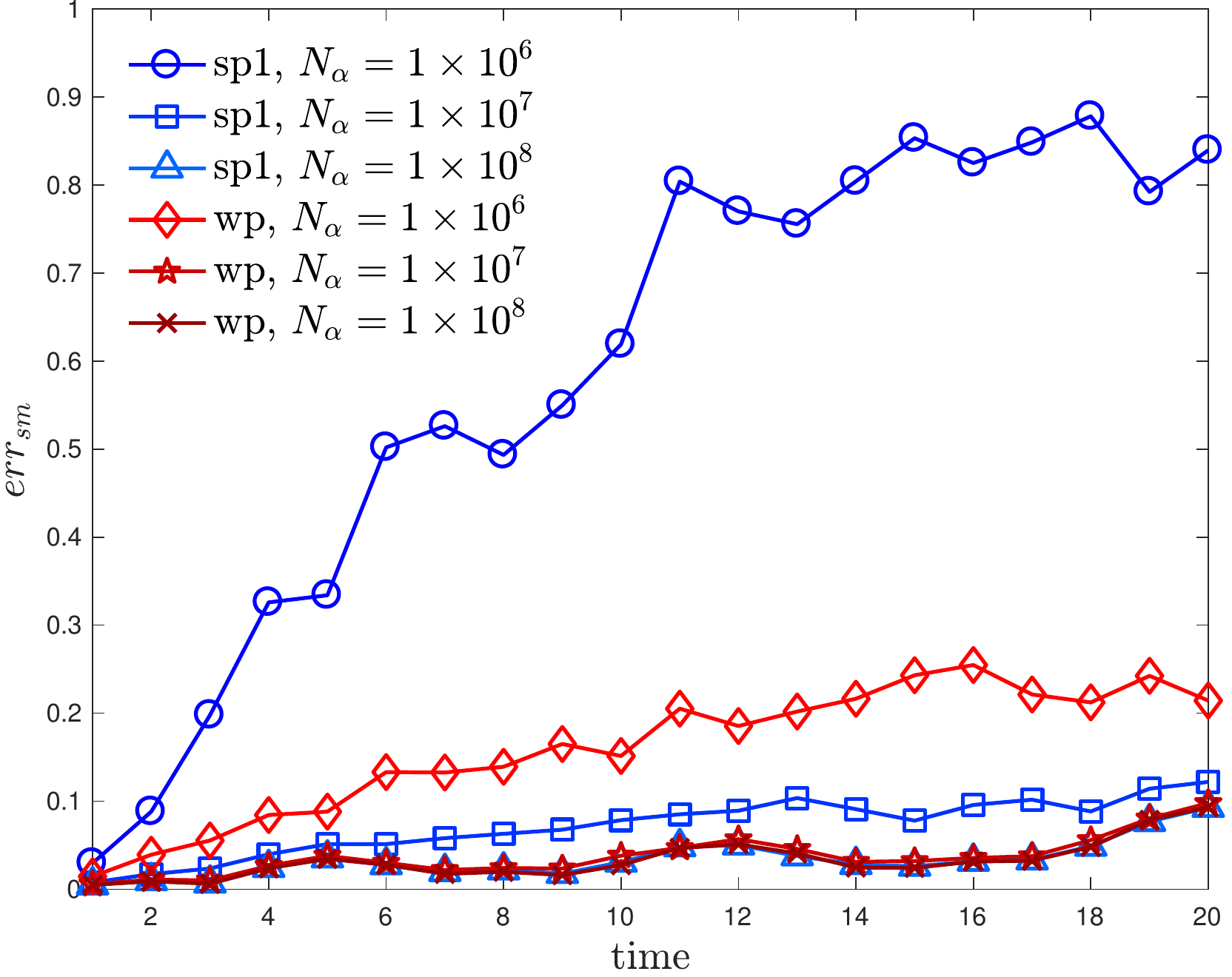}}
\subfigure[$\textup{err}_{mm}$ under different $N_\alpha$.]{\includegraphics[width=0.49\textwidth,height=0.27\textwidth]{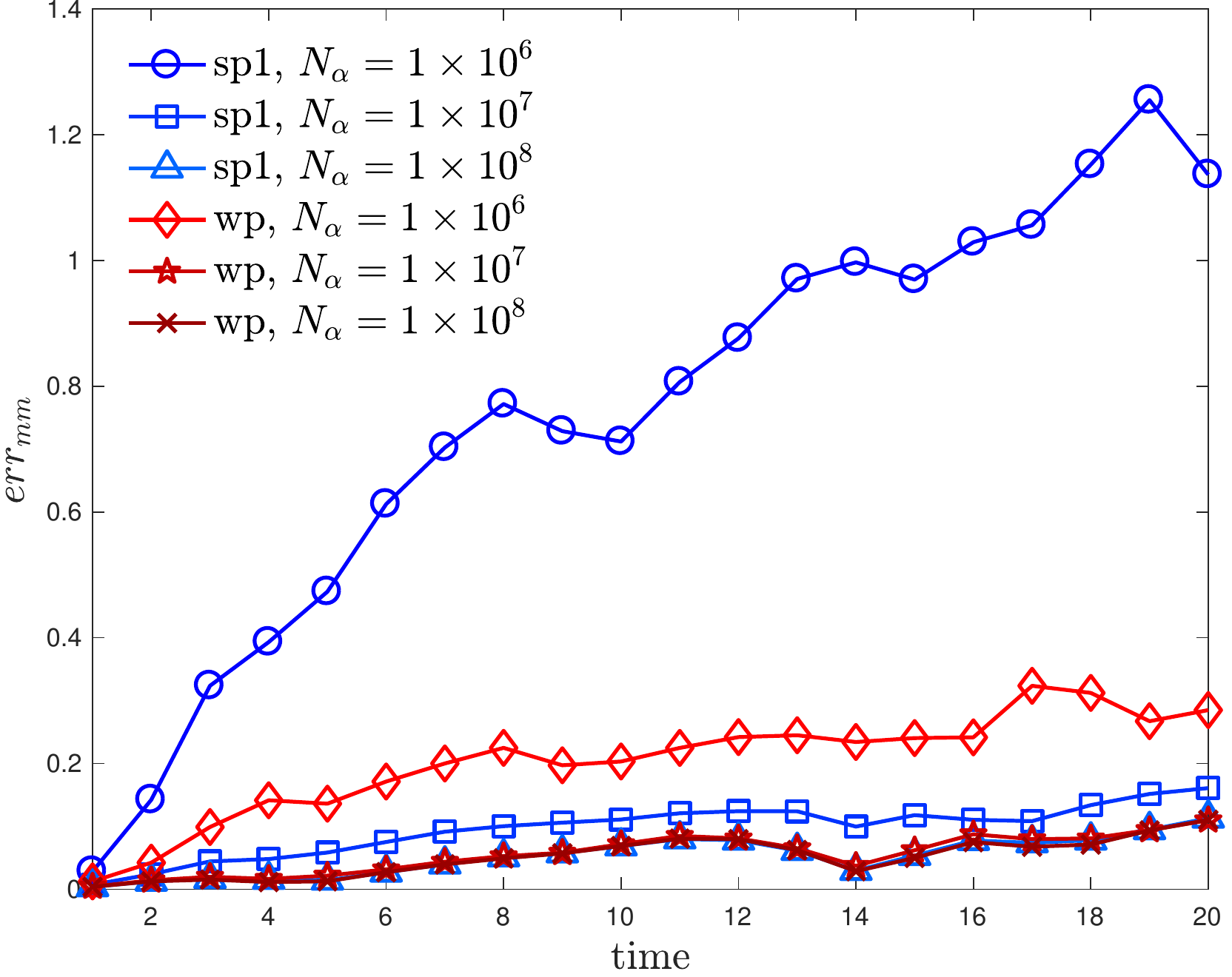}}
\caption{\small  The 4D Helium-like system: Convergence rate and relative errors under different sample sizes $N_\alpha$. The convergence rate slightly deviates from the theoretical prediction due to the error of the resampling. Stochastic noises of \textbf{sp1} are larger than those of \textbf{wp}, but accuracy can be still be improved by increasing $N_\alpha$. Here we set $\gamma = 2$ and $T_A = 2$a.u.
}
\label{He_ss_rate}
\end{figure}

\begin{figure}[!h]
\subfigure[$t=5$a.u.]{{\includegraphics[width=0.49\textwidth,height=0.27\textwidth]{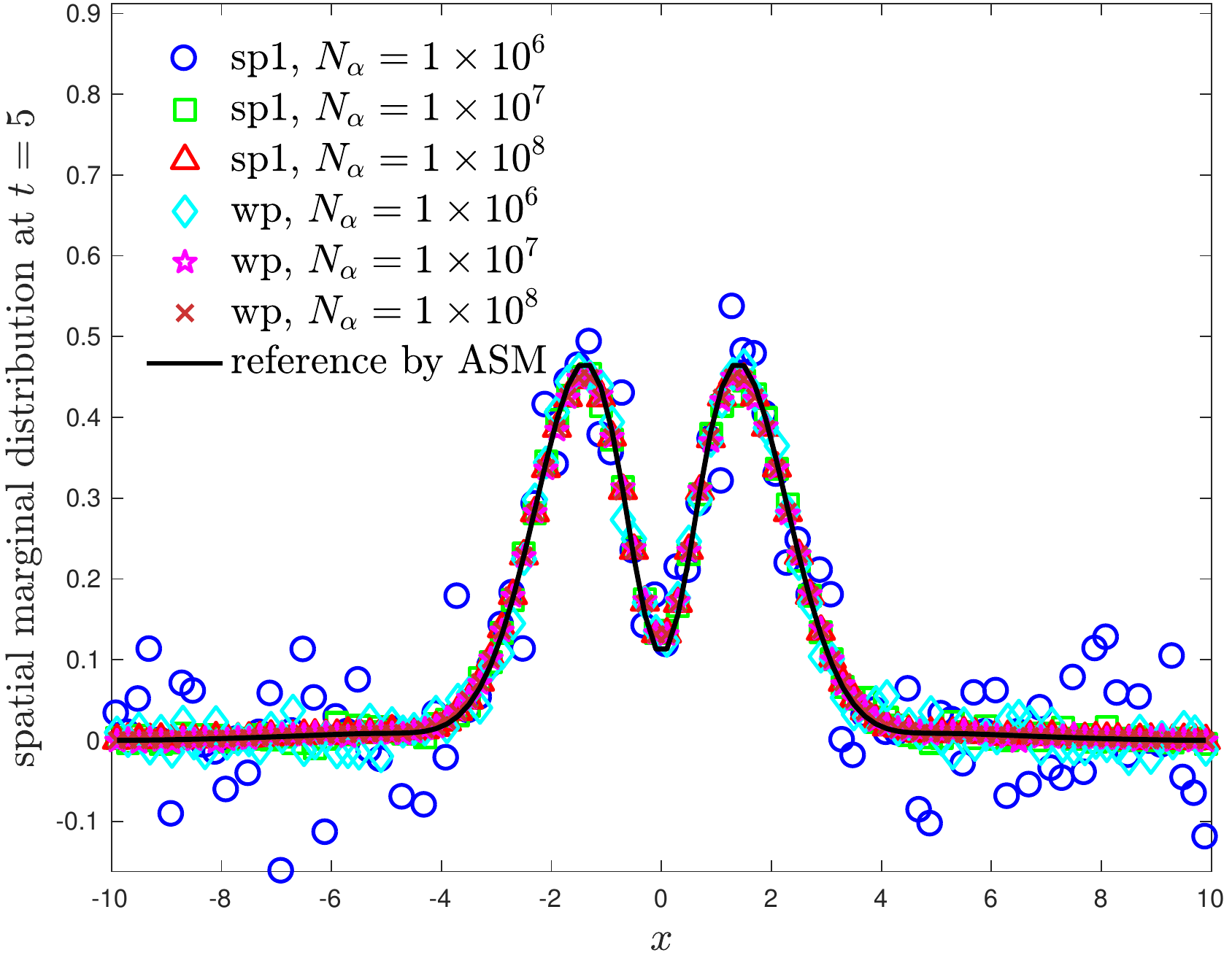}}
{\includegraphics[width=0.49\textwidth,height=0.27\textwidth]{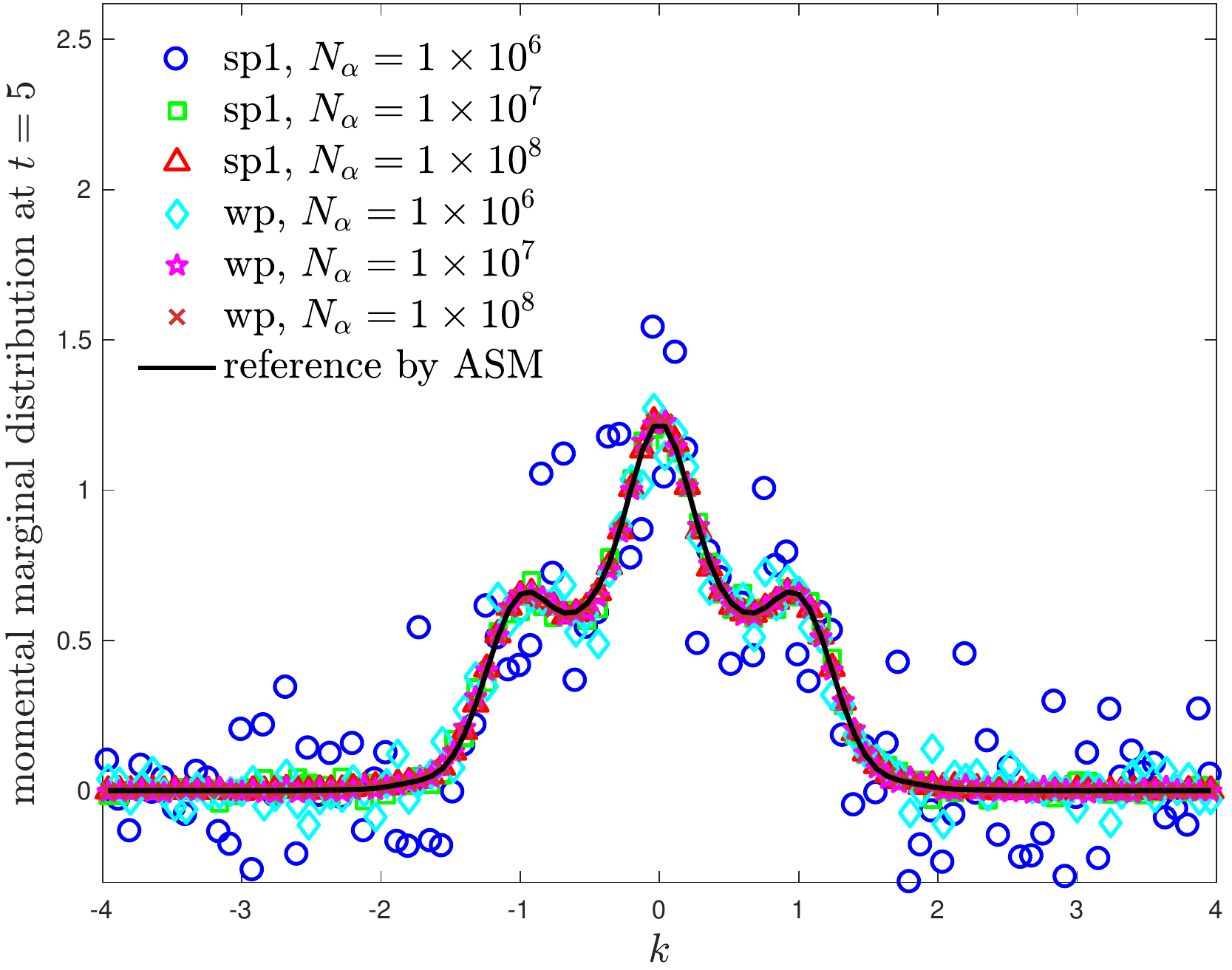}}}
\subfigure[$t=10$a.u.]{{\includegraphics[width=0.49\textwidth,height=0.27\textwidth]{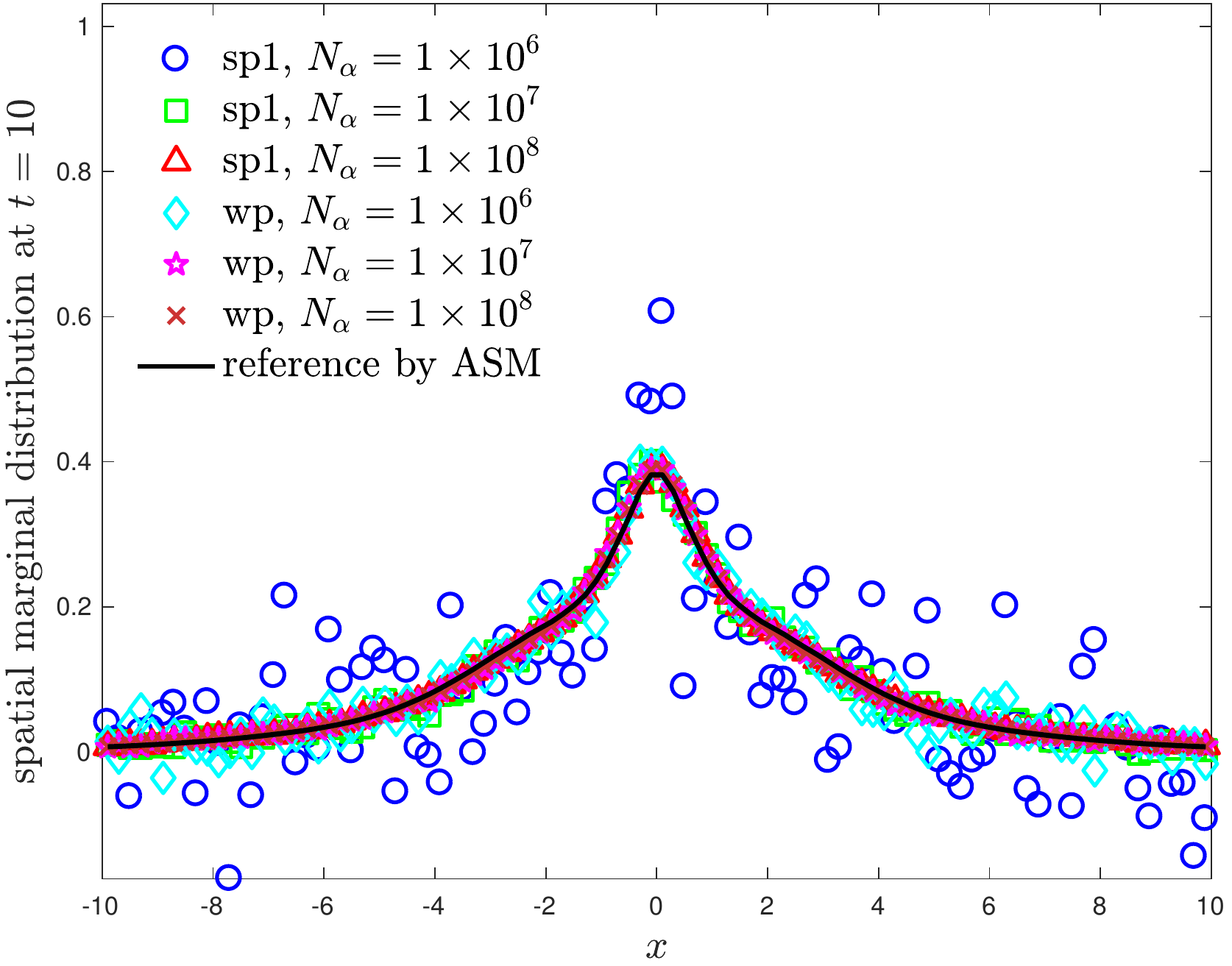}}
{\includegraphics[width=0.49\textwidth,height=0.27\textwidth]{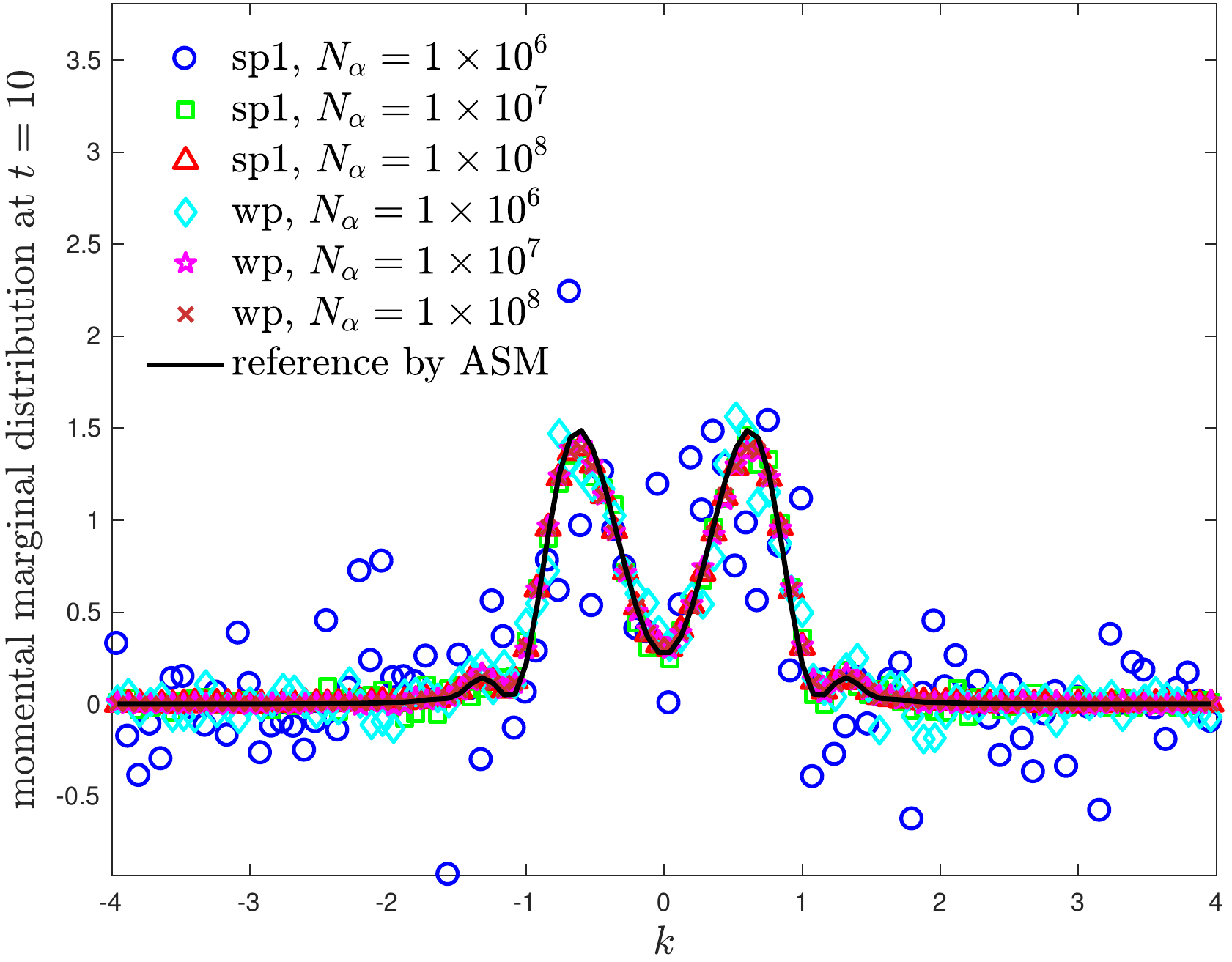}}}
\subfigure[$t=15$a.u.]{{\includegraphics[width=0.49\textwidth,height=0.27\textwidth]{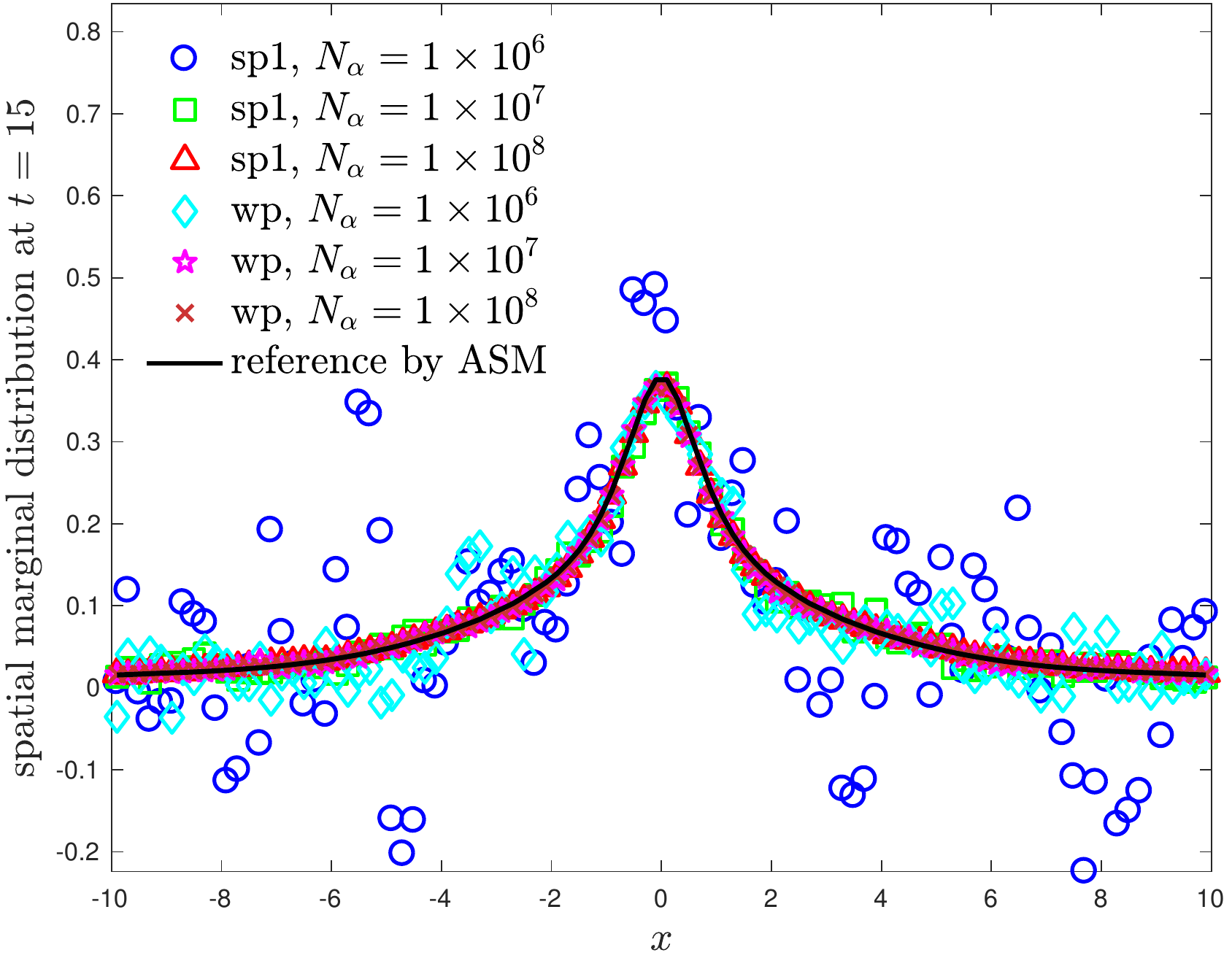}}
{\includegraphics[width=0.49\textwidth,height=0.27\textwidth]{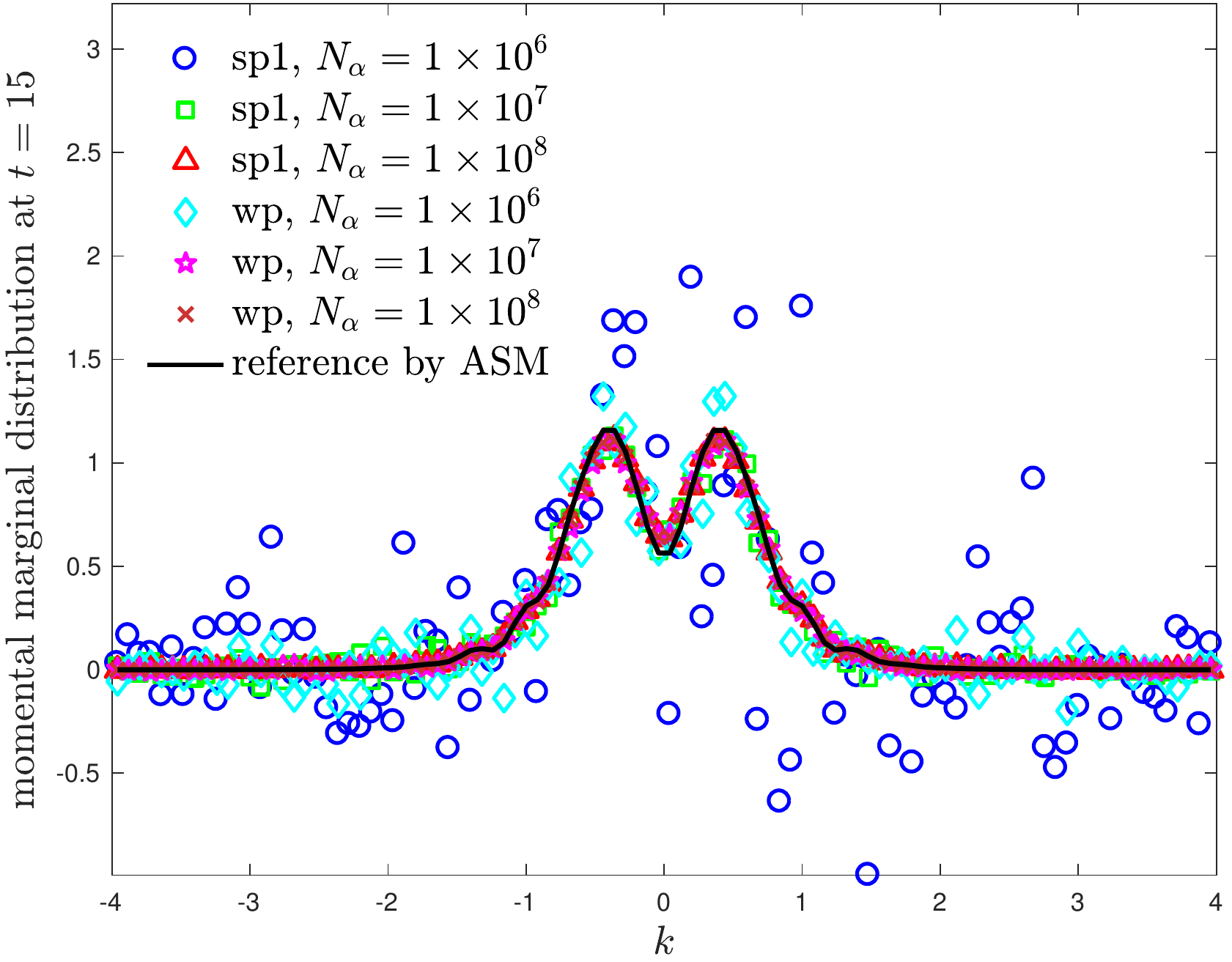}}}
\subfigure[$t=20$a.u.]{{\includegraphics[width=0.49\textwidth,height=0.27\textwidth]{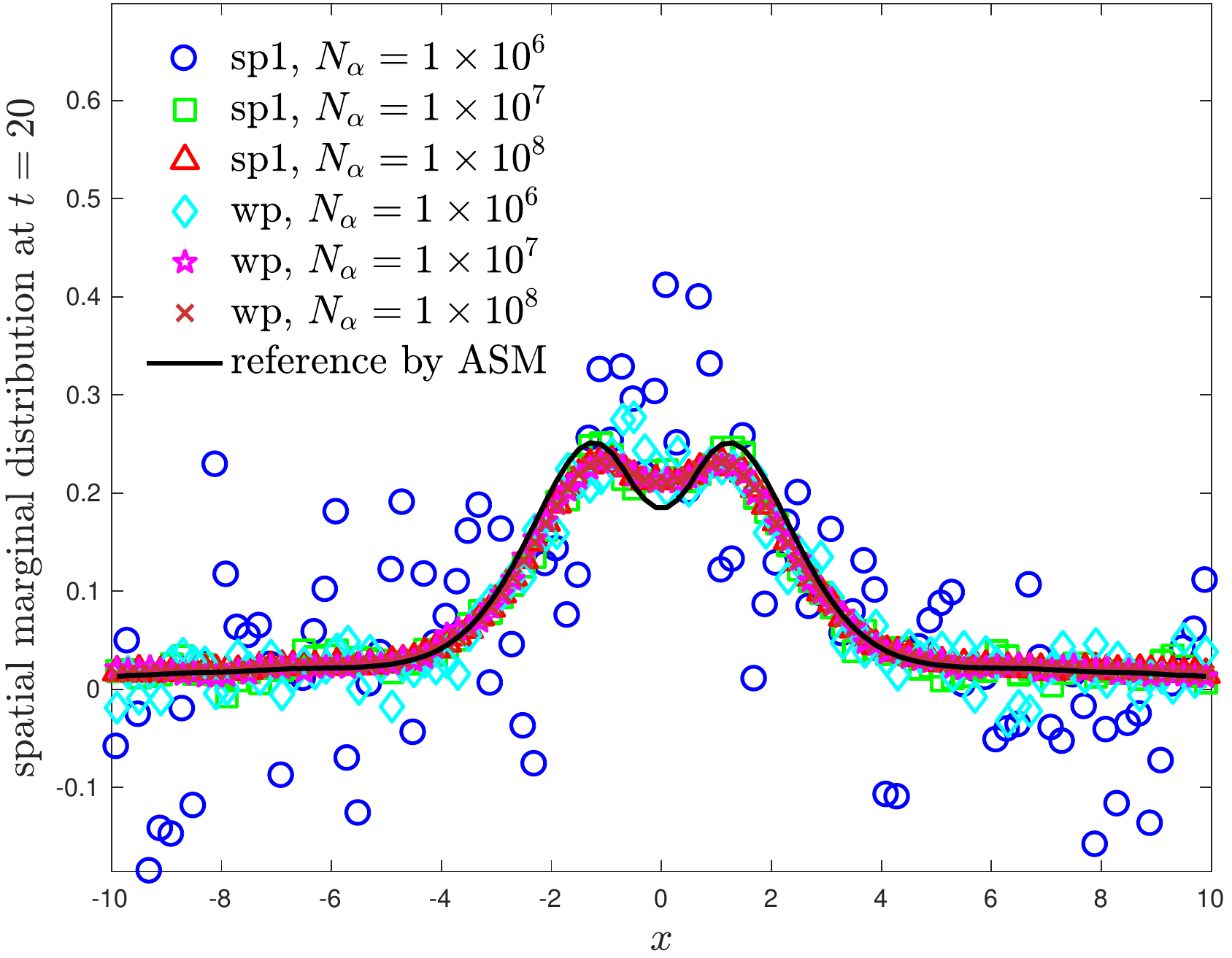}}
{\includegraphics[width=0.49\textwidth,height=0.27\textwidth]{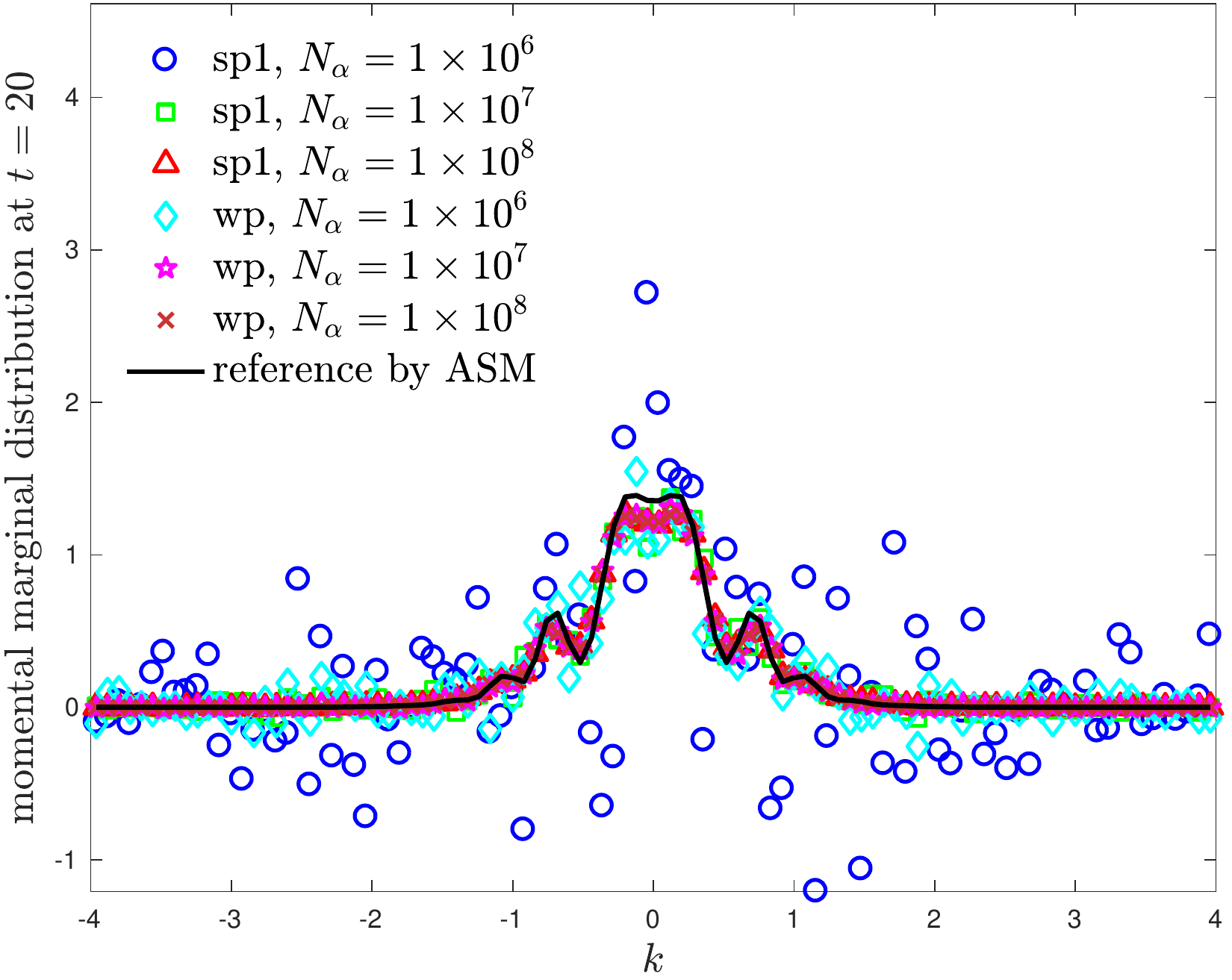}}}
\caption{\small The 4D Helium-like system: Plots of spatial and momental marginal distributions at different instants 5, 10, 15, 20a.u. Too small sample size undermines the accuracy and noises even overshadow the true solutions. For large $N_\alpha$, both \textbf{sp1} and $\textbf{wp}$ perform quite well. Here we set $\gamma = 2$ and $T_A = 0.5$a.u.
}
\label{He_ss_error}
\end{figure}

\begin{figure}[!h]
\subfigure[Reference by ASM.]{\includegraphics[width=0.49\textwidth,height=0.27\textwidth]{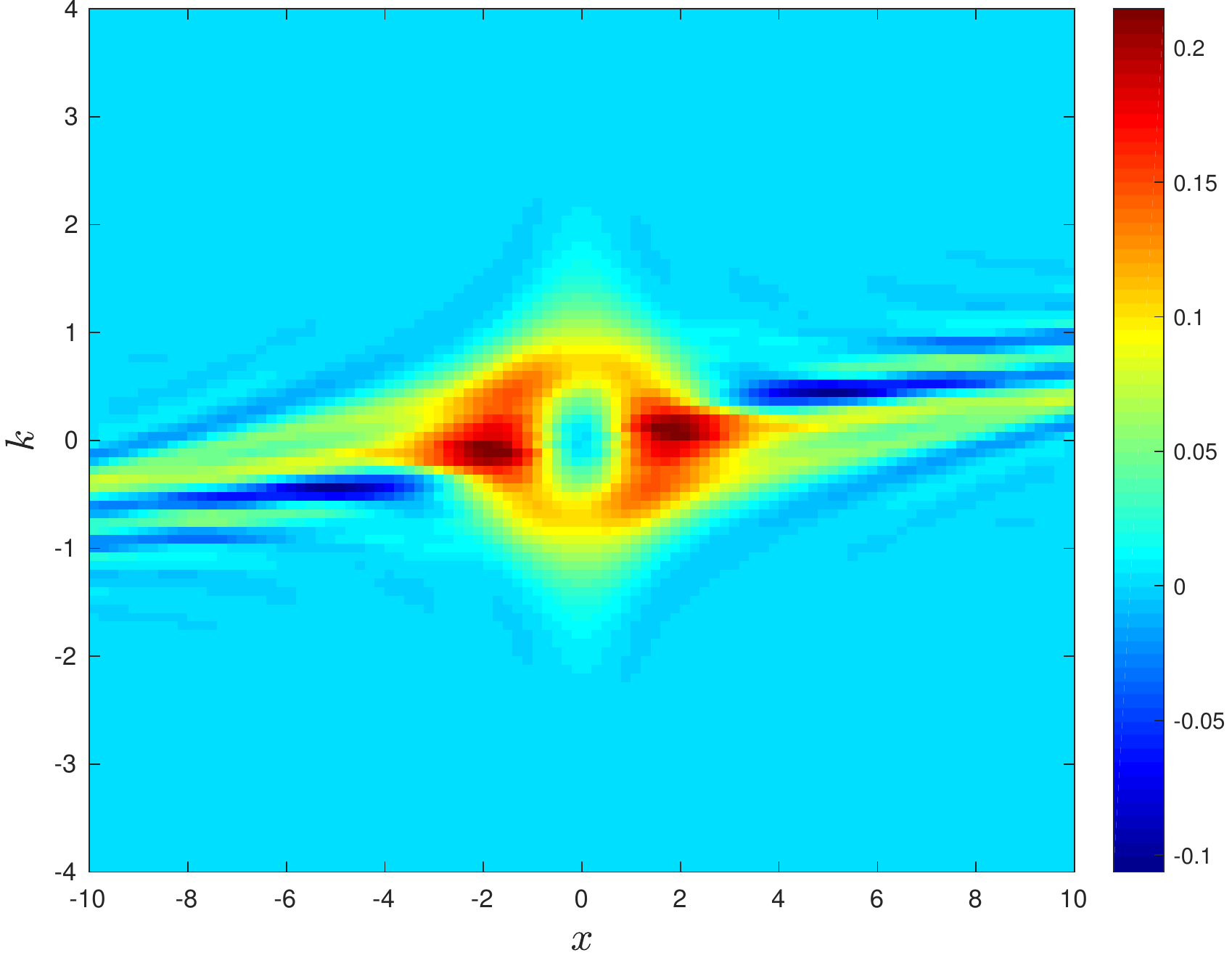}}
\subfigure[$N_\alpha = 1 \times 10^6$.]{\includegraphics[width=0.49\textwidth,height=0.27\textwidth]{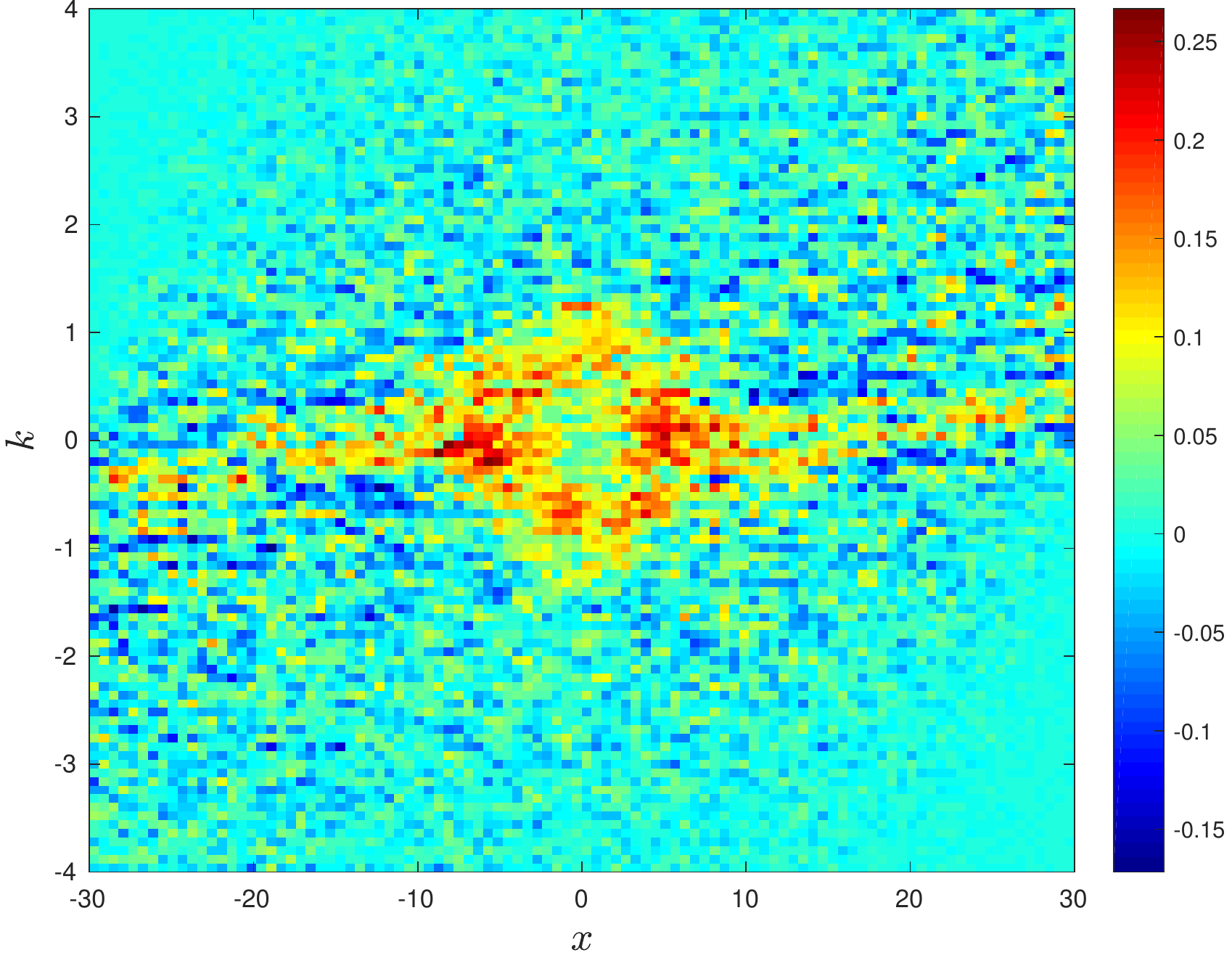}}
\subfigure[$N_\alpha = 1 \times 10^7$.]{\includegraphics[width=0.49\textwidth,height=0.27\textwidth]{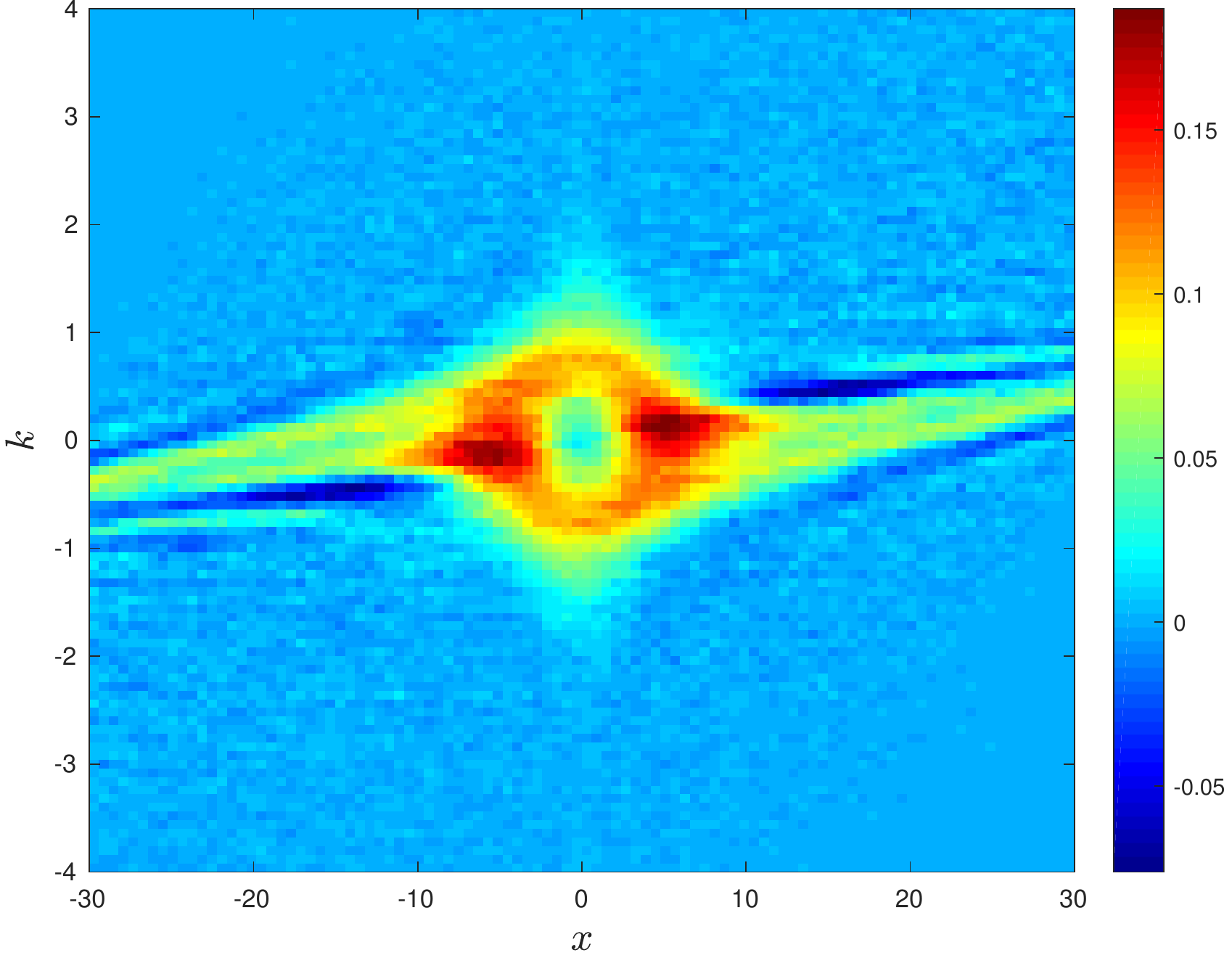}}
\subfigure[$N_\alpha = 1 \times 10^8$.]{\includegraphics[width=0.49\textwidth,height=0.27\textwidth]{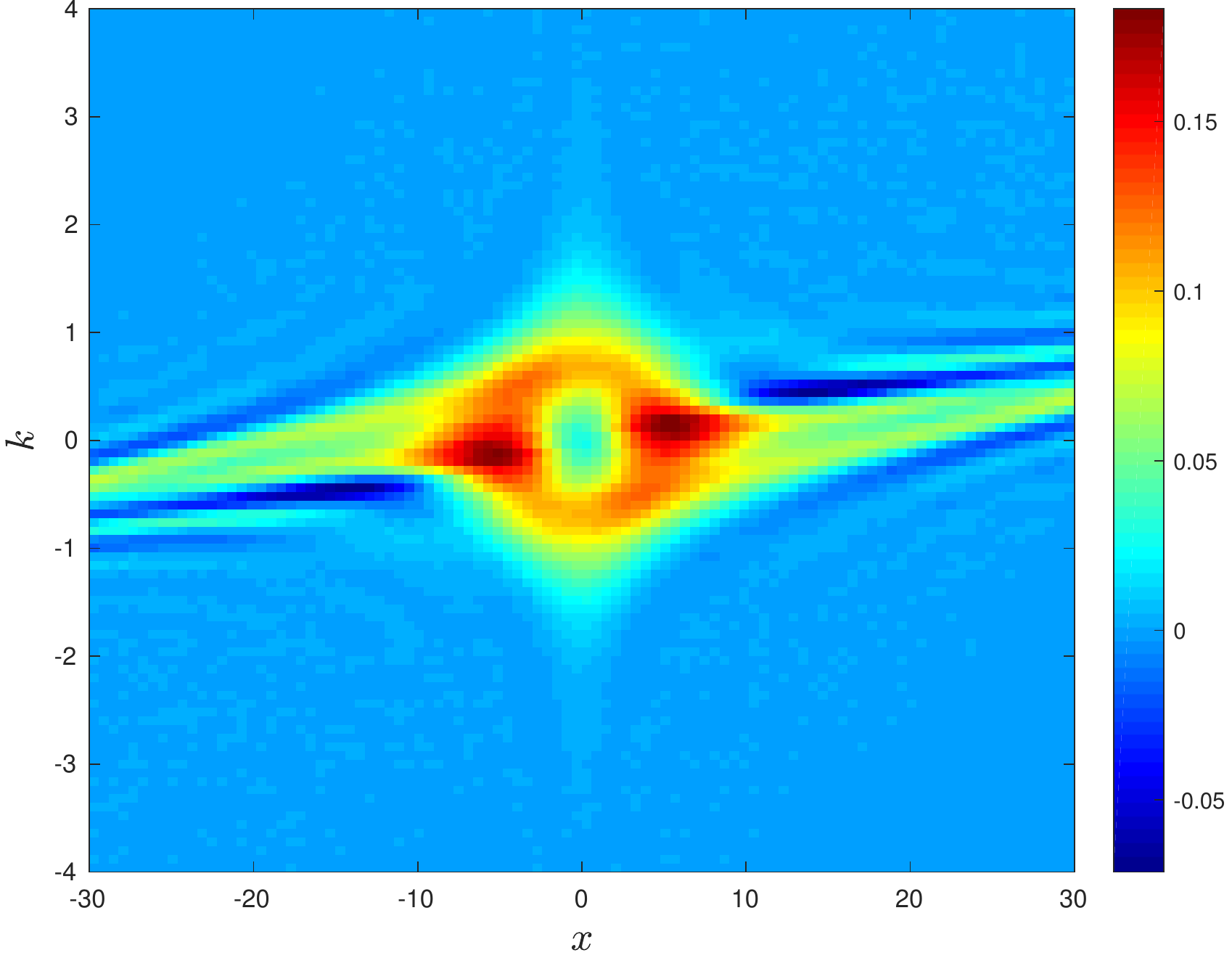}}
\caption{\small The 4D Helium-like system: Reduced Wigner function at $t_{fin}=20$a.u. under different {sample size} $N_\alpha$. The stochastic noises are clearly suppressed by increasing $N_\alpha$.}
\label{He_ss_pic}
\end{figure}

\subsection{Variance and the choice of the auxiliary function $\gamma$}

Another factor that influences the quality of a stochastic estimator is the variance. In this part, we will observe that increasing the auxiliary function $\gamma$ in \textbf{sp1} fails to lead to a systematic improvement on the accuracy, whereas \textbf{wp} allows a variance reduction \cite{ShaoXiong2016}. To study how the choice of a constant auxiliary function $\gamma$ influences the variance, we perform six groups of the 2D Gaussian scattering with the sample size $N_\alpha = 1\times 10^7$ and the final time $t_{fin} =10$fs. The resampling is turned off here in order to get rid of the deterministic errors.  The results are shown in Figs.~\ref{G_boot} and \ref{np_growth}, from which we can figure out the following lessons.

\begin{description}

\item[(1)] For \textbf{sp1}, the choice of $\gamma$ has little influence on the accuracy, which may imply that the variance of \textbf{sp1} doesn't depend on the choice of the auxiliary function. By contrast, \textbf{wp} clearly allows a variance reduction as the stochastic errors decrease sharply when $\gamma$ increases. Furthermore, the numerical convergence rate with respect to $\gamma$ is about $\mathcal{O}(\gamma^{-2})$.

\item[(2)] The growth rate of particle number $N/N_{\alpha}$ in \textbf{sp1} is also independent of $\gamma$, as we have predicted in Theorem \ref{th:exp}. Actually, the growth rate is slightly smaller than the theoretical upper bound $\me^{2\check{\xi} t}$, and in \textbf{wp} the growth rate is very close to the theoretical prediction of $\me^{2\gamma_0 t}$. 

\end{description}

\begin{figure}[!h]
\subfigure[Convergence rate with respect to $\gamma$.]{\includegraphics[width=0.49\textwidth,height=0.27\textwidth]{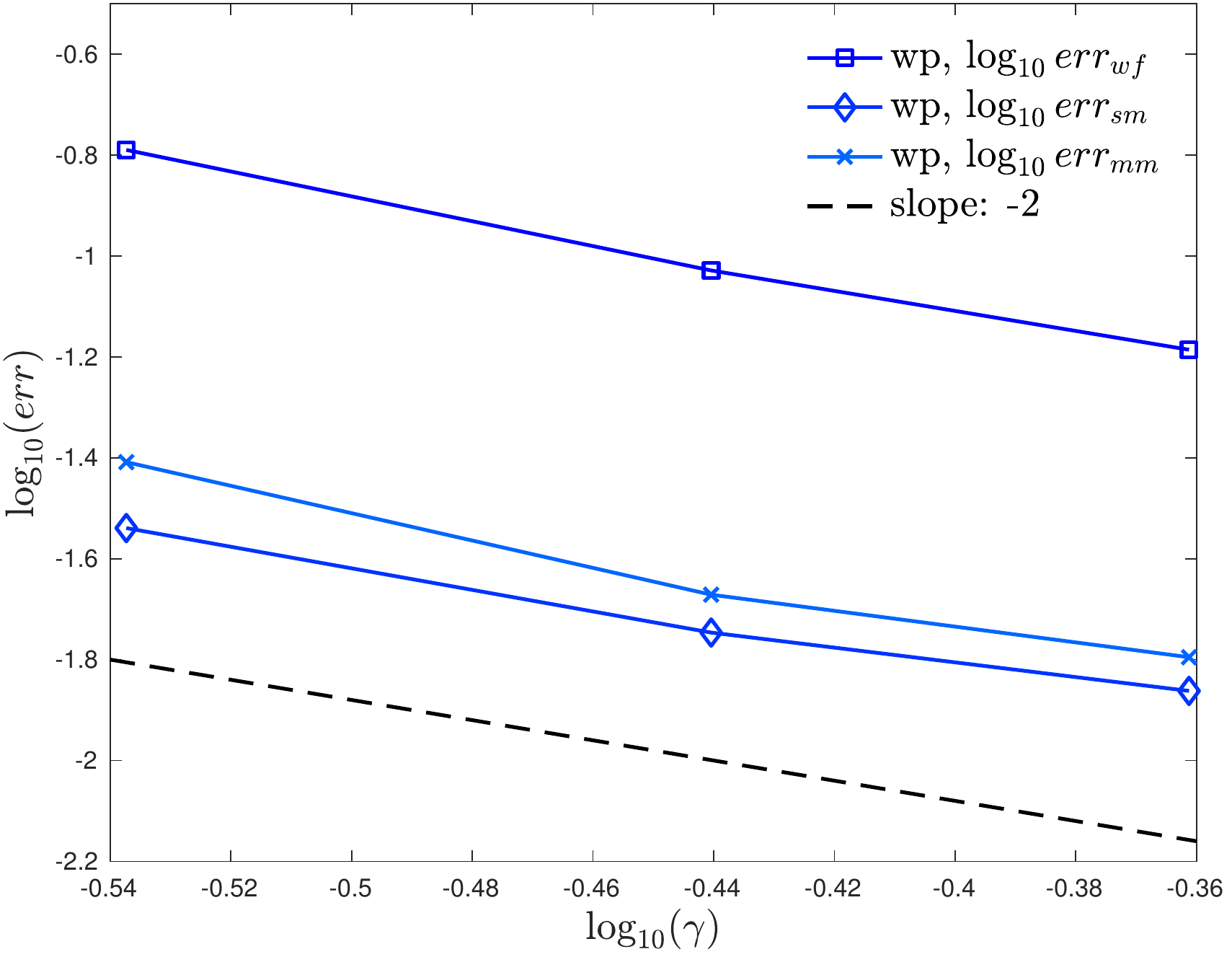}}
\subfigure[$\textup{err}_{wf}$ under different $\gamma$.]{\includegraphics[width=0.49\textwidth,height=0.27\textwidth]{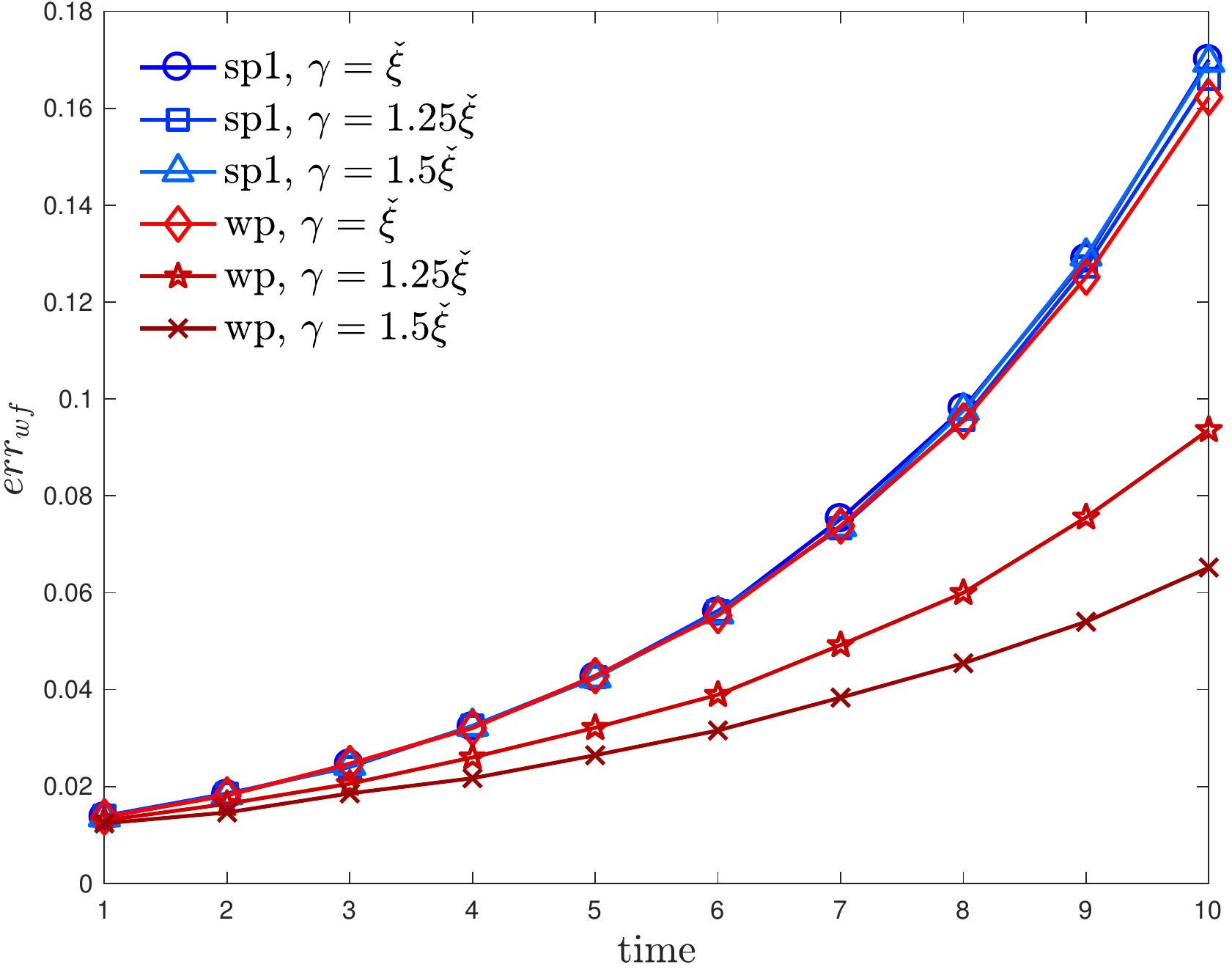}}
\subfigure[$\textup{err}_{sm}$ under different $\gamma$.]{\includegraphics[width=0.49\textwidth,height=0.27\textwidth]{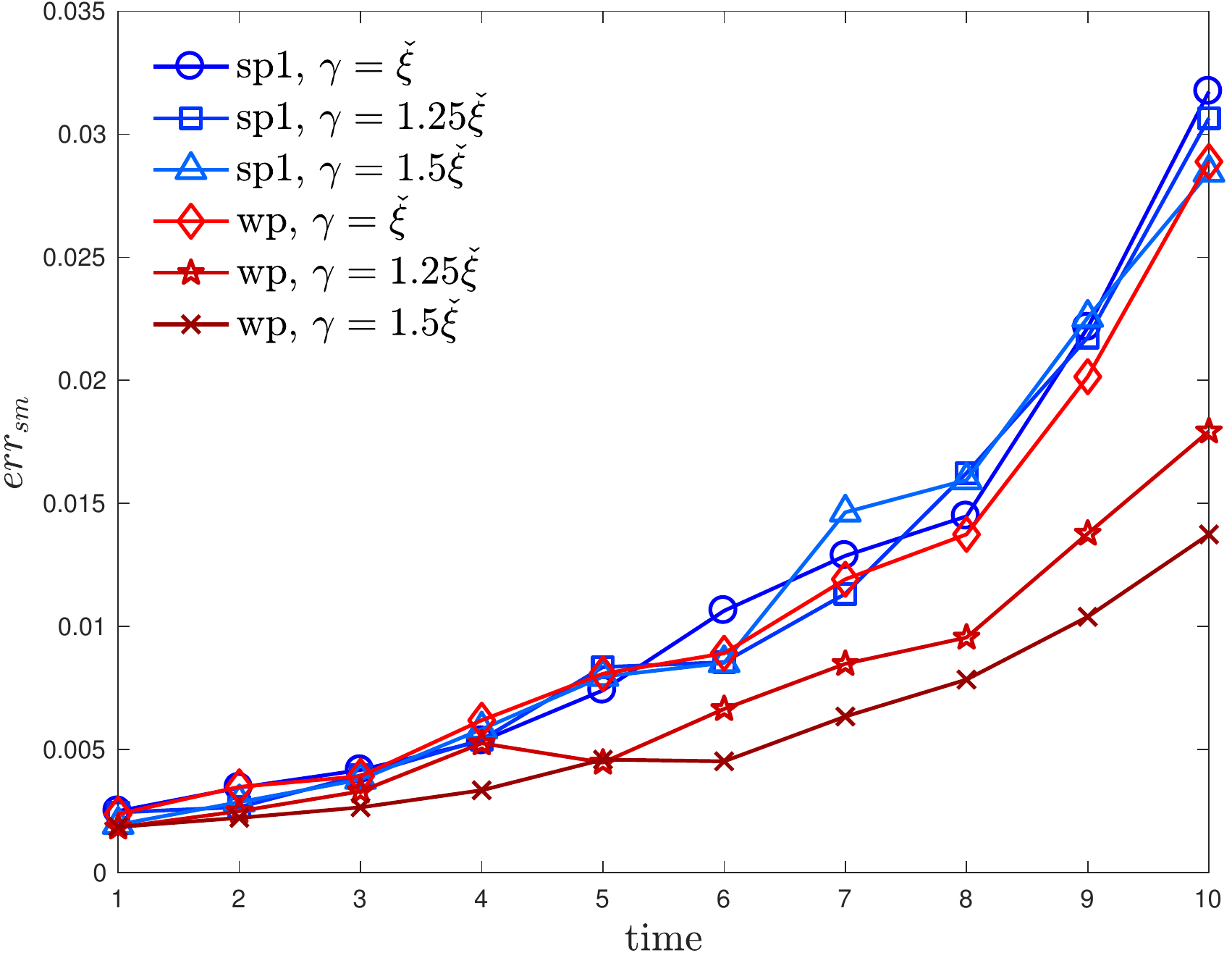}}
\subfigure[$\textup{err}_{mm}$ under different $\gamma$.]{\includegraphics[width=0.49\textwidth,height=0.27\textwidth]{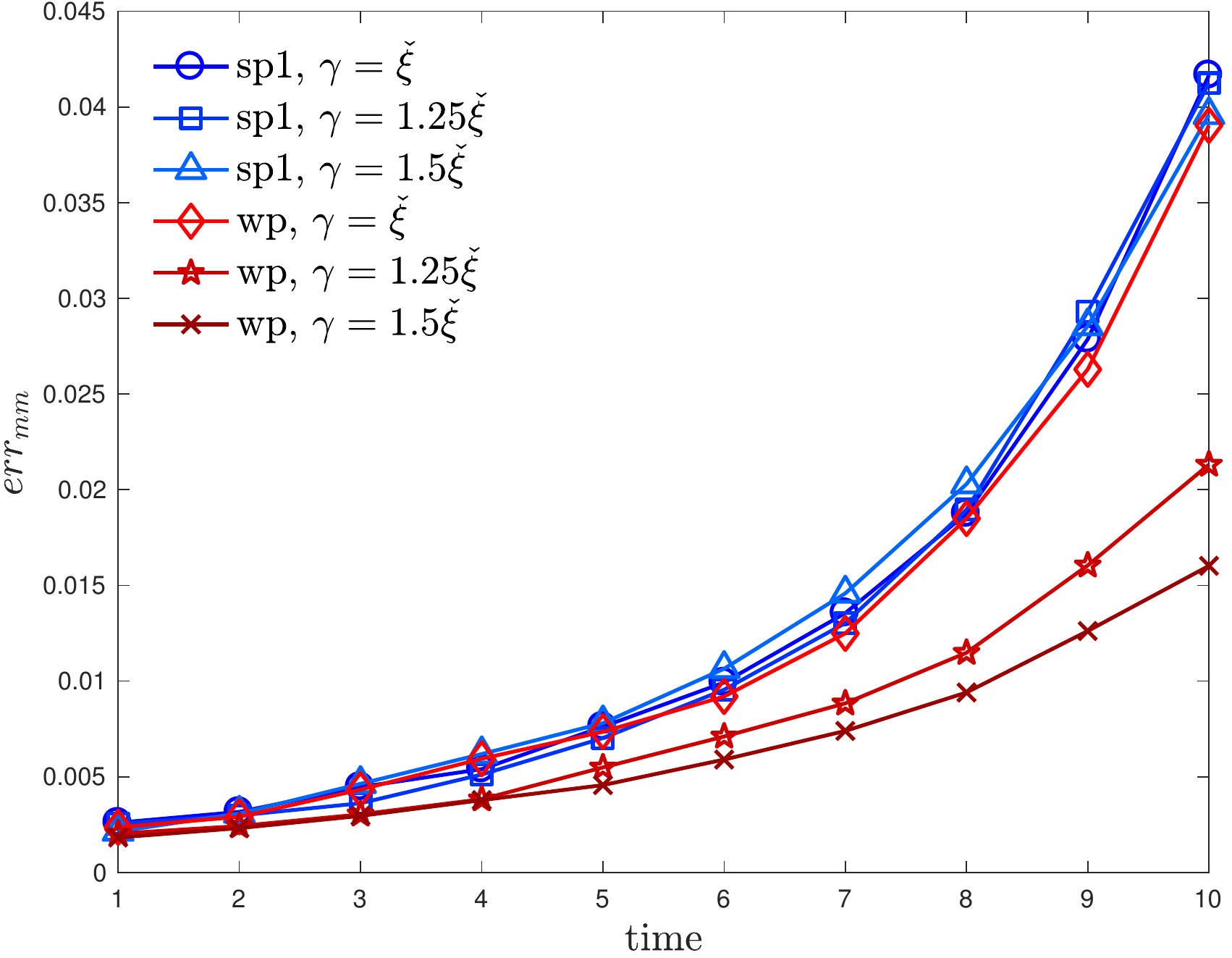}}
%\subfigure[Spatial marginal density at $t=10$fs.]{\includegraphics[width=0.49\textwidth,height=0.27\textwidth]{/G_var/G_var_err_xdist.pdf}}
%\subfigure[Momental marginal density at $t=10$fs.]{\includegraphics[width=0.49\textwidth,height=0.27\textwidth]{/G_var/G_var_err_kdist.pdf}}
\caption{\small  The 2D Gaussian scattering: Relative errors under different constant auxiliary function $\gamma$ and the convergence rate at $t_{fin}=10$fs. The accuracy of \textbf{wp} is improved as $\gamma$ increases, and the numerical convergence rate is $\mathcal{O}(\gamma^{-2})$. On the contrary, the accuracy of \textbf{sp1} cannot be improved by adjusting $\gamma$. Here we set the sample size $N_\alpha = 1\times 10^7$ and the resampling is not used.
}
\label{G_boot}
\end{figure}

\subsection{Accuracy and efficiency of resampling}

The accuracy of resampling is closely related to the resampling frequency $1/T_A$, while the efficiency depends on the sample size $N_{\alpha}$ and the partition size $N_h$.  We simulate six groups of the 2D Gaussian scattering with $N_\alpha = 10^7$ and $\gamma = 1.5\check{\xi}$, as well as four groups of the 4D Helium-like system with $N_\alpha = 1\times 10^8$ and $\gamma = 2$. The results under different $T_A$ are compared in Fig.~\ref{G_He_NA}. In addition, we collect the data on the growth rate of particle number $N/N_{\alpha}$  and the particle number after resampling $\#_P^a$ in the previous simulations. The numerical results uncover the following facts.

\begin{description}

\item[(1)] We have evidence that the resampling procedure is indispensible for two reasons. First, the particle number in pure Monte Carlo simulations will inevitably grow exponentially and soon exceeds the limit of memory storage, while the resampling procedure can significantly kill the redundant particles and suppress the particle growth, as shown in Fig.~\ref{np_growth}. Second, although the resampling introduces an additional error term, it suppresses the stochastic errors in the Monte Carlo simulation, thereby improving the accuracy as presented in Fig.~\ref{G_ss_error}.

\item[(2)] Too frequent resampling leads to a smaller $\#_{P}^{a}$, at the cost of the decline in the accuracy for both 2D and 4D simulations. On the other hand, too low frequency does not necessarily improve the accuracy, as presented in Fig.~\ref{G_He_NA}, due to the accumulation of stochastic errors. Therefore, there may be an optimal choice of $T_A$ to strike the balance between the accuracy and the efficiency as already proposed in \cite{ShaoXiong2016}. 

\item[(3)] The efficiency of resampling is measured by $\#_{P}^a$. We can clearly see that the resampling performs quite well when $N_\alpha \ge N_h$. However, according to Fig.~\ref{particle_number}, it works not so efficiently when $N_\alpha < N_h$. For instance, when $N_h= 10^{8}$ and $N_\alpha = 10^6$ in the 4D problem, $\#_{P}^a$ soon goes to the magnitude of the partition size and finally the particle number exceeds $10^9$, in which the rate $\#_{P}^a / N_{\alpha}$ is almost $728$ (\textbf{sp1}). Even for 2D problem, when $N_\alpha = 1\times 10^4$ and $N_h= 8\times 10^4$, we also find that $\#_{P}^a$ increases dramatically and exceeds $7.7\times 10^4$ at the final stage.  Therefore, the ``bottom line' structure described in \cite{ShaoSellier2015} depends on not only the oscillating structure of the Wigner function, but also the partition size $N_h$. These observations reveal the potential weakness of the uniform histogram,  as also pointed out in the statistical community. Actually, to maintain the same accuracy for the high dimensional histogram, the number of samples must also grow exponentially, otherwise a severe `overfitting' will be observed \cite{bk:HastieTibshiraniFriedman2009}. In practice, $N_{\alpha} \approx 10 N_h$ is a balanced choice in consideration of the efficiency of the resampling. But this condition limits the applicability of the uniform histogram in high dimensional problems.

\item[(4)] $\#_{P}^{a}$ is usually larger in \textbf{sp1} than that in \textbf{wp}, although the converse is true for the growth rate of particle number. In fact, the variance reduction in \textbf{wp} helps suppress more random noises and kill more redundant particles. Fortunately, the advantage in storing signed weights saves the performance of \textbf{sp1}.

\item[(5)] We record the total running time and the average value of $\#_{P}^a$ (denoted by $\bar{\#}_{P}^a)$ in Table \ref{Table_1}, with the computing platform: Intel(R) Xeon(R) CPU E5-2680 v4 (2.40GHz, 35MB Cache, 9.6GT/s QPI Speed, 14 Cores, 28 Threads) and 256GB Memory (we use 14 threads for each task). It is readily seen there that the computational time is indeed proportional to $\bar{\#}_P^a$, instead of $N_\alpha$. This means the efficiency of WBRW depends largely on the quality of resampling. Although $\bar{\#}_P^{a}$ is usually larger, \textbf{sp1} is significantly faster because of the moderate growth of particle number.

\begin{table}[!h]
 \centering
 \caption{The 4D Helium-like system: $\bar{\#}_{P}^{a}$ at $t = 20$a.u. and the total running time.}
 \label{Table_1}
 \begin{tabular}{c | c | c | c |  c | c | c}
  \toprule
  \toprule
  & \multicolumn{3}{c|}{\textbf{sp1}} &   \multicolumn{3}{|c} {\textbf{wp}}\\
    \midrule
  $N_\alpha$ &  $1 \times 10^6$  &  $1 \times 10^7$ & $1 \times 10^8$ &  $1 \times 10^6$  &  $1 \times 10^7$ & $1 \times 10^8$\\
  \midrule
  $\bar{\#}_{P}^{a}$  & $9.03 \times 10^{8}$ & $9.40 \times 10^{8}$ &   $1.19 \times 10^{9}$ & $2.42\times 10^8$ & $2.67\times 10^8$ & $4.97\times 10^8$\\ 
  time(h)  & 44.949 & 46.837  &    59.666 & 51.792 &  64.033 & 162.313\\
  \bottomrule
  \bottomrule
  \end{tabular}
\end{table}

\end{description}

\begin{figure}[!h]
\subfigure[$\textup{err}_{wf}$ under different $T_A$ (left: 2D Gaussian, right: 4D Helium).]{{\includegraphics[width=0.49\textwidth,height=0.27\textwidth]{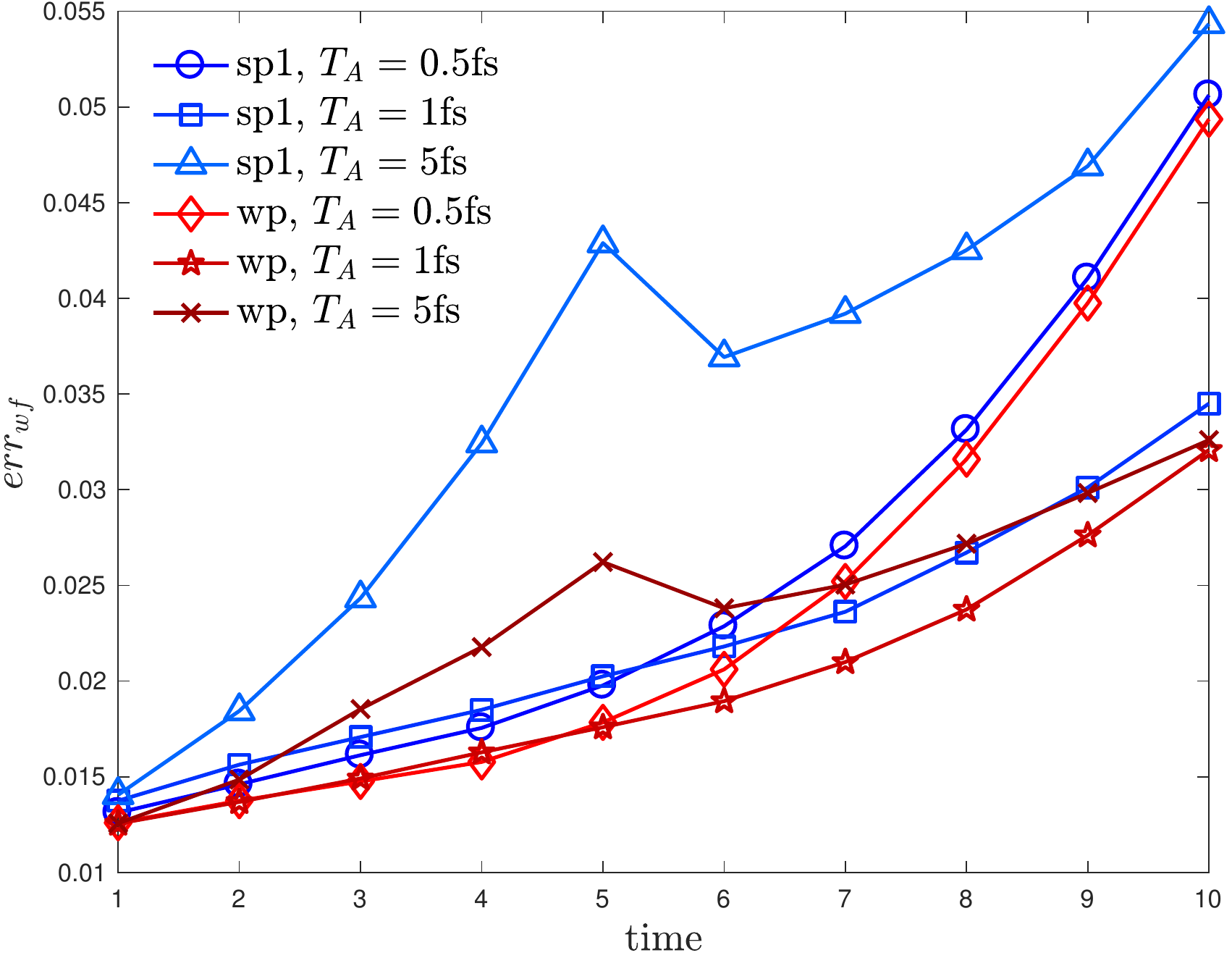}}
{\includegraphics[width=0.49\textwidth,height=0.27\textwidth]{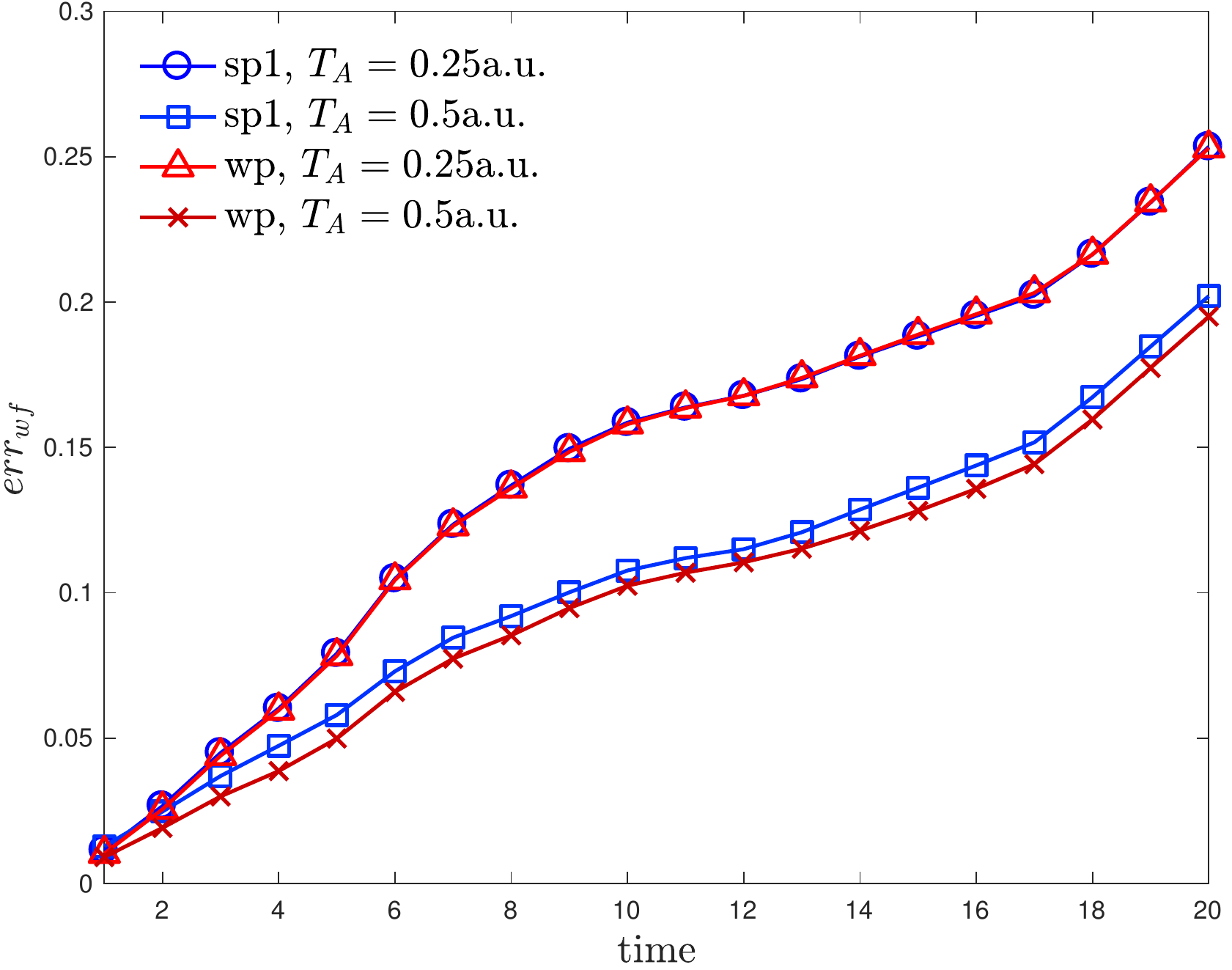}}}
\subfigure[$\textup{err}_{sm}$ under different $T_A$ (left: 2D Gaussian, right: 4D Helium).]
{{\includegraphics[width=0.49\textwidth,height=0.27\textwidth]{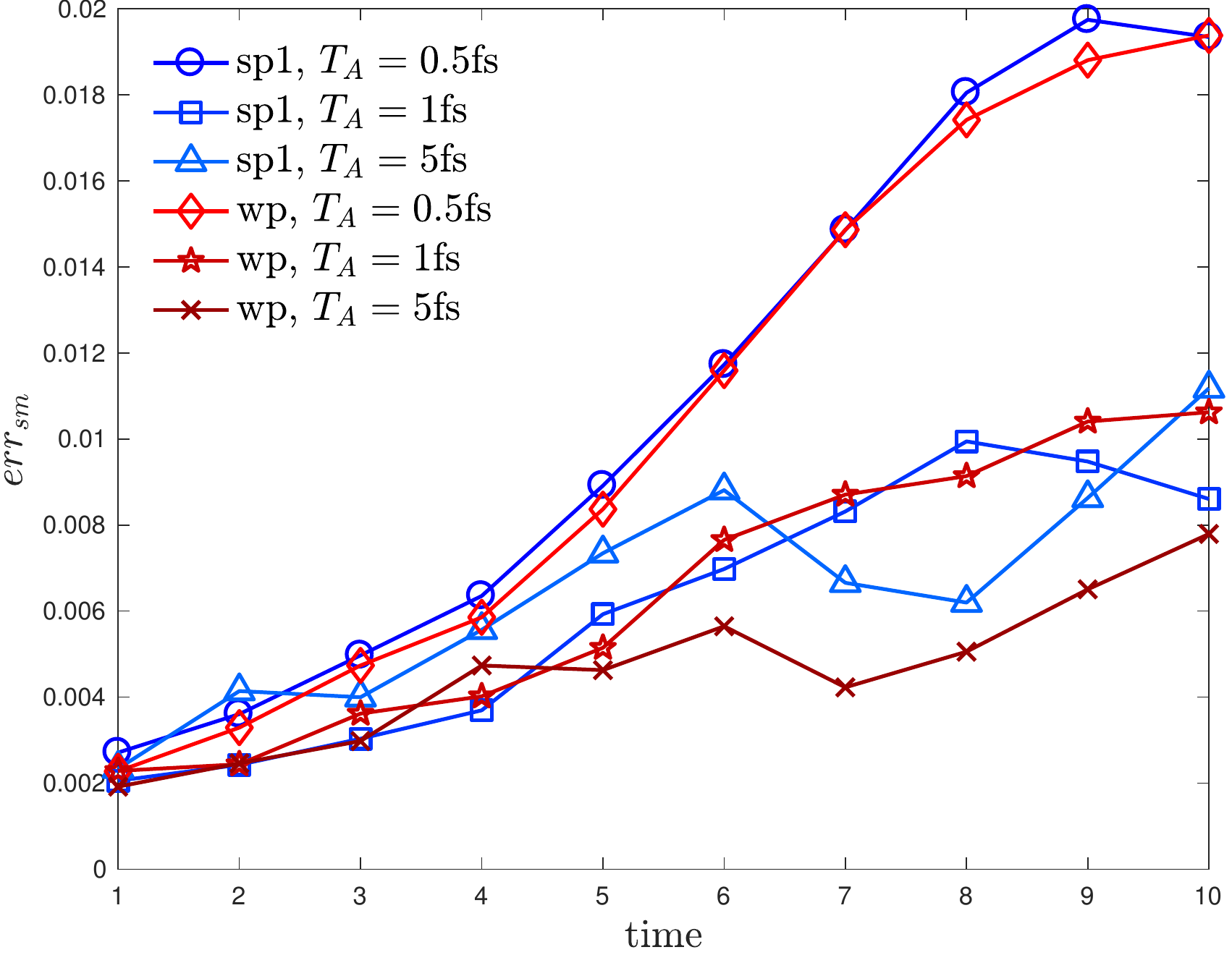}}
{\includegraphics[width=0.49\textwidth,height=0.27\textwidth]{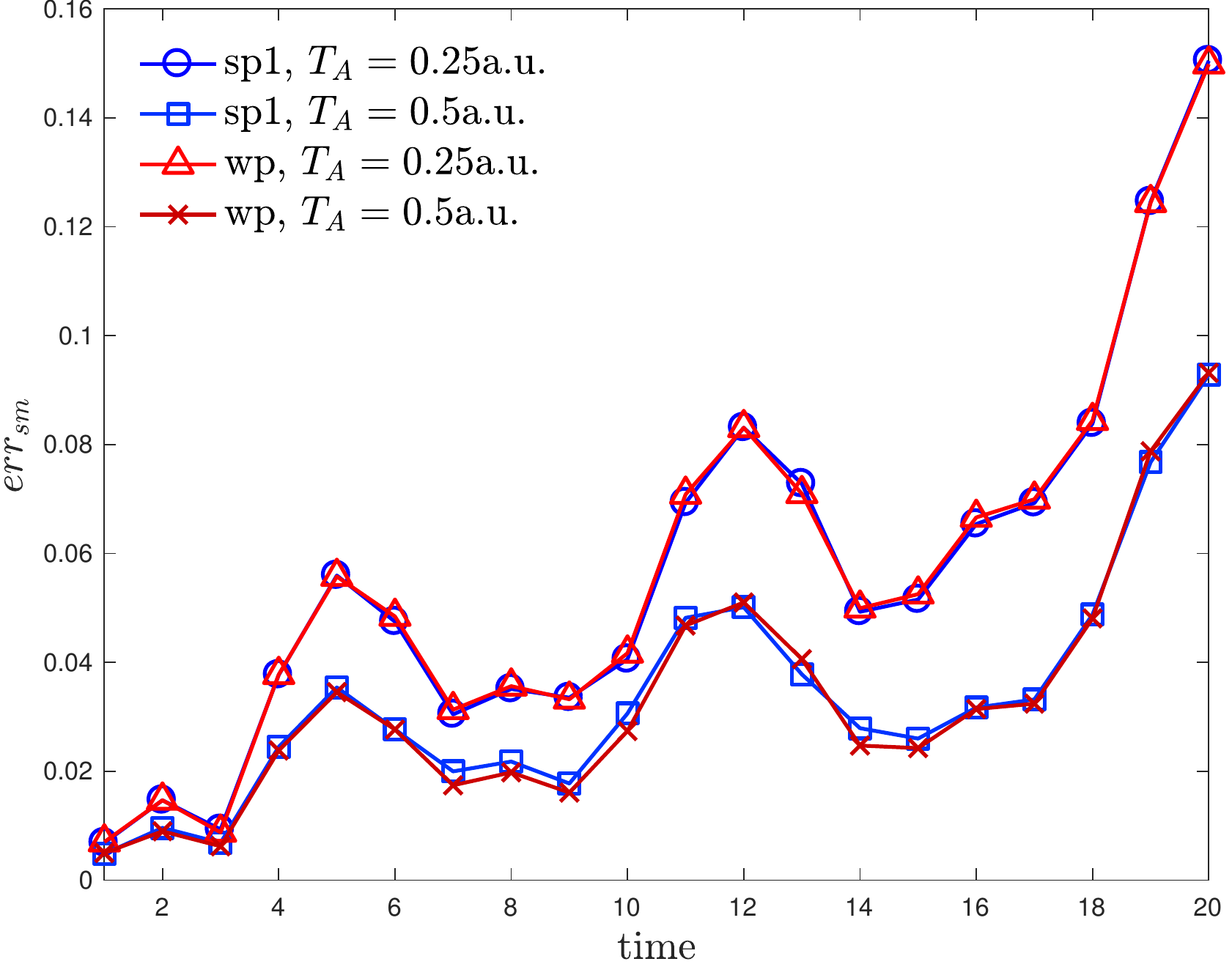}}}
\subfigure[$\textup{err}_{mm}$ under different $T_A$ (left: 2D Gaussian, right: 4D Helium).]{{\includegraphics[width=0.49\textwidth,height=0.27\textwidth]{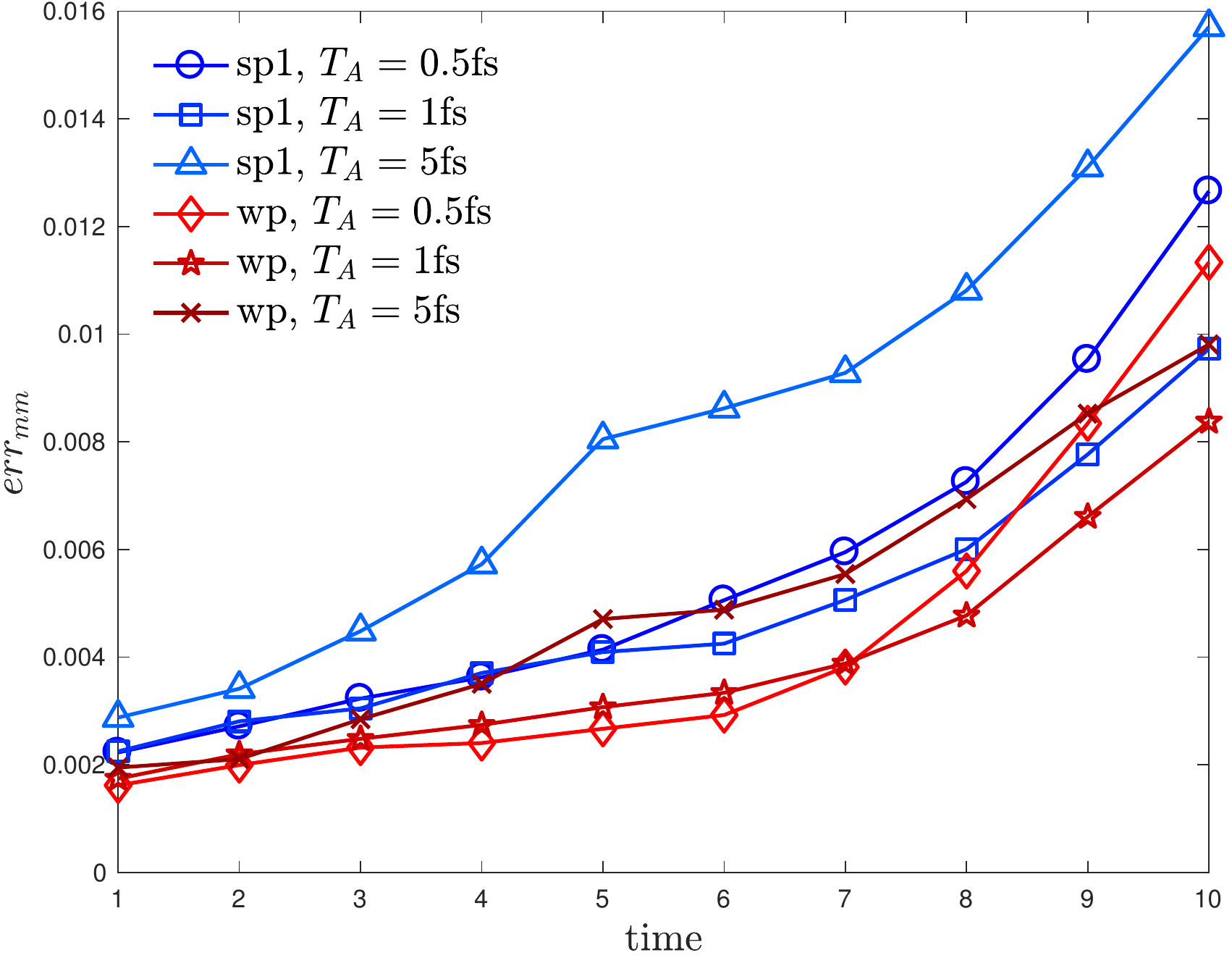}}
{\includegraphics[width=0.49\textwidth,height=0.27\textwidth]{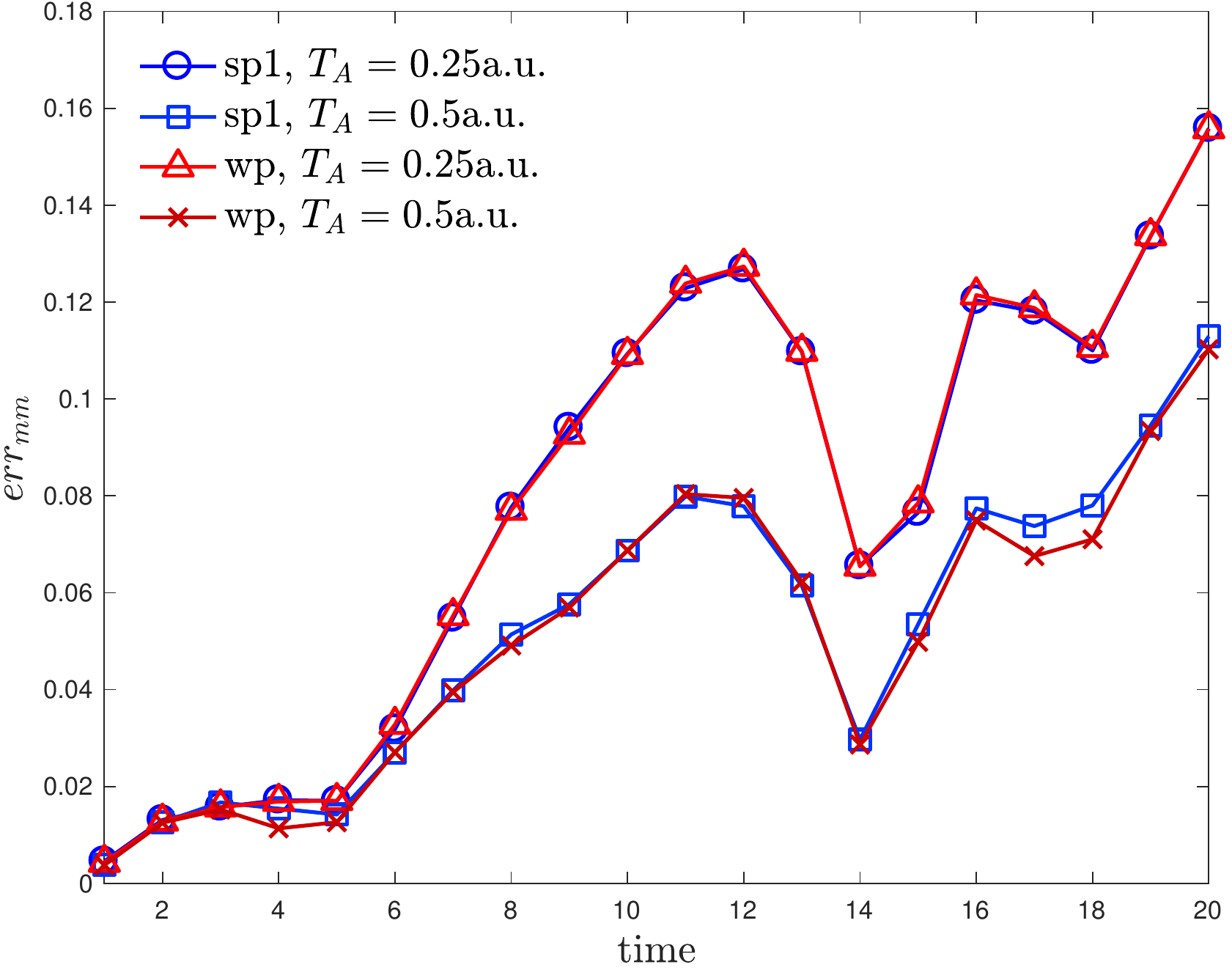}}}
\caption{\small The relative errors of the Wigner function, spatial and marginal distributions for the 2D Gaussian scattering (left) and the 4D Helium-like system (right). The relation between the relative errors and the resampling frequency $1/T_A$ is revealed. Too frequent resampling leads to a decline in the accuracy, while too low frequency also leads to an accumulation of stochastic errors and diminish the accuracy as a result. $\gamma = 1.5\check{\xi}$, $N_\alpha = 1 \times 10^7$, and $\gamma = 2$, $N_\alpha = 1\times 10^8$ are set for the 2D and 4D simulations, respectively. }
\label{G_He_NA}
\end{figure}

\begin{figure}[!h]
\subfigure[Growth rate of particle number (2D Gaussian). \label{np_growth}]
{\includegraphics[width=0.49\textwidth,height=0.27\textwidth]{{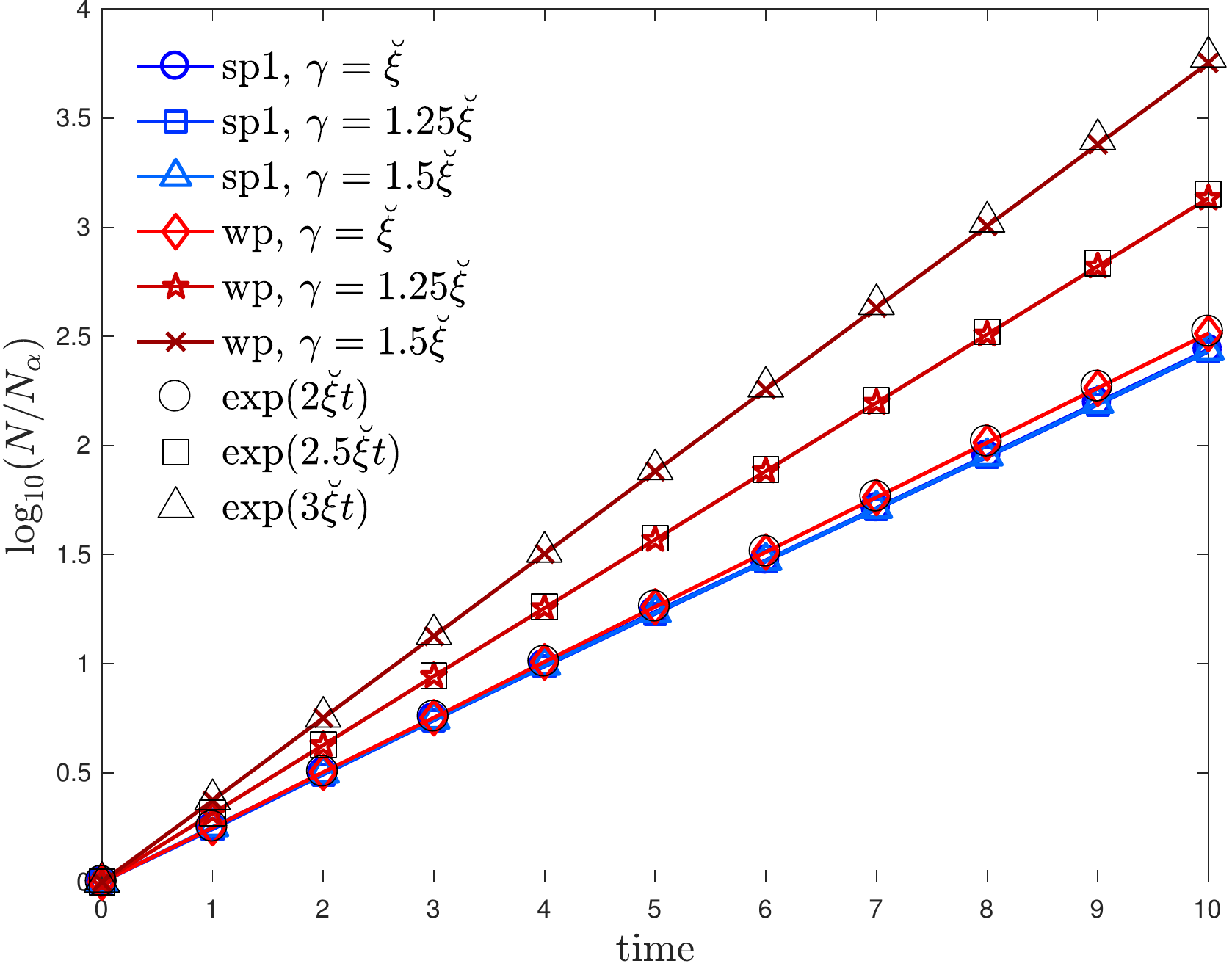}}
{\includegraphics[width=0.49\textwidth,height=0.27\textwidth]{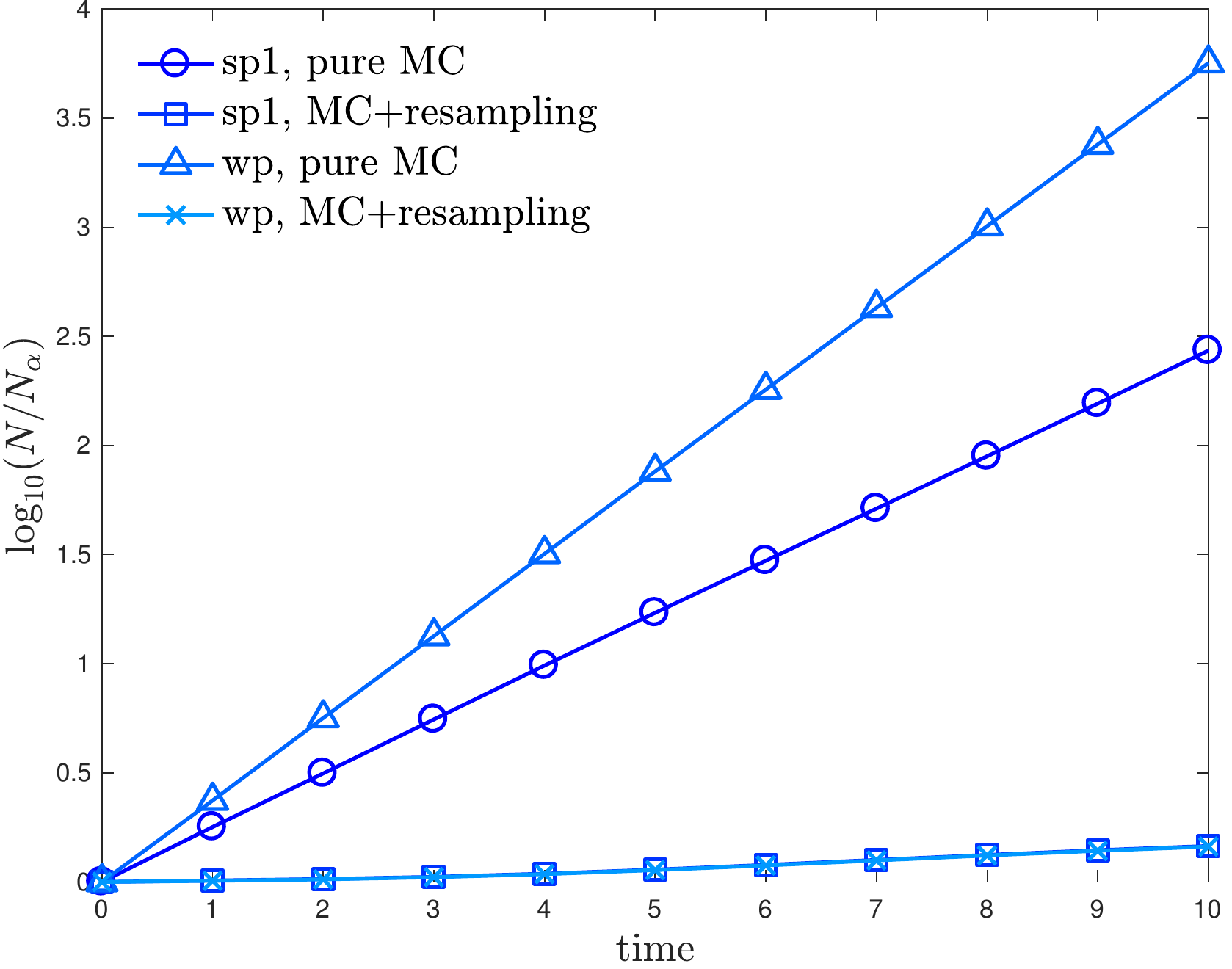}}}
\subfigure[$\#_P^a$ under different $T_A$(left: 2D Gaussian, right: 4D Helium).]
{{\includegraphics[width=0.49\textwidth,height=0.27\textwidth]{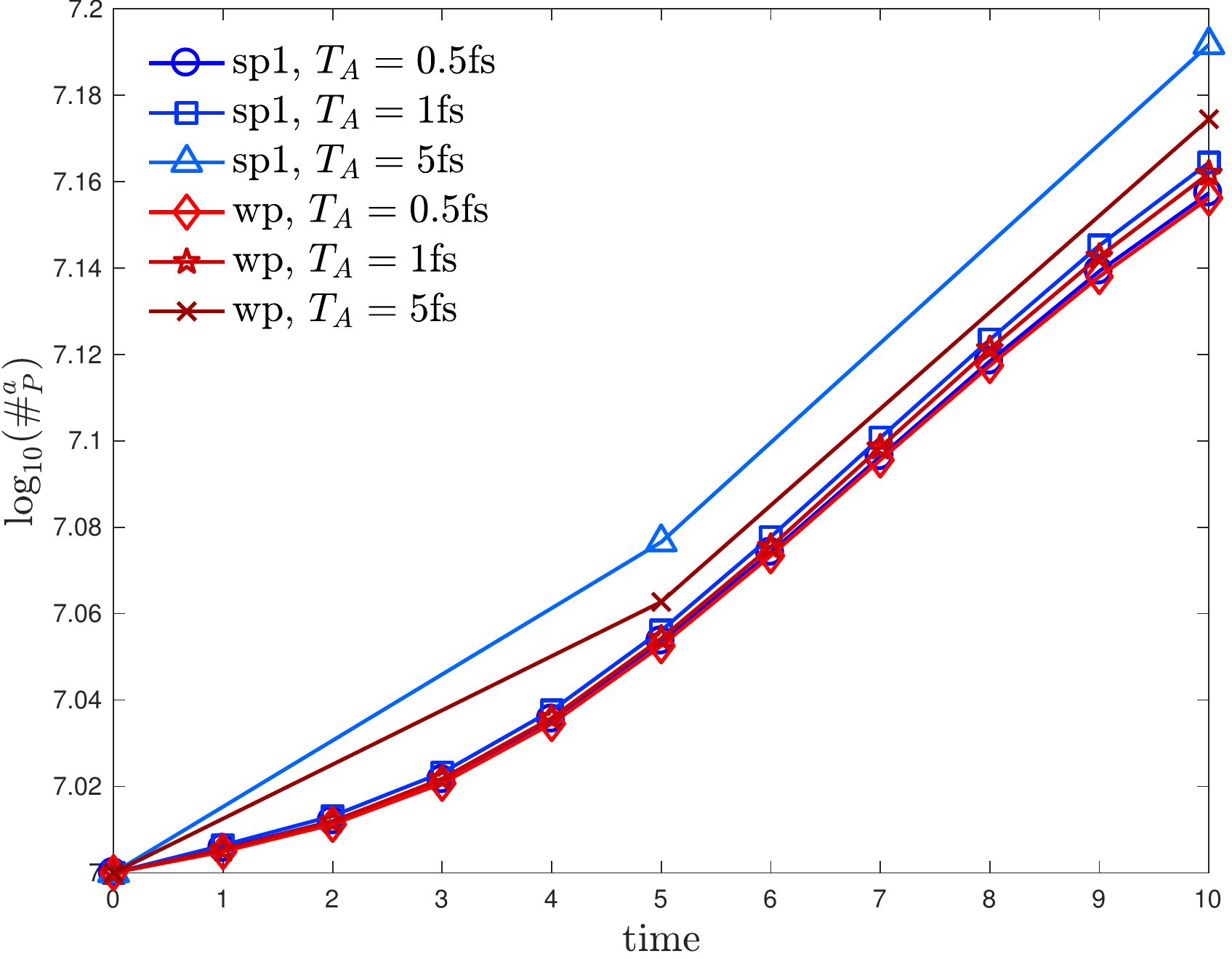}}
{\includegraphics[width=0.49\textwidth,height=0.27\textwidth]{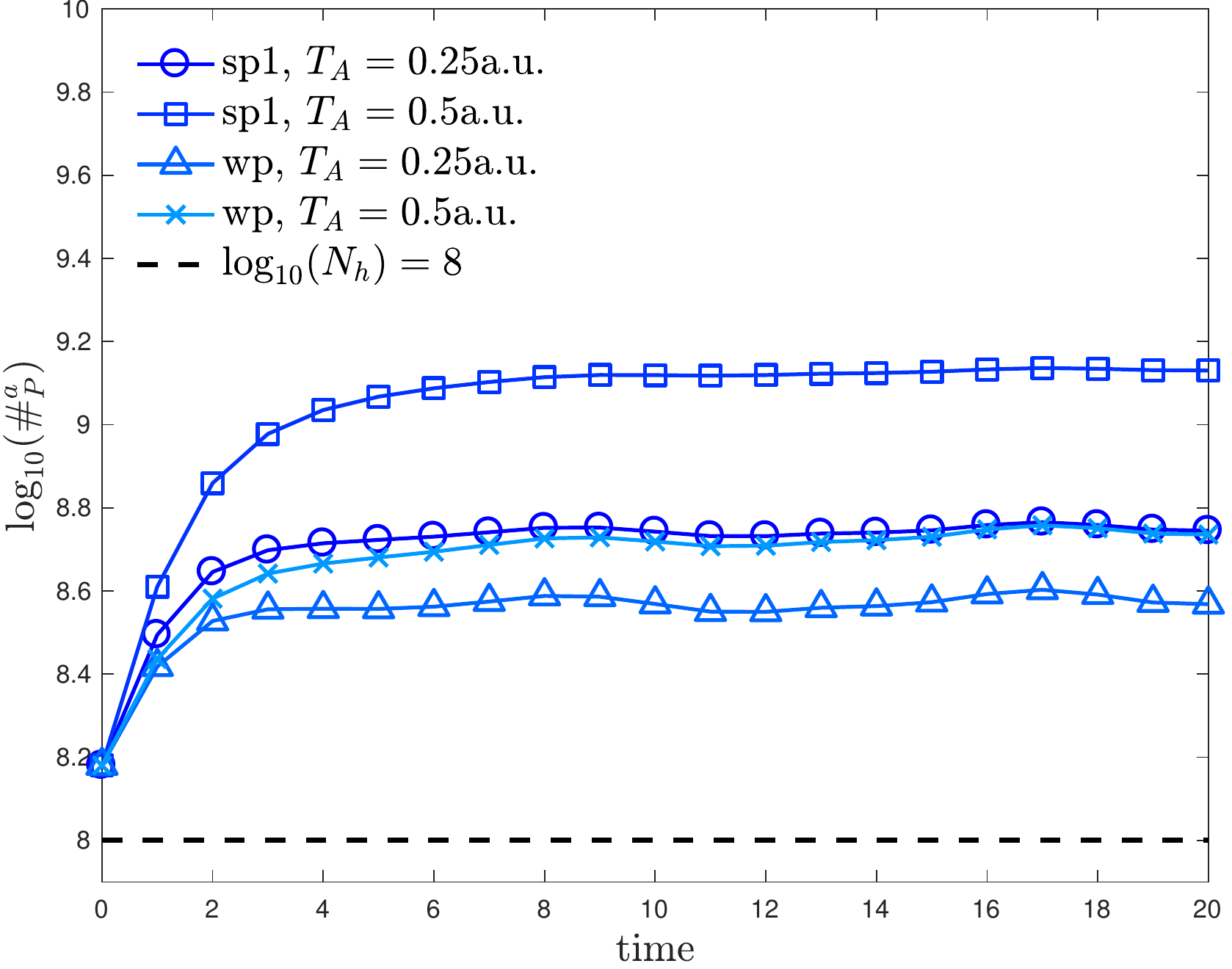}}}
\subfigure[$\#_P^a$ under different $N_\alpha$ (left: 2D Gaussian, right: 4D Helium).]{{\includegraphics[width=0.49\textwidth,height=0.27\textwidth]{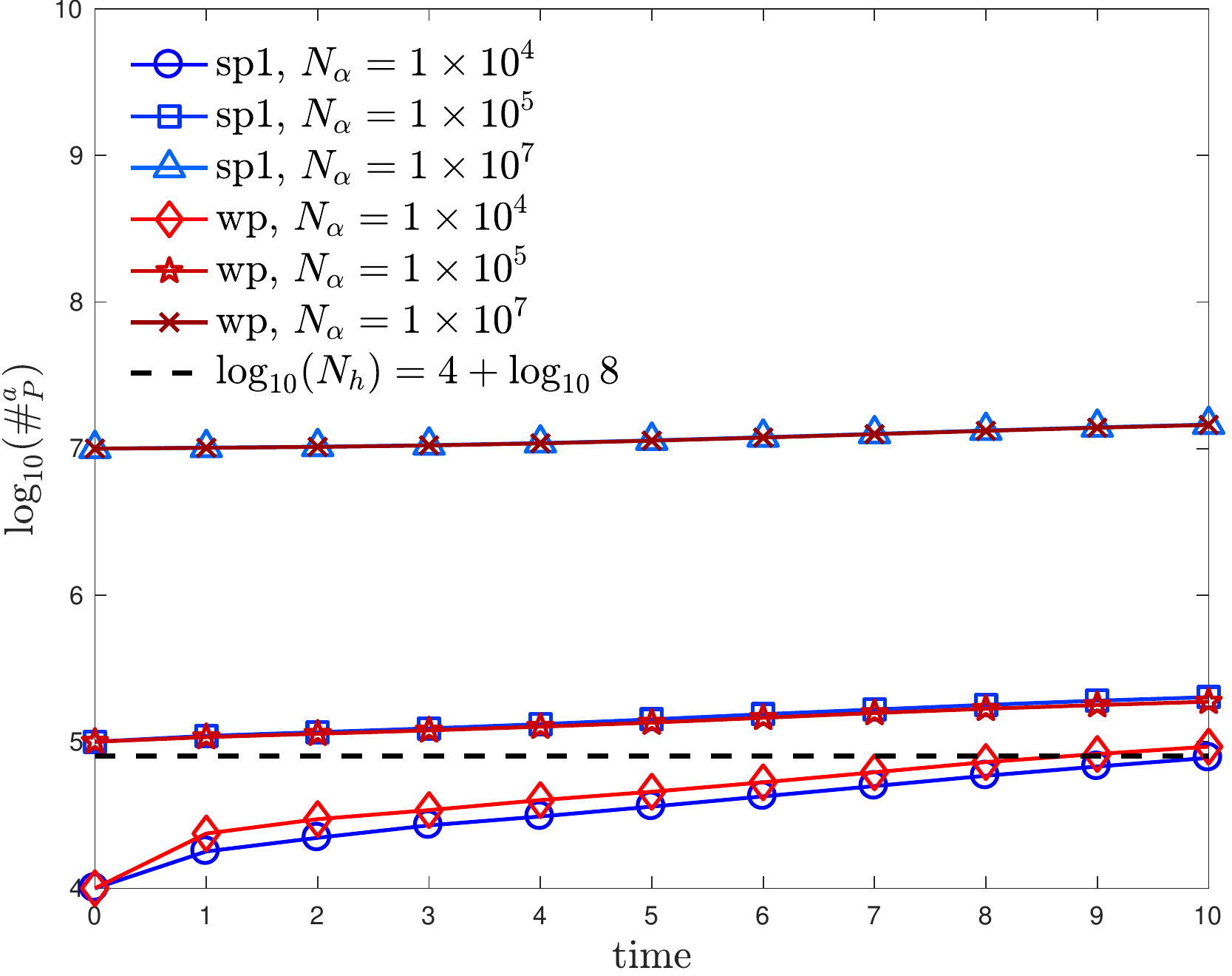}}
{\includegraphics[width=0.49\textwidth,height=0.27\textwidth]{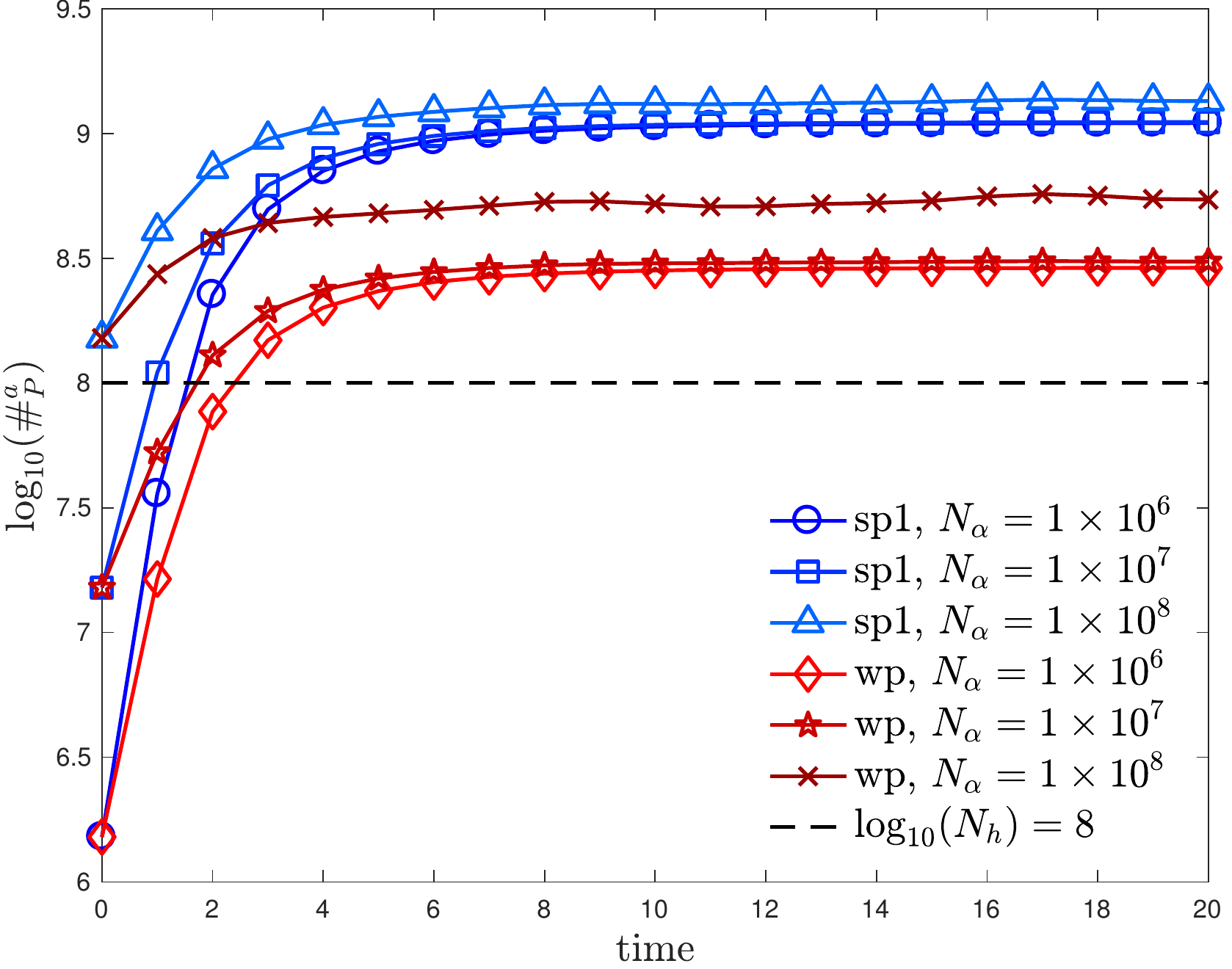}}}
\caption{\small  Relation between the growth of particle number and several factors: $\gamma$, sample size $N_\alpha$, resampling frequency $1/T_A$ and the partition size $N_h$. The particle number grows exponentially in time and is suppressed by resampling. More frequent resampling leads to a smaller $\#_{P}^a$. However, the efficiency of resampling deteriorates when $N_h$ is very large, and $\#_P^a$ finally becomes stable in a level that is comparable to $N_h$.}
\label{particle_number}
\end{figure}

\subsection{Bootstrap filtering}
\label{sec.bootstrap_filtering}

The bootstrap filtering is introduced to adjust the weights to $\{-1, 1\}$, thereby permitting us to store the histogram in two integer-valued matrices (for positive and negative particles). On the other hand, it brings some stochastic noises which diminish as $N_\alpha \to \infty$. To compare the accuracy of \textbf{sp1}, \textbf{sp2} and \textbf{wp}, we perform four groups of the 2D Gaussian scattering with $\gamma = 3\check{\xi}$ and $T_A=1$fs. Three kinds of the sample size $N_\alpha = 10^5, 10^6$ and $10^7$ are considered. Then to show the relation between the variance in \textbf{sp2} and the auxiliary function $\gamma$, we also do four groups of simulations with $N_\alpha = 1\times 10^7$ and $T_A=1$fs. Four constant auxiliary functions, ranging from $\check{\xi}$ to $4\check{\xi}$, are adopted.  All the numerical results are shown in Fig.~\ref{bootstrap}, from which we can figure out the following observations.

\begin{description}

\item[(1)] As the bootstrap filtering introduces some additional random errors, the accuracy of \textbf{sp2} lies between those of  \textbf{sp1} and \textbf{wp}.  \textbf{sp2} also converges as the sample size $N_\alpha$ increases, although the converge rate is slightly deviated from the order $-1/2$.

\item[(2)] \textbf{sp2} inherits the advantages in \textbf{wp} as increasing $\gamma$ leads to an evident improvement on the accuracy. Therefore, it successfully seizes the property of variance reduction of \textbf{wp}. However, the convergence of \textbf{sp2} with respect to $\gamma$ is slower than that of $\textbf{wp}$ due to the additional stochastic noises. 

\item[(3)] \textbf{sp2} utilizes two integer-valued matrices, instead of a double-valued one, for storing the histogram. This enables us to establish the histogram by simply counting the particle number in each bin and recording the coefficients $\lambda^\pm$ defined in Eq.~\eqref{eq:lambda}. The memory requirement of \textbf{sp2} is the same as \textbf{wp}, and twice more than \textbf{sp1}. For example, when $N_h = 100^4$, the memory for storing the histogram is $763$MB for \textbf{wp} or \textbf{sp2}, and $382$MB for \textbf{sp1}.

\end{description}

The bootstrap filtering indeed achieves the target as we have expected. Although it seems difficult to compare the deterministic errors induced by the histogram approximation and stochastic errors induced by the bootstrap filtering, there may be an optimal choice of $\gamma$, $N_\alpha$ and $\varepsilon$ to achieve a better convergence according to the theoretical error bound \eqref{histogram_bootstrap_error}.

\begin{figure}[!h]
\subfigure[Convergence rate with respect to $N_\alpha$.]
{{\includegraphics[width=0.49\textwidth,height=0.27\textwidth]{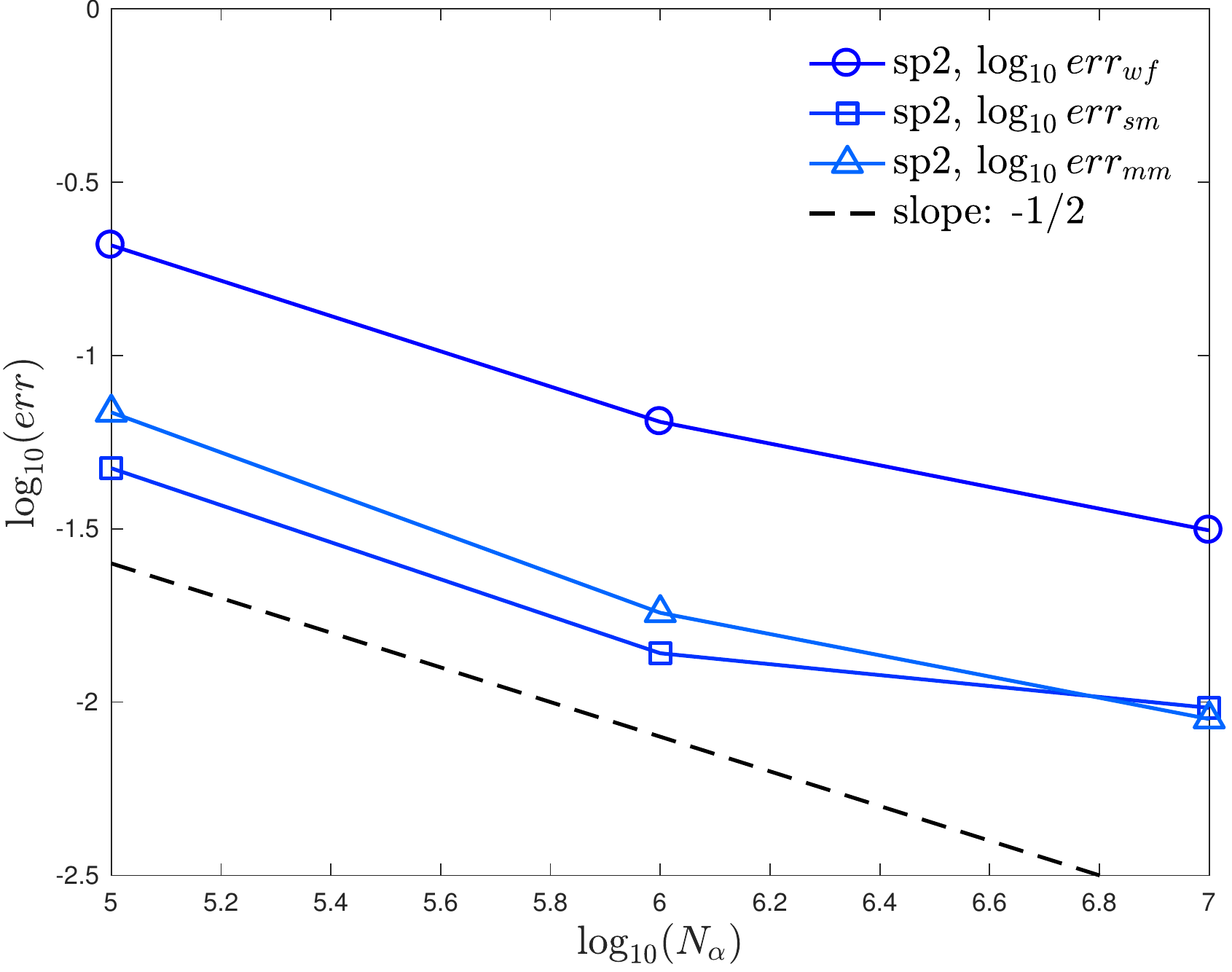}}}
\subfigure[Convergence rate with respect to $\gamma$.]
{{\includegraphics[width=0.49\textwidth,height=0.27\textwidth]{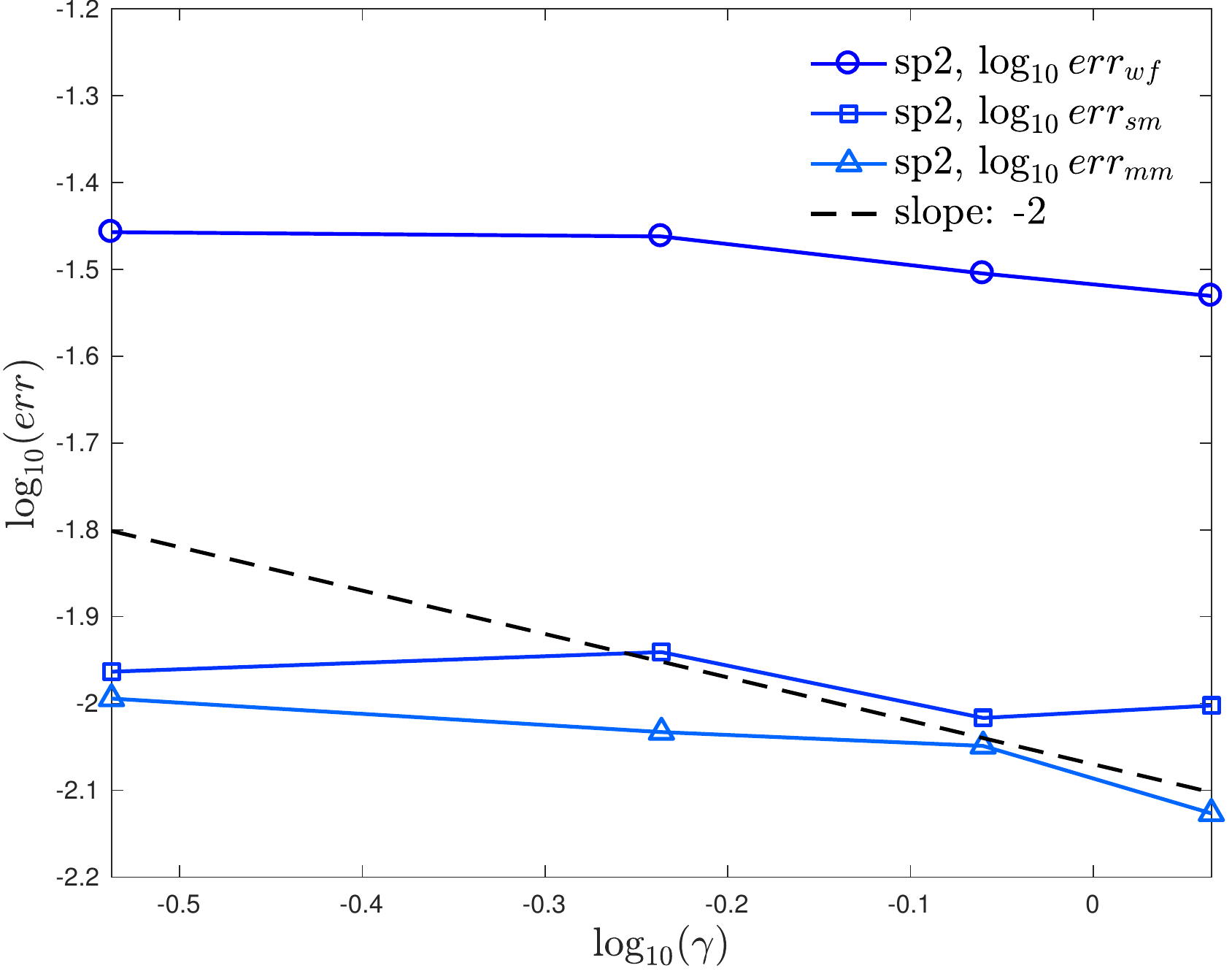}}}
\subfigure[$\textup{err}_{wf}$ under different $N_\alpha$ (left) or different $\gamma$ (right).]
{{\includegraphics[width=0.49\textwidth,height=0.27\textwidth]{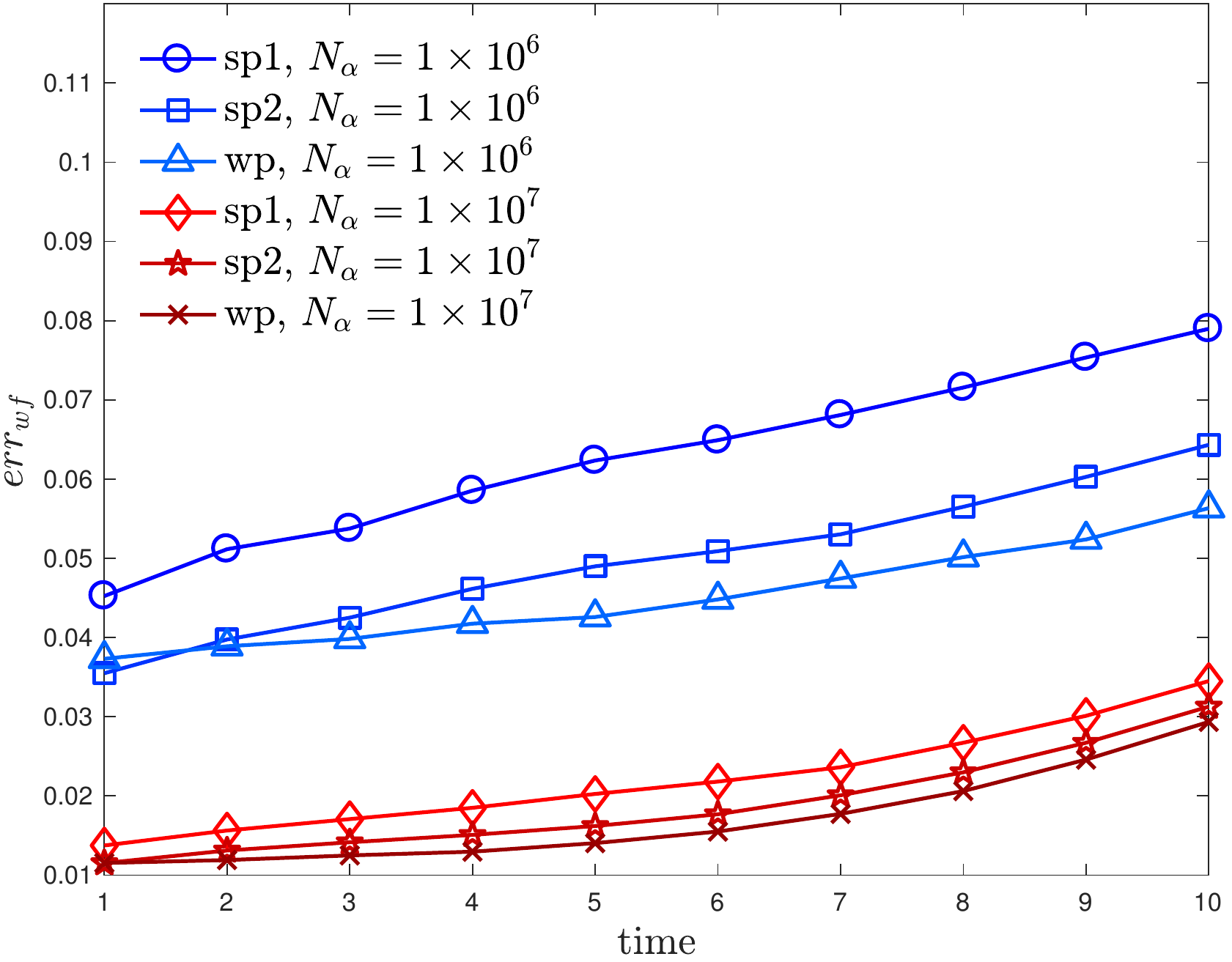}}
{\includegraphics[width=0.49\textwidth,height=0.27\textwidth]{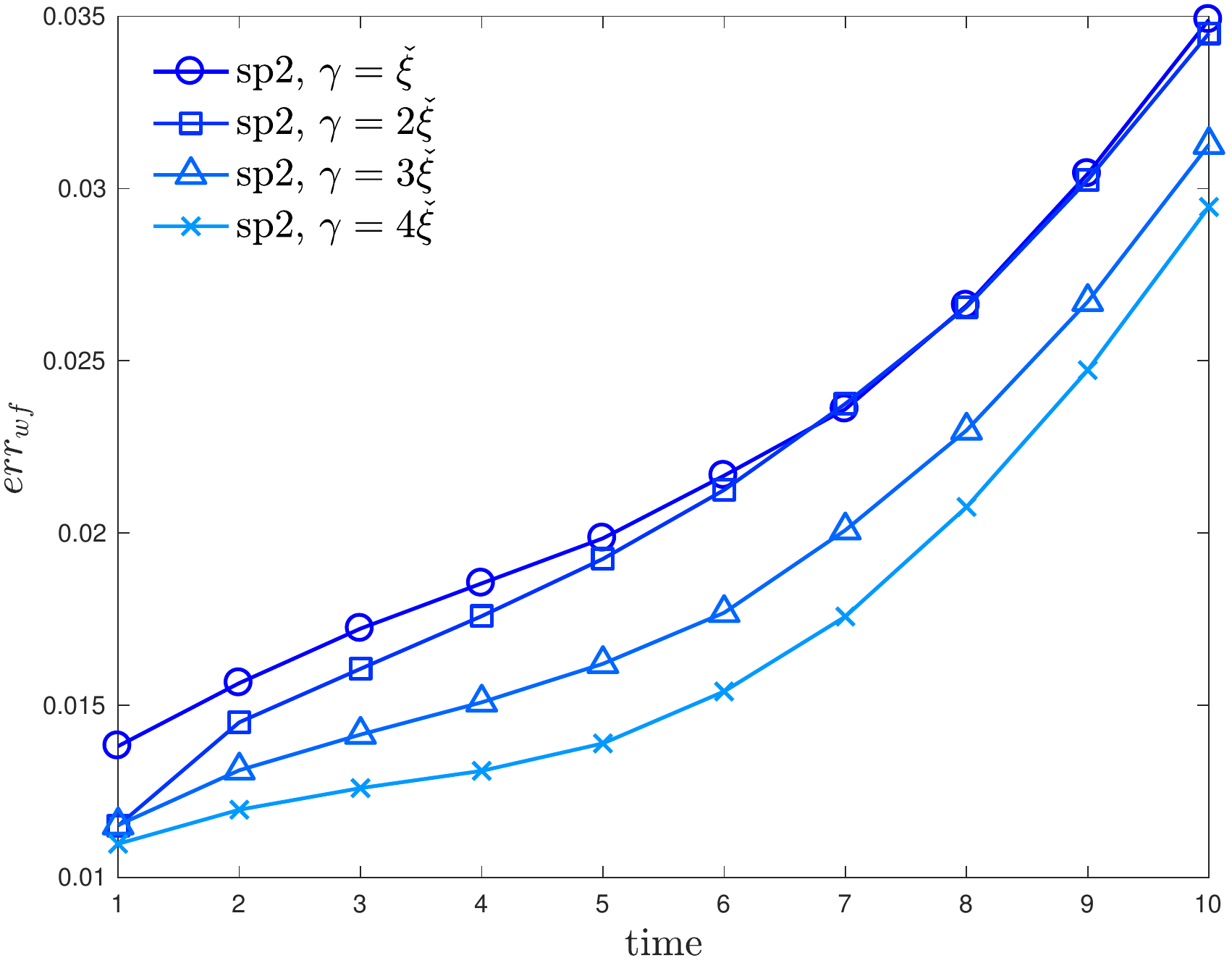}}}
\subfigure[$\textup{err}_{sm}$ under different $N_\alpha$ (left) or different $\gamma$ (right).]{{\includegraphics[width=0.49\textwidth,height=0.27\textwidth]{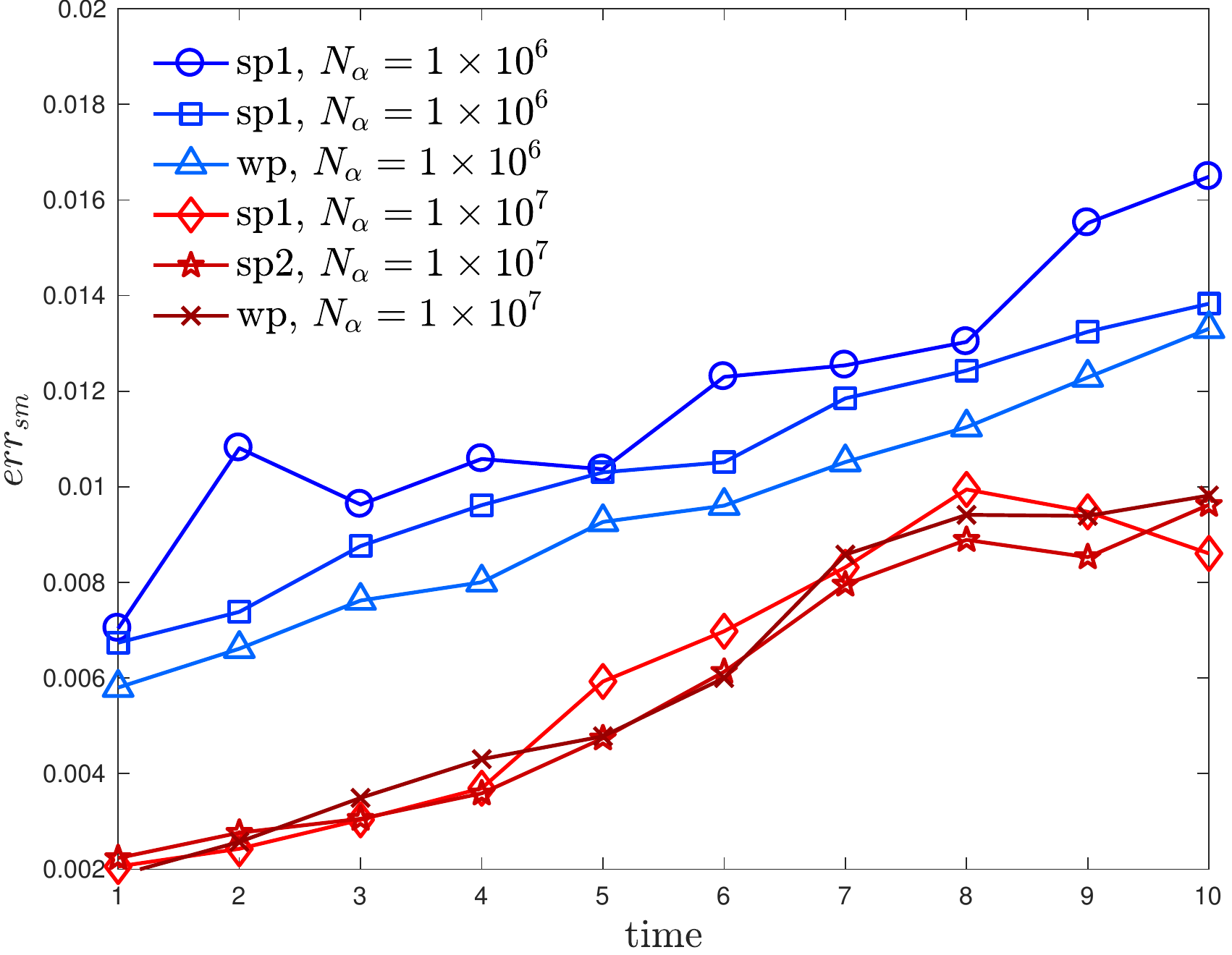}}{\includegraphics[width=0.49\textwidth,height=0.27\textwidth]{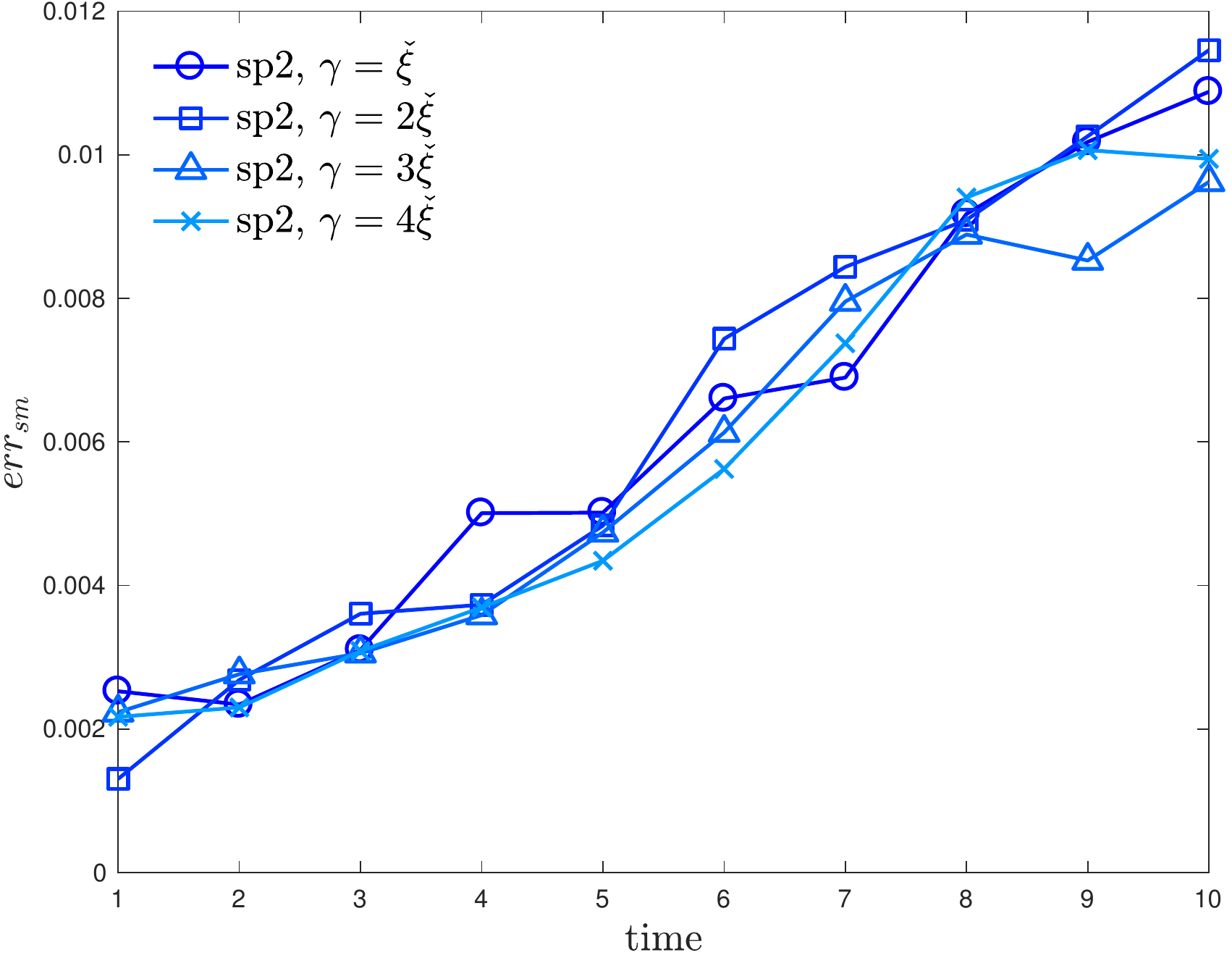}}}
\subfigure[$\textup{err}_{mm}$ under different $N_\alpha$ (left) or different $\gamma$ (right).]{{\includegraphics[width=0.49\textwidth,height=0.27\textwidth]{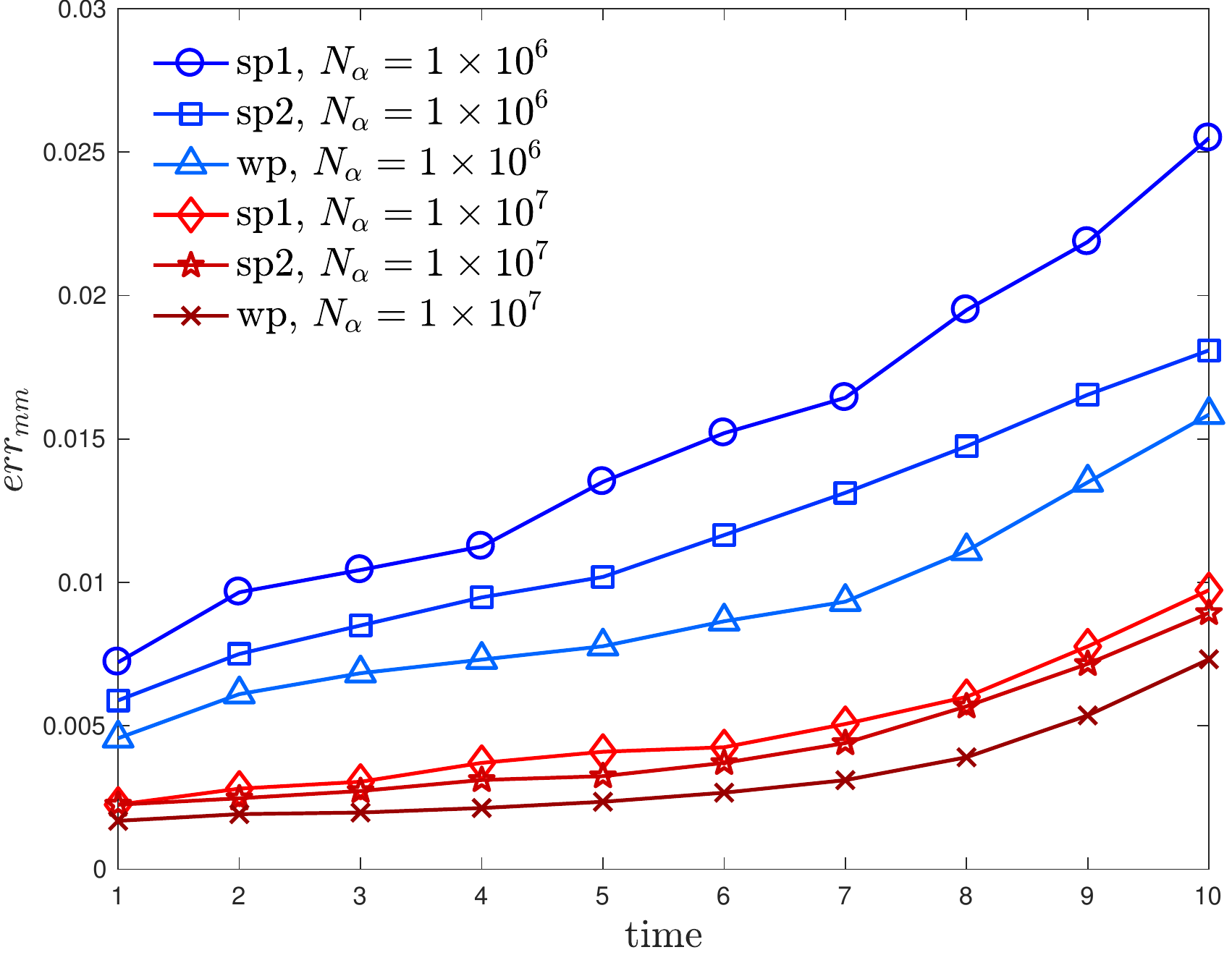}}
{\includegraphics[width=0.49\textwidth,height=0.27\textwidth]{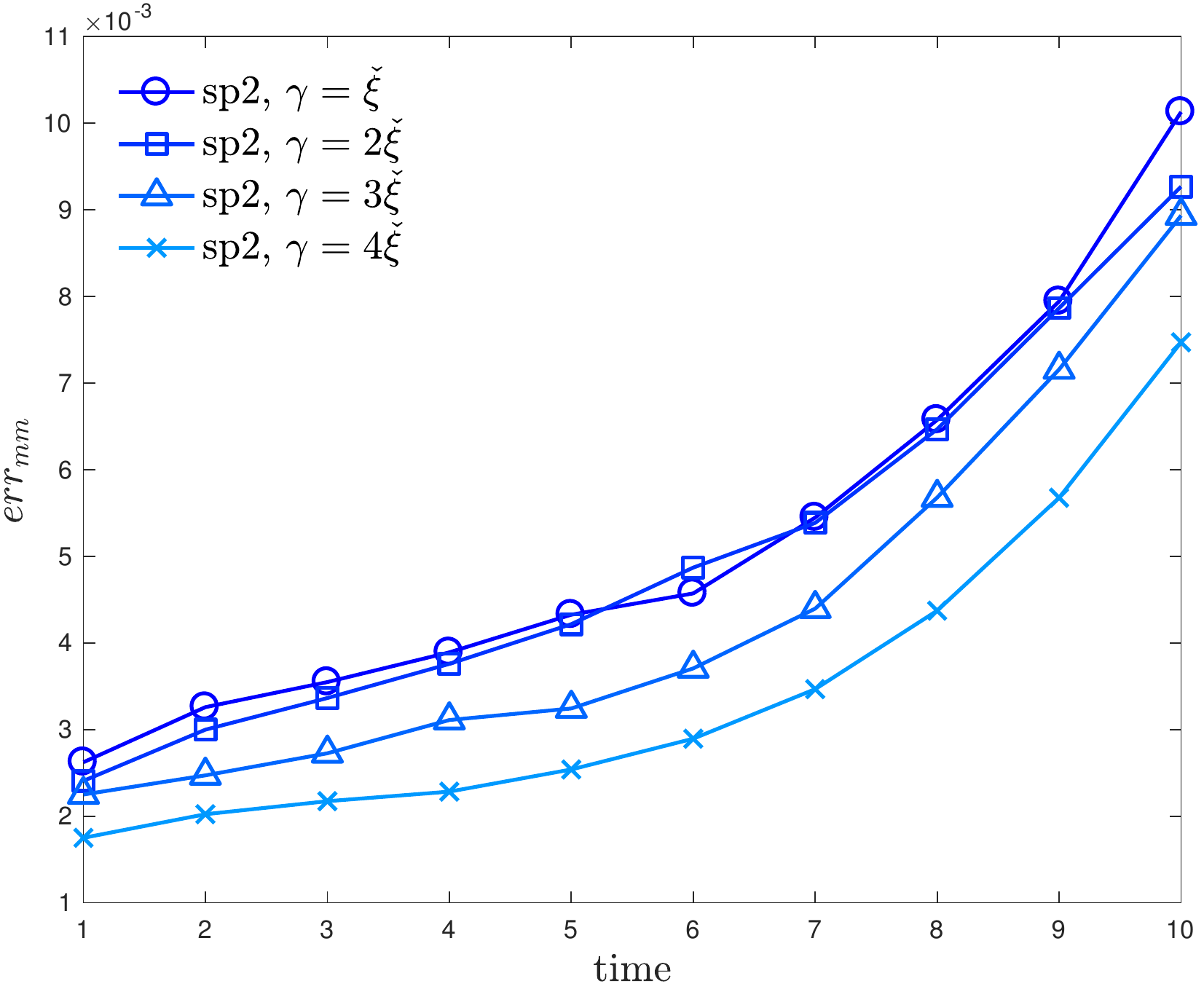}}}
\caption{\small The 2D Gaussian scattering: Comparison among \textbf{sp2}, \textbf{sp1} and \textbf{wp}. \textbf{sp2} is slightly less accurate than the \textbf{wp} due to the random noises induced by the bootstrap filtering, but it is still better than $\textbf{sp1}$. $\textbf{sp2}$ allows a reduction in variance as $\gamma$ increases. Here we set the auxiliary function $\gamma = 3\check{\xi}$ and $T_A = 1$fs.
}
\label{bootstrap}
\end{figure}

\subsection{A comparison among signed-particle implementations}
\label{sec:comparison}

After a thorough study of \textbf{sp1} and $\textbf{sp2}$, we would like to make a comparison among our proposed strategies and the existing ones \textbf{sp0} and \textbf{RC}. 
As we have already demonstrated in Section \ref{sec:gamma}, the major advantage of WBRW, as well as \textbf{sp0-I}, over \textbf{sp0} is that they are able to alleviate the strict restriction on the time step $\Delta t$.  To clarify it, we simulate the 2D Gaussian scattering under the potential
\begin{equation}
V(x) = 0.6\me^{-{(x-30)^2}/{8}},
\end{equation}
with the average position of the initial data $x_0 = -15$, $N_{\alpha} = 1 \times 10^7$, $T_A = 1$fs under five kinds of $\Delta t$: $0.01, 0.1, 0.25, 0.5, 1$fs. The numerical solutions are compared with those produced by \textbf{sp1} and \textbf{sp2} with $\gamma = 1.5\check{\xi}\approx 0.8706$. The implementations of \textbf{sp0} and \textbf{RC} follow the procedures in the articles \cite{SellierNedjalkovDimov2014} and \cite{MuscatoWagner2016}, respectively.  
\begin{figure}[!h]
\subfigure[Convergence with respect to $\Delta t$ (left: \textbf{sp0} and \textbf{sp0-I}, right:\textbf{RC}).]{
{\includegraphics[width=0.49\textwidth,height=0.27\textwidth]{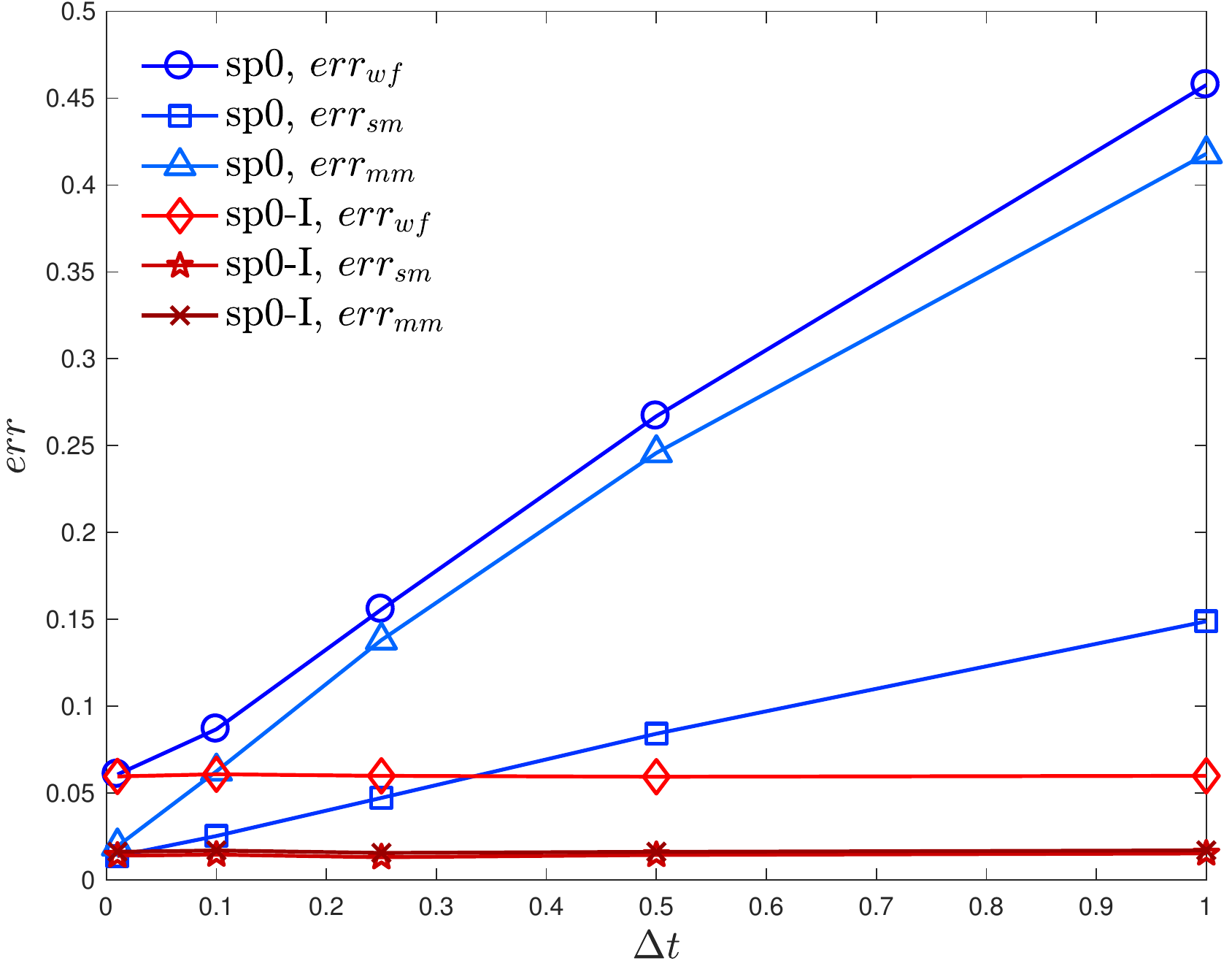}}{\includegraphics[width=0.49\textwidth,height=0.27\textwidth]{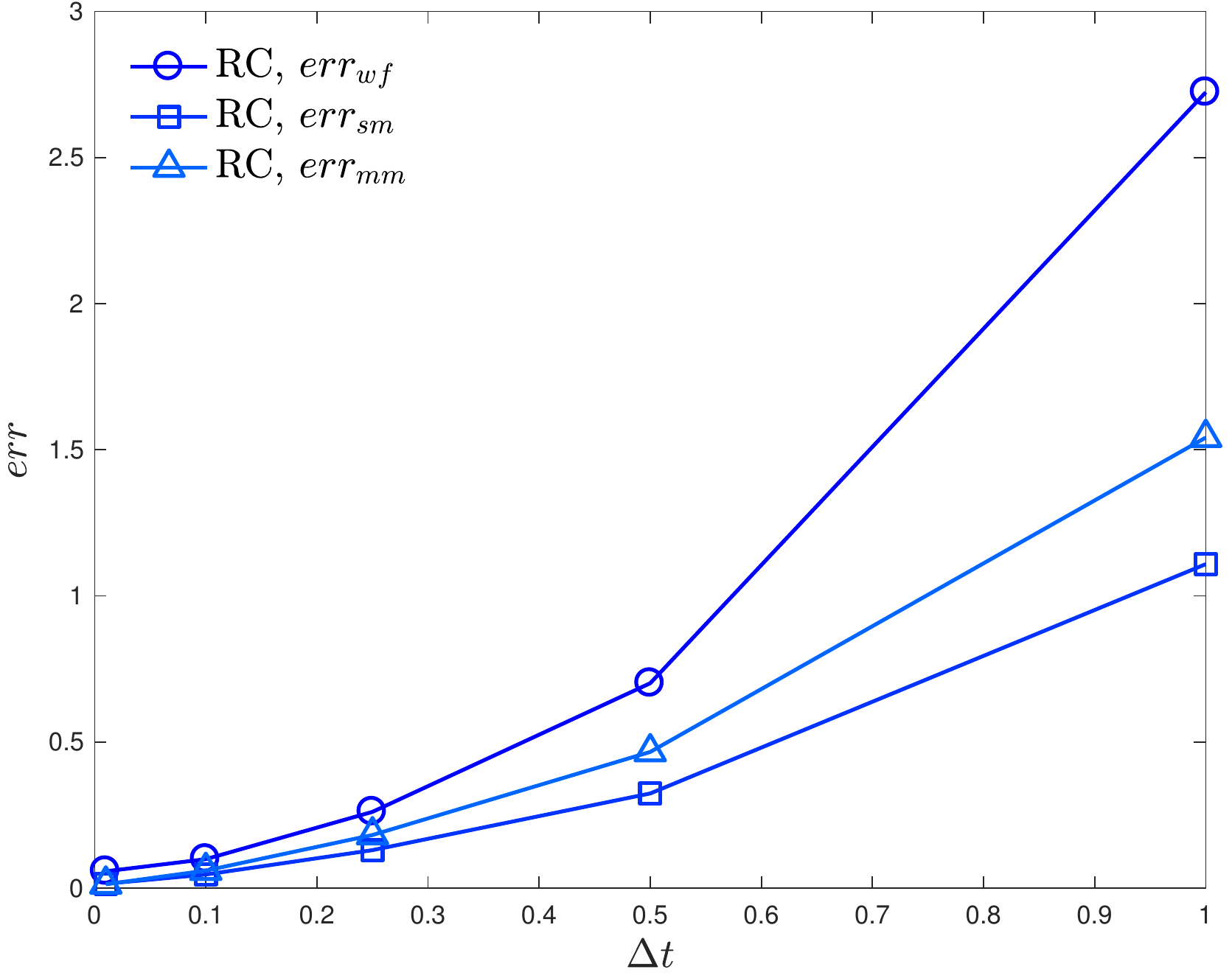}}}
\subfigure[$\textup{err}_{wf}$ under different $\Delta t = 0.01$fs (left) and $\Delta t = 0.1$fs (right).]{
{\includegraphics[width=0.49\textwidth,height=0.27\textwidth]{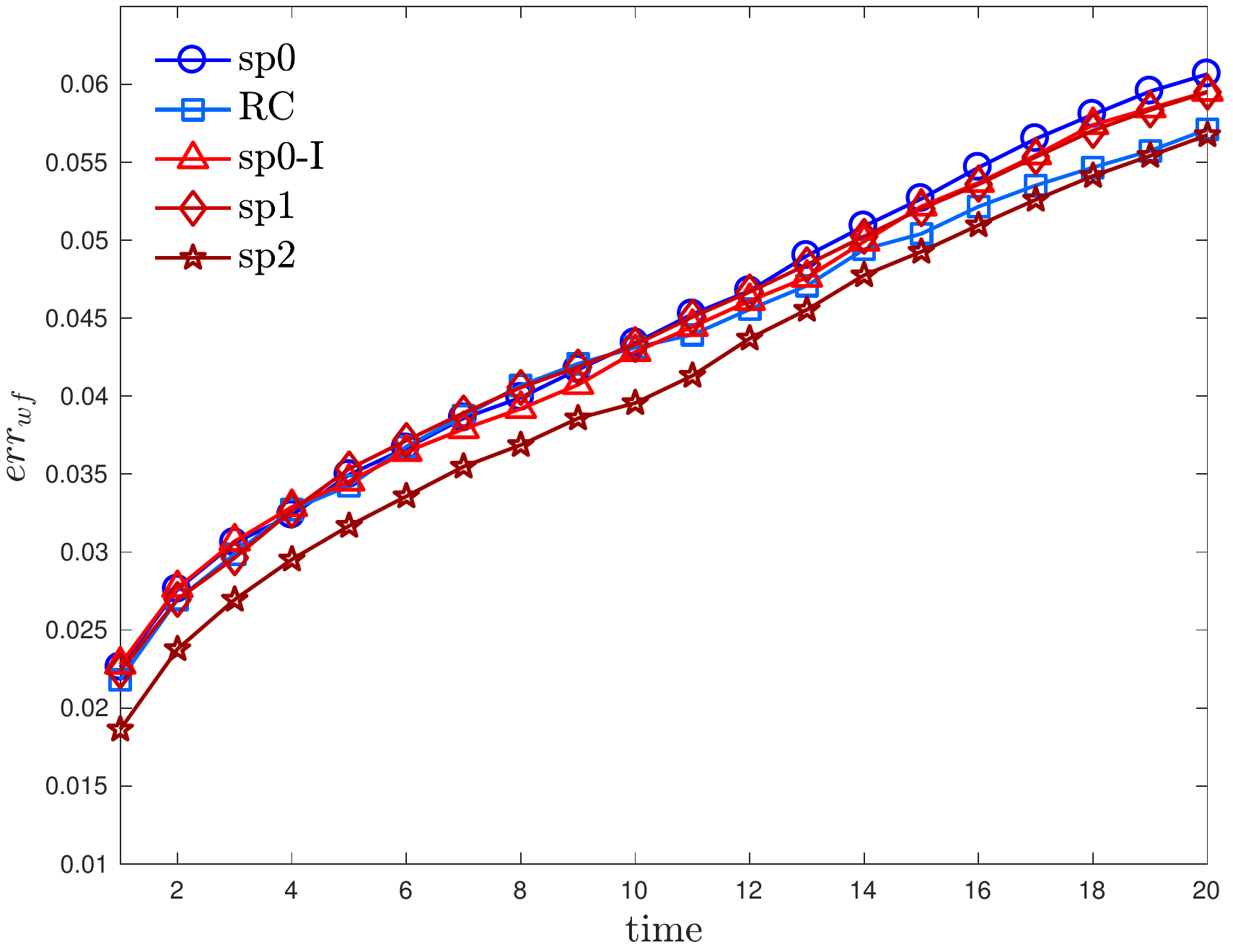}}
{\includegraphics[width=0.49\textwidth,height=0.27\textwidth]{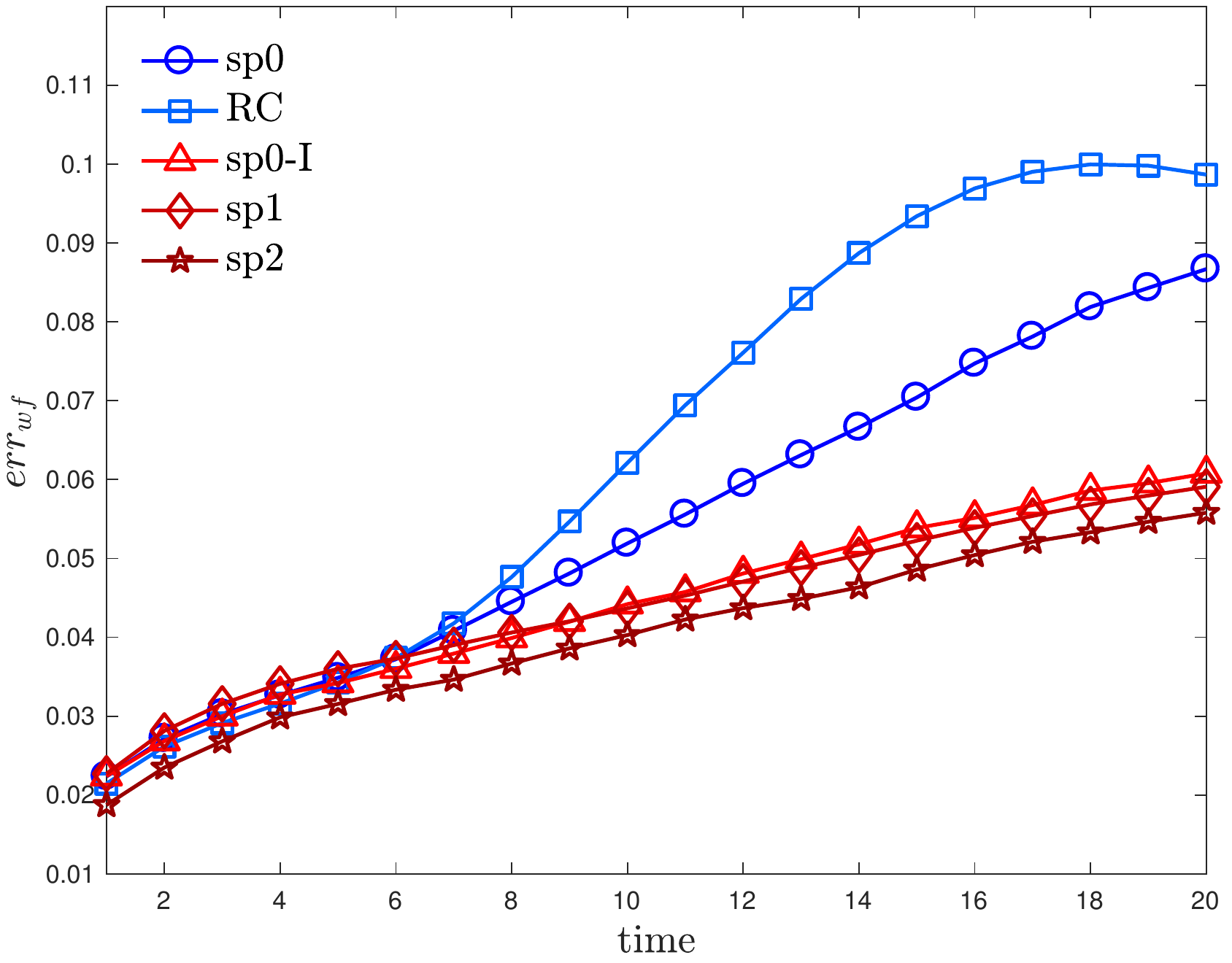}}}
\subfigure[$\textup{err}_{sm}$ under different $\Delta t = 0.01$fs (left) and $\Delta t = 0.1$fs (right).]{
{\includegraphics[width=0.49\textwidth,height=0.27\textwidth]{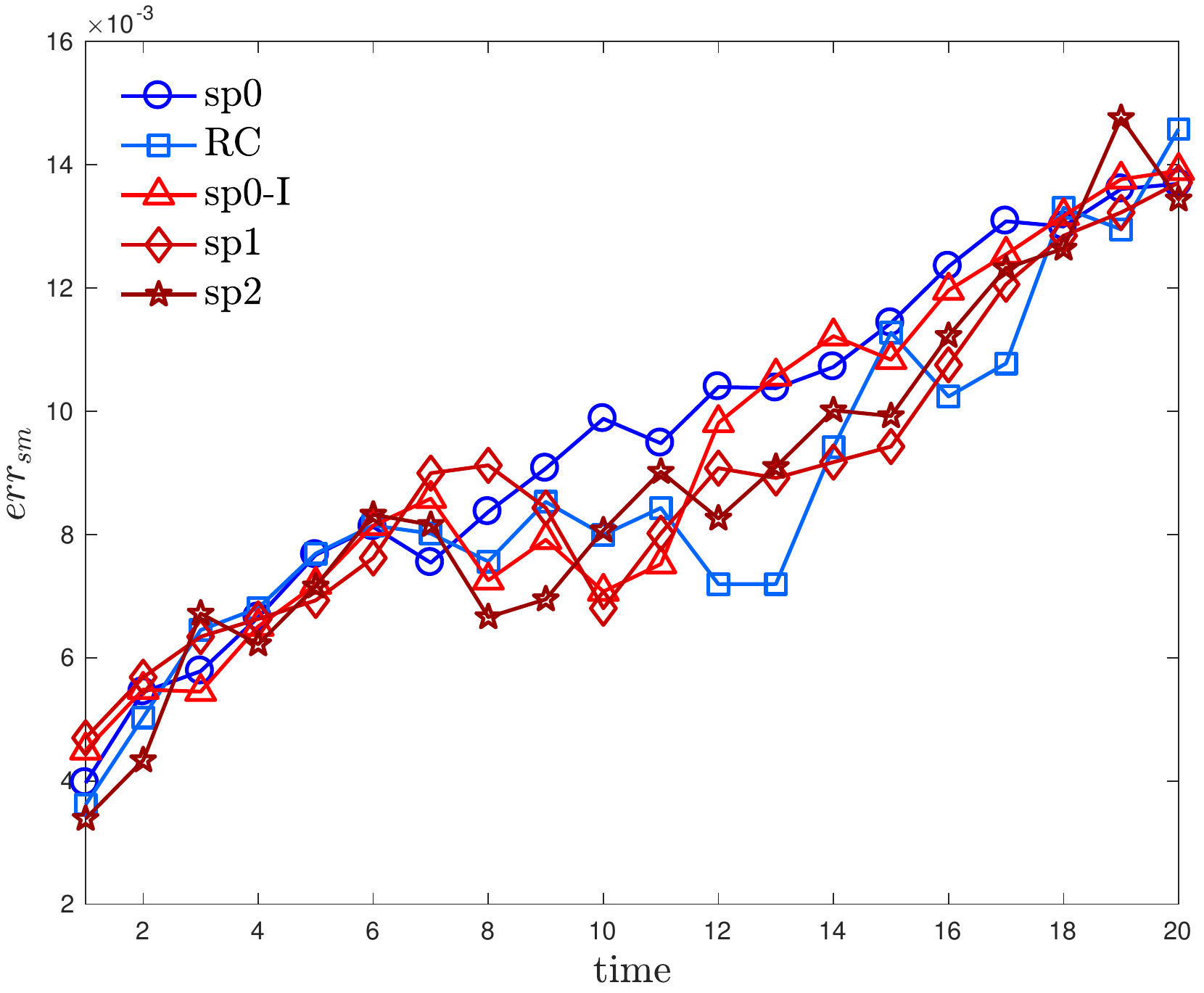}}{\includegraphics[width=0.49\textwidth,height=0.27\textwidth]{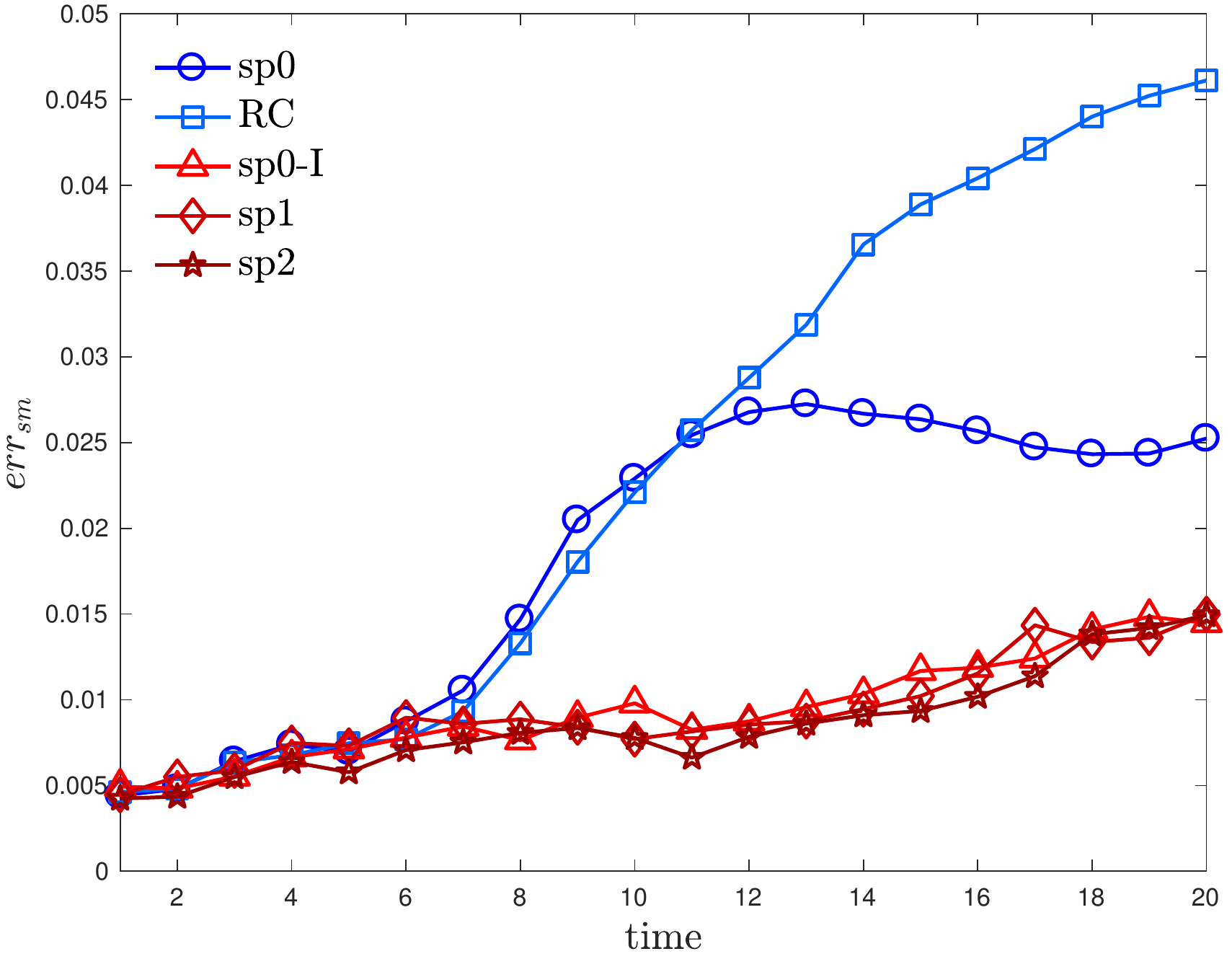}}}
\subfigure[$\textup{err}_{mm}$ under different $\Delta t = 0.01$fs (left) and $\Delta t = 0.1$fs (right).]{
{\includegraphics[width=0.49\textwidth,height=0.27\textwidth]{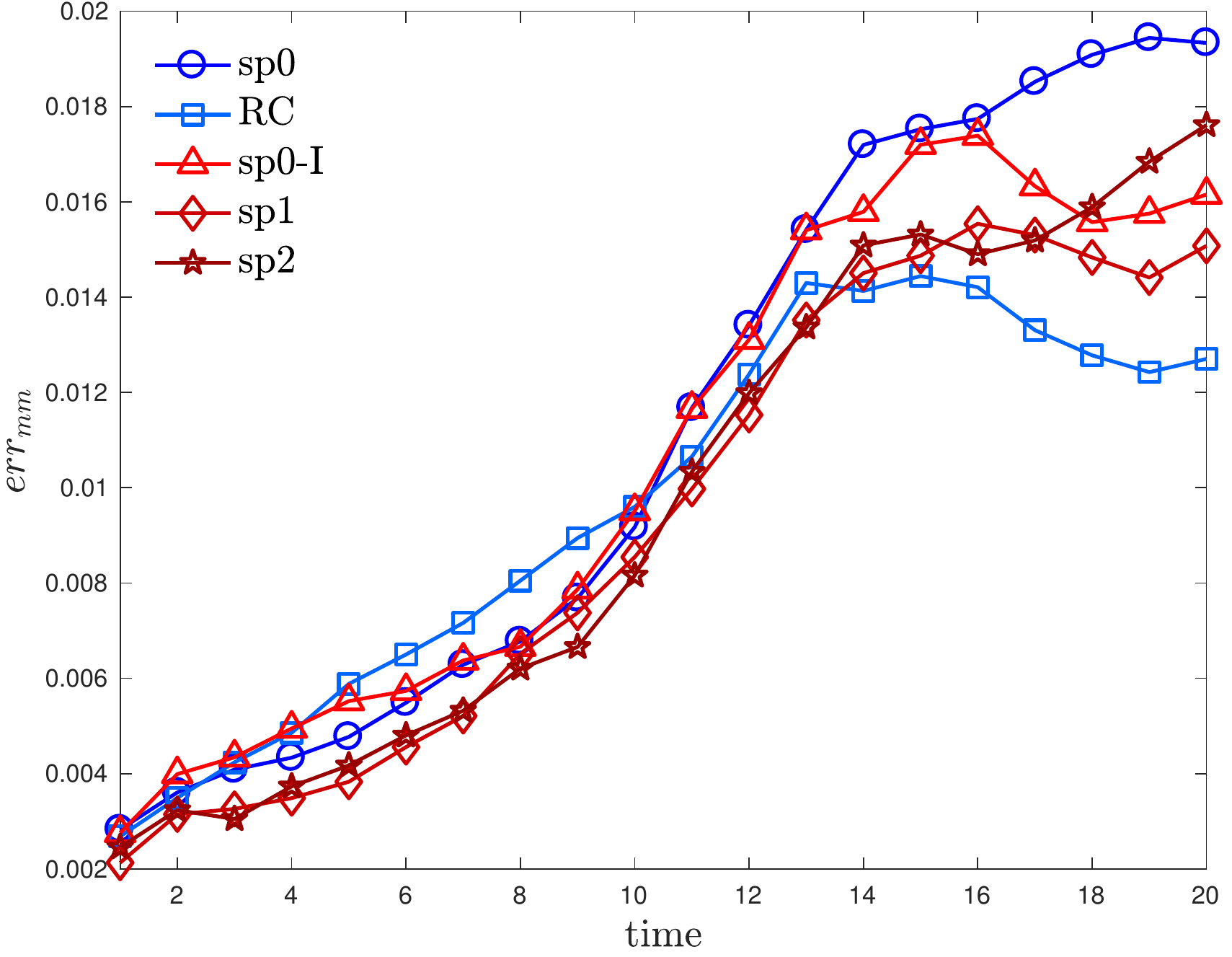}}
{\includegraphics[width=0.49\textwidth,height=0.27\textwidth]{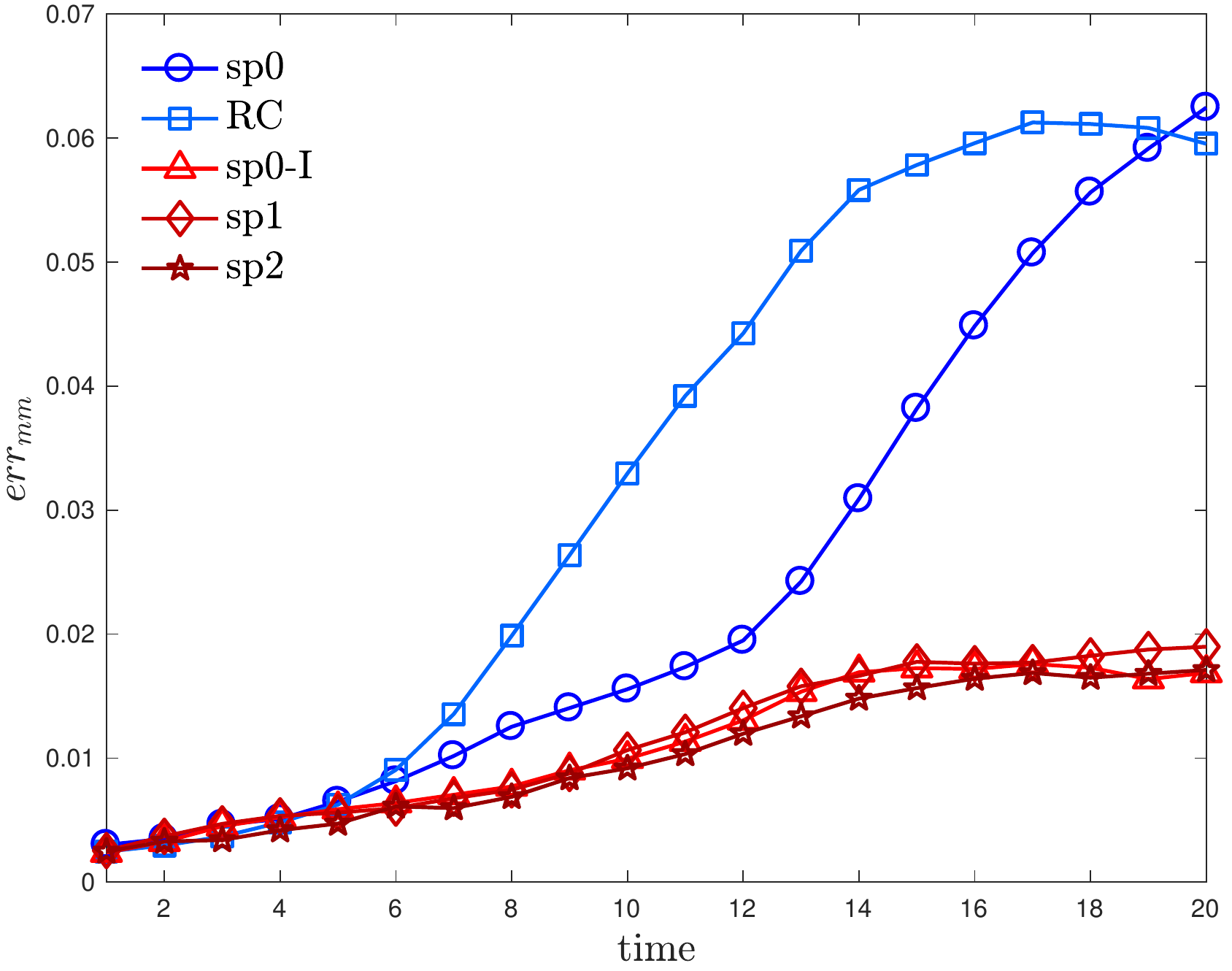}}}
\caption{\small The 2D Gaussian scattering: Comparison among the signed-particle implementations \textbf{sp0}, \textbf{sp0-I}, \textbf{sp1}, \textbf{sp2} and \textbf{RC}. Both \textbf{sp0} and \textbf{RC} require the time step $\Delta t$ sufficiently small to ensure the accuracy, whereas \textbf{sp0-I}, \textbf{sp1} and \textbf{sp2} alleviate such restriction. $\textbf{sp2}$ produce the most accurate results as it seizes the variance reduction property of \textbf{wp}. Here we set the auxiliary function $\gamma = 1.5\check{\xi}$ and $T_A = $1fs.}
\label{G_sp}
\end{figure}

Several facts below observed from the numerical results are presented in Fig.~\ref{G_sp}.  
\begin{description}

\item[(1)] On one hand, the accuracy of $\textbf{sp0}$ and $\textbf{RC}$ depends on the choice of time step $\Delta t$, due to the errors induced by the time discretizations. For $\textbf{sp0}$, the errors seem linearly dependent on $\Delta t$, while for $\textbf{RC}$, a superlinear convergence is observed, although the order is slightly deviated from the theoretical one $\mathcal{O}((\Delta t)^2)$. By contrast, the accuracy of  \textbf{sp0-I}, \textbf{sp1} or \textbf{sp2} is clearly independent of $\Delta t$. Actually, such observation also manifests the accuracy of the self-scattering technique because it is equivalent to $\textbf{sp1}$ in 1D single-body problem. 

\item[(2)]  On the other hand, too small $\Delta t$ leads to a dramatic increase in computational time. 
For instance, the running time of \textbf{sp0} is 86.00s for $\Delta t = 1$fs, 425.53s for $\Delta t = 0.1$fs, and 3738.89s for $\Delta t = 0.01$fs on the computational platform: Intel(R) Core(TM) i7-6700 CPU (3.40GHz, 8MB Cache, 4 Cores, 8 Threads) and 16GB Memory (we use one thread for each task). 

\item[(3)] A simple improvement \textbf{sp0-I} is able to alleviate the restriction on the choice of $\Delta t$, as the bias term is properly tackled. Moreover, the accuracy of \textbf{sp0-I} is comparable to that of \textbf{sp1}.

\item[(4)] A comparison shows that \textbf{sp2} achieves the best accuracy, as it seizes the variance reduction property of \textbf{wp}. When $\Delta t$ is sufficiently small, \textbf{RC} also performs quite well as the choice of majorant $\hat{V}_W(\bx, \bk) = \mathcal{F}[V](2\bk)$ avoids the calculation of $\int_{2\mathcal{K}} V^\pm_W(\bx, \bk) \D \bk$, thereby getting rid of the errors induced by numerical integrations and interpolations.

%Therefore, even though the time splitting manner allows \textbf{sp0} to improve the accuracy, it hampers the efficiency.

\end{description}

%\begin{figure}[!h]
%    \centering
%    \subfigure[Time evolution of $\textup{err}_{wf}$]{    
%    \includegraphics[width=0.42\textwidth,height=0.3\textwidth]{/G_boot/G_boot_errwf.pdf}}
%    \hspace{0.2in}
%    \subfigure[The growth rate of particle number]{    
%    \includegraphics[width=0.42\textwidth,height=0.3\textwidth]{/G_boot/G_boot_np.pdf}}
%    \\
%    \centering
%    \subfigure[Time evolution of $\textup{err}_{n}$]{    
%    \includegraphics[width=0.42\textwidth,height=0.3\textwidth]{/G_boot/G_boot_errn.pdf}}
%    \hspace{0.2in}
%    \subfigure[Spatial marginal density at $t=10$]{    
%    \includegraphics[width=0.42\textwidth,height=0.3\textwidth]{/G_boot/G_boot_err_xdist.pdf}}
%    \\
%    \centering
%    \subfigure[Time evolution of $\textup{err}_{w}$]{    
%    \includegraphics[width=0.42\textwidth,height=0.3\textwidth]{/G_boot/G_boot_errw.pdf}}
%    \hspace{0.2in}
%    \subfigure[Momental marginal density at $t=10$]{    
%    \includegraphics[width=0.42\textwidth,height=0.3\textwidth]{/G_boot/G_boot_err_kdist.pdf}}
%     \caption{The time evolutions of the normalized $L^2$-errors and the growth rate of particle number in the 2D Gaussian barrier test, with $\gamma = 1.5\check{\xi}$ and $N_A = 1$. The bootstrapping version still provides an accurate result, although slightly lower than the non-bootstrapping version. }  
%     \label{fig_12}
%\end{figure}

\begin{figure}
  \subfigure[$t=10$fs (left: potential, middle: ASM, right: \textbf{sp1}).]{
  {\includegraphics[width=0.32\textwidth,height=0.26\textwidth]{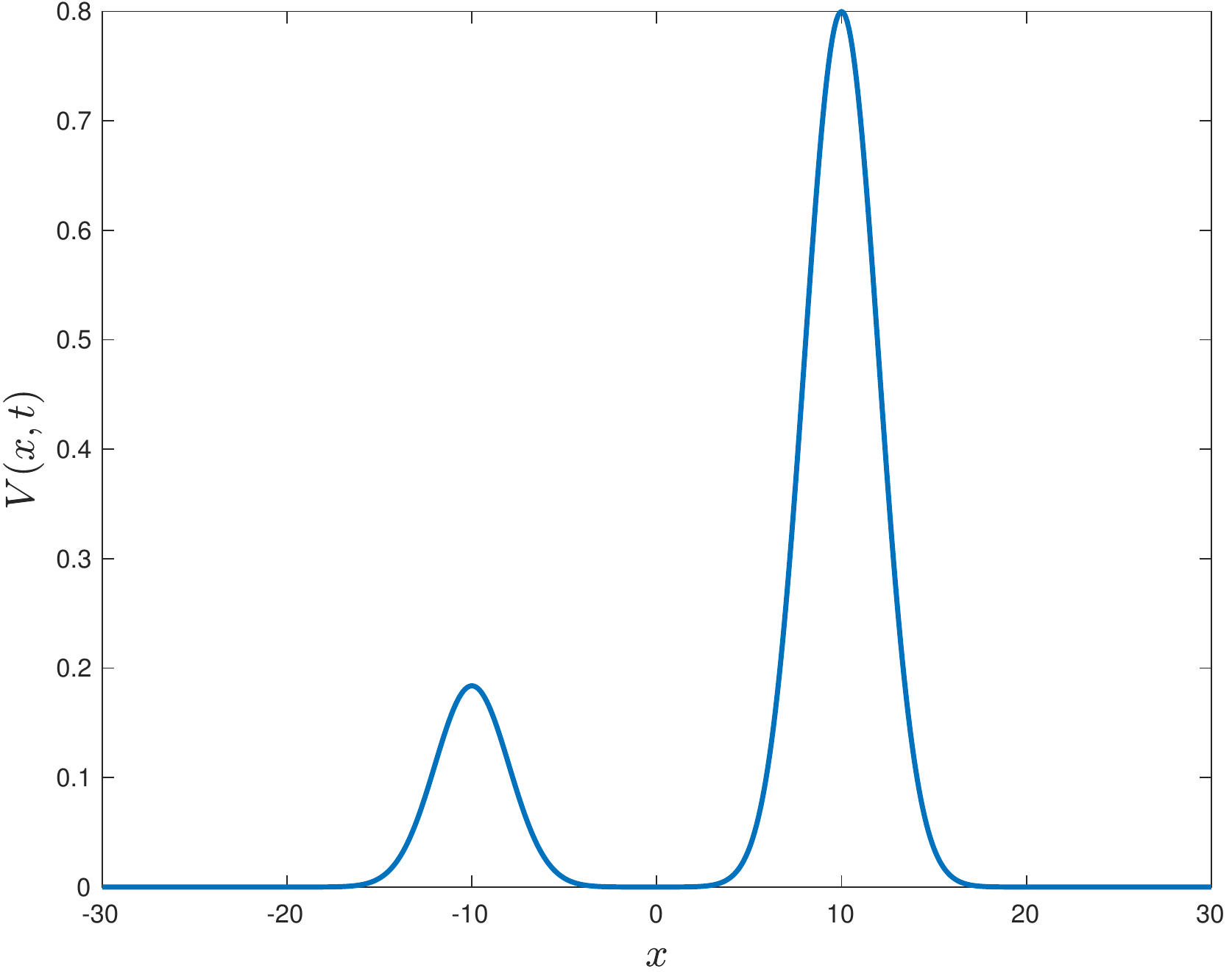}}
  {\includegraphics[width=0.32\textwidth,height=0.26\textwidth]{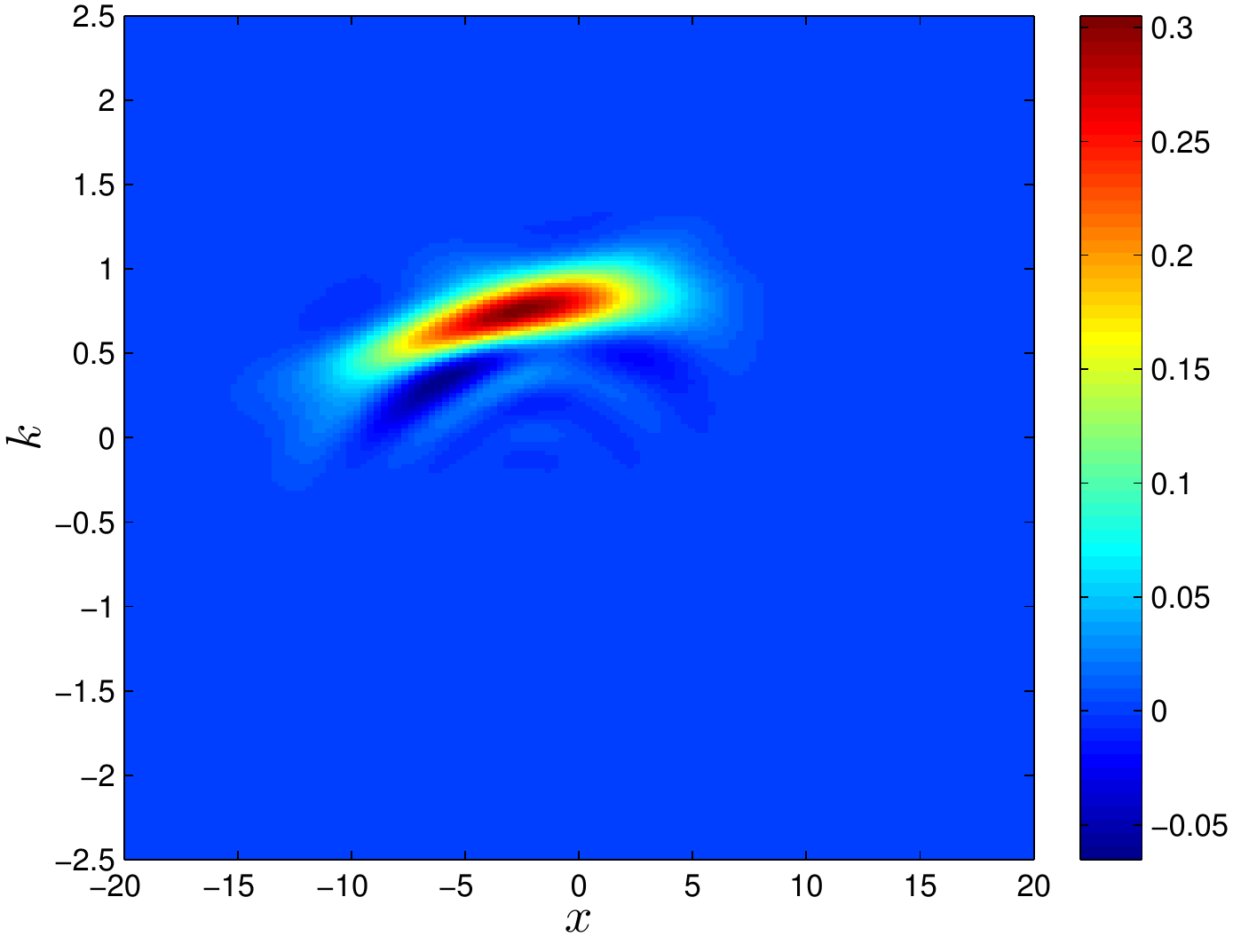}}
  {\includegraphics[width=0.32\textwidth,height=0.26\textwidth]{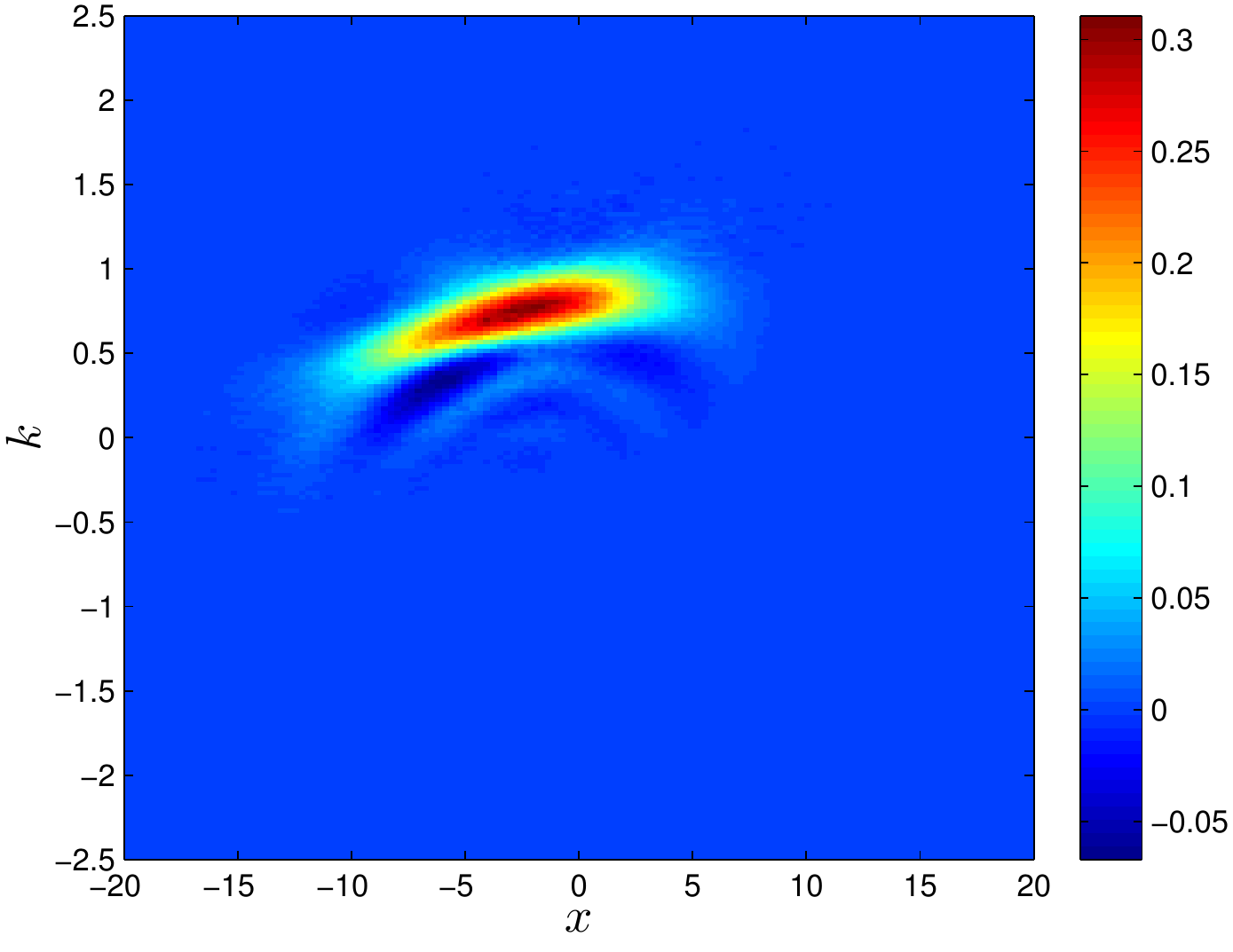}}}

   \subfigure[$t=30$fs (left: potential, middle: ASM, right: \textbf{sp1}).]{
   {\includegraphics[width=0.32\textwidth,height=0.26\textwidth]{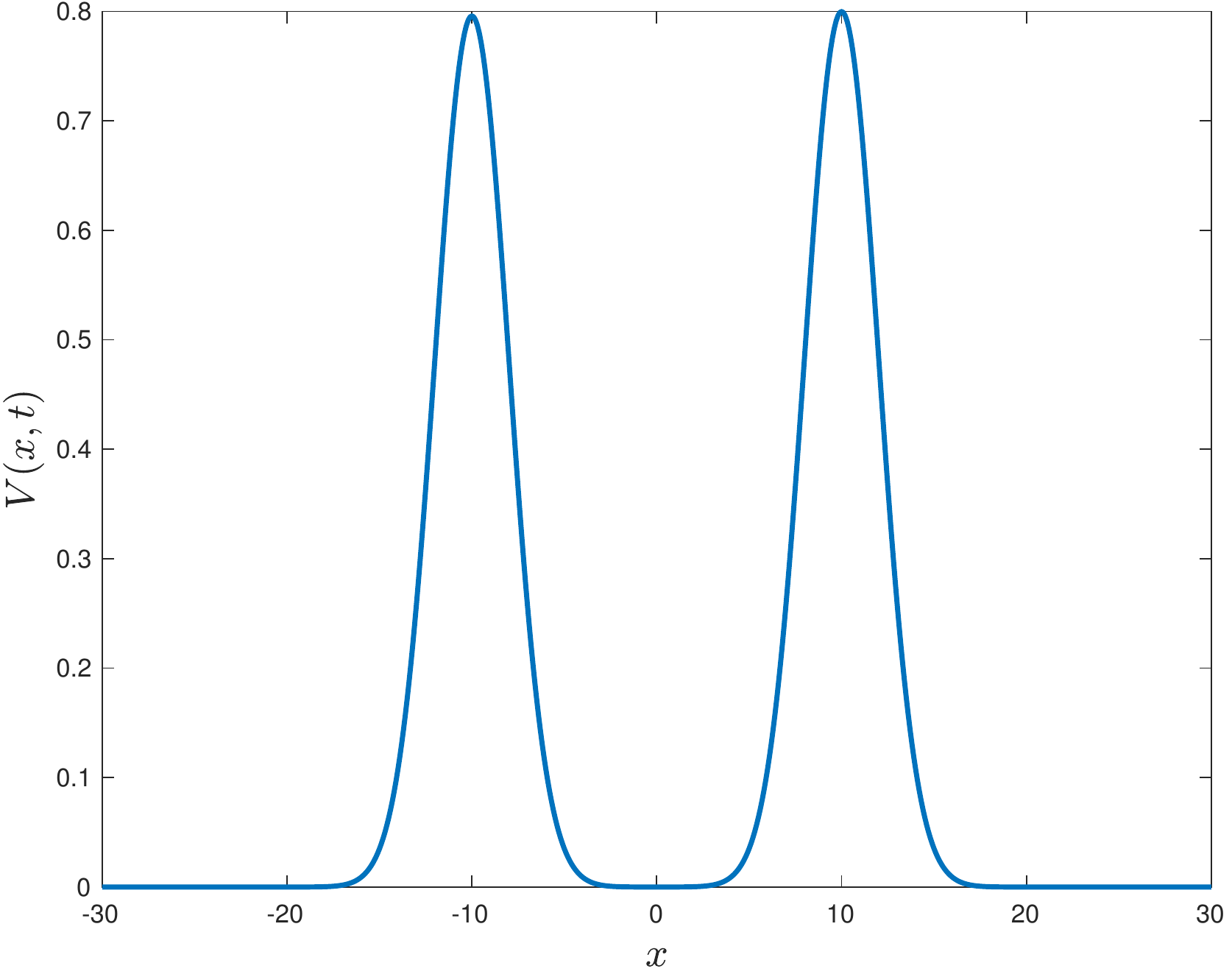}}
   {\includegraphics[width=0.32\textwidth,height=0.26\textwidth]{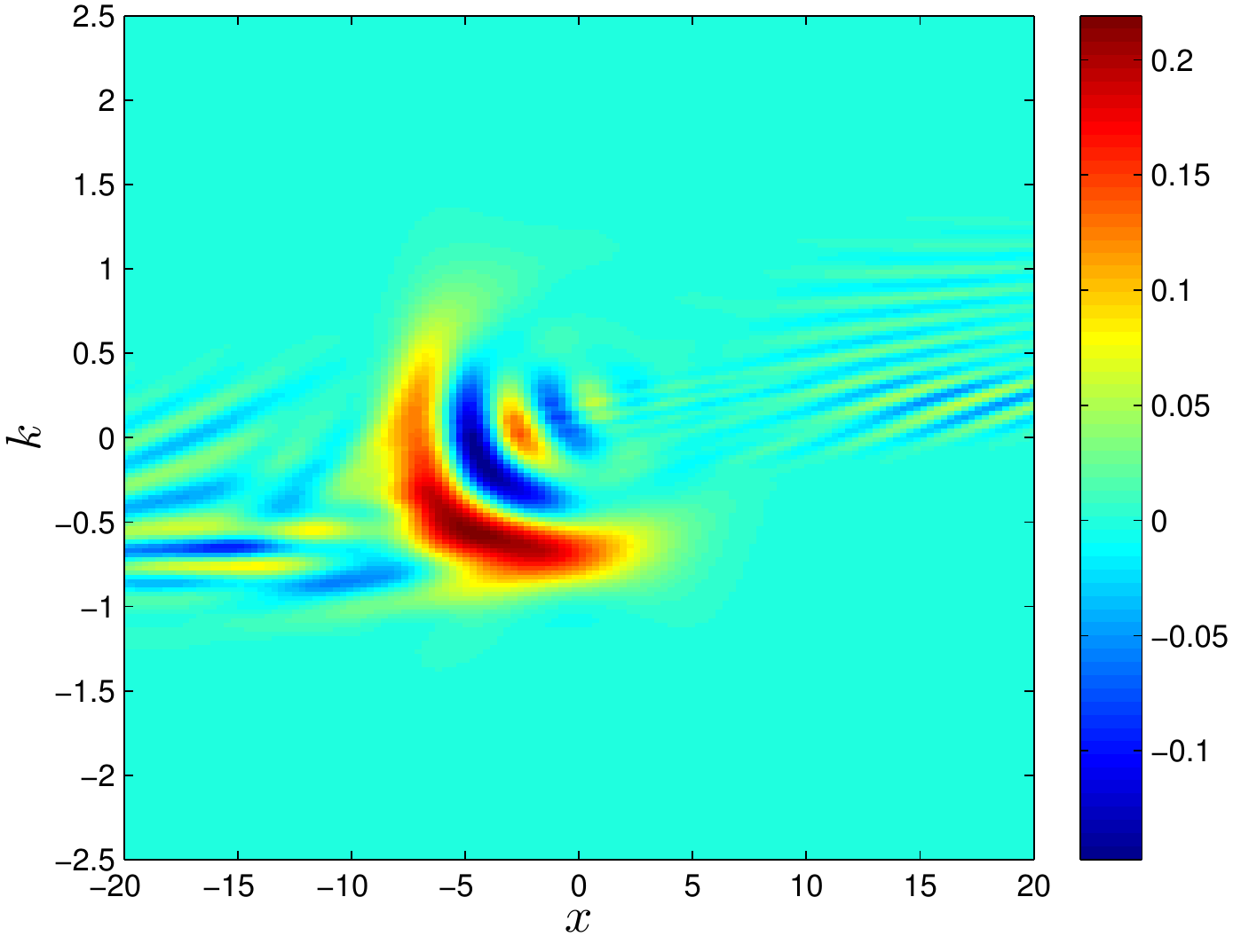}}
   {\includegraphics[width=0.32\textwidth,height=0.26\textwidth]{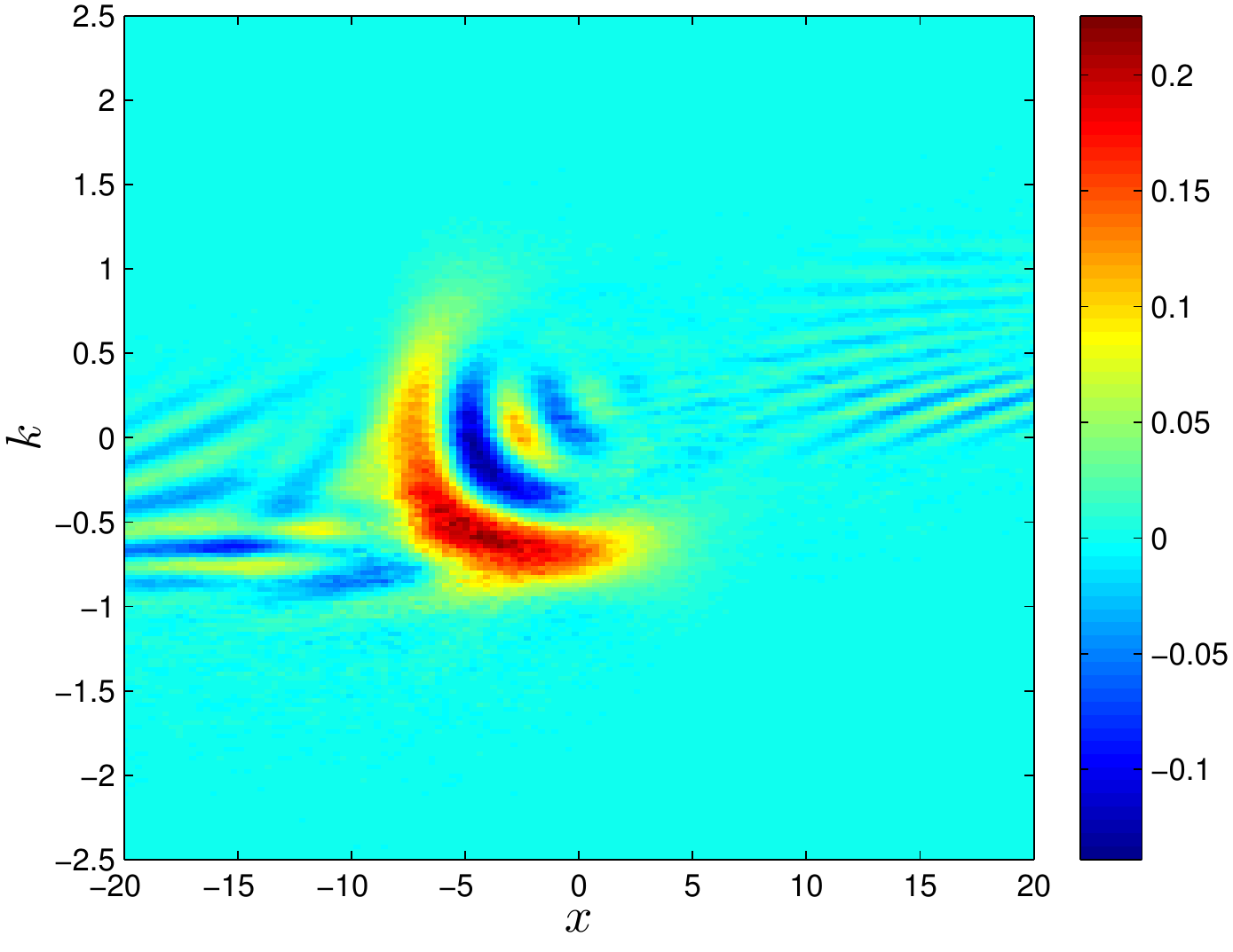}}}

   \subfigure[$t=45$fs (left: potential, middle: ASM, right: \textbf{sp1}).]{
   {\includegraphics[width=0.32\textwidth,height=0.26\textwidth]{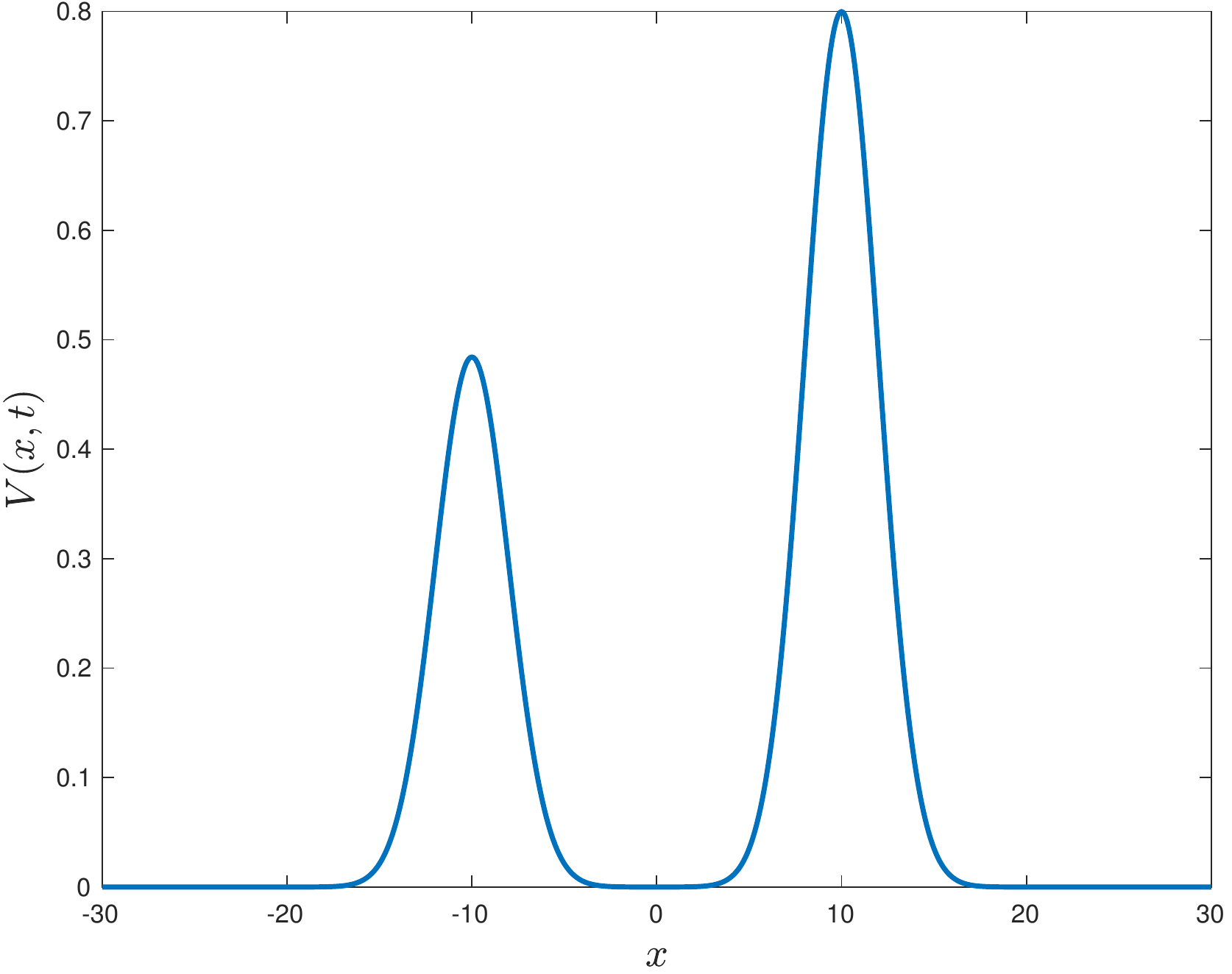}}
   {\includegraphics[width=0.32\textwidth,height=0.26\textwidth]{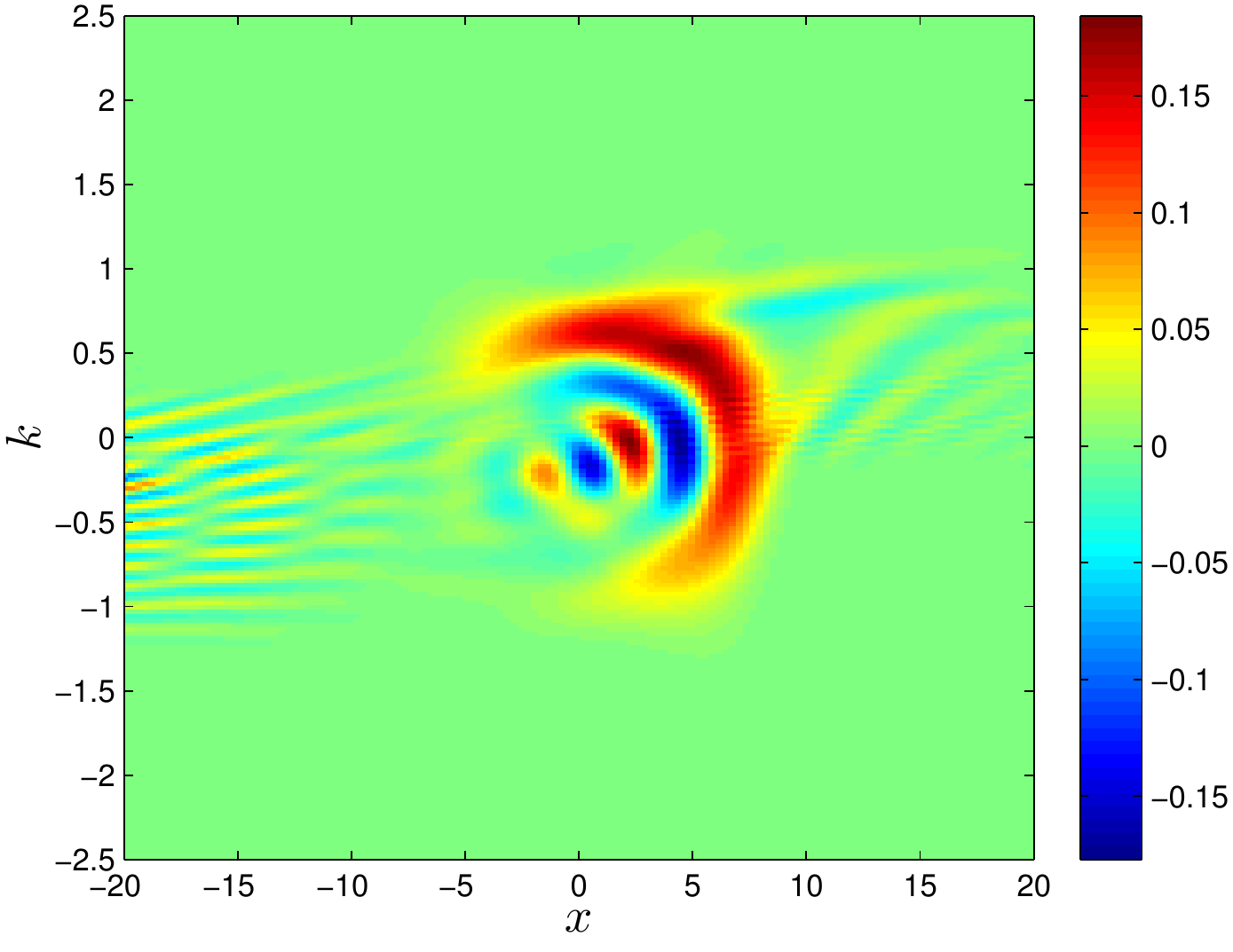}}
   {\includegraphics[width=0.32\textwidth,height=0.26\textwidth]{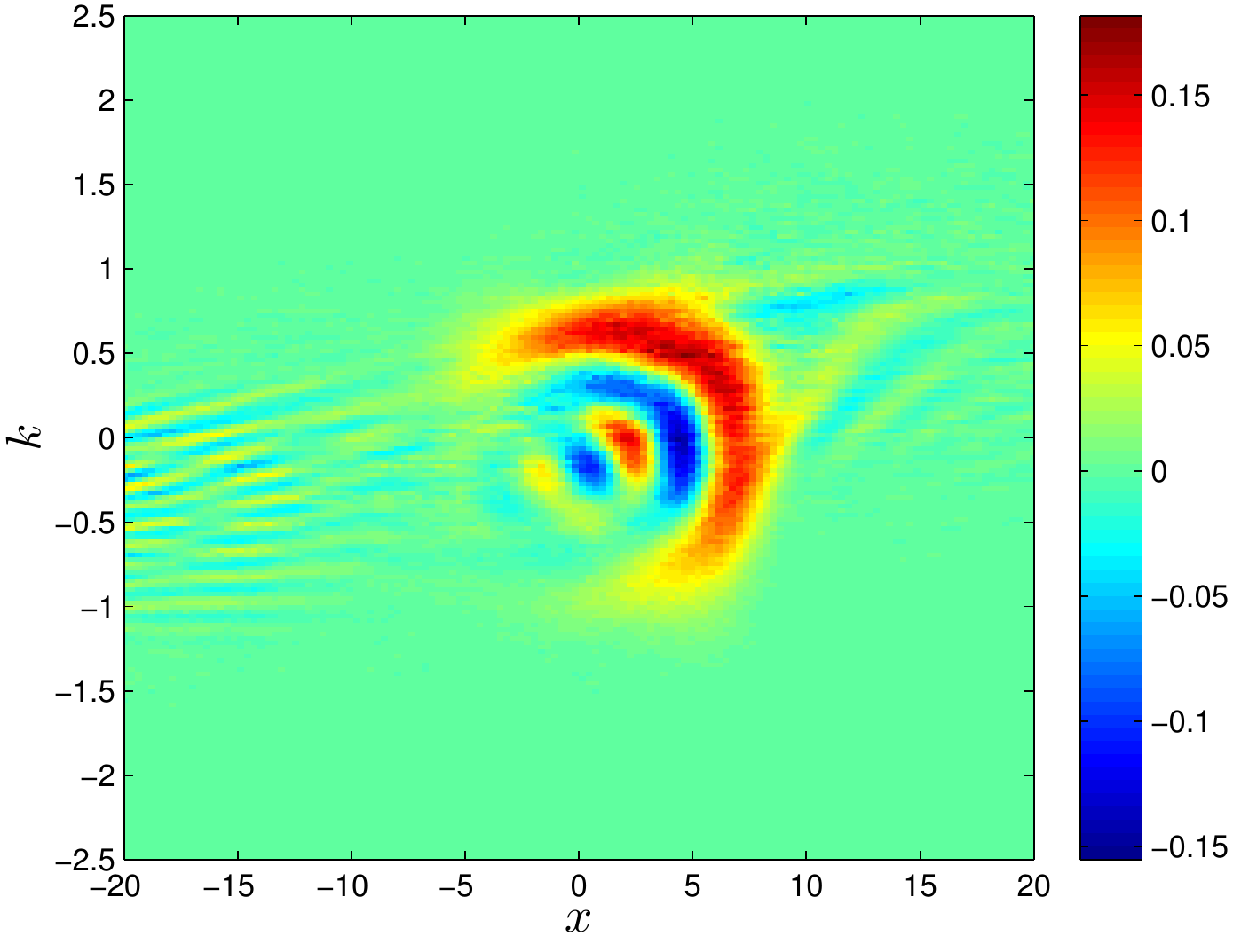}}}

   \subfigure[$t=60$fs (left: potential, middle: ASM, right: \textbf{sp1}).]{
   {\includegraphics[width=0.32\textwidth,height=0.26\textwidth]{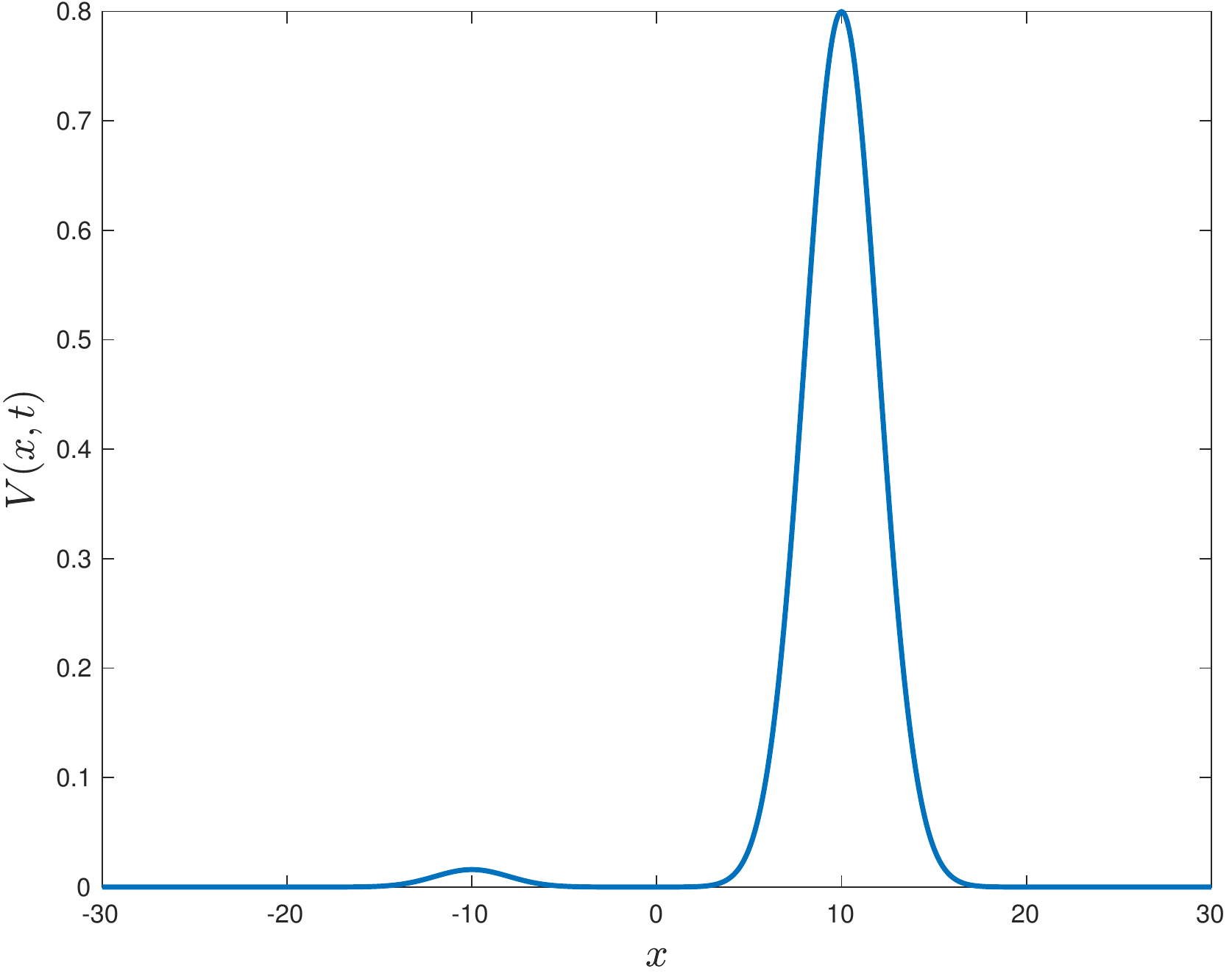}}
   {\includegraphics[width=0.32\textwidth,height=0.26\textwidth]{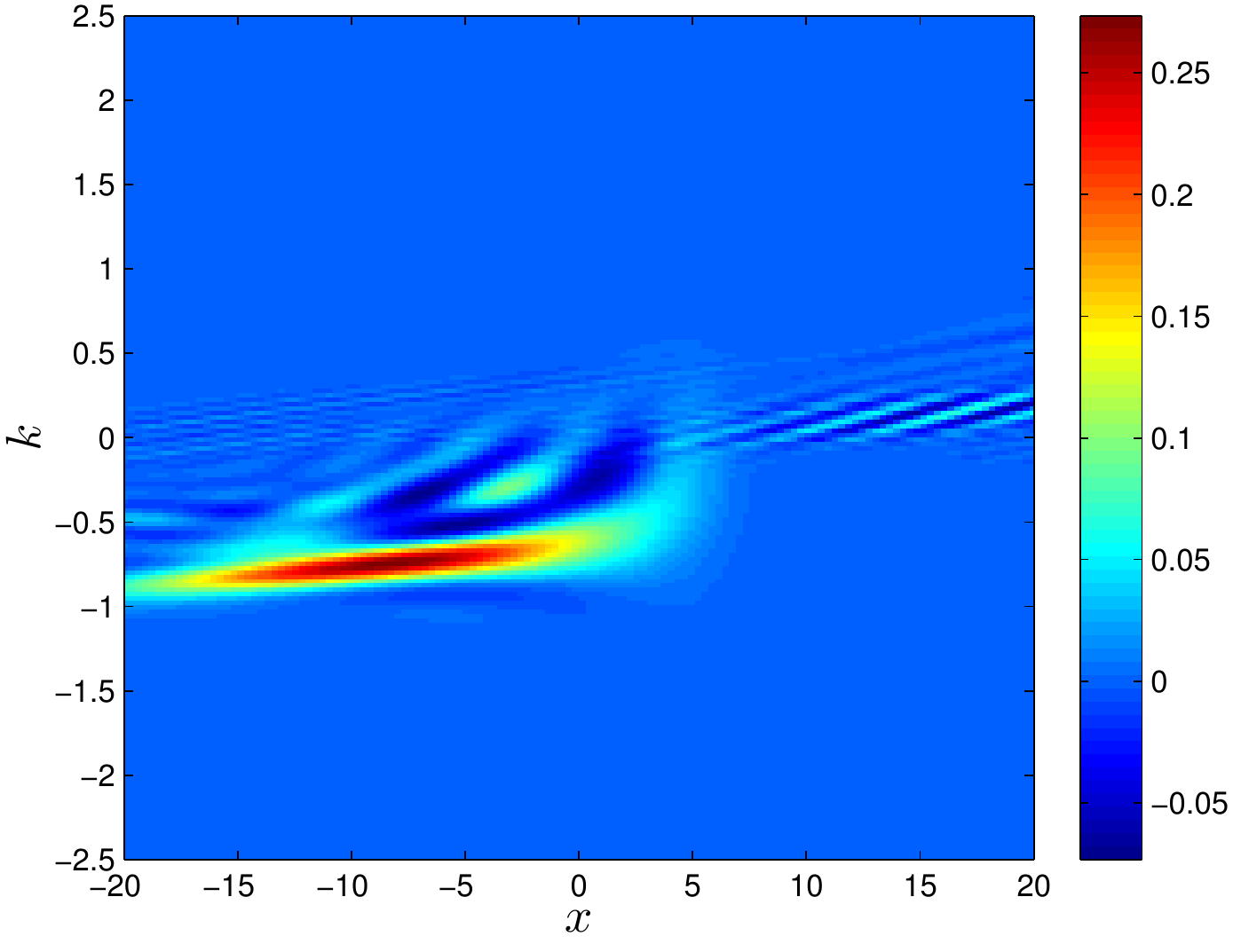}}
   {\includegraphics[width=0.32\textwidth,height=0.26\textwidth]{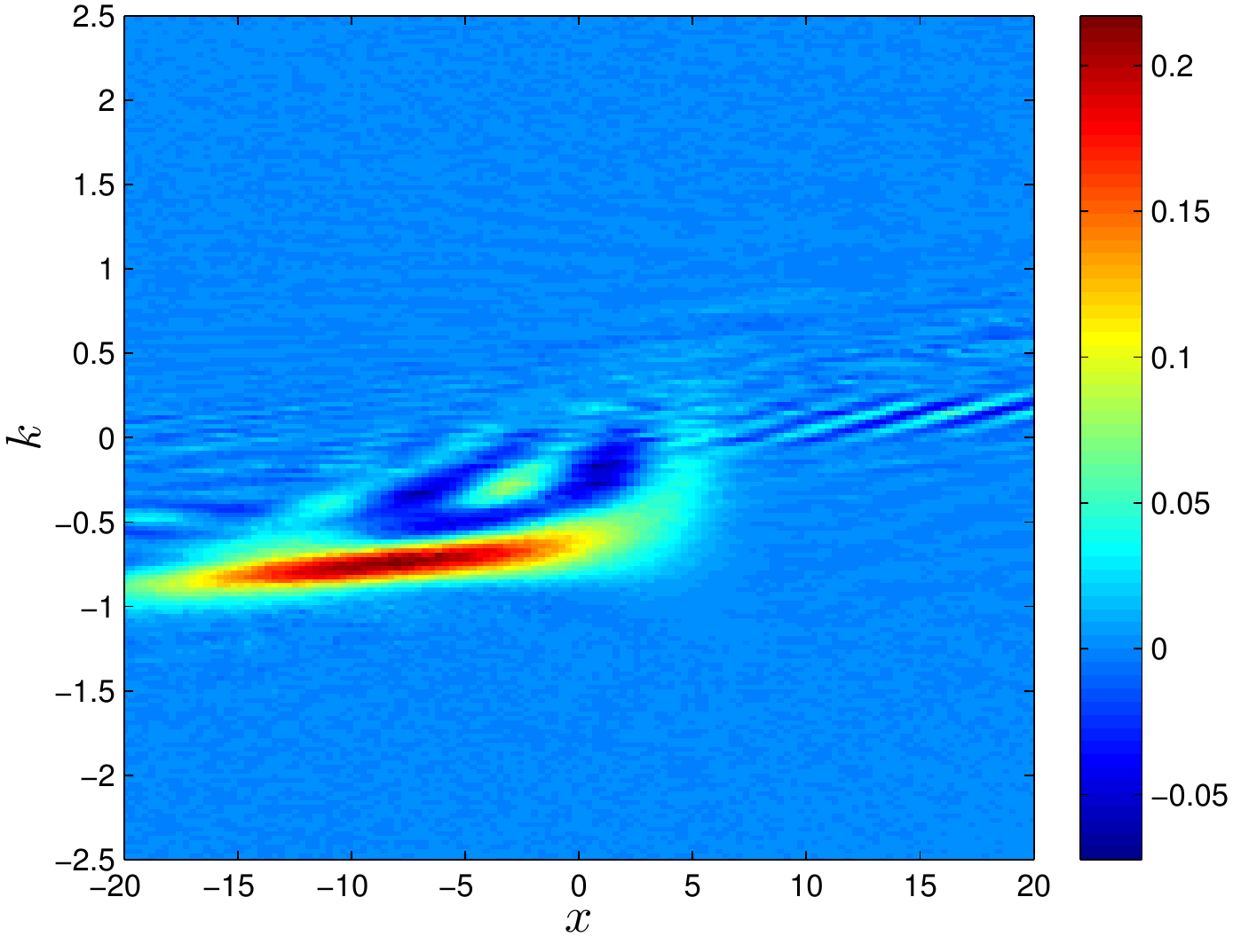}}}
 
  \caption{\small {The 2D Gaussian scattering under a time-varying barrier: the Wigner function at different instants $t=10, 30, 45, 60$fs. Different scattering phenomena are observed as the height of the left barrier changes in time.}}
  \label{fig.td_G}
\end{figure}

\begin{figure}[ht]
\subfigure[Convergence rate with respect to $N_\alpha$.]{\includegraphics[width=0.49\textwidth,height=0.27\textwidth]{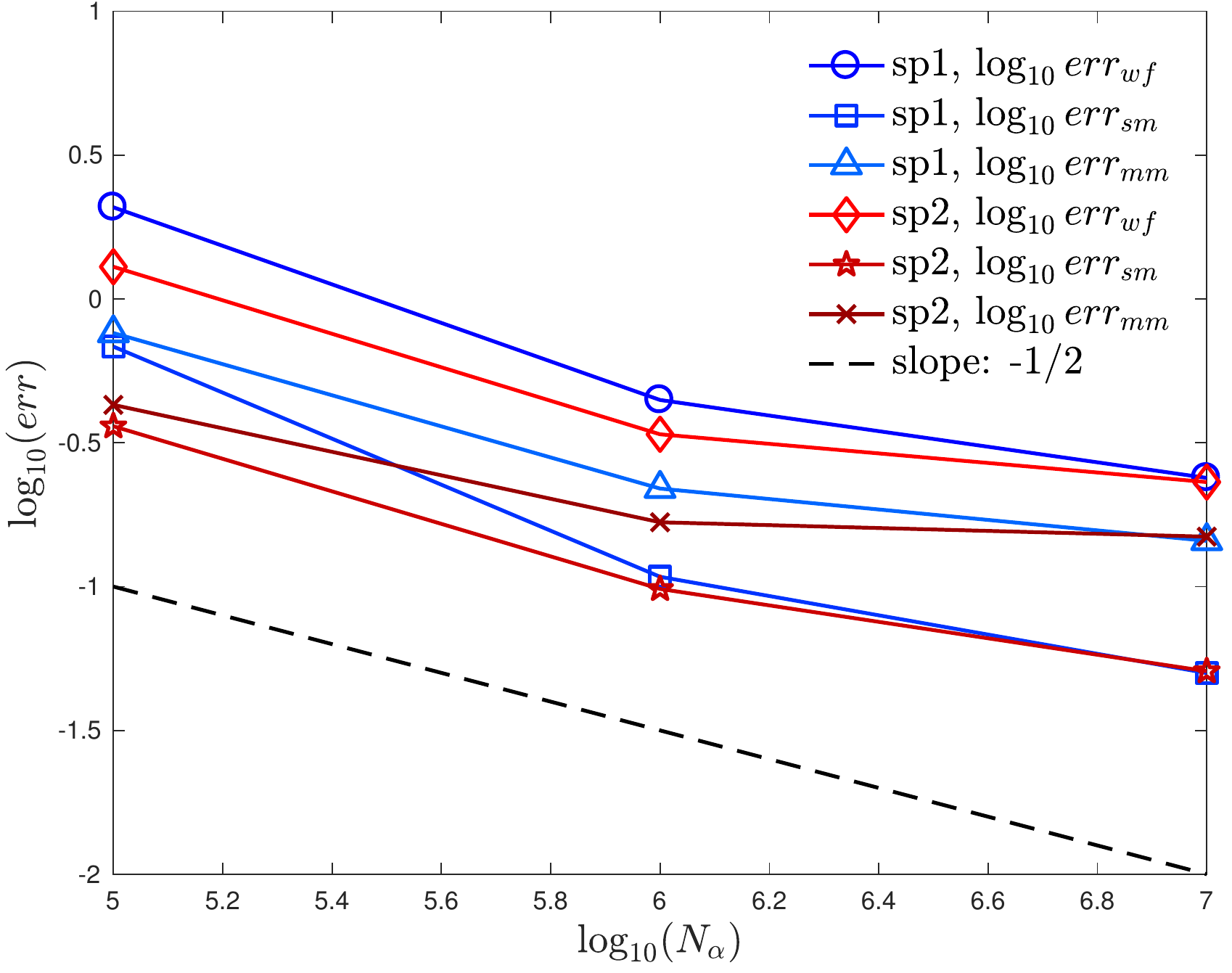}}
\subfigure[$\textup{err}_{wf}$ under different $\gamma$.]{\includegraphics[width=0.49\textwidth,height=0.27\textwidth]{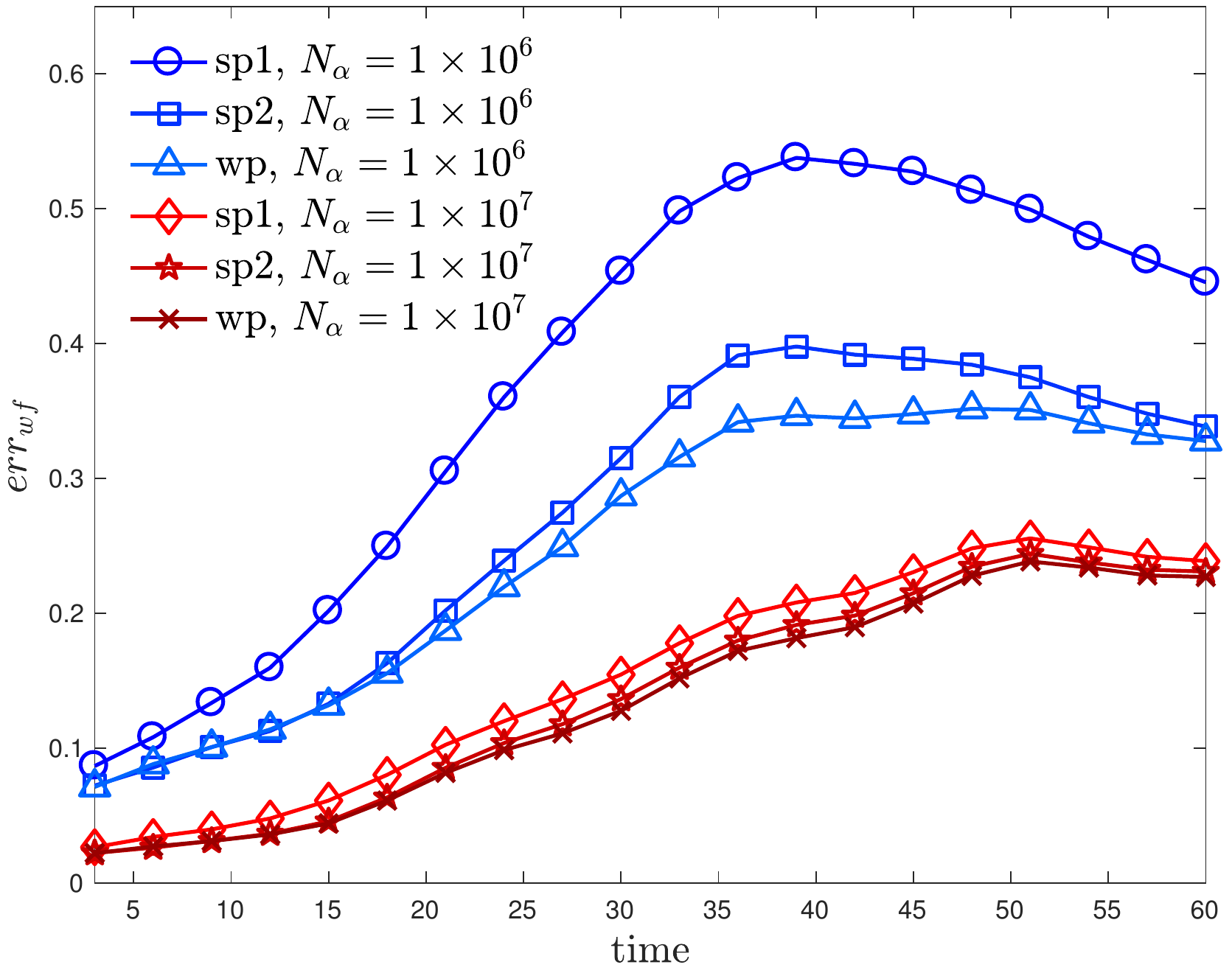}}
\subfigure[$\textup{err}_{sm}$ under different $\gamma$.]{\includegraphics[width=0.49\textwidth,height=0.27\textwidth]{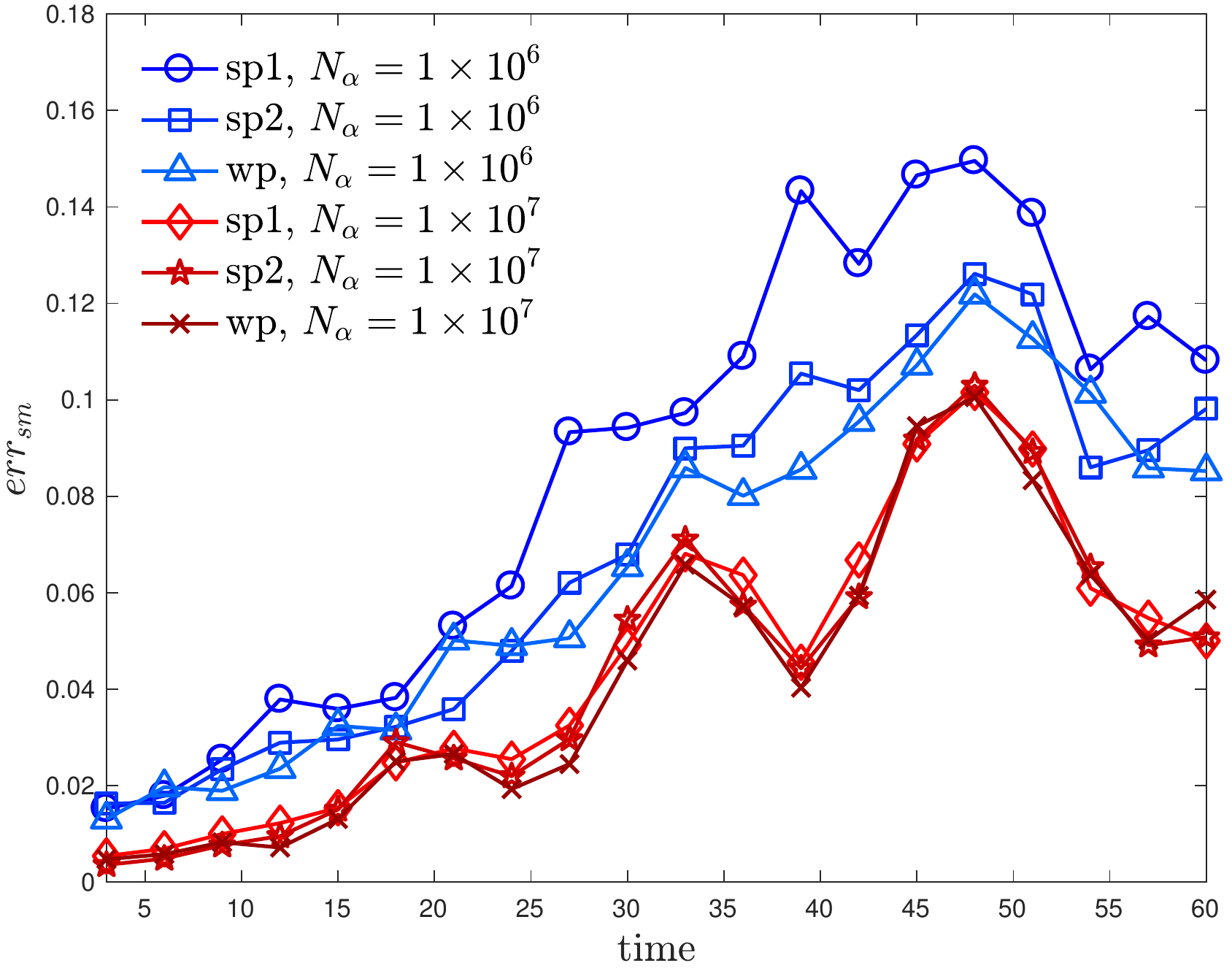}}
\subfigure[$\textup{err}_{mm}$ under different $\gamma$.]{\includegraphics[width=0.49\textwidth,height=0.27\textwidth]{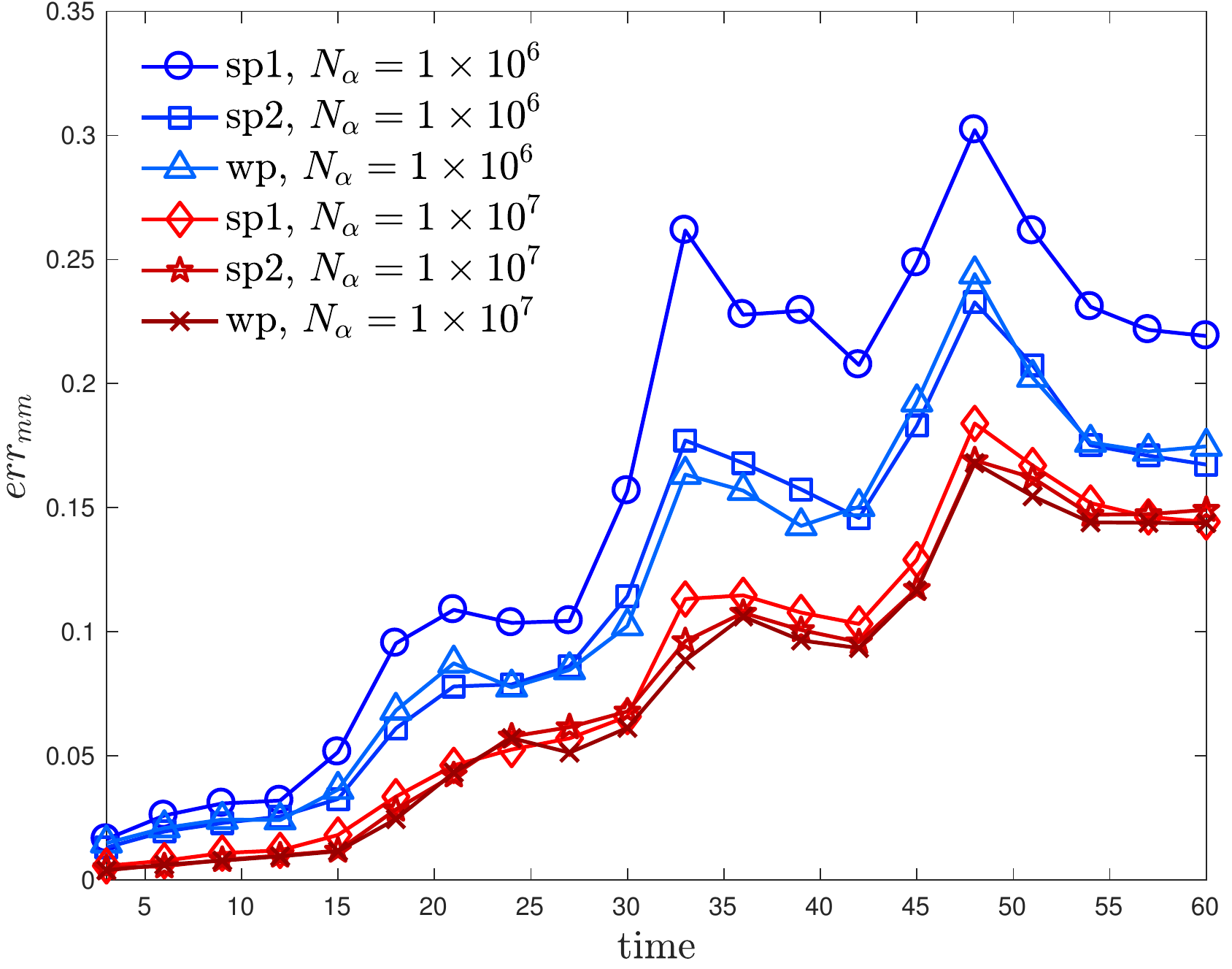}}
\caption{\small  {The 2D Gaussian scattering under a time-varying barrier: Comparison among \textbf{sp1}, \textbf{sp2} and \textbf{wp} and the convergence rate of \textbf{sp1} and \textbf{sp2} with respect to the sample size $N_\alpha$. Similar to the static potential case, \textbf{sp2} is more accurate than \textbf{sp1}, but less than \textbf{wp}. Here we set the auxiliary function $\gamma = 1$ and $T_A = 1$fs. }
}
\label{td_G_convergence}
\end{figure}

\subsection{Time-dependent potential }
Finally, we try to study a 2D Gaussian scattering under a time-dependent double barrier potential: 
\begin{equation}\label{eq:tDV}
V(x, t) = 0.8(0.5+0.5\cos(\pi + 0.1 t)) \me^{-{(x-10)^2}/{8}} + 0.8\me^{-{(x+10)^2}/{8}}.
\end{equation} 
Other parameters are identical to those in the above static Gaussian scattering tests, except that the initial position $x_0$ is reset to be $-15$ and a constant auxiliary function $\gamma= 1$ is adopted. The left barrier serves as a switch as its height varies harmonically in time and different scattering phenomena can be clearly observed for a relatively long final time $t_{fin} = 60$fs, as displayed in Fig.~\ref{fig.td_G}. The initial wavepacket touches the left barrier and penetrates it without any difficulty because the barrier height is very small at $t=10$fs. Then it begins to interact with the right barrier and is almost reflected back. As the height of the left barrier nearly attains its maximal value at $t=30$fs, it forces the wavepacket to be reflected again and confined in $[-10, 10]$. After the third reflection, the wavepacket travels across the left barrier at $t=60$fs as the barrier height decreases. From Fig.~\ref{fig.td_G}, it can be readily verified that both deterministic (i.e. ASM) and stochastic methods (for brevity, we only plot the results produced by \textbf{sp1}) succeed in capturing the scattering phenomenon and the fine oscillating structure. A detailed comparison is made between \textbf{sp1}, \textbf{sp2} and \textbf{wp} under the time-dependent potential \eqref{eq:tDV} and the results are shown in Fig.~\ref{td_G_convergence} where we investigate the convergence rate of both \textbf{sp1} and \textbf{sp2} with respect to the sample size $N_\alpha = 10^5, 10^6, 10^7$.  
It is clearly observed there that $\textbf{sp2}$ is more accurate than \textbf{sp1}, but the difference diminishes as $N_\alpha$ increases, and the convergence order is less than $-1/2$ due to other errors involved. In fact, all these observations are quite similar to those already mentioned in Section \ref{sec.bootstrap_filtering} for the time-independent potential.

\section{Conclusions and discussions}
\label{sec:con}

In this paper, we propose several efficient strategies to realize the signed-particle implementation of the Wigner branching random walk (WBRW).  Based on a unified theoretical framework of signed-particle implementations, we can interpret the multiplicative functional $\xi/\gamma$ as either the probability to generate new particles or make some replicas of offsprings, yielding \textbf{sp1} and \textbf{sp0-I}, respectively. In order to further reduce the variance, we employ a bootstrap filtering in the weighted-particle version, yielding \textbf{sp2}.
The numerical analysis on the bootstrap filtering, as well as the errors induced by resampling based on the uniform histogram, is given. Through detailed performance evaluations, we have shown the accuracy and efficiency of two proposed strategies, distinguished the differences of various approaches and uncovered the following facts.

\begin{description}

\item[(1)] WBRW implementation is more advantageous over the original signed-particle Wigner Monte Carlo method (\textbf{sp0}) and $\textbf{RC}$ in time stepping, since it alleviates the restriction on the time step $\Delta t$. Other strategies, such as the self-scattering technique and our improved version \textbf{sp0-I}, are also able to avoid the time discretization errors. In addition, the accuracy of \textbf{sp2} and \textbf{wp} can be further improved by adjusting the auxiliary function $\gamma$.

\item[(2)] Increasing the sample size $N_\alpha$ will systematically improve the accuracy, but the convergence order has some deviations from the theoretical order of $-1/2$ due to the deterministic errors induced by the resampling. 

\item[(3)] Both the accuracy and the growth of particle number of \textbf{sp1} is independent of $\gamma$. On the contrary, the \textbf{wp} is a variance reduction method and the choice of $\gamma$ will give a systematic improvement on the accuracy with the order $\mathcal{O}(\gamma^{-2})$, at the cost of higher computational complexity.

\item[(4)] The resampling procedure is indispensible for the consideration of not only efficiency but also accuracy. It helps to suppress both the exponential growth of particle number and the accumulation of stochastic errors. The resampling based on the uniform histogram performs quite well for $N_\alpha \ge N_{h}$. A balanced choice is $N_\alpha \approx 10 N_h$.  However, the efficiency of the resampling based on a uniform histogram is still undermined due to the curse of dimensionality. 

\item[(5)] \textbf{sp2} captures the merits of both \textbf{sp1} and \textbf{wp}, but introduces additional stochastic errors. It allows a variance reduction by increasing $\gamma$, and the histogram can be stored and operated as integer-value matrices.

\item[(6)] It's straightforward to generalize \textbf{sp1}, \textbf{sp2} and \textbf{wp} to the problems with time-dependent potentials.

\end{description}

Towards a wider application of WBRW implementation of quantum mechanics, 
the resampling strategy should be improved to meet the challenge in higher dimensional problems, and some advanced statistical tools, such as the tree-based density estimation and the kernel density estimation \cite{bk:HastieTibshiraniFriedman2009}, may be considered. 

\section*{Acknowledgement}
This research was supported by grants from the National Natural Science Foundation of China (Nos.~11471025, 11421101). Y. X. is partially supported by The Elite Program of Computational and Applied Mathematics for PhD Candidates in Peking University. 

\appendix
\section*{Appendix}

%{\cb
%In the appendix, we will give the proofs of Theorem \ref{theorem_double_bootstrap_bound} and Theorem \ref{error_uniform_histogram}. The first step is to derive an estimate of the mean square error for the bootstrap filtering.
%}
\begin{proof}[Proof of the Lemma \ref{theorem_bootstrap_bound}]

According to the first rule in Alg.~\ref{res_bootstrapping},  
\begin{equation*}
\left| \frac{1}{N}\sum_{i=1}^{N} w_i \varphi(\bx_i, \bk_i) - \frac{1}{N} \sum_{i=1}^N \varphi(\tilde{\bx}_{i}, \tilde{\bk}_i) \right| =  \left| \frac{1}{N} \sum_{i=1}^N r_i \varphi(\bx_i, \bk_i) - \frac{1}{N}\sum_{j=1}^{N_r} \varphi(\tilde{\bx}^{\prime}_j, \tilde{\bk}^{\prime}_j ) \right|,
\end{equation*}
where $\mathcal{\tilde{S}} = \{ (\tilde{\bx}^{\prime}_j, \tilde{\bk}^{\prime}_j)\}_{j=1}^{N_r}$ constitutes a subset that is randomly chosen from $\mathcal{S}$, and $r_i = w_i - k_i$. We allow $r_i = 0$ for some $i$. 

Now we define $S_N$ and $\tilde{S}_N$ as
\begin{equation*}
S_N= \sum_{i=1}^{N} r_i \cdot \frac{\varphi(\bx_i, \bk_i)}{N} , \quad  \tilde{S}_N  = \sum_{j=1}^{N_r} \frac{\varphi (\tilde{\bx}^{\prime}_j, \tilde{\bk}^{\prime}_j)}{N}.
\end{equation*}
It's easy to verify that $\tilde{S}_N$ is unbiased ($\mathbb{E}\tilde{S}_N = S_N$) because 
\begin{equation*}
\begin{split}
\mathbb{E} \left[\varphi (\tilde{\bx}^{\prime}_j, \tilde{\bk}^{\prime}_j)\right]  &= \sum_{i=1}^{N} \Pr\left\{ (\tilde{\bx}^{\prime}_j, \tilde{\bk}^{\prime}_j) = (\bx_i, \bk_i) \right\} \cdot \varphi ( \bx_i, \bk_i) \\
&= \sum_{i=1}^{N} \frac{w_i-k_i}{\sum_{i=1}^N (w_i - k_i)} \cdot \varphi ( \bx_i, \bk_i) = \sum_{i=1}^N \frac{r_i}{N_r} \cdot \varphi(\bx_i, \bk_i).
\end{split}
\end{equation*} 

According to the mutually independence of $(\tilde{\bx}_j^{\prime}, \tilde{\bk}_j^{\prime})$ and the Cauchy-Schwarz inequality, it yields
\begin{equation*}
\begin{split}
\mathbb{E} \left| \tilde{S}_N - S_N \right|^2  & = \mathbb{E} \left| \tilde{S}_N - \mathbb{E}\tilde{S}_N \right|^2  = \frac{1}{N^2}\sum_{j = 1}^{N_r} \mathbb{E} \left| \varphi(\bx_j^{\prime}, \bk_j^{\prime}) - \frac{1}{N_r}\sum_{i=1}^N r_i \cdot \varphi(\bx_i, \bk_i) \right|^2 \\
& \le \frac{2}{N^2} \left[ \sum_{j=1}^{N_r}  \mathbb{E} \left| \varphi  (\bx_j^{\prime}, \bk_j^{\prime}) \right|^2 +  \left(\sum_{i=1}^N \frac{r_i}{N_r}\right)^2 \cdot \left(\sum_{i=1}^N \left|\varphi(\bx_i, \bk_i) \right| \right)^2 \right] \\
& \le \frac{2( N_r + N) \Vert \varphi \Vert^2}{N^2} = (\frac{2N_r}{N^2}+\frac{2}{N}) \Vert \varphi \Vert^2 \le \frac{4 \Vert \varphi \Vert^2}{N}. 
\end{split}
\end{equation*}

\end{proof}

\begin{proof}[Proof of Theorem \ref{theorem_double_bootstrap_bound}]

By the triangle inequality, we have that
\begin{flalign*}
\begin{split}
&\mathbb{E} \left| \Big \langle \varphi, \frac{1}{N_\alpha}\sum_{i=1}^{N} w_i \delta_{(\bx_i, \bk_i)} \Big\rangle  - \Big \langle \varphi, \frac{\lambda^+}{N^+}\sum_{i=1}^{N^+} \delta_{\tilde{(\bx}_i^+, \tilde{\bk}_i^+)}+\frac{\lambda^-}{N^-}\sum_{i=1}^{N^-} \delta_{(\tilde{\bx}^-_i, \tilde{\bk}^-_i)} \Big \rangle \right |^2 \\
&\le  2\left| \frac{\sum_{i=1}^{N^+} w^+_i}{N_\alpha} \right|^2 \cdot\mathbb{E} \left| \Big \langle \varphi,  \frac{1}{N^+}\sum_{i=1}^{N^+} (N^+\tilde{w}_i^+) \cdot \delta_{(\bx_i^+, \bk^+_i)} \Big \rangle - \Big \langle \varphi, \frac{1}{N^+}  \sum_{i=1}^{N^+} \delta_{(\tilde{\bx}_i^+, \tilde{\bk}_i^+)} \Big \rangle \right|^2 \\
&+ 2\left| \frac{\sum_{i=1}^{N^-} w^-_i}{N_\alpha} \right|^2 \cdot\mathbb{E} \left| \Big \langle \varphi,  \frac{1}{N^-}\sum_{i=1}^{N^-} (N^-\tilde{w}_i^-) \cdot  \delta_{(\bx_i^-, \bk^-_i)} \Big \rangle - \Big \langle \varphi, \frac{1}{N^-}  \sum_{i=1}^{N^-} \delta_{(\tilde{\bx}_i^-, \tilde{\bk}_i^-)} \Big \rangle \right|^2 \\
&\le  |\lambda^+|^2 \cdot \frac{8 \Vert \varphi \Vert^2}{N^+} +  |\lambda^-|^2 \cdot \frac{8 \Vert \varphi \Vert^2}{N^-},
\end{split}
\end{flalign*}
where the second inequality utilizes Lemma \ref{theorem_bootstrap_bound}.

\end{proof}

%{\cb
%Theorem \ref{theorem_double_bootstrap_bound} gives an estimate for stochastic error induced by the bootstrap filtering. Theorem \ref{error_uniform_histogram} characterizes the error induced by the piecewise constant reconstruction, and further implies that the error of the resampling procedure \eqref{boostrap_histogram} is a combination of both components.
%}

\begin{proof}[Proof of the Theorem \ref{error_uniform_histogram}]

Consider the operator $A_\nu$ 
\begin{equation}\label{deviation_average}
A_\nu(\varphi)(\bx_i, \bk_i) = \varphi(\bx_i, \bk_i) - \frac{1}{\mu(\mathsf{D}_\nu)}\iint_{\mathsf{D}_\nu} \varphi(\bx, \bk) \D \bx \D \bk \tag{A.1}.
\end{equation}
Substituting Eq.~\eqref{Holder_condition} into Eq.~\eqref{deviation_average} yields
\begin{equation*}
\begin{split}
|A_\nu(\varphi)(\bx_i, \bk_i) | \le ~ & \frac{1}{\mu(\mathsf{D}_\nu)} \iint_{\mathsf{D}_\nu}  ~ \big | \varphi(\bx, \bk) - \varphi(\bx_i, \bk_i) \big | ~\D \bx \D \bk \\
\le ~&\frac{1}{\mu(\mathsf{D}_\nu)} \iint_{\mathsf{D}_\nu}  ~ \varepsilon^{\alpha} \Vert \varphi \Vert_{C^{0, \alpha}} ~\D \bx \D \bk = \varepsilon^{\alpha} \Vert \varphi \Vert_{C^{0, \alpha}}. 
\end{split}
\end{equation*}
Then we arrive at that
\begin{equation*}
\begin{split}
 \left | \Big \langle \varphi, \frac{1}{N_\alpha}\sum_{i=1}^{N} w_i \delta_{(\bx_i, \bk_i)} \Big \rangle - \langle \varphi, p_t \rangle \right | &= \left|  \frac{1}{N_\alpha} \sum_{i=1}^{N} \sum_{\nu=1}^{N_h} w_i A_\nu(\bx_i, \bk_i) \delta_{i \nu} \right| \\
%\left[\varphi(\bx_i, \bk_i) - \frac{1}{\mu(\mathsf{D}_\nu)} \iint_{\mathsf{D}_\nu} \varphi(\bx, \bk)\D \bx \D \bk \right] \right|\\
& \le  \frac{1}{N_\alpha} \sum_{i=1}^{N}  |w_i|  \cdot \varepsilon^{\alpha}  \Vert \varphi \Vert_{C^{0, \alpha}} = (|\lambda^+| + |\lambda^-|) \varepsilon^{\alpha} \Vert \varphi \Vert_{C^{0, \alpha}},
\end{split}
\end{equation*}
where $\delta_{i\nu}$ is given by
\begin{equation*}
\delta_{i\nu}
=\left\{
\begin{split}
&1, \quad &(\bx_i, \bk_i) \in \mathsf{D}_\nu,\\
&0, \quad &(\bx_i, \bk_i) \notin \mathsf{D}_\nu.
\end{split}
\right.
\end{equation*}

For the second inequality, we start from the triangular inequality,
\begin{equation}\label{A.2}
\begin{split}
\mathbb{E}&\left| \Big \langle \varphi, \frac{1}{N_\alpha}\sum_{i=1}^{N} w_i \cdot \delta_{(\bx_i, \bk_i)} \Big \rangle - \langle \varphi, \tilde{p}_t \rangle \right|^2 \\
\le & 2\mathbb{E} \left| \Big \langle \varphi, \frac{1}{N_\alpha}\sum_{i=1}^{N} w_i \cdot \delta_{(\bx_i, \bk_i)} \Big \rangle - \Big \langle \varphi,  \frac{\lambda^+}{N^+} \sum_{i=1}^{N^+} \delta_{(\tilde{\bx}_i^+, \tilde{\bk}^+_i)} + \frac{\lambda^-}{N^-} \sum_{i=1}^{N^-} \delta_{(\tilde{\bx}_i^-, \tilde{\bk}^-_i)} \Big\rangle \right|^2\\
& + 2 \mathbb{E} \left| \Big \langle \varphi,  \frac{\lambda^+}{N^+} \sum_{i=1}^{N^+} \delta_{(\tilde{\bx}_i^+, \tilde{\bk}^+_i)} + \frac{\lambda^-}{N^-} \sum_{i=1}^{N^-} \delta_{(\tilde{\bx}_i^-, \tilde{\bk}^-_i)} \Big \rangle - \langle \varphi, \tilde{p}_t \rangle \right|^2.
\end{split}
\tag{A.2}
\end{equation}
Consequently, the first term of Eq.~\eqref{A.2} is bounded by Eq.~\eqref{double_bootstrap_bound}, and the second term is bounded by Eq.~\eqref{histogram_error}.

\end{proof}

%\bibliography{journalname,wigner}
%\bibliography{/Users/sihong/Reference/journalname,/Users/sihong/Reference/wigner}

\end{document}